\documentclass[acmsmall,screen]{acmart}
\pdfoutput=1 

\usepackage{mathtools}
\usepackage{stmaryrd}
\mathtoolsset{showonlyrefs=true}
\usepackage{tikz,pgfplots}
\pgfplotsset{compat=1.18}
\usetikzlibrary{cd}
\usepackage{mathpartir}
\usepackage{listings}
\lstset{
	basicstyle=\footnotesize\ttfamily,
	frame=single,
}
\usepackage{enumitem}
\setitemize{leftmargin=15pt}
\usepackage{multirow}

\usepackage{ifthen}
\newboolean{longversion}
\setboolean{longversion}{true}
\newcommand{\referappendix}[2]{\ifthenelse{\boolean{longversion}}{Appendix~\ref{#1}}{\cite[Appendix {#2}]{arxiv}}}
\newcommand{\referappendixnocite}[2]{\ifthenelse{\boolean{longversion}}{Appendix~\ref{#1}}{Appendix~{#2}}}

\usepackage{macros.basic}

\newcommand{\answertype}{\mathbf{Prop}}
\newcommand{\answerobj}{\Omega}
\newcommand{\underlying}[1]{| #1 |}
\newcommand{\letrec}[4]{\mathbf{let}\ \mathbf{rec}\ #1\ #2 = #3\ \mathbf{in}\ #4}
\newcommand{\letfix}[4]{\mathbf{let}\ \mathbf{fix}\ #1\ #2 = #3\ \mathbf{in}\ #4}
\newcommand{\fixpoint}[2]{\mathbf{fix}\ #1. #2}
\newcommand{\ifexpr}[3]{\mathbf{if}\ #1\ \mathbf{then}\ #2 \ \mathbf{else}\ #3}

\newcommand{\caseexpr}[5]{\delta(#1, #2. #3, #4. #5)}
\newcommand{\probbranch}[3]{#1 \oplus_{#2} #3}
\newcommand{\BaseTypes}{\mathbf{Base}}
\newcommand{\AtomicPreds}{\mathbf{AP}}
\newcommand{\BasicOps}{\mathbf{BO}}
\newcommand{\lambdaHFL}{\lambda_{\mathrm{HFL}}}
\newcommand{\emalgsymbol}{\zeta}
\newcommand{\emptyenv}{\emptyset}

\AtBeginDocument{%
  }

\ifthenelse{\boolean{longversion}}{
	\setcopyright{none}
}{
	\setcopyright{cc}
	\copyrightyear{2024}
}
\acmYear{2024}
\acmDOI{10.1145/3674662}

\acmConference[ICFP '24]{Make sure to enter the correct
  conference title from your rights confirmation email}{Sep 2--7,2024}{Milan, Italy}
\acmISBN{978-1-4503-XXXX-X/18/06}



\citestyle{acmauthoryear}

\AtEndPreamble{%
\theoremstyle{acmdefinition}
\newtheorem{remark}[theorem]{Remark}}

\begin{document}

\title{Automated Verification of Higher-Order Probabilistic Programs via a Dependent Refinement Type System}

\author{Satoshi Kura}
\orcid{0000-0002-3954-8255}
\affiliation{%
	\institution{Waseda University}
	\city{Tokyo}
	\country{Japan}
}
\email{satoshi.kura@aoni.waseda.jp}

\author{Hiroshi Unno}
\orcid{0000-0002-4225-8195}
\affiliation{%
	\institution{Tohoku University}
	\city{Sendai}
	\country{Japan}}
\email{hiroshi.unno@acm.org}


\begin{abstract}
	Verification of higher-order probabilistic programs is a challenging problem.
	We present a verification method that supports several quantitative properties of higher-order probabilistic programs.
	Usually, extending verification methods to handle the quantitative aspects of probabilistic programs often entails extensive modifications to existing tools, reducing compatibility with advanced techniques developed for qualitative verification.
	In contrast, our approach necessitates only small amounts of modification, facilitating the reuse of existing techniques and implementations.
	On the theoretical side, we propose a dependent refinement type system for a generalised higher-order fixed point logic (HFL).
	Combined with continuation-passing style encodings of properties into HFL, our dependent refinement type system enables reasoning about several quantitative properties, including weakest pre-expectations, expected costs, moments of cost, and conditional weakest pre-expectations for higher-order probabilistic programs with continuous distributions and conditioning.
	The soundness of our approach is proved in a general setting using a framework of categorical semantics so that we don't have to repeat similar proofs for each individual problem.
	On the empirical side, we implement a type checker for our dependent refinement type system that reduces the problem of type checking to constraint solving.
	We introduce \emph{admissible predicate variables} and \emph{integrable predicate variables} to constrained Horn clauses (CHC) so that we can soundly reason about the least fixed points and samplings from probability distributions.
	Our implementation demonstrates that existing CHC solvers developed for non-probabilistic programs can be extended to a solver for the extended CHC with only small efforts.
	We also demonstrate the ability of our type checker to verify various concrete examples.
\end{abstract}

\begin{CCSXML}
	<ccs2012>
	   <concept>
		   <concept_id>10003752.10010124.10010138.10010142</concept_id>
		   <concept_desc>Theory of computation~Program verification</concept_desc>
		   <concept_significance>500</concept_significance>
		   </concept>
	   <concept>
		   <concept_id>10003752.10003753.10003757</concept_id>
		   <concept_desc>Theory of computation~Probabilistic computation</concept_desc>
		   <concept_significance>500</concept_significance>
		   </concept>
	 </ccs2012>
\end{CCSXML}

\ccsdesc[500]{Theory of computation~Program verification}
\ccsdesc[500]{Theory of computation~Probabilistic computation}

\keywords{dependent refinement type system, higher-order program}


\maketitle

\section{Introduction}\label{sec:introduction}
\newcommand{\ExtNonnegRealType}{\mathbf{real}_{\ge 0}^{\infty}}

\emph{Probabilistic programs} have been studied from two perspectives: for writing randomised algorithms and for describing probabilistic models and Bayesian inferences.
For these purposes, many probabilistic programming languages have been proposed and developed, including Anglican \cite{wood2014}, LazyPPL \cite{dash2023}, Church \cite{goodman2008}, and Hakaru \cite{narayanan2016}.
To facilitate programming, these languages have several features such as sampling from continuous distributions, conditioning for modelling posterior distributions, and sometimes higher-order functions.
On the other hand, these features pose challenges to the verification of probabilistic programs, invalidating many existing verification methods for non-probabilistic programs.
Although many studies have been devoted to this challenge \cite{mciver2005,olmedo2018,avanzini2021,sato2019a}, \emph{automated} verification is still a challenging problem.
Furthermore, there are relatively few studies on automated verification of \emph{higher-order} probabilistic programs compared to imperative probabilistic programs.

In this paper, we propose a dependent refinement type system for verifying various properties of higher-order probabilistic programs with continuous distributions (Fig.~\ref{fig:outline}).
We consider a generalised higher-order fixed point logic (HFL).
Given a probabilistic program, we encode properties of the program as terms of the HFL using CPS (Continuation-Passing Style) transformations.
Such encodings are available for several properties, including weakest pre-expectations, expected costs, moments of cost, and conditional weakest pre-expectations \cite{avanzini2021,kura2023}.
Then, we give a specification of a probabilistic program as a refinement type.
Finally, we type-check the term against the specification using our dependent refinement type system to verify whether the specification is satisfied.
In our approach, only top-level declarations require annotations of refinement types, and such annotations need \emph{not} be inductive because our type checker can automatically synthesise inductive invariants.

Our main contribution is the \emph{automation} of the CPS-based verification methods \cite{avanzini2021,kura2023} for higher-order probabilistic programs, which is achieved by introducing a new dependent refinement type system for HFL.
The only method that we are aware of for reasoning about CPS-transformed probabilistic programs is a program logic EHOL proposed by \cite{avanzini2021}.
However, it is not obvious how to automate reasoning by EHOL since logical formulas considered in EHOL contain higher-order terms.
Therefore, we developed a more automatable method, namely, a refinement type system for CPS-transformed programs.
Using our refinement type system, we just need to solve constraints that contain only first-order terms.
Moreover, our refinement type system is more general than EHOL in the sense that it can reason about properties beyond the expected cost.
Our work also supports continuous distributions, which are not supported by \cite{avanzini2021}.
The CPS-based approach has the following advantages over other approaches for higher-order probabilistic programs \cite{beutner2021,avanzini2023}: (1) it provides a unified framework for various verification problems, and (2) it allows a reduction to general verification frameworks developed for non-probabilistic programs (e.g.\ CHC solving \cite{bjorner2015}).

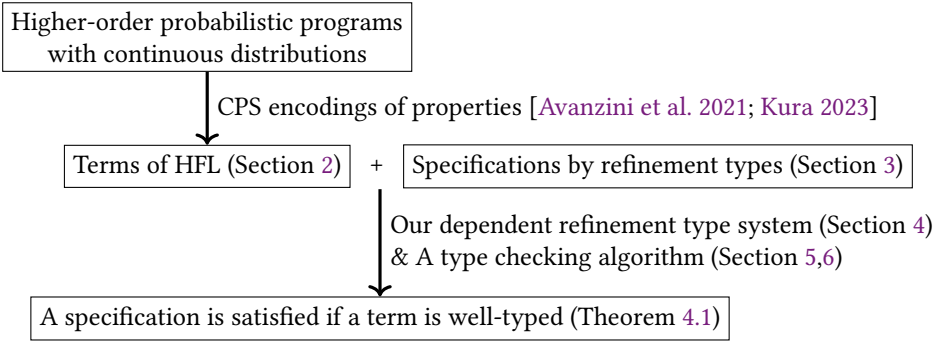
\begin{figure}
	\begin{tikzpicture}
		\node[align=center, draw] (prog) at (-3, 0.7) {Higher-order probabilistic programs \\ with continuous distributions};
		\node[draw] at (-3, -1) (hfl) {Terms of HFL (Section~\ref{sec:hfl})};
		\node[draw] at (3, -1) (spec) {Specifications by refinement types (Section~\ref{sec:instance})};
		\path (hfl) -- node[midway] (hflspec) {+} (spec);
		\node[draw] at (-0.7, -3) (ans) {A specification is satisfied if a term is well-typed (Theorem~\ref{thm:soundness})};
		\draw[->, very thick] (prog) -- node[midway, right] {CPS encodings of properties \cite{avanzini2021,kura2023}} (hfl);
		\draw[->, very thick] (-0.7, -1.3) -- node[midway, right, align=left] {Our dependent refinement type system (Section~\ref{sec:refinement}) \\ \& A type checking algorithm (Section~\ref{sec:type-check},\ref{sec:implementation})} (ans);
	\end{tikzpicture}
	\caption{Outline of our approach.}
	\label{fig:outline}
\end{figure}

Our approach enhances \emph{reusability} of both theory and implementation.
From a theoretical point of view, our design of the dependent refinement type system (Section~\ref{sec:refinement}) is guided by categorical semantics, which naturally yields a soundness theorem (Theorem~\ref{thm:soundness}) that holds for various situations.
This general design leads to novel automated verification methods for several properties (including the expected cost, the weakest pre-expectation, the cost moment, and the conditional weakest pre-expectation of probabilistic programs) while minimising the need for individual proofs for each property.
Although our paper specifically concentrates on probabilistic programs, our theoretical framework also goes beyond the verification of probabilistic programs (e.g.\ HFL model checking of event sequences \cite{kobayashi2018,kura2023}).
From the point of view of implementation, our approach can seamlessly integrate with existing type checkers for dependent refinement type systems designed for non-probabilistic programs.
This extension to probabilistic programs involves only two key modifications to constraint solvers: supporting admissible predicates for reasoning about fixed points (Section~\ref{subsec:admissible}) and supporting integration operators for reasoning about continuous distributions (Section~\ref{subsec:integration}).
These two modifications can be implemented as simple extensions of existing solvers, as we will explain in Section~\ref{sec:type-check}, and the remaining components of type checkers can be reused from those for non-probabilistic programs.
This simplicity has the advantage that we can easily incorporate advanced techniques and sophisticated constraint solvers developed for non-probabilistic programs.

\subsection{Refinement Types After CPS Transformations}
Our dependent refinement type system is intended to be used in combination with CPS-based methods to translate a given program to a term that describes a property of the program.
In the context of verification of probabilistic programs, \cite{avanzini2021} proposed a CPS transformation that translates a higher-order probabilistic program to a term of pure lambda calculus with fixed points such that the denotation of the term is equal to the expected cost of the given probabilistic program.

\newcommand{\coinflip}{\mathrm{coin}}
\begin{example}[the expected cost of coin flipping]\label{ex:coin-flip}
	For example, consider the following program that keeps flipping a fair coin until we get a head while counting the number of tails.
	\[ \letrec{\coinflip}{x}{\probbranch{(\coinflip\ ())^{\checkmark}}{1/2}{()}}{\coinflip\ ()} \]
	Here, $\probbranch{}{1/2}{}$ means probabilistic branching that takes each branch with probability $1/2$, and we consider the expected number of occurrences of $({-})^{\checkmark}$ in execution traces.
	Following \cite{avanzini2021}, we apply a CPS transformation and pass $\lambda y. 0$ as a continuation.
	Then, we obtain the following term, whose denotation is the expected cost of the probabilistic program above.
	\begin{equation}
		\letfix{\coinflip}{x\ k}{1/2 \cdot (1 + \coinflip\ ()\ k) + 1/2 \cdot k\ ()}{\coinflip\ ()\ (\lambda y. 0)}
		\label{eq:coin-flip-cpsed}
	\end{equation}
	Note that $\coinflip$ has type $\coinflip : \mathbf{unit} \to (\mathbf{unit} \to \ExtNonnegRealType) \to \ExtNonnegRealType$ where $\ExtNonnegRealType$ is a type of non-negative extended real numbers in $[0, \infty]$.
	Note also that in the above CPS transformation, $\oplus_p$ is mapped to $p \cdot ({-}) + (1 - p) \cdot ({-})$, and $({-})^{\checkmark}$ is mapped to $1 + ({-})$.
	Intuitively, the term~\eqref{eq:coin-flip-cpsed} represents the limit of the converging sequence below.
	\begin{equation}
		c_0 = 0, \quad c_1 = 1/2 \cdot (1 + c_0) + 1/2 \cdot 0 = 1/2, \quad c_2 = 1/2 \cdot (1 + c_1) + 1/2 \cdot 0 = 3/4, \quad \dots \to 1
	\end{equation}
	Technically, we get the converging sequence above because the interpretation of $\mathbf{let}\ \mathbf{fix}$ is different from the standard interpretation of recursion: $\mathbf{let}\ \mathbf{fix}$ is the least fixed point with respect to the standard order on $[0, \infty]$.
	\qed
\end{example}

Recently, \cite{kura2023} generalised this CPS translation to a wide class of weakest precondition transformers: we can translate a given higher-order program into a term of a generalised higher-order fixed point logic (HFL) so that the term describes a certain kind of properties of the given program.
In terms of verification of probabilistic programs, this generalisation specifically subsumes four properties: weakest pre-expectations, expected costs, moments of cost, and conditional weakest pre-expectations.
However, how to reason about HFL terms was out of the scope of these studies.

To automatically reason about HFL terms, we propose a dependent refinement type system for HFL.
For example, consider the term~\eqref{eq:coin-flip-cpsed} in Example~\ref{ex:coin-flip}.
As a specification, we may consider the following refinement type, which reads ``the expected cost $r$ satisfies $r \le 1$'' or more simply ``1 is an upper bound of the expected cost'', assuming that $\lambda y. 0 : \mathbf{unit} \to \{ r : \ExtNonnegRealType \mid r = 0 \}$ is passed as the second argument.
\begin{equation}
	\coinflip : (x : \mathbf{unit}) \to (k : \mathbf{unit} \to \{ r : \ExtNonnegRealType \mid r = 0 \}) \to \{ r : \ExtNonnegRealType \mid r \le 1 \}
	\label{eq:coin-flip-refinemnt-type}
\end{equation}

To verify that~\eqref{eq:coin-flip-refinemnt-type} is a correct refinement type for the term~\eqref{eq:coin-flip-cpsed}, we provide a set of typing rules for dependent refinement types.
We also prove that typing rules are sound.
By utilising mathematical abstraction via categorical semantics, the soundness theorem is proved in a general situation, subsuming any instance supported by the framework of \cite{kura2023}.
Therefore, we obtain verification methods for higher-order probabilistic programs simply as special cases.

Our mathematical abstraction makes it easier to extend the verification method limited to discrete distributions \cite{avanzini2021} to a method that allows continuous distributions too.
When we apply our approach to continuous distributions, we only need to check if a model for continuous distributions satisfies a certain set of axioms that make our soundness theorem hold.
In fact, there exists such a model: the category $\omegaQBS$ of $\omega$-quasi-Borel spaces \cite{vakar2019}.
This process is much easier than repeating the whole soundness proof for the case of continuous distributions.

\subsection{Key Modifications of Type Checkers and Constraints Solvers}
We implement a type checker for our dependent refinement type system as an extension of an existing type checker
\textsc{RCaml}\footnote{https://github.com/hiroshi-unno/coar}
for non-probabilistic programs.
There are two key modifications involved in this extension.

\subsubsection{Admissible Predicates}
As we explained in Example~\ref{ex:coin-flip}, the semantics of \emph{fixed points} $\mathbf{let}\ \mathbf{fix}$ in our HFL is given by a different order relation from the standard semantics of \emph{recursion}.
In the following, we distinguish \emph{fixed points} and \emph{recursion} as different concepts.
Because of the difference above, standard typing rules for recursion are unsound for fixed points (as we will explain in Section~\ref{sec:refinement-typing-rule}).
For example, the standard typing rule for partial correctness of recursion can derive the following type for Example~\ref{ex:coin-flip}.
\[ \coinflip : (x : \mathbf{unit}) \to (k : \mathbf{unit} \to \{ r : \ExtNonnegRealType \mid r = 0 \}) \to \{ r : \ExtNonnegRealType \mid r < 1 \} \]
However, this is incorrect because the true expected cost of Example~\ref{ex:coin-flip} is $r = 1$.

Following EHOL~\cite{avanzini2021}, we adapt a standard typing rule for recursion by imposing the admissibility condition on a predicate on the result type of a fixed point.
This adaptation is essential for obtaining a sound typing rule, especially when reasoning about quantitative properties expressed as the least fixed point.
Technically, a predicate $P \subseteq [0, \infty]$ is \emph{admissible} if the bottom element $0 \in [0, \infty]$ satisfies $P$ and the predicate is closed under the supremum $\sup_n x_n$ of any $\omega$-chain $x_0 \le x_1 \le \dots$ such that $x_n \in P$ for any $n$.
For example, \eqref{eq:coin-flip-refinemnt-type} is a correct type for Example~\ref{ex:coin-flip} because $r \le 1$ is admissible, but we cannot replace $r \le 1$ with $r < 1$ because $r < 1$ is not admissible (not closed under $\sup$).
Note that $r = 1$ is (unfortunately) not admissible either and thus cannot be used here.
In general, it is difficult to reason about the exact value or lower bounds in probabilistic program verification \cite{feng2023,hark2020,mciver2005,beutner2021}, which is why most of the studies (including ours) focus on upper bounds instead of the exact value of the expected cost.

To implement fixed points in our type checker, we extend our type checker and the constraint solver used in our type checker so that only admissible predicates are synthesised for fixed points.
This extension doesn't break the basic design of the existing implementation: what we need is (1) in the type checker, to implement the typing rule for fixed points and (2) in the constraint solver (specifically, a solver for constrained Horn clauses), to support a new type of predicate variables for admissible predicates, which we call \emph{admissible predicate variables}.
The former is rather straightforward.
For the latter, we first consider syntactic criteria to check whether a predicate is admissible or not.
Then, we restrict the search space for admissible predicates by the syntactic criteria.

The novelties here are (1) incorporating admissible predicates explicitly into a refinement type system and (2) supporting admissible predicates in a type checker.
Technically, ensuring admissibility is important for \emph{any} refinement type system when proving the soundness of typing rules for recursion or fixed points.
However, there hasn't been any refinement type system that requires admissibility explicitly because it is typically straightforward to ensure admissibility when reasoning about recursion, as we explain in Section~\ref{sec:refinement}.
For reasoning about fixed points, we propose a novel type checker that supports admissible predicates.
To the best of our knowledge, no other type checker of this nature exists.

\subsubsection{Integration Operators}
Although our dependent refinement type system is proved sound in a general situation, it is, to some extent, inevitable that we have to consider how to deal with basic (or primitive) operators, which may vary depending on programming languages targeted for verification.
Specifically, this is the case for integration operators for continuous distributions.
Since we consider probabilistic programs with continuous distributions (e.g.\ the uniform distribution over the unit interval $[0, 1]$), we have to include corresponding integration operators (e.g.\ $\int_0^1 ({-})\, \mathrm{d} x$) in our HFL so that we can reason about properties defined by expected values (see Section~\ref{sec:instance-continuous-distribution}).
However, integration operators are usually not supported by constraints solvers nor even considered in dependent refinement type systems.

To support integration operators, we provide a typing rule for an integration operator in Section~\ref{sec:typing-basic-operators}.
Then, we introduce another new type of predicate variables, which we call \emph{integrable predicate variables} so that constraints for integration operators can be expressed appropriately.
Finally, we extend our constraint solver to support integrable predicate variables.
Our extension is rather simple: we restrict templates for integrands to affine expressions so that we can easily compute integrals by linearity of integration.

\subsection{Summary of Contributions}
\begin{itemize}
	\item We provide a dependent refinement type system for a generalised HFL.
	We proved its soundness for a general class of models.
	Combined with CPS-based methods, our approach can verify the following properties of higher-order probabilistic programs with continuous distributions: weakest pre-expectations, expected costs, moments of cost, and conditional weakest pre-expectations.
	\item By introducing two new types of predicate variables (i.e.\ admissible and integrable predicate variables) to CHC, we present a reduction from the type-checking problem for our dependent refinement type system to constraint solving for the extended CHC.
	We also propose an extension of a CHC solver that supports admissible and integrable predicate variables by considering appropriate templates.
	A notable advantage of our approach lies in the ability to reuse most of the basic design of implementations except for the two types of predicate variables.
	This facilitates the seamless integration of advanced techniques developed for the verification of non-probabilistic programs.
	\item We implement a type checker and demonstrate its ability to verify various properties of benchmark programs.
	Our implementation solved 11 out of 14 benchmarks ranging over four kinds of verification problems for higher-order probabilistic programs.
	No other tool even supports as many kinds of benchmarks as ours.
	Since our approach does not depend on a choice of constraint solvers, there is considerable freedom in the choice of generic constraint solvers.
	Moreover, such constraint solvers need \emph{not} be tailored for probabilistic programs thanks to the CPS transformation.
	Therefore, future developments of generic constraint solvers will potentially improve experimental results.
\end{itemize}

\subsection{Organization of the Paper}
In Section~\ref{sec:hfl}, we review the definition of the generalised higher-order fixed point logic $\lambdaHFL$ \cite{kura2023}.
In Section~\ref{sec:instance}, we explain that CPS transformations (also proposed in \cite{kura2023}) from probabilistic programs to $\lambdaHFL$ can capture various interesting properties.
We also present our idea of using refinement types as specifications of probabilistic programs.
In Section~\ref{sec:refinement}, we present our dependent refinement type system and its soundness.
In Section~\ref{sec:type-check}, we explain a reduction from the type-checking problem to constraint solving where we introduce an extension of CHC.
In Section~\ref{sec:implementation}, we show how to implement a constraint solver by modifying an existing CHC solver and present experimental results.
We discuss related work in Section~\ref{sec:related-work}.
Section~\ref{sec:conclusion} contains conclusions and future work.

\section{A Generalised Higher Order Fixed Point Logic}\label{sec:hfl}

In this section, we review the definition of a generalised higher-order fixed point logic $\lambdaHFL$ \cite{kura2023} that can describe properties of higher-order programs.
We will explain later in Section~\ref{sec:instance} how to encode properties of a given program into a term of $\lambdaHFL$.

\begin{table}
	\caption{Truth values for reasoning about probabilistic programs.
	Truth values are equipped with partial orders so that we can define least fixed points.
	The superscript $({-})^{\op}$ indicates that we consider the opposite partial order (lfp w.r.t.\ the opposite order = gfp w.r.t.\ the original order).}
	\label{tab:truth-value}
	\begin{tabular}{l|c}
		Property & Truth values ($\answertype$) \\
		\hline
		Weakest pre-expectation \cite{mciver2005} & $[0, \infty]$ \\
		Expected cost \cite{kaminski2018} & $[0, \infty]$ \\
		Cost moment \cite{kura2019} & $[0, \infty]^n$ \\
		Conditional weakest pre-expectation \cite{olmedo2018} & $[0, \infty] \times [0, 1]^{\op}$
	\end{tabular}
\end{table}

\subsection{Syntax}
\begin{figure}
	\[ \sigma, \tau \coloneqq \answertype \mid b \mid 1 \mid \sigma \times \tau \mid 0 \mid \sigma + \tau \mid \sigma \to \answertype \qquad \qquad \Gamma \coloneqq \cdot \mid \Gamma, x : \tau \]
	\begin{align}
		M, N    & \coloneqq x & & \text{variable} &
		&\mid \mathbf{op}\ M && \text{basic operators} \\
		&\quad \mid () \quad\mid (M, N) & & \text{0-tuple, 2-tuple} &
		&\mid \pi_i\ M & & \text{projection ($i = 1, 2$)} \\
		&\quad \mid \delta(M) & & \text{nullary case analysis} &
		& \mid \iota_i\ M & & \text{injection ($i = 1, 2$)} \\
		&\quad \mid \caseexpr{M}{x_1 : \sigma_1}{N_1}{x_2 : \sigma_2}{N_2} & & \text{binary case analysis} &
		&\mid \lambda x : \sigma. M & & \text{lambda abstraction} \\
		&\quad \mid \fixpoint{(f : \sigma \to \answertype)}{M} & & \text{fixed point} &
		&\mid M\ N & & \text{function application}
	\end{align}
	\vspace{-1em}
	\Description{The definition of simple types, simple contexts, and terms of $\lambdaHFL$ is defined in BNF.}
	\caption{Simple types $\sigma, \tau$; simple contexts $\Gamma$; and terms $M, N$.
	In the definition above, $x$ ranges over variables, $b$ over base types in $\BaseTypes$, and $\mathbf{op}$ over basic operators in $\BasicOps$.
	We assume variables in a context are mutually distinct.}
	\label{fig:hfl-syntax}
\end{figure}

\begin{figure}
	\begin{mathpar}
		\inferrule[S-BasicOp]{
			\Gamma \vdash M : \mathrm{ar}(\mathbf{op})
		}{
			\Gamma \vdash \mathbf{op}\ M : \mathrm{car}(\mathbf{op})
		}
		\and
		\inferrule[S-Abs]{
			\Gamma, x : \sigma \vdash M : \answertype
		}{
			\Gamma \vdash \lambda x : \sigma. M : \sigma \to \answertype
		}
		\and
		\inferrule[S-Fix]{
			\Gamma, f : \sigma \to \answertype, \vdash M : \sigma \to \answertype
		}{
			\Gamma \vdash \fixpoint{(f : \sigma \to \answertype)}{M} : \sigma \to \answertype
		}
	\end{mathpar}
	\caption{Selected typing rules.}
	\label{fig:hfl-simple-typing}
\end{figure}

Following \cite{kura2023}, $\lambdaHFL$ is defined as a simply typed lambda calculus with fixed points.
We define \emph{simple types}, \emph{simple contexts}, and \emph{terms} of $\lambdaHFL$ as in Fig.~\ref{fig:hfl-syntax}.
In the definition of simple types, $\answertype$ is a type of truth values, which need not be boolean values.
The meaning of $\answertype$ depends on the verification problems we consider.
When verifying probabilistic programs, it is common to use quantitative values as ``truth values'' (see Table~\ref{tab:truth-value} and Section~\ref{sec:instance}).
For example, we use non-negative extended real numbers $[0, \infty]$ as truth values for expected cost analyses, in which case, $\lambdaHFL$ is essentially the same as the target language of \cite{avanzini2021}.
The unit type $1$ (or sometimes written as $\mathbf{unit}$) is inhabited by only one value, and the empty type $0$ is inhabited by no value.
We use the sum type $1 + 1$ as a type of boolean values where $\mathbf{true} \coloneqq \iota_1\ ()$ and $\mathbf{false} \coloneqq \iota_2\ ()$.
Although $\lambdaHFL$ is designed as a higher-order logic, we \emph{never} call a term of $\lambdaHFL$ a \emph{formula} to distinguish terms from first-order formulas defined in Section~\ref{sec:refinement}.

We have two parameters for syntax: a set $\BaseTypes$ of base types and a set $\BasicOps$ of basic operators.
Each element in $\BasicOps$ is a triple of the form $(\mathbf{op} : \mathrm{ar}(\mathbf{op}) \rightarrowtriangle \mathrm{car}(\mathbf{op})) \in \BasicOps$ where $\mathrm{ar}(\mathbf{op})$ and $\mathrm{car}(\mathbf{op})$ are types called an \emph{arity} and a \emph{coarity} of $\mathbf{op}$, respectively.
For example, we often include type $\mathbf{int}$ of integers and type $\mathbf{real}$ of real numbers as base types in $\BaseTypes$.
We can introduce constants of type $\mathbf{int}$ to $\lambdaHFL$ by adding a basic operator $n : 1 \rightarrowtriangle \mathbf{int}$ for any $n \in \mathbb{Z}$, and similarly for constants of other types.
Basic arithmetic operators like $({+}_{\mathbf{int}}) : \mathbf{int} \times \mathbf{int} \rightarrowtriangle \mathbf{int}$ and comparison operator like $({\le}_{\mathbf{int}}) : \mathbf{int} \times \mathbf{int} \rightarrowtriangle 1 + 1$ are also examples of basic operators.

The type $\answertype$ plays a special role in the typing system of $\lambdaHFL$.
\emph{Well-typed term} $\Gamma \vdash M : \tau$ are defined by rather standard typing rules except that the codomain of function types are restricted to $\answertype$ (see Fig.~\ref{fig:hfl-simple-typing}), which leads to a restriction of lambda abstractions and fixed points.
This is not too restrictive because we always get this form of lambda abstractions and fixed points by applying CPS transformations with $\answertype$ as an answer type.

We introduce several syntactic sugars as follows.
We define if expressions $\ifexpr{M}{N_1}{N_2} \coloneqq \caseexpr{M}{x_1}{N_1}{x_2}{N_2}$ by case analyses where $x_1$ and $x_2$ are fresh variables since $1 + 1$ is used as the type of boolean values.
For any constant $\mathbf{op} : 1 \rightarrowtriangle \sigma$, we usually omit argument $()$ and write $\mathbf{op} \coloneqq \mathbf{op}\ ()$.
We often use infix notation for arithmetic operators and comparison operators: more formally, if $\mathbf{op} : \sigma_1 \times \sigma_2 \rightarrowtriangle \tau$ is a basic operator where $\sigma_1, \sigma_2 \in \BaseTypes \cup \{ \answertype \}$, we define $M \mathrel{\mathbf{op}} N \coloneqq \mathbf{op}\ (M, N)$.
We define let-fix expressions by $\letfix{f}{x}{M}{N} \coloneqq (\lambda f. N)\ (\fixpoint{f}{\lambda x. M})$.

The restriction of function types forces us to use uncurried functions like $f : \mathbf{int} \times \mathbf{int} \to \answertype$ instead of curried one $f : \mathbf{int} \to (\mathbf{int} \to \answertype)$.
For convenience, we think of a curried function as a syntactic sugar for the corresponding uncurried function: $\lambda x. \lambda y. M \coloneqq \lambda (x, y). M \coloneqq \lambda z. M[\pi_1\ z/x, \pi_2\ z/y]$ where $M$ is a term of type $\answertype$ (use $\eta$-expansion to $M$ if needed).
Correspondingly, a partial application of $M : \sigma \times \tau \to \answertype$ to $N : \sigma$ is understood as $\lambda y : \tau. M\ (N, y) : \tau \to \answertype$.
We use a similar syntactic sugar for fixed points too.

\subsection{Denotational Semantics}

Technically, the denotational semantics for $\lambdaHFL$ is defined in a general way for a certain class of categorical models, which we call \emph{$\lambdaHFL$-models}.
Although this gives flexibility of choosing an appropriate $\lambdaHFL$-model for each problem of program verification, explaining $\lambdaHFL$-models in the most general form requires a certain level of familiarity with category theory.
For the sake of non-category theorists, we restrict our description in this section to a specific $\lambdaHFL$-model, namely, the category $\omegaCPO$ of $\omega$cpos and Scott-continuous functions, using as few terminology of category theory as possible.
This restriction to $\omegaCPO$ is just for simplicity of explanation: our soundness theorem in Section~\ref{sec:refinement} holds for any $\lambdaHFL$-model, including $\omegaQBS$ models used for continuous distributions in Section~\ref{sec:instance}.
Denotational semantics in general $\lambdaHFL$-models can be found in \referappendix{sec:detail-hfl}{A}.

We recall basic definitions of $\omega$cpos and Scott-continuous functions.
An \emph{$\omega$-chain} in a partially ordered set $(X, {\le})$ is a sequence $\{ x_n \in X \}_n$ of elements in $X$ such that $x_n \le x_{n + 1}$ for any $n$.
An \emph{$\omega$cpo} is a pair $(X, {\le}_X)$ of a set $X$ and a partial order ${\le}_X$ such that the supremum of any $\omega$-chain in $X$ exists.
Note that we often leave ${\le}_X$ implicit and write $X = (X, {\le}_X)$.
A \emph{Scott-continuous function} between two $\omega$cpos $(X, {\le}_X)$, $(Y, {\le}_Y)$ is a monotone function $f : X \to Y$ that preserves supremum of $\omega$-chains.
Let $\omegaCPO$ be the category of $\omega$cpos and Scott-continuous functions.
Recall that for any $\omega$cpo $(X, {\le}_X)$, $(Y, {\le}_Y)$, we have their product $(X, {\le}_X) \times (Y, {\le}_Y) = (X \times Y, {\le}_{X \times Y})$, coproduct (or sum) $(X, {\le}_X) + (Y, {\le}_Y) = (X + Y, {\le}_{X + Y})$, and function space $\exponential{(X, {\le}_X)}{(Y, {\le}_Y)} = (\{ f : X \to Y \mid \text{$f$ is Scott-continuous} \}, {\le}_{\exponential{X}{Y}})$ in $\omegaCPO$ where partial orders are defined by $(x, y) \le_{X \times Y} (x', y')$ iff $x \le_X x' \land y \le_Y y'$; $x \le_{X + Y} x'$ iff $x \le_X x' \lor x \le_Y x'$; and $f \le_{\exponential{X}{Y}} g$ iff $\forall x \in X, f(x) \le_Y g(x)$.

Now, we define the interpretation of types.
Suppose that interpretations of each base type $b \in \BaseTypes$ and the type $\answertype$ of truth values are given: let $\interpret{b}$ and $\interpret{\answertype}$ be $\omega$cpos that give an interpretation of $b$ and $\answertype$, respectively.
For example, we usually interpret $\mathbf{int}$ and $\mathbf{real}$ by $\omega$cpos $(\mathbb{Z}, {=})$ and $(\mathbb{R}, {=})$ equipped with discrete orders.
Then, we extend $\interpret{-}$ to the interpretation of types as follows.
\begin{gather}
	\interpret{b} \coloneqq \interpret{b} \qquad
	\interpret{\answertype} \coloneqq \interpret{\answertype} \qquad
	\interpret{0} \coloneqq (\emptyset, \emptyset) \qquad
	\interpret{\sigma + \tau} \coloneqq \interpret{\sigma} + \interpret{\tau} \\
	\interpret{1} \coloneqq (\{ \star \}, {=}) \qquad
	\interpret{\sigma \times \tau} \coloneqq \interpret{\sigma} \times \interpret{\tau} \qquad
	\interpret{\sigma \to \answertype} \coloneqq \exponential{\interpret{\sigma}}{\interpret{\answertype}}
\end{gather}

A context is interpreted by $\interpret{x_1 : \sigma_1, \dots, x_n : \sigma_n} \coloneqq \prod_{i = 1}^{n} \interpret{\sigma_i}$, or equivalently, by the set of mappings defined below.
Note that the interpretation $\interpret{\Gamma}$ of a context is an $\omega$cpo whose partial order is given by the pointwise order: $\gamma \le \gamma'$ if and only if $\gamma(x) \le \gamma'(x)$ for any $x$.
\[ \interpret{x_1 : \sigma_1, \dots, x_n : \sigma_n} \coloneqq \{ \gamma \mid \dom(\gamma) = \{ x_1, \dots, x_n \} \land \forall (x_i : \sigma_i), \gamma(x_i) \in \interpret{\sigma_i} \} \]
The empty mapping for the empty context is denoted by $\emptyenv$.

We define the interpretation of terms as follows.
Suppose that an interpretation of each basic operator in $\BasicOps$ is given: for each basic operator $\mathbf{op} : \sigma \rightarrowtriangle \tau$ in $\BasicOps$, let $\interpret{\mathbf{op}} : \interpret{\sigma} \to \interpret{\tau}$ be a Scott-continuous function that gives an interpretation of $\mathbf{op}$.
For any well-typed term $\Gamma \vdash M : \sigma$, the interpretation is defined as a Scott-continuous function $\interpret{\Gamma \vdash M : \sigma} : \interpret{\Gamma} \to \interpret{\sigma}$ (we often write $\interpret{M} = \interpret{\Gamma \vdash M : \sigma}$ for simplicity).
Most parts of the definition of $\interpret{M}$ are standard (see \referappendix{sec:detail-hfl}{A} for details).
The interpretation of $\Gamma \vdash \fixpoint{f}{M} : \sigma \to \answertype$ is defined by the least fixed point of $\interpret{M}(\gamma[f \mapsto ({-})]) : \exponential{\interpret{\sigma}}{\interpret{\answertype}} \to \exponential{\interpret{\sigma}}{\interpret{\answertype}}$ where $\gamma \in \interpret{\Gamma}$.
\[ \interpret{\fixpoint{(f : \sigma \to \answertype)}{M}}(\gamma) = \mathrm{lfp}^{\exponential{\interpret{\sigma}}{\interpret{\answertype}}} \interpret{M}(\gamma[f \mapsto ({-})]) \]
For example, if $\interpret{\answertype} = [0, 1]^{\op} = ([0, 1], {\ge})$, then a fixed point $\fixpoint{f}{M}$ is interpreted by the lfp with respect to $[0, 1]^{\op} = ([0, 1], {\ge})$, which is the gfp with respect to $[0, 1] = ([0, 1], {\le})$.

Note that how we interpret the type $\answertype$, base types in $\BaseTypes$, and basic operators in $\BasicOps$ is considered as a part of a $\lambdaHFL$-model.
Technically, a $\lambdaHFL$-model is defined as a combination of (1) a category $\category{C}$ with sufficient structures (product / coproduct / function space constructions and fixed points) and (2) an assignment of an interpretation to each base type, each basic operator, and $\answertype$ (see \referappendix{sec:detail-hfl}{A}).

\section{Encoding Various Problems to HFL Terms}\label{sec:instance}
In this section, we show several concrete verification problems for probabilistic programs.
We review CPS-based encodings proposed in \cite{kura2023} and exemplify how various properties of probabilistic programs can be translated to $\lambdaHFL$.
We also explain $\lambdaHFL$-models used for each verification problem and specifically how $\answertype$ is interpreted in each $\lambdaHFL$-model.
To motivate the dependent refinement type system in Section~\ref{sec:refinement}, we describe how refinement types are used to give specifications of probabilistic programs.

\subsection{A Brief Review of the CPS-Based Encodings}
\newcommand{\CPS}[1]{#1^{\sharp}}

The correctness of the CPS-based encodings below is guaranteed by the general framework proposed by \cite{kura2023}, which we review here to improve self-containedness.
The purpose of the framework is to obtain the weakest precondition of a given effectful program as a term of HFL.
This is achieved by considering a CPS transformation that translates an effectful program $M$ into a term of HFL $\CPS{M}$.
As a language for effectful programs, they consider a functional programming language with recursion and algebraic effects.
Intuitively, algebraic effects are primitive operations for causing computational effects.
Examples of algebraic effects include a probabilistic branching operation, an operation for sampling from a probability distribution, and the tick operator for increasing cost.
Note that although this paper focuses on probabilistic programs, the CPS-based encodings can be applied to more general algebraic effects, as detailed in \cite{kura2023}.

On the semantic side, a program $M$ is interpreted as a morphism $\interpret{M} : X \to T Y$ in a category $\category{C}$ where $T$ is a strong monad that represents computational effects.
Terms of HFL are interpreted in the same category $\category{C}$ as a pure simply typed lambda calculus as we explained in Section~\ref{sec:hfl}.
The weakest precondition for $f : X \to T Y$ is defined as follows.
Let $\answerobj \in \category{C}$ be an object that represents a set of truth values (e.g.\ $\answerobj = \{ 0, 1 \}$ or $\answerobj = [0, \infty]$).
Then, the postcondition is given by $p : Y \to \answerobj$.
Assuming that we have an Eilenberg--Moore algebra $\emalgsymbol : T \answerobj \to \answerobj$, the weakest precondition is defined by $\mathrm{wp}[f](p) = \emalgsymbol \comp T p \comp f : X \to \answerobj$.
It is known that many properties of programs can be expressed using weakest preconditions if we choose $\emalgsymbol : T \answerobj \to \answerobj$ appropriately \cite{aguirre2022}.
For example, the weakest pre-expectation of a probabilistic program $M$ is given by $\mathrm{wp}[\interpret{M}](p)$ where $T$ is the probability distribution monad, $\answerobj = [0, \infty]$, and $\emalgsymbol : T \answerobj \to \answerobj$ is the function defined by the expected value $\emalgsymbol(\mu) \coloneqq \mathbb{E}[\mu]$ of $\mu \in T [0, \infty]$.

Now, the CPS transformation gives the weakest precondition in the following sense.
\begin{theorem}[from \cite{kura2023}]\label{thm:cps-wpt}
	Assume that $\category{C}$ is a ``good'' model (which is almost the same as requiring $\category{C}$ to be a $\lambdaHFL$-model, but see \cite{kura2023} for details).
	For a well typed program $\Gamma \vdash M : \rho$ and a postcondition (a term of HFL) $x : \rho \vdash P : \answertype$, we have $\mathrm{wp}^{\emalgsymbol}[\interpret{M}] (\interpret{P}) = \interpret{\CPS{M}\ (\lambda x : \rho. P)}$ if types in $\Gamma$ and $\rho$ are constructed without $\to$.
	Here, $\Gamma \vdash \CPS{M} : (\rho \to \answertype) \to \answertype$ is the term obtained by applying the CPS transformation to $M$.
\end{theorem}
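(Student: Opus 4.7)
The plan is to proceed by induction on the typing derivation of $\Gamma \vdash M : \rho$. However, since $M$ may contain higher-order intermediate subterms (through $\lambda$ and applications of local functions), a plain equational induction on the program syntax will not go through: at an internal function type, the monadic interpretation yields a Kleisli morphism $f : \interpret{\sigma} \to T \interpret{\tau}$ while the CPS-translated interpretation yields a function $F : \interpret{\sigma} \to (\interpret{\tau} \to \answerobj) \to \answerobj$, and these have different shapes. I would therefore introduce a logical relation $\mathcal{R}_\tau$ indexed by source types $\tau$ and prove as a fundamental lemma that $\interpret{M}(\gamma) \mathrel{\mathcal{R}_\rho} \interpret{\CPS{M}}(\gamma')$ whenever $\gamma \mathrel{\mathcal{R}_\Gamma} \gamma'$.

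At first-order types (built from base types, $1$, $0$, sums, and products), the canonical coercion between the monadic and CPS semantics makes $\mathcal{R}_\tau$ reduce to equality. At a source-language function type $\sigma \to \tau$, the relation is defined by: $f \mathrel{\mathcal{R}_{\sigma \to \tau}} F$ iff for all $a \mathrel{\mathcal{R}_\sigma} a'$ and all $p : \interpret{\tau} \to \answerobj$, $\emalgsymbol(T p(f(a))) = F(a')(p)$. The theorem's hypothesis that $\Gamma$ and $\rho$ are $\to$-free is precisely what makes $\mathcal{R}_\Gamma$ and $\mathcal{R}_\rho$ into literal equalities, so that the fundamental lemma, applied to the CPS continuation $\interpret{\lambda x : \rho. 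P}(\gamma)$, collapses directly to $\mathrm{wp}^{\emalgsymbol}[\interpret{M}](\interpret{P}) = \interpret{\CPS{M}\ (\lambda x : \rho. P)}$.

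The inductive cases for variables, tupling, injections, projections, and case analysis are routine manipulations of continuations, and the application case $M\ N$ is handled by unfolding the CPS evaluation order and using Kleisli composition. The two cases that carry the real semantic content are: (i) algebraic effects — probabilistic branching, sampling, tick — which get translated into the corresponding HFL basic operator, and whose relatedness is exactly the algebra equation satisfied by $\emalgsymbol : T \answerobj \to \answerobj$; and (ii) recursion, where I would take the least fixed point on both sides and lift $\mathcal{R}$ through the Kleene iteration.

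The main obstacle will be the recursion case. It rests on two facts that must be checked in the ambient $\lambdaHFL$-model. First, $\mathcal{R}_{\sigma \to \tau}$ must be \emph{admissible}: it must contain the pair of bottom elements and be closed under suprema of $\omega$-chains. Admissibility at first-order types is immediate, and the lift to function types uses Scott-continuity of $\emalgsymbol$ together with Scott-continuity of composition in $\omegaCPO$. Second, the bottom element of the Kleisli function space must be related to the bottom element of the CPS function space; in each concrete instance this reduces to a strictness-like property of the chosen $\emalgsymbol$, namely that the diverging computation evaluates to the least element of $\answerobj$ under every postcondition. With admissibility secured, Kleene's theorem transports $\mathcal{R}_{\sigma \to \tau}$-relatedness through the iteration, completing the induction and hence the proof.
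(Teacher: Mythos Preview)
The paper does not give its own proof of this theorem: it is stated explicitly as a result imported from \cite{kura2023}, with no proof or proof sketch provided here. So there is nothing in the present paper to compare your proposal against.

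That said, your logical-relations strategy is the standard and expected approach for a statement of this shape, and it is essentially how such CPS-correctness results are established in the literature (including, as far as one can tell from the surrounding discussion, in \cite{kura2023} itself). Your identification of the two genuinely nontrivial cases---algebraic operations, where the Eilenberg--Moore algebra equation for $\emalgsymbol$ does the work, and recursion, where admissibility of the relation is needed to push through the Kleene iteration---is accurate. One point worth tightening: your definition of $\mathcal{R}_{\sigma\to\tau}$ as stated only quantifies over postconditions $p$ at the \emph{result} type $\tau$, but for a higher-order source language you will need the relation to be defined by induction on types so that related arguments produce related results (not just equal weakest preconditions at the top level); otherwise the application case will not close. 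With that adjustment, the plan is sound.
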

Note that the restriction on $\Gamma$ and $\rho$ does \emph{not} exclude the use of lambda abstraction or recursion in $M$.

Due to the nature of the CPS-based encoding, our method is inherently constrained to specifications presented as refinement types on the \emph{weakest precondition transformer} (WPT) of a given program.
Some properties are difficult to specify using WPTs although we can express many important properties using WPTs, as we will explain in the rest of this section.
For example, it is difficult to express non-functional, temporal liveness properties as WPTs, and thus, such properties are out of the scope of our approach.

\subsection{A Model for Higher Order Probabilistic Programs with Continuous Distributions}
Technically, the category $\omegaCPO$ of $\omega$cpos is not sufficient to model continuous distributions like uniform distributions and Gaussian distributions.
To interpret continuous distribution, we use the category $\omegaQBS$~\cite{vakar2019} of $\omega$qbses as a $\lambdaHFL$-model.
Intuitively, an \emph{$\omega$qbs} is a set $X$ equipped with both an $\omega$cpo structure $(X, {\le}_X)$ and a quasi-Borel-space structure $(X, M_X)$ in a compatible way (see \cite{vakar2019,heunen2017} for what $M_X$ means).
Quasi-Borel spaces are a generalisation of measurable spaces~\cite{heunen2017}.
Compared to the traditional measure-theoretic treatment of continuous distributions, quasi-Borel spaces have the advantage that quasi-Borel spaces are closed under the construction of function spaces: the function space $\exponential{X}{Y}$ between two $\omega$qbses $X, Y$ is again an $\omega$qbs.
Since $\omegaQBS$ has sufficient categorical structures as a $\lambdaHFL$-model, denotational semantics of $\lambdaHFL$ in $\omegaQBS$ is defined in the same way as the case of the $\omegaCPO$ model explained in Section~\ref{sec:hfl}, and more importantly, our soundness theorem in Section~\ref{sec:refinement} is also true for $\omegaQBS$.
To keep our explanation simple, we often leave quasi-Borel-space structures implicit in this section and just write $\omega$cpo structures even when we use $\omegaQBS$ as a model (see \referappendix{sec:detail-hfl}{A} for details).

\subsection{Expected Cost}\label{subsec:instance-expected-cost}
We consider the problem of estimating the expected cost of higher-order probabilistic programs.
This problem is studied in \cite{avanzini2021}.
Our cost model is given by explicit \emph{tick operators} $({-})^{\checkmark}$, which increment the accumulated cost and do nothing else.
We consider the expected number of tick operators used until a given program terminates.
Given a probabilistic program, we can translate it to a term $M$ of $\lambdaHFL$ whose interpretation $\interpret{M}$ is the expected cost \cite{avanzini2021}.
Specifically, we use a CPS transformation that maps each occurrence of the tick operator $({-})^{\checkmark}$ to $1 + ({-})$ and each probabilistic branching $\probbranch{}{p}{}$ to $p \cdot ({-}) + (1 - p) \cdot ({-})$ (see Example~\ref{ex:coin-flip}).

Terms of $\lambdaHFL$ obtained from the CPS transformation above is interpreted in $\omegaQBS$.
The type $\answertype$ of truth values is interpreted by $[0, \infty]$ with the standard order whereas base types are interpreted by sets with discrete orders (and with the standard $\sigma$-algebra structures).
\begin{gather}
	\interpret{\answertype} = ([0, \infty], {\le}) \qquad
	\interpret{\mathbf{int}} = (\mathbb{Z}, {=}) \qquad
	\interpret{\mathbf{real}} = (\mathbb{R}, {=})
\end{gather}

The remaining problem here is how to reason about the term of $\lambdaHFL$ obtained from the CPS transformation above.
We can give a specification about the expected cost as a refinement type for the term of $\lambdaHFL$.
For example, the specification for Example~\ref{ex:coin-flip} can be given as~\eqref{eq:coin-flip-refinemnt-type}, which reads ``an upper bound of the expected cost is 1''.
To verify the correctness of such refinement types, we will provide a dependent refinement type system in Section~\ref{sec:refinement}.

\subsection{Continuous Distributions}\label{sec:instance-continuous-distribution}
Consider the expected cost analysis for a probabilistic program with continuous distributions.
To reason about continuous distribution, we add integration operators to $\lambdaHFL$ as a basic operator.
For simplicity, we focus on the uniform distribution over the unit interval $[0, 1]$.
Then, the integration operator for the uniform distribution is given as follows.
\begin{equation}
	\mathbf{unif} : (\mathbf{real} \to \answertype) \rightarrowtriangle \answertype \qquad\qquad \interpret{\mathbf{unif}}(f) = \int_0^1 f(x)\, \mathrm{d} x
	\label{eq:unif-integration-operator}
\end{equation}

\begin{example}\label{ex:random-walk-unif}
	Consider the expected cost of a continuous variant of random walks.
	\[ \letrec{\mathrm{rw}}{x}{\ifexpr{x \ge 0}{(y \leftarrow \mathbf{uniform}; (\mathrm{rw}\ (x + 3 \cdot y - 2))^{\checkmark})}{()}}{\mathrm{rw}\ 1} \]
	Here, $\mathbf{uniform}$ is the uniform distribution over $[0, 1]$
	The argument $x$ of the function $\mathrm{rw} : \mathbf{real} \to \mathbf{unit}$ is the current position, and $x = 1$ is the initial position.
	For each step, the program samples $y$ from the uniform distribution over $[0, 1]$, transforms $y$ into a sample $y' = 3 \cdot y - 2$ from the uniform distribution over $[-2, 1]$, and then adds $y'$ to the current position $x$ until $x$ becomes negative.
	To reason about the expected cost, we apply a CPS transformation and obtain the following.
	\begin{equation}
		\letfix{\mathrm{rw}}{x\ k}{
		\ifexpr{x \ge 0}{
			\mathbf{unif}\ (\lambda y. 1 + \mathrm{rw}\ (x + 3 \cdot y - 2)\ k)
		}{k\ ()}
	}{\mathrm{rw}\ 1\ (\lambda r. 0)}
	\label{eq:random-walk-unif-cpsed}
	\end{equation}
	Here, the uniform distribution $\mathbf{uniform}$ is translated to the corresponding integration operator $\mathbf{unif}$.
	The interpretation of~\eqref{eq:random-walk-unif-cpsed} gives the expected cost of the random walk.
	\qed
\end{example}

The expected cost of ``$\mathrm{rw}\ x$'' in Example~\ref{ex:random-walk-unif} is upper-bounded by $2 \cdot x + 4$.
Using refinement types, we can specify this as follows.
\begin{equation}
	\mathrm{rw} : (x : \{ x : \mathbf{real} \mid x \ge -2 \}) \to (k : \mathbf{unit} \to \{ r : \answertype \mid r = 0 \}) \to \{ r : \answertype \mid r \le |2 \cdot x + 4| \}
	\label{eq:random-walk-unif-cpsed-type}
\end{equation}
Here, $|{-}| : \mathbf{real} \rightarrowtriangle \answertype$ is a basic operator that returns the absolute value of a real number, which is used here just to ensure that the type of $|2 \cdot x + 4|$ is $\answertype$.

\subsection{Weakest Pre-Expectation}\label{subsec:instance-weakest-preexpectation}
The weakest pre-expectation for probabilistic programs \cite{mciver2005} is a quantitative generalisation of the weakest precondition for non-probabilistic programs.
Instead of using boolean predicates as pre-/post-conditions, the weakest pre-expectation transformer uses $[0, \infty]$-valued predicates as \emph{pre-/post-expectations}.
Intuitively, the weakest pre-expectation $\mathrm{wp}[\mathsf{prog}](\mathsf{post})$ is the expected value of the given post-expectation $\mathsf{post}$ with respect to the probability distribution of outputs of the program $\mathsf{prog}$.
For example, termination probability is a special case of the weakest pre-expectation: if the post-expectation is the constant $1$, then the weakest pre-expectation is the termination probability of a given program.

The weakest pre-expectation can be generalised to higher-order probabilistic programs.
In fact, we obtain the weakest pre-expectation transformer by just ignoring tick operators when applying the CPS transformation for expected cost analyses in Section~\ref{subsec:instance-expected-cost}.
The term obtained from the CPS transformation takes a post-expectation as a continuation and returns the value of the weakest pre-expectation.
We use the same $\lambdaHFL$-model as expected cost analyses (Section~\ref{subsec:instance-expected-cost}).
Specifically, we use $\interpret{\answertype} = ([0, \infty], {\le})$ as a set of quantitative truth values.

\newcommand{\recthree}{\mathrm{rec3}}
\begin{example}\label{ex:rec3}
	Consider the termination probability of the following program \cite{olmedo2016}.
	\[ \letrec{\recthree}{x}{\probbranch{()}{1/2}{(\recthree\ (); \recthree\ (); \recthree\ ())}}{\recthree\ ()} \]
	By applying the CPS transformation and passing $\lambda y. 1$ as a continuation, we obtain the following term in $\lambdaHFL$ whose interpretation is the termination probability $p \in \interpret{\answertype}$.
	\[ \letfix{\recthree}{x\ k}{1/2 \cdot k\ () + 1/2 \cdot \recthree\ ()\ (\lambda y. \recthree\ ()\ (\lambda y. \recthree\ ()\ k))}{\recthree\ ()\ (\lambda y. 1)} \]
	Here, we assume that we have as basic operations binary arithmetic operators $({+}), (\cdot) : \answertype \times \answertype \rightarrowtriangle \answertype$ and constants $1/2, 1 : \mathbf{unit} \rightarrowtriangle \answertype$.
	\qed
\end{example}

It is known that $(\sqrt{5} - 1) / 2$ gives an upper bound of the termination probability of Example~\ref{ex:rec3}.
Using refinement types, we can specify this as follows.
\[ \recthree : (x : \mathbf{unit}) \to (k : \mathbf{unit} \to \{ v : \answertype \mid r = 1 \}) \to \{ v : \answertype \mid r \le (\sqrt{5} - 1) / 2 \} \]

\subsection{Cost Moment}\label{subsec:instance-cost-moment}
The cost moment analysis \cite{kura2019} of a probabilistic program is a generalisation of the expected cost analysis.
In expected cost analyses, the aim is to estimate the expected value (i.e.\ the first moment $\mathbb{E}[C]$) of the cost $C$ of a probabilistic program.
On the other hand, the aim of cost moment analyses is to estimate higher moments $\mathbb{E}[C^n]$ of the cost $C$.

Even when we are interested only in the $n$-th moment $\mathbb{E}[C^n]$ of the cost, it is convenient to compute all the moments from the first moment to the $n$-th moment simultaneously $(\mathbb{E}[C], \dots, \mathbb{E}[C^n])$ as pointed out in \cite{kura2019}.
Thus, we define a $\lambdaHFL$-model for cost moment analyses by the category $\omegaQBS$ together with the following interpretation of $\answertype$.
\[ \interpret{\answertype} = ([0, \infty]^n, {\le}) \qquad\qquad \text{where ${\le}$ is the component-wise order.} \]
Intuitively, a value $(v_1, \dots, v_n) \in \interpret{\answertype}$ represents a tuple of moments of cost: $v_1$ is the first moment (i.e.\ the expected cost), and $v_2$ is the second moment, and so on.

A CPS transformation for cost moment analyses is given by a simple modification to expected cost analyses \cite{kura2023}.
The tick operator $({-})^{\checkmark}$ is mapped to the \emph{elapse function} $1 \oplus ({-}) : \answertype \rightarrowtriangle \answertype$ instead of $1 + ({-}) : \answertype \rightarrowtriangle \answertype$.
Intuitively, the elapse function $1 \oplus (\mathbb{E}[C], \dots, \mathbb{E}[C^n])$ computes the binomial expansion of $(\mathbb{E}[1 + C], \dots, \mathbb{E}[(1 + C)^n])$, which motivates the following definition: the $k$-th component of $\interpret{1 \oplus ({-})}(v_1, \dots, v_n)$ is defined by $1 + \sum_{i=1}^k \binom{k}{i} v_i$

\begin{example}\label{ex:coin-flip-moment}
	Consider estimating the second moment of the cost of the coin flipping in Example~\ref{ex:coin-flip}.
	We obtain the following term by the CPS transformation:
	\begin{equation}
		\letfix{\coinflip_2}{x\ k}{1/2 \cdot (1 \oplus \coinflip_2\ ()\ k) + 1/2 \cdot k\ ()}{\coinflip_2\ ()\ (\lambda r. (0, 0))}
		\label{eq:coin-flip-moment}
	\end{equation}
	where $1/2 \cdot ({-}) : \answertype \rightarrowtriangle \answertype$ is the component-wise multiplication.
	The interpretation of~\eqref{eq:coin-flip-moment} is the pair of the expected cost and the second moment of the cost.
	The meaning of the elapse function $1 \oplus ({-})$ becomes clearer if we (informally) decompose $\answertype$ to $\ExtNonnegRealType \times \ExtNonnegRealType$ as follows.
	\begin{align}
		&\mathbf{let}\ \mathbf{fix}\ \coinflip_2\ x\ k = \mathbf{let}\ (r_{11}, r_{12}) = \coinflip_2\ ()\ k\ \mathbf{in}\quad \mathbf{let}\ (r_{21}, r_{22}) = k\ ()\ \mathbf{in} \\
		& (1/2 \cdot (1 + r_{11}) + 1/2 \cdot r_{21},\quad 1/2 \cdot (1 + 2 \cdot r_{11} + r_{12}) + 1/2 \cdot r_{22})\ \mathbf{in} \quad \coinflip_2\ ()\ (\lambda r. (0, 0))
		\tag*{\qed}
	\end{align}
\end{example}

To verify that $3$ is an upper bound of the second moment of the cost of Example~\ref{ex:coin-flip-moment}, we can write the following specification using refinement types.
\[ \coinflip_2 : (x : \mathbf{unit}) \to (k : \mathbf{unit} \to \{ r : \answertype \mid r = (0, 0) \}) \to \{ r : \answertype \mid r \le (\infty, 3) \} \]

\subsection{Conditional Weakest Pre-Expectation}\label{subsec:instance-conditioning}
We consider reasoning about higher-order probabilistic programs with conditioning.
Conditioning is used in probabilistic programs to model posterior probabilities.
Here, we focus on \emph{hard conditioning} $\mathsf{observe}(b)$ for boolean values $b$ (as opposed to \emph{soft conditioning} $\mathsf{score}(r)$ for non-negative real weight $r$).
Intuitively, a probabilistic program with conditioning $\mathsf{observe}(b)$ is interpreted as a sub-probability distribution from which all runs that violate the condition $b$ are excluded.
To reason about probabilistic programs with conditioning, the conditional weakest pre-expectation transformer is introduced in \cite{olmedo2018} for the probabilistic guarded command language (pGCL) with conditioning.
The conditional weakest pre-expectation is defined by the weakest pre-expectation normalised (divided) by the probability of all valid runs.

We can extend conditional weakest pre-expectations to higher-order programs by a CPS transformation as follows.
First note that without loss of generality, we can assume that the form of conditioning in a probabilistic program is either $\mathsf{observe}(\mathbf{true})$ or $\mathsf{observe}(\mathbf{false})$ because we can rewrite $\mathsf{observe}(P)$ to $\ifexpr{P}{\mathsf{observe}(\mathbf{true})}{\mathsf{observe}(\mathbf{false})}$.
Then, the CPS transformation for conditional weakest pre-expectation maps $\mathsf{observe}(\mathbf{true})$ and $\mathsf{observe}(\mathbf{false})$ to $1 \cdot ({-})$ and $0 \cdot ({-})$, respectively.
We will show an example in Example~\ref{ex:three-coin}.

As a $\lambdaHFL$-model, we use $\omegaQBS$ together with the following interpretation of $\answertype$.
\[ \interpret{\answertype} = [0, \infty] \times [0, 1]^{\op} = ([0, \infty], {\le}) \times ([0, 1], {\ge}) \]
The definition of ``truth values'' above follows \cite{olmedo2018}.
When computing conditional weakest pre-expectation, it is convenient to separately compute unnormalised weakest pre-expectations for valid runs and the probability of all valid runs (i.e.\ runs such that $b$ in $\mathsf{observe}(b)$ is always evaluated as true).
The first component $[0, \infty]$ represents the unnormalised weakest pre-expectation for valid runs, and the second component $[0, 1]^{\op}$ represents the probability of all valid runs.
We use the opposite order for the second component $[0, 1]^{\op}$ because the probability of all valid runs is defined using greatest fixed points.
Thus, the conditional weakest pre-expectation is given by normalisation $v_1 / v_2$ where $(v_1, v_2) \in \interpret{\answertype}$.

\newcommand{\threecoin}{\mathrm{tc}}
\begin{example}\label{ex:three-coin}
	Consider the conditional weakest pre-expectation of the following program \cite[Example~4.4]{olmedo2018}, which flips three coins repeatedly.
	\begin{align}
		&\mathbf{let}\ \mathrm{prob}_{1/2}\ x = \probbranch{\mathbf{true}}{1/2}{\mathbf{false}}\ \mathbf{in} \\
		&\mathbf{let}\ \mathbf{rec}\ \threecoin\ m = \mathbf{let}\ b_1 = \mathrm{prob}_{1/2}\ ()\ \mathbf{in}\ \mathbf{let}\ b_2 = \mathrm{prob}_{1/2}\ ()\ \mathbf{in}\ \mathbf{let}\ b_3 = \mathrm{prob}_{1/2}\ ()\ \mathbf{in} \\
		&\qquad\qquad \mathbf{observe}(\lnot b_1 \lor \lnot b_2 \lor \lnot b_3);\quad \ifexpr{b_1 \lor b_2 \lor b_3}{\threecoin\ (m + 1)}{m + 1}\ \mathbf{in}
		\qquad \threecoin\ 0
	\end{align}
	By applying a CPS transformation and simplifying the result, we obtain the following term of $\lambdaHFL$.
	Here, we use $\lambda m. [m = N]$ as a post-expectation where $N$ is an integer and $[m = N] \coloneqq \ifexpr{m = N}{1}{0}$ is the Iverson bracket.
	\[ M_{\mathrm{cwp}} \quad\coloneqq\quad \letfix{\threecoin}{m\ k}{1/8 \cdot 0 + 6/8 \cdot \threecoin\ (m + 1)\ k + 1/8 \cdot k\ (m + 1)}{\threecoin\ 0\ (\lambda m. ([m = N], 1))} \]
	The conditional weakest pre-expectation is obtained as $\interpret{M_{\mathrm{cwp}}}_1 / \interpret{M_{\mathrm{cwp}}}_2$ where $\interpret{M_{\mathrm{cwp}}}_i$ is the $i$-th component of $\interpret{M_{\mathrm{cwp}}}$.
	\qed
\end{example}

Using refinement types, we can give the following specification for Example~\ref{ex:three-coin}.
\begin{align}
	\threecoin : (m : \mathbf{int}) &\to (k : (m' : \mathbf{int}) \to \{ r : \answertype \mid r = ([m' = N], 1) \}) \\
	&\to \{ (r_1, r_2) : \answertype \mid r_1 \le \ifexpr{m \ge N}{0}{1/6 \cdot (3/4)^{N - m}} \land r_2 \ge 1/2 \}
\end{align}
Note that this specification implies that the conditional weakest pre-expectation is upper-bounded by $r_1 / r_2 \le 1/3 \cdot (3/4)^{N - m}$.


\section{Dependent Refinement Type System}\label{sec:refinement}
In this section, we propose a novel refinement type system for $\lambdaHFL$.

\subsection{Formulas}
\subsubsection{Syntax}

Let $\AtomicPreds$ be a set of atomic predicates and $\mathrm{ar}(\mathbf{ap})$ be a type, called an \emph{arity}, for each $\mathbf{ap} \in \AtomicPreds$.
We assume that the arity $\mathrm{ar}(\mathbf{ap})$ is a finite product of base types or $\answertype$, that is, there exists $n$ and $\sigma_1, \dots \sigma_n \in \BaseTypes \cup \{ \answertype \}$ such that $\mathrm{ar}(\mathbf{ap}) = \sigma_1 \times \dots \times \sigma_n$.
We often write $\mathbf{ap} : \sigma$ to indicate that the arity of $\mathbf{ap}$ is $\sigma$.
We assume that $\AtomicPreds$ always contains the order relation $({\le}_{\answertype}) : \answertype \times \answertype$ for $\answertype$ and the equality relation $({=}_{b}) : b \times b$ for each base type $b \in \BaseTypes$.

\emph{Formulas} are defined by the following syntax.
\[ \psi, \phi \coloneqq \mathbf{ap}(M) \mid \top \mid \bot \mid \psi \implies \phi \mid \psi \land \phi \mid \psi \lor \phi \qquad \text{where $\mathbf{ap} \in \AtomicPreds$} \]

We define a \emph{well-formed formula} $\Gamma \vdash \phi$ inductively: for atomic predicates, $\Gamma \vdash \mathbf{ap}(M)$ is well-formed if $\Gamma \vdash M : \mathrm{ar}(\mathbf{ap})$ is well-typed, and for other cases, $\Gamma \vdash \phi$ is well-formed if each sub-formula is well-formed.
For example, if $\Gamma \vdash M : b$ and $\Gamma \vdash N : b$ are well-typed term, then $\Gamma \vdash ({=}_{b})\ (M, N)$ is a well-formed formula.
For better readability, we often use infix notation and omit type annotations, like $M = N$ instead of $({=}_{b})\ (M, N)$ and $M \le N$ instead of $({\le}_{\answertype})\ (M, N)$.

\subsubsection{Semantics}
Suppose that a $\lambdaHFL$-model and an interpretation of atomic predicates in $\AtomicPreds$ are given: for any $\mathbf{ap} : \sigma$, let $\interpret{\mathbf{ap}} \subseteq \interpret{\sigma}$ be a subset that gives an interpretation of $\mathbf{ap}$.
We assume that we always interpret ${\le}_{\answertype}$ by $\interpret{{\le}_{\answertype}} \coloneqq \{ (x, y) \in \interpret{\answertype} \times \interpret{\answertype} \mid x \le_{\interpret{\answertype}} y \} \subseteq \interpret{\answertype \times \answertype}$ and ${=}_b$ by $\interpret{{=}_b} \coloneqq \{ (x, x) \in \interpret{b} \times \interpret{b} \mid x \in \interpret{b} \} \subseteq \interpret{b \times b}$.
A well-formed formula $\Gamma \vdash \phi$ is interpreted as the subset $\interpret{\phi} \subseteq \interpret{\Gamma}$ defined below.
We say $\gamma \in \interpret{\Gamma}$ \emph{satisfies} $\phi$ if $\gamma \in \interpret{\phi}$.
\begin{align}
	\interpret{\top} &\coloneqq \interpret{\Gamma} &
	\interpret{\psi \land \phi} &\coloneqq \interpret{\psi} \cap \interpret{\phi} &
	\interpret{\mathbf{ar}(M)} &\coloneqq \{ \gamma \in \interpret{\Gamma} \mid \interpret{M}(\gamma) \in \interpret{\mathbf{ap}} \} \\
	\interpret{\bot} &\coloneqq \emptyset &
	\interpret{\psi \lor \phi} &\coloneqq \interpret{\psi} \cup \interpret{\phi} &
	\interpret{\psi \implies \phi} &\coloneqq \{ \gamma \in \interpret{\Gamma} \mid \gamma \in \interpret{\psi} \implies \gamma \in \interpret{\phi} \}
\end{align}

\subsection{Refinement Types}

\subsubsection{Syntax}
\emph{Refinement types} $\dot{\sigma}, \dot{\tau}$ are defined as follows.
\begin{align}
	\dot{\sigma}, \dot{\tau} \coloneqq \{ v : b \mid \phi \} \mid \{ v : \answertype \mid \phi \} \mid \{ v : 1 \mid \phi \} \mid (x : \dot{\sigma}) \times \dot{\tau} \mid 0 \mid \dot{\sigma} + \dot{\tau} \mid (x : \dot{\sigma}) \to \{ v : \answertype \mid \phi \}
\end{align}
The \emph{underlying type} $\underlying{\dot{\sigma}}$ of a refinement type $\dot{\sigma}$ is a simple type obtained by erasing formulas.
For example, we have $\underlying{\{ v : \sigma \mid \phi \}} = \sigma$ and $\underlying{(x : \dot{\sigma}) \to \dot{\tau}} = \underlying{\dot{\sigma}} \to \underlying{\dot{\tau}}$.
The \emph{underlying context} $\underlying{\dot{\Gamma}}$ is defined by $\underlying{x_1 : \dot{\sigma}_1, \dots, x_n : \dot{\sigma}_n} \coloneqq x_1 : \underlying{\dot{\sigma}_1}, \dots, x_n : \underlying{\dot{\sigma}_n}$.
For any $\sigma \in \BaseTypes \cup \{ 1, \answertype \}$, we often write $\sigma \coloneqq \{ v : \sigma \mid \top \}$.
For any $\sigma_1, \sigma_2 \in \BaseTypes \cup \{ 1, \answertype \}$, we define $\{ (v_1, v_2) : \sigma_1 \times \sigma_2 \mid \phi \} \coloneqq (v_1 : \sigma_1) \times \{ v_2 : \sigma_2 \mid \phi \}$.
For $n > 2$, we define $\{ (v_1, \dots, v_n) : \sigma_1 \times \dots \times \sigma_n \mid \phi \}$ in the same way.

\emph{Refinement contexts} $\dot{\Gamma}$ are defined by a list of pairs of a variable and a refinement type: $\dot{\Gamma} \coloneqq \cdot \mid \dot{\Gamma}, x : \dot{\sigma}$.
We assume variables in a context are mutually distinct.
We define \emph{well-formed contexts} $\vdash \dot{\Gamma}$ and \emph{well-formed types} $\dot{\Gamma} \vdash \dot{\sigma}$ inductively so that any formula in $\dot{\Gamma}$ and $\dot{\sigma}$ is well-formed (\referappendix{sec:detail-refinement}{B}).
Note that we write a dot on top of each meta-variable for types $\sigma, \tau$ and contexts $\Gamma$ to distinguish refinement types from simple types in Section~\ref{sec:hfl}.

\subsubsection{Semantics}

Intuitively, a refinement type represents a subset of its underlying type.
Taking this into account, we define the interpretation of refinement contexts and refinement types as follows.
For each well-formed context $\vdash \dot{\Gamma}$, the interpretation $\interpret{\dot{\Gamma}}$ is given by a pair $(X, P)$ where $X = \interpret{\underlying{\dot{\Gamma}}}$ is the interpretation of the underlying context of $\dot{\Gamma}$, and $P \subseteq X$ is a subset.
Intuitively, $P$ is the set of all $\gamma \in \interpret{\underlying{\dot{\Gamma}}}$ that satisfy all formulas in $\dot{\Gamma}$.
For each well-formed type $\dot{\Gamma} \vdash \dot{\sigma}$, the interpretation $\interpret{\dot{\Gamma} \vdash \dot{\sigma}}$ is given by a quadruple $(X, Y, P, Q)$ where $(X, P) = \interpret{\dot{\Gamma}}$ is the interpretation of $\dot{\Gamma}$, $Y = \interpret{\underlying{\dot{\sigma}}}$ is the interpretation of the underlying type of $\dot{\sigma}$, and $Q \subseteq X \times Y$ is a subset such that $(x, y) \in Q$ implies $x \in P$.
Examples of the interpretation are given below.
\begin{align}
	\intertext{Let \hspace{1em} $\dot{\Gamma}_1 \ \coloneqq\ x : \{ x : \mathbf{int} \mid 0 \le x \}$ \hspace{1em} and \hspace{1em} $\dot{\Gamma}_2 \ \coloneqq\ x : \{ x : \mathbf{int} \mid 0 \le x \}, y : \{ y : \mathbf{int} \mid x \le y \}$.}
	&\interpret{\dot{\Gamma}_1} &&=\quad (\mathbb{Z},\quad \{ x \in \mathbb{Z} \mid 0 \le x \}) \\
	&\interpret{\dot{\Gamma}_1 \vdash \{ y : \mathbf{int} \mid x \le y \}} &&=\quad (\mathbb{Z},\quad \mathbb{Z},\quad \{ x \in \mathbb{Z} \mid 0 \le x \},\quad \{ (x, y) \in \mathbb{Z} \times \mathbb{Z} \mid 0 \le x \land x \le y \}) \\
	&\interpret{\dot{\Gamma}_2} &&=\quad (\mathbb{Z} \times \mathbb{Z},\quad \{ (x, y) \in \mathbb{Z} \times \mathbb{Z} \mid 0 \le x \land x \le y \})
\end{align}
We often write $\interpret{-}_i$ for the $i$-th component of $\interpret{-}$.
Technically, this interpretation is based on a categorical construction of models of dependent refinement type systems~\cite{kura2021} (see \referappendixnocite{sec:detail-refinement}{B} for details).
Note that if $(X, Y, P, Q) = \interpret{\dot{\Gamma} \vdash \dot{\sigma}}$, then we have $\interpret{\dot{\Gamma}, x : \dot{\sigma}} = (X \times Y, Q)$.

Using our denotational semantics, a (semantic) subtyping relation can be defined as follows.
A type $\dot{\Gamma} \vdash \dot{\sigma}$ is a \emph{semantic subtype} of $\dot{\Gamma} \vdash \dot{\tau}$ if $\interpret{\dot{\Gamma}, x : \dot{\sigma}}_1 = \interpret{\dot{\Gamma}, x : \dot{\tau}}_1$ and $\interpret{\dot{\Gamma}, x : \dot{\sigma}}_2 \subseteq \interpret{\dot{\Gamma}, x : \dot{\tau}}_2$.

\subsection{Typing Rules}\label{sec:refinement-typing-rule}
\begin{figure}
	\begin{mathpar}
		\small
		\inferrule[R-App]{
			\dot{\Gamma} \vdash M : (x : \dot{\sigma}) \to \{ v : \answertype \mid \phi \} \\
			\dot{\Gamma} \vdash N : \dot{\sigma}
		}{
			\dot{\Gamma} \vdash M\ N : \{ v : \answertype \mid \phi[N/x] \}
		}
		\and
		\inferrule[R-Abs]{
			\dot{\Gamma}, x : \dot{\sigma} \vdash M : \{ v : \answertype \mid \phi \}
		}{
			\dot{\Gamma} \vdash \lambda x : \dot{\sigma}. M : (x : \dot{\sigma}) \to \{ v : \answertype \mid \phi \}
		}
		\and
		\inferrule[R-Unit]{
			\vdash \dot{\Gamma}
		}{
			\dot{\Gamma} \vdash () : \{ v : 1 \mid \top \}
		}
		\and
		\inferrule[R-Pair]{
			\dot{\Gamma} \vdash M : \dot{\sigma} \\
			\dot{\Gamma} \vdash N : \dot{\tau}[M / x]
		}{
			\dot{\Gamma} \vdash (M, N) : (x : \dot{\sigma}) \times \dot{\tau}
		}
		\and
		\inferrule[R-Fst]{
			\dot{\Gamma} \vdash M : (x : \dot{\sigma}) \times \dot{\tau}
		}{
			\dot{\Gamma} \vdash \pi_1\ M : \dot{\sigma}
		}
		\and
		\inferrule[R-Snd]{
			\dot{\Gamma} \vdash M : (x : \dot{\sigma}) \times \dot{\tau}
		}{
			\dot{\Gamma} \vdash \pi_2\ M : \dot{\tau}[\pi_1\ M / x]
		}
		\and
		\inferrule[R-Case0]{
			\dot{\Gamma} \vdash M : 0 \\
			\dot{\Gamma} \vdash \dot{\tau}
		}{
			\dot{\Gamma} \vdash \delta(M) : \dot{\tau}
		}
		\and
		\inferrule[R-Inj]{
			i \in \{ 1, 2 \} \\
			\dot{\Gamma} \vdash M : \dot{\sigma}_i \\
			\dot{\Gamma} \vdash \dot{\sigma}_{3-i}
		}{
			\dot{\Gamma} \vdash \iota_i\ M : \dot{\sigma}_1 + \dot{\sigma}_2
		}
		\and
		\inferrule[R-Case2]{
			\dot{\Gamma}, z : \dot{\sigma}_1 + \dot{\sigma}_2 \vdash \dot{\tau} \\
			\dot{\Gamma} \vdash M : \dot{\sigma}_1 + \dot{\sigma}_2 \\
			\dot{\Gamma}, x_1 : \dot{\sigma}_1 \vdash N_1 : \dot{\tau}[\iota_1\ x_1/z] \\
			\dot{\Gamma}, x_2 : \dot{\sigma}_2 \vdash N_2 : \dot{\tau}[\iota_2\ x_2/z]
		}{
			\dot{\Gamma} \vdash \caseexpr{M}{x_1}{N_1}{x_2}{N_2} : \dot{\tau}[M/z]
		}
		\and
		\inferrule[R-VarRefine]{
			\vdash \dot{\Gamma} \\
			(x : \{ v : b \mid \phi \}) \in \dot{\Gamma}
		}{
			\dot{\Gamma} \vdash x : \{ v : b \mid v = x \}
		}
		\and
		\inferrule[R-Var]{
			\vdash \dot{\Gamma} \\
			(x : \dot{\sigma}) \in \dot{\Gamma}
		}{
			\dot{\Gamma} \vdash x : \dot{\sigma}
		}
		\and
		\inferrule[R-Sub]{
			\dot{\Gamma} \vdash M : \dot{\sigma} \\
			\dot{\Gamma} \vdash \dot{\sigma} <: \dot{\tau}
		}{
			\dot{\Gamma} \vdash M : \dot{\tau}
		}
		\and
		\inferrule[R-Fix]{
			\dot{\Gamma}, f : (x : \dot{\sigma}) \to \{ v : \answertype \mid \phi \} \vdash M : (x : \dot{\sigma}) \to \{ v : \answertype \mid \phi \} \\
			\underlying{\dot{\Gamma}}, x : \underlying{\dot{\sigma}}, v : \answertype \vdash \phi \\
			\text{$\phi$ is admissible at $v$}
		}{
			\dot{\Gamma} \vdash \fixpoint{f}{M} : (x : \dot{\sigma}) \to \{ v : \answertype \mid \phi \}
		}
		\and
		\inferrule[R-BasicOp]{
			\dot{\Gamma} \vdash M : \dot{\sigma} \\
			\dot{\Gamma} \vdash \dot{\tau} \\
			\mathrm{ar}(\mathbf{op}) = \underlying{\dot{\sigma}} \\
			\mathrm{car}(\mathbf{op}) = \underlying{\dot{\tau}} \\
			\forall \gamma \in \interpret{\dot{\Gamma}, v : \dot{\sigma}}_2, \gamma[v \mapsto a(\mathbf{op})(\gamma(v))] \in \interpret{\dot{\Gamma}, v : \dot{\tau}}_2
		}{
			\dot{\Gamma} \vdash \mathbf{op}(M) : \dot{\tau}
		}
	\end{mathpar}
	\caption{Typing rules for $\dot{\Gamma} \vdash M : \dot{\sigma}$.}
	\label{fig:typing-refinement}
\end{figure}

Fig.~\ref{fig:typing-refinement} shows typing rules for dependent refinement types.
Here, a judgement for well-typed-ness has the form $\dot{\Gamma} \vdash M : \dot{\sigma}$.
We can use ordinary typing rules for dependent refinement types in most cases except for \textsc{R-Fix} and \textsc{R-BasicOp}.
\textsc{R-Fix} and \textsc{R-BasicOp} contain semantic conditions in their premises, which will be explained in Section~\ref{sec:typing-fixed-point} and Section~\ref{sec:typing-basic-operators}, respectively.
In \textsc{R-Sub}, we use a subtyping relation $\dot{\Gamma} \vdash \dot{\sigma} <: \dot{\tau}$.
Since derivation rules for the subtyping relation are standard, they are omitted here (see \referappendixnocite{sec:detail-refinement}{B} for details).
The soundness of the typing rules is stated as follows.

\begin{theorem}[soundness]\label{thm:soundness}
	For any $\lambdaHFL$-model, typing rules in Fig.~\ref{fig:typing-refinement} are sound: if $\dot{\Gamma} \vdash M : \dot{\sigma}$ is well-typed, then for any $\gamma \in \interpret{\dot{\Gamma}}_2$, we have $(\gamma, \interpret{M}(\gamma)) \in \interpret{\dot{\Gamma}, r : \dot{\sigma}}_2$ where $\interpret{\dot{\Gamma}}_2$ is the second component of $\interpret{\dot{\Gamma}}$.
\end{theorem}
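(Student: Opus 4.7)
The plan is to prove the soundness theorem by induction on the derivation of $\dot{\Gamma} \vdash M : \dot{\sigma}$, showing for each typing rule that the semantic condition at the conclusion follows from those at the premises via the induction hypothesis. Two preparatory lemmas underpin nearly every case: a \emph{substitution lemma} asserting that $\interpret{\phi[N/x]}(\gamma) = \interpret{\phi}(\gamma[x \mapsto \interpret{N}(\gamma)])$ (and analogously for refinement types appearing in dependent positions), and a \emph{weakening lemma}; both follow from the categorical construction of refinement type models in \cite{kura2021}. With these in hand, the cases R-Unit, R-Pair, R-Fst, R-Snd, R-Case0, R-Inj, R-Case2, R-Var, and R-VarRefine all reduce to direct unfolding of the interpretation of refinement contexts and types as the subset-valued quadruples described in the text.

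For R-App and R-Abs, soundness reduces to the curry/uncurry adjunction of the function-space construction, together with the substitution lemma to discharge the dependency in $\phi[N/x]$. For R-Sub, one first proves, by an auxiliary induction on the subtyping derivation, that the syntactic subtyping judgment $\dot{\Gamma} \vdash \dot{\sigma} <: \dot{\tau}$ is sound with respect to the semantic subtyping defined via the quadruple interpretation; this follows the standard pattern for dependent refinement types. For R-BasicOp, the semantic side condition in the premise is precisely the soundness statement, so the case is immediate.

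The main obstacle is R-Fix. Given the outer induction hypothesis on the body $M$, we must show that $\interpret{\fixpoint{f}{M}}(\gamma)$ respects $(x : \dot{\sigma}) \to \{ v : \answertype \mid \phi \}$ for every $\gamma \in \interpret{\dot{\Gamma}}_2$. Since the lfp is computed as the supremum of the Kleene chain $f_0 = \bot$, $f_{n+1} = \interpret{M}(\gamma[f \mapsto f_n])$, the strategy is to prove by an inner induction on $n$ that each $f_n$ pointwise satisfies $\phi$: the base case uses the first clause of admissibility (that $\bot_{\answertype}$ lies in the interpretation of $\phi$ at every $x$), and the inductive step applies the outer IH with $f$ bound to $f_n$, using the soundness just established for $f_n$ as the premise that $f_n$ inhabits the relevant refinement type. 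The conclusion that $\sup_n f_n$ still satisfies $\phi$ pointwise then follows from the second clause of admissibility, namely closure under $\omega$-suprema, together with the fact that evaluation in the function space commutes with the sup of an $\omega$-chain.

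A subtle point is that the argument must be carried out in the abstract setting of a general $\lambdaHFL$-model, not only $\omegaCPO$ or $\omegaQBS$, so the existence of a converging chain of approximants whose supremum is the lfp, together with the closure of $\interpret{\phi}$ under that supremum, must be part of the standing assumptions on $\lambdaHFL$-models and on the interpretation of atomic predicates. Verifying that these assumptions suffice for the argument above is where the categorical framework developed in \referappendix{sec:detail-hfl}{A} does the real work; once that framework is in place, the remaining induction is mechanical.
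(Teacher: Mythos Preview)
Your proposal is correct and follows essentially the same route as the paper: induction on the typing derivation, with substitution and weakening lemmas doing the plumbing, an auxiliary soundness-of-subtyping induction for \textsc{R-Sub}, and the Kleene-chain-plus-admissibility argument for \textsc{R-Fix}. One small clarification: the closure of $\interpret{\phi}$ under $\omega$-suprema is not a ``standing assumption on the interpretation of atomic predicates'' but precisely the explicit admissibility premise of \textsc{R-Fix}; the paper also makes this step slightly more structural by first lifting admissibility from $\phi$ to the whole function-type predicate (showing $\{ g \mid \forall x.\ (\gamma,x,g(x)) \in \interpret{\phi} \}$ is itself admissible in $\exponential{\interpret{\underlying{\dot{\sigma}}}}{\interpret{\answertype}}$) and then running the fixed-point argument once at that level, rather than pointwise as you sketch, but the two organisations are equivalent.
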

\begin{proof}
	By induction of derivation of $\dot{\Gamma} \vdash M : \dot{\sigma}$.
	We use a general categorical construction of \cite{kura2021}.
	For the case of \textsc{R-Fix}, see Lemma~\ref{lem:r-fix-sound}.
	Other cases are proved in \referappendixnocite{sec:detail-proof}{C}.
\end{proof}

The following corollary states the soundness in a more intuitive way.
\begin{corollary}\label{cor:soundness}
	For any $\lambdaHFL$-model, if $\vdash M : \{ v : \sigma \mid \phi \}$ is well-typed, then $\vdash \phi[M/v]$ is true.
	\qed
\end{corollary}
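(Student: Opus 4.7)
The plan is to derive the corollary as a direct specialization of Theorem~\ref{thm:soundness} to the empty refinement context, combined with a routine substitution lemma for formulas.

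First, I would instantiate Theorem~\ref{thm:soundness} with $\dot{\Gamma}$ equal to the empty context and $\dot{\sigma} = \{v : \sigma \mid \phi\}$. The second component $\interpret{\cdot}_2$ of the interpretation of the empty context is the one-point set $\{\emptyenv\}$, since there are zero formulas to satisfy. Applying the theorem to $\gamma = \emptyenv$ yields $(\emptyenv, \interpret{M}(\emptyenv)) \in \interpret{\cdot, r : \{v : \sigma \mid \phi\}}_2$. Unfolding the interpretation of an extended refinement context via the rule $\interpret{\dot{\Gamma}, x : \dot{\sigma}} = (X \times Y, Q)$ stated in Section~\ref{sec:refinement}, this membership is equivalent to saying that the environment $\emptyenv[v \mapsto \interpret{M}(\emptyenv)]$ satisfies $\phi$, i.e., lies in $\interpret{v : \sigma \vdash \phi}$.

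Second, I would invoke a standard substitution lemma of the form $\interpret{\phi[M/v]}(\gamma) = \interpret{\phi}(\gamma[v \mapsto \interpret{M}(\gamma)])$, whenever the substitution is type-compatible. This lemma is proved by straightforward induction on the structure of $\phi$, reducing at the atomic case to the analogous substitution lemma for $\lambdaHFL$-terms, which is routine for a simply-typed lambda calculus with fixed points. Applied to $\gamma = \emptyenv$ and combined with the previous step, it gives $\emptyenv \in \interpret{\phi[M/v]}$, which is precisely the assertion that $\vdash \phi[M/v]$ holds.

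The only delicate point is a matter of convention rather than substantive difficulty: the phrase ``$\vdash \phi[M/v]$ is true'' must be read as $\interpret{\phi[M/v]} = \{\emptyenv\}$, i.e.\ the closed formula is valid. Once that reading is fixed, the corollary requires no fresh induction, no categorical reasoning, and no $\lambdaHFL$-model-specific assumptions beyond those already absorbed into Theorem~\ref{thm:soundness}; it is essentially a rephrasing of the theorem in the closed-term case after applying the substitution lemma.
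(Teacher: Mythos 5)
Your proposal is correct and matches the route the paper intends: the corollary is stated with an immediate \qed as a direct specialization of Theorem~\ref{thm:soundness} to the empty context, with the link between membership in $\interpret{\cdot, r : \{v:\sigma\mid\phi\}}_2$ and truth of $\phi[M/v]$ supplied by exactly the substitution lemma for formulas that the paper proves in its appendix (Lemma~\ref{lem:subst-formula}). Your spelled-out version adds no new ideas beyond what the paper's \qed elides, and all the steps check out.
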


Since Corollary~\ref{cor:soundness} hold for any $\lambdaHFL$-model, we obtain soundness for each instance in Section~\ref{sec:instance}.
\begin{corollary}[soundness of weakest pre-expectations and expected costs]
	Consider the $\lambdaHFL$-model for weakest pre-expectations and expected costs.
	For any $u \in \interpret{\answertype} = ([0, \infty], {\le})$,
	\begin{equation}
		\text{if} \quad \vdash M : \{ v : \answertype \mid v \le u \} \quad \text{is well-typed, then} \quad \interpret{M}(\emptyenv) \le u.
		\tag*{\qed}
	\end{equation}
\end{corollary}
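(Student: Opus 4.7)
The plan is to derive the statement as a direct specialisation of Corollary~\ref{cor:soundness}, noting that almost all the semantic content has already been discharged there. First, I would observe that the hypothesis $\vdash M : \{ v : \answertype \mid v \le u \}$ fits the shape required by Corollary~\ref{cor:soundness}, with $\sigma = \answertype$ and refinement formula $\phi \coloneqq (v \le u)$. The only mild point to address is that $u$ appears both as a semantic element of $[0, \infty]$ and as a subterm of the refinement formula; I would handle this by treating $u$ as a constant in $\BasicOps$, i.e.\ a nullary basic operator of coarity $\answertype$ whose interpretation is the chosen element $u \in [0, \infty]$. This is consistent with the treatment of numerical constants described in Section~\ref{sec:hfl} and implicit in the CPS examples of Section~\ref{sec:instance} (e.g.\ the constants $1/2, 1 : \mathbf{unit} \rightarrowtriangle \answertype$ in Example~\ref{ex:rec3}).

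With that identification in place, applying Corollary~\ref{cor:soundness} yields that $\vdash \phi[M/v]$, i.e.\ $\vdash M \le u$, is true. Unfolding the semantics of formulas for atomic predicates at the empty environment, this means that $\emptyenv \in \interpret{\vdash M \le u}$, which by definition amounts to $(\interpret{M}(\emptyenv),\, \interpret{u}(\emptyenv)) \in \interpret{{\le}_{\answertype}}$. By the convention fixed in the definition of the interpretation of atomic predicates, $\interpret{{\le}_{\answertype}} = \{ (x, y) \in \interpret{\answertype} \times \interpret{\answertype} \mid x \le_{\interpret{\answertype}} y \}$, so this reduces to $\interpret{M}(\emptyenv) \le_{\interpret{\answertype}} \interpret{u}(\emptyenv)$.

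Finally, I would instantiate the $\lambdaHFL$-model fixed in Sections~\ref{subsec:instance-expected-cost} and \ref{subsec:instance-weakest-preexpectation}: here $\interpret{\answertype} = ([0, \infty], {\le})$ with the standard order on the extended nonnegative reals, and by construction $\interpret{u}(\emptyenv) = u$. Putting these together gives $\interpret{M}(\emptyenv) \le u$ in $[0, \infty]$, which is exactly the claim. There is essentially no obstacle: the work lies in Theorem~\ref{thm:soundness} (and hence Corollary~\ref{cor:soundness}), and the present corollary is just a transcription of that soundness statement into the specific semantic vocabulary of expected-cost and weakest-pre-expectation analysis. The only thing worth being explicit about in the write-up is the legitimacy of using $u$ as a term, which the constant-operator convention above resolves cleanly.
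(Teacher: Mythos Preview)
Your proposal is correct and takes essentially the same approach as the paper: the corollary is stated with an inline \qed and the paper simply remarks that Corollary~\ref{cor:soundness} holds for any $\lambdaHFL$-model, so soundness for each instance in Section~\ref{sec:instance} follows immediately. Your write-up just makes explicit the unfolding of the formula semantics and the treatment of $u$ as a constant operator, which the paper leaves implicit.
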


\begin{corollary}[soundness of cost moment analyses]
	Consider the $\lambdaHFL$-model for cost moment analyses.
	For any $(u_1, \dots, u_n) \in \interpret{\answertype} = ([0, \infty]^n, {\le})$,
	\begin{equation}
		\text{if} \quad \vdash M : \{ v : \answertype \mid v \le (u_1, \dots, u_n) \} \quad \text{is well-typed, then} \quad \interpret{M}(\emptyenv) \le (u_1, \dots, u_n).
		\tag*{\qed}
	\end{equation}
\end{corollary}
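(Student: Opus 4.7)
The plan is to derive this corollary as a direct instantiation of Corollary~\ref{cor:soundness} in the specific $\lambdaHFL$-model for cost moment analyses defined in Section~\ref{subsec:instance-cost-moment}. I would set $\sigma = \answertype$ and take $\phi$ to be the formula $v \le (u_1, \dots, u_n)$, where the tuple $(u_1, \dots, u_n)$ is exposed as a closed $\lambdaHFL$-term via nullary basic operators of type $1 \rightarrowtriangle \answertype$ for each component (analogous to how integer literals are treated in Section~\ref{sec:hfl}). Applying Corollary~\ref{cor:soundness} then immediately yields that $\vdash \phi[M/v]$ is true, i.e.\ semantic validity of the inequality.

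Next, I would unfold what this semantic truth means in the cost moment model. By the standing convention on atomic predicates, ${\le}_{\answertype}$ is always interpreted as $\{ (x, y) \in \interpret{\answertype}^2 \mid x \le_{\interpret{\answertype}} y \}$; and the cost moment $\lambdaHFL$-model stipulates $\interpret{\answertype} = ([0, \infty]^n, {\le})$ equipped with the component-wise order. Combining these, semantic truth of $v \le (u_1, \dots, u_n)$ under the substitution $M/v$ is, by definition, exactly the component-wise inequality $\interpret{M}(\emptyenv) \le (u_1, \dots, u_n)$ in $[0, \infty]^n$, which is the desired conclusion.

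I do not anticipate any real obstacle. All the substantive work---soundness of the typing rules for fixed points via admissibility, basic operators, subtyping, and substitution---has been absorbed into Theorem~\ref{thm:soundness} (and its restatement as Corollary~\ref{cor:soundness}), and the cost moment instance merely amounts to specialising the generic interpretation of $\answertype$ to $[0, \infty]^n$. The one piece of book-keeping worth calling out is confirming that $(u_1, \dots, u_n)$ may legitimately be regarded as a well-formed $\lambdaHFL$-term so that the formula $v \le (u_1, \dots, u_n)$ is well-formed; this is routine given the assumed constants and the pairing constructor of $\lambdaHFL$, and requires no new argument beyond what is already in Section~\ref{sec:hfl}.
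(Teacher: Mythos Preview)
Your proposal is correct and matches the paper's approach: the corollary is stated with a bare \qed precisely because it is an immediate instantiation of Corollary~\ref{cor:soundness} in the cost moment model, exactly as you outline. The unpacking you give (interpretation of ${\le}_{\answertype}$, component-wise order on $[0,\infty]^n$, and the bookkeeping that $(u_1,\dots,u_n)$ is a well-formed closed term) is all that is needed, and the paper treats it as understood.
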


\begin{corollary}[soundness of the conditional weakest pre-expectations]
	Consider the $\lambdaHFL$-model for conditional weakest pre-expectations.
	For any $(u_1, u_2) \in [0, \infty] \times [0, 1]^{\op}$,
	\[ \text{if} \quad \vdash M : \{ v : \answertype \mid v \le_{\answertype} (u_1, u_2) \} \quad \text{is well-typed, then} \quad \interpret{M}_1(\emptyenv) \le u_1 \land \interpret{M}_2(\emptyenv) \ge u_2 \]
	where $\interpret{M}_i(\emptyenv)$ is the $i$-th component of $\interpret{M}(\emptyenv) \in \interpret{\answertype} = [0, \infty] \times [0, 1]^{\op}$.
	\qed
\end{corollary}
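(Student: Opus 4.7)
My plan is to proceed by induction on the derivation of $\dot{\Gamma} \vdash M : \dot{\sigma}$, establishing at each step that for every $\gamma \in \interpret{\dot{\Gamma}}_2$ one has $(\gamma, \interpret{M}(\gamma)) \in \interpret{\dot{\Gamma}, r : \dot{\sigma}}_2$. For the structural rules (\textsc{R-App}, \textsc{R-Abs}, \textsc{R-Pair}, \textsc{R-Fst}, \textsc{R-Snd}, \textsc{R-Unit}, \textsc{R-Inj}, \textsc{R-Case0}, \textsc{R-Case2}, \textsc{R-Var}, \textsc{R-VarRefine}, \textsc{R-Sub}) I would invoke the categorical construction of dependent refinement types from~\cite{kura2021}, in which refinement types correspond to predicates living in a fibration over the $\lambdaHFL$-model. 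Each standard introduction/elimination rule for $\to$, $\times$, $+$ then corresponds to a universal property in the total category that automatically transports the predicate interpretation, so these cases reduce to routine unfolding together with a substitution lemma of the form $\interpret{\dot{\tau}[N/x]}_2 = \interpret{\dot{\tau}}_2 \circ \langle \mathrm{id}, \interpret{N} \rangle$ (needed for \textsc{R-App}, \textsc{R-Pair}, \textsc{R-Snd}, \textsc{R-Case2}). \textsc{R-Sub} is immediate from the soundness of the subtyping judgement, and \textsc{R-BasicOp} is immediate from the semantic side condition listed explicitly among its premises.

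The genuinely new content lies in \textsc{R-Fix}, which I would isolate as the referenced Lemma~\ref{lem:r-fix-sound}. Fix $\gamma \in \interpret{\dot{\Gamma}}_2$ and let $F \colon \exponential{\interpret{\underlying{\dot{\sigma}}}}{\interpret{\answertype}} \to \exponential{\interpret{\underlying{\dot{\sigma}}}}{\interpret{\answertype}}$ be the functional $g \mapsto \interpret{M}(\gamma[f \mapsto g])$, so that $\interpret{\fixpoint{f}{M}}(\gamma) = \mathrm{lfp}(F) = \sup_n F^n(\bot)$. Writing $\Phi_\gamma$ for the fibre at $\gamma$ of $\interpret{\dot{\Gamma}, f : (x : \dot{\sigma}) \to \{ v : \answertype \mid \phi \}}_2$, I would show by induction on $n$ that $F^n(\bot) \in \Phi_\gamma$ for every $n$: the base case uses the ``$\bot$ satisfies $\phi$'' half of admissibility (lifted pointwise to function spaces), while the step is the induction hypothesis on $M$ instantiated with $f \mathrel{:=} F^n(\bot)$. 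The closure-under-$\omega$-sups half of admissibility then gives $\sup_n F^n(\bot) \in \Phi_\gamma$, which is the desired conclusion.

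The main obstacle is precisely this \textsc{R-Fix} case: both halves of the admissibility side condition are doing real work, and without either one the argument collapses, as the counterexample $r < 1$ for Example~\ref{ex:coin-flip} confirms. A subsidiary subtlety is that the admissibility predicate on the codomain refinement must be lifted correctly to the function space $\exponential{\interpret{\underlying{\dot{\sigma}}}}{\interpret{\answertype}}$, which is pointwise and therefore clean, but the lifting must be compatible with the fibred semantics so that membership in $\Phi_\gamma$ matches the pointwise satisfaction used in the inductive step. Once the fibred model from~\cite{kura2021} has been verified to instantiate the refinement semantics defined here, Corollary~\ref{cor:soundness} follows immediately by specialising the theorem to the empty context.
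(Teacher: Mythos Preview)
Your proposal is a correct outline, but it is aimed at the wrong target. What you have written is essentially a proof sketch of Theorem~\ref{thm:soundness} (the general soundness theorem): induction on derivations, the categorical fibration from~\cite{kura2021} for the structural rules, and the admissibility argument for \textsc{R-Fix}. That sketch is broadly faithful to how the paper actually proves Theorem~\ref{thm:soundness}, so it is not wrong, merely misplaced.

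The corollary you were asked to prove has no proof in the paper beyond the \qed\ symbol: it is an immediate specialisation of the already-established Corollary~\ref{cor:soundness}. The only content specific to \emph{this} corollary is the one-line observation that in the conditional-weakest-pre-expectation model $\interpret{\answertype} = [0,\infty] \times [0,1]^{\op}$, the order $\le_{\answertype}$ is the product order with the second component reversed, so $\interpret{M}(\emptyenv) \le_{\answertype} (u_1,u_2)$ unpacks to $\interpret{M}_1(\emptyenv) \le u_1$ and $\interpret{M}_2(\emptyenv) \ge u_2$. Your proposal never performs this step; indeed, your final sentence explicitly concludes with Corollary~\ref{cor:soundness}, which is a different statement from the one under consideration. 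So while nothing you wrote is incorrect, you have reproved the hard theorem that the paper takes as given and omitted the trivial unpacking that is the actual content of this corollary.
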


\subsubsection{Unsoundness of the Standard Rule for Recursion}
\begin{figure}
	\begin{mathpar}
		\inferrule[Fix-Unsound]{
			\Gamma, f : (x : \sigma) \to \{ v : \ExtNonnegRealType \mid \phi \} \vdash M : (x : \sigma) \to \{ v : \ExtNonnegRealType \mid \phi \}
		}{
			\Gamma \vdash \fixpoint{f}{M} : (x : \sigma) \to \{ v : \ExtNonnegRealType \mid \phi \}
		}
	\end{mathpar}
	\caption{The standard typing rule for (the partial correctness) of recursion is unsound for fixed points.}
	\label{fig:unsound-typing-fixed-point}
\end{figure}

The standard typing rule for recursion (\textsc{Fix-Unsound} in Fig.~\ref{fig:unsound-typing-fixed-point}) is unsound for fixed points. 
To give a counterexample, consider the term~\eqref{eq:coin-flip-cpsed} in Example~\ref{ex:coin-flip}.
We consider the following type where $\phi(r)$ is a predicate on the expected cost $r$.
\begin{equation}
	\coinflip : (x : \mathbf{unit}) \to (k : \mathbf{unit} \to \{ r : \ExtNonnegRealType \mid r = 0 \}) \to \{ r : \ExtNonnegRealType \mid \phi(r) \}
	\label{eq:coin-flip-refinemnt-type-general}
\end{equation}
If we use \textsc{Fix-Unsound}, we can derive $\phi(r) = r < 1$ and $\phi(r) = \bot$ since $\phi(r)$ implies $\phi(1/2 \cdot (1 + r) + 1/2 \cdot 0)$ in both cases.
However, both of them are incorrect because the true expected cost $r = 1$ doesn't satisfy neither $r < 1$ nor $\bot$.

Technically, the unsoundness of \textsc{Fix-Unsound} is because the semantics of $\mathbf{fix}$ is different from usual semantics of recursion: $\mathbf{fix}$ is defined by the least fixed point with respect to the standard order on $[0, \infty]$ whereas usual recursion is the least fixed point with respect to $X_{\bot} = X \cup \{ \bot \}$ for some $X$ where $\bot$ represents divergence and the partial order on $X_{\bot}$ is defined by (1) $\bot$ is the least element and (2) restriction of the partial order on $X$ is the discrete order.
Therefore, the typing rule for the latter semantics is unsound for the former.

\subsubsection{A Sound Typing Rule for Fixed Points}\label{sec:typing-fixed-point}
To fix the issue of \textsc{Fix-Unsound}, we take into account \emph{admissible subsets} (\emph{admissible predicates}) of $\omega$cpos and consider the typing rule \textsc{R-Fix}.
This approach is an adaptation of EHOL~\cite{avanzini2021} to a more automatable method, namely, a dependent refinement type system.
Furthermore, we prove that \textsc{R-Fix} is sound for any $\lambdaHFL$-models.
Thus, our dependent refinement type system is applicable to all verification problems listed in Section~\ref{sec:instance}.

Admissible subsets have a closure property with respect to least fixed points.
Recall that the least fixed point $\mathrm{lfp}\ f$ in $\omega$cpos is given by the supremum of an $\omega$-chain: assuming that an $\omega$cpo $X$ has a least element $\bot_X \in X$, we have $\mathrm{lfp}\ f = \sup_n f^n(\bot_X)$ for any Scott-continuous function $f : X \to X$.
Suppose that $S \subseteq X$ is a subset such that $f : X \to X$ maps elements in $S$ to elements in $S$.
Even in this situation, $\mathrm{lfp}\ f \in S$ is not necessarily true in general.
To ensure $\mathrm{lfp}\ f \in S$, we need the notion of admissible subsets.
\begin{definition}[admissible subset]
	Let $X$ be an $\omega$cpo with a least element $\bot_X \in X$.
	A subset $S \subseteq X$ is \emph{admissible} if (i) $\bot_X \in S$ and (ii) $S$ is closed under supremum of $\omega$-chains.
\end{definition}

\begin{example}\label{ex:admissible-in-ext-nonneg-real}
	Consider the $\omega$cpo $([0, \infty], {\le})$.
	For any $a \in [0, \infty]$, $\{ x \in [0, \infty] \mid x \le a \} \subseteq [0, \infty]$ is admissible.
	On the other hand, $\{ x \in [0, \infty] \mid x < a \} \subseteq [0, \infty]$ is \emph{not} admissible because this subset is not closed under supremum of $\omega$-chains.
	For any $a > 0$, $\{ x \in [0, \infty] \mid a \le x \} \subseteq [0, \infty]$ is \emph{not} admissible because the least element $0 \in [0, \infty]$ is not in this subset.
	\qed
\end{example}
Example~\ref{ex:admissible-in-ext-nonneg-real} suggests why $\phi(r) = r < 1$ should \emph{not} be derived for Example~\ref{ex:coin-flip} and~\eqref{eq:coin-flip-refinemnt-type-general}.
Recall that the interpretation of the fixed point in~\eqref{eq:coin-flip-cpsed} is given by the supremum of the $\omega$-chain $0 \le 1/2 \le 3/4 \le \dots \to 1$.
Although each element of the $\omega$-chain satisfies $\phi(r) = r < 1$, their supremum doesn't because $\phi(r) = r < 1$ is not admissible.

\begin{example}\label{ex:admissible-lifting-discrete}
	Consider the $\omega$cpo $X_{\bot} = (X \cup \{ \bot \}, {\le})$ where we have $x \le y$ if and only if $x = \bot$.
	For any subset $S \subseteq X$, $S \cup \{ \bot \} \subseteq X_{\bot}$ is admissible.
	\qed
\end{example}
Example~\ref{ex:admissible-lifting-discrete} is why the standard rule for partial correctness (\text{Fix-Unsound}) of recursion does not impose admissibility explicitly.
The semantics of recursion is usually defined using the $\omega$cpo $X_{\bot}$, in which case, the admissibility of predicates is guaranteed almost for free.

Recall that a formula $\Gamma \vdash \phi$ is interpreted as a subset of $\interpret{\Gamma} = \prod_i \interpret{\sigma_i}$.
For each variable $(x_i : \sigma_i) \in \Gamma$, $\interpret{\phi} \subseteq \interpret{\Gamma}$ induces a subset of $\interpret{\sigma_i}$ by fixing values assigned to other variables in $\Gamma$.
We define the admissibility of formulas by considering the admissibility of such subsets.
\begin{definition}[admissible formula]
	A well-formed formula $\Gamma \vdash \phi$ is \emph{admissible at variable $x$} if $(x : \answertype) \in \Gamma$ and for any $\gamma \in \interpret{\Gamma}$, $\{ v \in \interpret{\answertype} \mid \gamma[x \mapsto v] \in \interpret{\Gamma} \} \subseteq \interpret{\answertype}$ is admissible.
	Here, $\gamma[x \mapsto v]$ is the mapping such that $\gamma[x \mapsto v](x) = v$ and $\gamma[x \mapsto v](y) = \gamma(y)$ if $x \neq y$.
\end{definition}

\begin{lemma}\label{lem:r-fix-sound}
	\textsc{R-Fix} is sound in the following sense.
	Let $\dot{\tau} = (x : \dot{\sigma}) \to \{ v : \answertype \mid \phi \}$.
	If $M$ satisfies $(\gamma, \interpret{M}(\gamma)) \in \interpret{\dot{\Gamma}, f : \dot{\tau}, f' : \dot{\tau}}_2$ for any $\gamma \in \interpret{\dot{\Gamma}, f : \dot{\tau}}_2$ and $\underlying{\dot{\Gamma}}, x : \underlying{\dot{\sigma}}, v : \answertype \vdash \phi$ is admissible at $v$, then we have $(\gamma, \interpret{\fixpoint{f}{M}}(\gamma)) \in \interpret{\dot{\Gamma}, f : \dot{\tau}}_2$ for any $\gamma \in \interpret{\dot{\Gamma}}_2$.
\end{lemma}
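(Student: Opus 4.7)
The plan is to apply classical Scott/Park fixed-point induction: identify the set of ``good'' function-space elements, show that it contains the bottom element, is preserved by the functional whose least fixed point we are taking, and is closed under suprema of $\omega$-chains. The hypothesis on $M$ will deliver the preservation step, and the admissibility of $\phi$ will deliver the bottom and $\omega$-sup steps.

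Concretely, fix $\gamma \in \interpret{\dot{\Gamma}}_2$ and let $F$ be the Scott-continuous functional on $\exponential{\interpret{\underlying{\dot{\sigma}}}}{\interpret{\answertype}}$ defined by $F(g) = \interpret{M}(\gamma[f \mapsto g])$, so that by the semantics of fixed points $\interpret{\fixpoint{f}{M}}(\gamma) = \sup_n F^n(\bot)$, where $\bot$ is the constant function sending every argument to the bottom of $\interpret{\answertype}$. Define
\[ S \;=\; \{\, g \in \exponential{\interpret{\underlying{\dot{\sigma}}}}{\interpret{\answertype}} \mid (\gamma, g) \in \interpret{\dot{\Gamma}, f : \dot{\tau}}_2 \,\}. \]
The conclusion is exactly $\sup_n F^n(\bot) \in S$. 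Unfolding the interpretation of the dependent function refinement type reduces $g \in S$ to a pointwise statement: for every $a$ with $(\gamma, a) \in \interpret{\dot{\Gamma}, x : \dot{\sigma}}_2$ the triple $(\gamma, a, g(a))$ lies in $\interpret{\underlying{\dot{\Gamma}}, x : \underlying{\dot{\sigma}}, v : \answertype \vdash \phi}$.

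I would then verify three facts. First, $\bot \in S$: at each admissible $a$ this is exactly the bottom clause of admissibility of $\phi$ at $v$ instantiated at $(\gamma, a)$. Second, $F$ preserves $S$: if $g \in S$ then $(\gamma, g) \in \interpret{\dot{\Gamma}, f : \dot{\tau}}_2$, and the hypothesis on $M$ applied with the enlarged environment $\gamma[f \mapsto g]$ yields $(\gamma, g, F(g)) \in \interpret{\dot{\Gamma}, f : \dot{\tau}, f' : \dot{\tau}}_2$, whence $F(g) \in S$. Third, $S$ is closed under suprema of $\omega$-chains: since suprema in the function space are pointwise, closure reduces, at each $a$, to the $\omega$-sup clause of admissibility applied to the $\omega$-chain $\{ g_n(a) \}_n$ in $\interpret{\answertype}$. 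An induction on $n$ then gives $F^n(\bot) \in S$ for every $n$, and the $\omega$-sup closure delivers $\sup_n F^n(\bot) \in S$, which is the desired conclusion.

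The main obstacle I anticipate is the careful unfolding of the interpretation of the type $(x : \dot{\sigma}) \to \{ v : \answertype \mid \phi \}$: one has to justify that membership of a Scott-continuous function $g$ in this refined function type really decomposes into the pointwise predicate on $(a, g(a))$ used above, so that admissibility of $\phi$ at $v$ transfers directly to admissibility of $S$ inside the function space. For $\lambdaHFL$-models beyond $\omegaCPO$ this abstraction is where the categorical construction of \cite{kura2021} does the work, also justifying the presentation of the least fixed point as a supremum of an $\omega$-chain; once these ingredients are in place the fixed-point induction above goes through unchanged.
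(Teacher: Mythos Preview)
Your proposal is correct and follows essentially the same approach as the paper: fixed-point induction over the $\omega$-chain $F^n(\bot)$, with the admissibility of $\phi$ at $v$ supplying the bottom and sup-closure steps and the hypothesis on $M$ supplying the preservation step. The paper packages the transfer of admissibility from $\phi$ to the function-space predicate $S$ as two standalone lemmas (first lifting to the refinement type $\{v : \answertype \mid \phi\}$, then through the dependent product to $(x : \dot{\sigma}) \to \{v : \answertype \mid \phi\}$), and then invokes a general fixed-point lemma for admissible predicates, whereas you unfold the pointwise description of $S$ directly; the content is the same, and you correctly identified this unfolding as the only place where the general categorical model requires care.
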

\begin{proof}
	This is a part of the induction in Theorem~\ref{thm:soundness}.
	We sketch the proof (see \referappendixnocite{sec:detail-proof}{C} for details).
	Since $\underlying{\dot{\Gamma}}, x : \underlying{\dot{\sigma}}, v : \answertype \vdash \phi$ is admissible at $v$, we can prove that the forth component of $\interpret{\dot{\Gamma} \vdash (x : \dot{\sigma}) \to \{ v : \answertype \mid \phi \}}$ is also ``admissible'', which precisely means that for any $\gamma \in \interpret{\dot{\Gamma}}_2$,
	\[ \{ v : \interpret{\underlying{\dot{\sigma}}} \to \interpret{\answertype} \mid \gamma[f \mapsto v] \in \interpret{\dot{\Gamma} \vdash (x : \dot{\sigma}) \to \{ v : \answertype \mid \phi \}}_4 \} \subseteq \exponential{\interpret{\underlying{\dot{\sigma}}}}{\interpret{\answertype}} \]
	is admissible where $\interpret{-}_i$ is the $i$-th component of $\interpret{-}$.
	By definition of the interpretation, we have
	$\interpret{\fixpoint{f}{M}}(\gamma) = \sup_n F^n(\bot)$
	where $F \coloneqq \interpret{M}(\gamma[f \mapsto ({-})])$.
	By the induction hypothesis and by the admissibility of $\interpret{\dot{\Gamma} \vdash (x : \dot{\sigma}) \to \{ v : \answertype \mid \phi \}}$, we conclude $(\gamma, \interpret{\fixpoint{f}{M}}(\gamma)) \in \interpret{\dot{\Gamma}, f : (x : \dot{\sigma}) \to \{ v : \answertype \mid \phi \}}_2$ for any $\gamma \in \interpret{\dot{\Gamma}}_2$.
\end{proof}

To implement $\textsc{R-Fix}$ in a type checker, we need to check the admissibility of formulas \emph{syntactically}, which will be discussed later in Section~\ref{subsec:admissible}.

\subsubsection{Typing Rules for Basic Operators}\label{sec:typing-basic-operators}
\textsc{R-BasicOp} gives a general typing rule for basic operators.
One of its premises is a semantic condition $\forall \gamma \in \interpret{\dot{\Gamma}, v : \dot{\sigma}}_2, \gamma[v \mapsto a(\mathbf{op})(\gamma(v))] \in \interpret{\dot{\Gamma}, v : \dot{\tau}}_2$, which intuitively means that if an argument of $a(\mathbf{op})$ satisfies the precondition in $\dot{\sigma}$, then $a(\mathbf{op})$ returns a value that satisfies the postcondition in $\dot{\tau}$.
Although \textsc{R-BasicOp} is a natural and general rule, it is usually convenient to specialise it so that we can avoid the semantic condition in an implementation of a type checker.
Fortunately, rather common typing rules shown in Fig.~\ref{fig:basic-operator-typing} can be proved sound in any $\lambdaHFL$-model, and they cover most of basic operators that we need for verifying probabilistic programs.

\begin{figure}
	\begin{mathpar}
		\small
		\inferrule[R-BasicConst]{
			\mathbf{op} : 1 \to \tau \\
			\tau \in \BaseTypes \cup \{ \answertype \}
		}{
			\dot{\Gamma} \vdash \mathbf{op} : \{ v : \tau \mid v =_{\tau} \mathbf{op} \}
		}
		\and
		\inferrule[R-BasicSimp]{
			\mathbf{op} : \sigma_1 \times \dots \times \sigma_n \to \tau \\
			\sigma_1, \dots, \sigma_n, \tau \in \BaseTypes \cup \{ \answertype \} \\\\
			\dot{\Gamma} \vdash M : \{ (v_1, \dots, v_n) : \sigma_1 \times \dots \times \sigma_n \mid \phi[\mathbf{op}(v_1, \dots, v_n) / v] \}
		}{
			\dot{\Gamma} \vdash \mathbf{op}(M) : \{ v : \tau \mid \phi \}
		}
		\and
		\inferrule[R-BasicBool]{
			\mathbf{op} : \sigma_1 \times \dots \times \sigma_n \to 1 + 1 \\
			\sigma_1, \dots, \sigma_n \in \BaseTypes \cup \{ \answertype \} \\
			v_1 : \sigma_1, \dots, v_n : \sigma_n \vdash \psi_t \\
			\forall \gamma \in \interpret{(v_1, \dots, v_n) : \sigma_1 \times \dots \times \sigma_n \vdash \psi_t}, a(\mathbf{op})(\gamma) = \iota_1\ () \\
			v_1 : \sigma_1, \dots, v_n : \sigma_n \vdash \psi_f \\
			\forall \gamma \in \interpret{(v_1, \dots, v_n) : \sigma_1 \times \dots \times \sigma_n \vdash \psi_f}, a(\mathbf{op})(\gamma) = \iota_2\ () \\
			\dot{\Gamma} \vdash M : \{ (v_1, \dots, v_n) : \sigma_1 \times \dots \times \sigma_n \mid \psi_t \land \phi_t[()/v] \lor \psi_f \land \phi_f[()/v] \}
		}{
			\dot{\Gamma} \vdash \mathbf{op}(M) : \{ v : 1 \mid \phi_t \} + \{ v : 1 \mid \phi_f \}
		}
	\end{mathpar}
	\caption{Three common patterns of typing rules for basic operators.}
	\label{fig:basic-operator-typing}
\end{figure}

In Fig.~\ref{fig:basic-operator-typing}, the first rule \textsc{R-BasicConst} is a specialised rule for constants $\mathbf{op} : 1 \rightarrowtriangle \tau$ in $\BasicOps$ where $\tau \in \BaseTypes \cup \{ \answertype \}$.
The second rule \textsc{R-BasicSimp} gives a typing rule for basic operators whose arity and coarity are given by $\mathbf{op} : \sigma_1 \times \dots \times \sigma_n \rightarrowtriangle \tau$ where $\sigma_1, \dots, \sigma_n, \tau \in \BaseTypes \cup \{ \answertype \}$.
Note that many binary arithmetic operators conform to this pattern.
The last rule \textsc{R-BasicBool} is a specialised rule for basic operators that return boolean values of type $1 + 1$.
\textsc{R-BasicBool} still contains semantic conditions, but they are much easier to handle than the general rule \textsc{R-BasicOp}.
The semantic conditions for \textsc{R-BasicBool} means that $\psi_t$ and $\psi_f$ are conditions on $\mathrm{ar}(\mathbf{op})$ under which $\mathbf{op}$ returns $\mathbf{true}$ and $\mathbf{false}$, respectively.
For example, \textsc{R-LeqInt} in Fig.~\ref{fig:basic-operator-typing-specialised} is a more specialised rule for a basic operator $(({\le_{\mathbf{int}}}) : \mathbf{int} \times \mathbf{int} \rightarrowtriangle 1 + 1) \in \BasicOps$, in which we use $\psi_t(v_1, v_2) \coloneqq v_1 \le v_2$ and $\psi_f(v_1, v_2) \coloneqq v_1 \le v_2 \implies \bot$, assuming that we have a corresponding atomic predicate $({\le}_{\mathbf{int}}) : \mathbf{int} \times \mathbf{int}$ in $\AtomicPreds$.

\begin{proposition}
	Typing rules in Fig.~\ref{fig:basic-operator-typing} are sound in any $\lambdaHFL$-model.
	\qed
\end{proposition}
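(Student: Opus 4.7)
The plan is to derive each of \textsc{R-BasicConst}, \textsc{R-BasicSimp}, and \textsc{R-BasicBool} as an instance of the general rule \textsc{R-BasicOp}, so that the soundness of these specialised rules follows immediately from the already-established Theorem~\ref{thm:soundness}. The remaining task in each case is to check the semantic condition on $\interpret{\mathbf{op}}$ in the premise of \textsc{R-BasicOp}; the other side conditions (underlying-type matching and well-formedness) are either supplied directly by the premises of the specialised rules or by an easy syntactic argument on underlying types.

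For \textsc{R-BasicConst}, I would instantiate \textsc{R-BasicOp} with argument $M \equiv ()$ of type $\{ v : 1 \mid \top \}$ and result type $\{ v : \tau \mid v =_\tau \mathbf{op} \}$. Since $\interpret{1}$ is a one-element $\omega$cpo, the semantic condition reduces to showing that $\interpret{\mathbf{op}}(\star)$ equals itself in $\interpret{\tau}$, which holds by reflexivity of $\interpret{{=}_\tau}$ and the fact that the term $\mathbf{op}$ is interpreted as $\interpret{\mathbf{op}}(\star)$. For \textsc{R-BasicSimp}, I would use the same argument $M$ that appears in the premise. The semantic condition demands that whenever $\gamma(v) = (w_1, \dots, w_n)$ satisfies $\phi[\mathbf{op}(v_1, \dots, v_n)/v]$, the valuation $\gamma[v \mapsto \interpret{\mathbf{op}}(\gamma(v))]$ satisfies $\phi$. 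This is exactly the substitution lemma for the denotation of formulas, applied to the term $\mathbf{op}(v_1, \dots, v_n)$: $\interpret{\phi[N/v]}(\gamma) = \interpret{\phi}(\gamma[v \mapsto \interpret{N}(\gamma)])$. So I would first state and prove this substitution lemma (a routine induction on the structure of $\phi$ and on the interpretation of atomic predicates), then apply it.

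The case of \textsc{R-BasicBool} is the main obstacle because the result type is a sum refinement $\{ v : 1 \mid \phi_t \} + \{ v : 1 \mid \phi_f \}$ rather than a simple refinement. I would first unfold the interpretation of this sum refinement type as a subset of $\interpret{\underlying{\dot{\Gamma}}} \times (\{\star\} + \{\star\})$: it consists of those pairs $(\gamma, \iota_1\,\star)$ with $\gamma \in \interpret{\{ v : 1 \mid \phi_t[()/v]\}}_2$, together with pairs $(\gamma, \iota_2\,\star)$ with $\gamma \in \interpret{\{ v : 1 \mid \phi_f[()/v]\}}_2$. Given $\gamma$ satisfying the disjunctive precondition $\psi_t \land \phi_t[()/v] \lor \psi_f \land \phi_f[()/v]$ on $(v_1, \dots, v_n)$, I would split into two cases by which disjunct holds. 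The semantic hypothesis on $\psi_t$ (resp.\ $\psi_f$) forces $\interpret{\mathbf{op}}(\gamma(v)) = \iota_1\,()$ (resp.\ $\iota_2\,()$), so the resulting pair lies in the corresponding summand, and $\phi_t[()/v]$ (resp.\ $\phi_f[()/v]$) ensures the refinement is satisfied.

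The hardest step, as is typical for sum types in dependent refinement calculi, will be matching the unfolded interpretation of $\{v : 1 \mid \phi_t\} + \{v : 1 \mid \phi_f\}$ with the form demanded by \textsc{R-BasicOp}; this essentially requires invoking the categorical construction of refinement sums from the appendix referenced for \textsc{R-Case2} and \textsc{R-Inj}. Once that unfolding is in place, the remaining reasoning is a two-case argument driven entirely by the semantic conditions on $\psi_t, \psi_f$ and the substitution lemma. No admissibility or fixed-point reasoning is needed, since no recursion is involved.
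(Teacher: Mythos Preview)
Your proposal is correct and follows essentially the same approach as the paper: each of the three rules is reduced to \textsc{R-BasicOp} by verifying its semantic side condition, with \textsc{R-BasicSimp} relying on a substitution lemma for formulas and \textsc{R-BasicBool} requiring an explicit unfolding of the sum-refinement interpretation followed by a case split on the disjunctive premise. The paper carries out these computations in the categorical fibration language (reindexing along $\identity{} \times a(\mathbf{op})$, etc.), but the underlying argument is exactly what you describe.
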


\begin{figure}
	\begin{mathpar}
		\small
		\inferrule[R-LeqInt]{
			\dot{\Gamma} \vdash (M, N) : \{ (v_1, v_2) : \mathbf{int} \times \mathbf{int} \mid v_1 \le v_2 \land \phi_t[()/v] \lor (v_1 \le v_2 \implies \bot) \land \phi_f[()/v] \}
		}{
			\dot{\Gamma} \vdash M \le_{\mathbf{int}} N : \{ v : 1 \mid \phi_t \} + \{ v : 1 \mid \phi_f \}
		}
		\and
		\inferrule[R-Unif]{
			\underlying{\dot{\Gamma}} \vdash N : \mathbf{real} \to \answertype \\
			\dot{\Gamma} \vdash M : (x : \{ x : \mathbf{real} \mid 0 \le x \land x \le 1 \}) \to \{ v : \answertype \mid v \le N\ x \}
		}{
			\dot{\Gamma} \vdash \mathbf{unif}(M) : \{ v : \answertype \mid v \le \mathbf{unif}(N) \}
		}
	\end{mathpar}
	\caption{Specialised typing rules for basic operators.}
	\label{fig:basic-operator-typing-specialised}
\end{figure}

A notable exception to three common patterns in Fig.~\ref{fig:basic-operator-typing} is the integration operator for a continuous distribution.
For example, to reason about uniform distribution over the unit interval $[0, 1]$, we consider an integration operator $\mathbf{unif} : (\mathbf{real} \to \answertype) \rightarrowtriangle \answertype$ defined in~\eqref{eq:unif-integration-operator} whose semantics is given by $f \mapsto \int_0^1 f(x)\, \mathrm{d} x$.
Since the arity of $\mathbf{unif}$ is a function type $\mathbf{real} \to \answertype$, we cannot apply any of Fig.~\ref{fig:basic-operator-typing}.
In such a situation, we need to design a typing rule on a case-by-case basis.
The general rule \textsc{R-BasicOp} remains sound for all basic operators including $\mathbf{unif}$, but \textsc{R-BasicOp} is not very convenient in practice.
We provide a typing rule \textsc{R-Unif} for $\mathbf{unif}$ in Fig.~\ref{fig:basic-operator-typing-specialised}.
\textsc{R-Unif} asserts: if a term $N$ of type $\mathbf{real} \to \answertype$ is an upper bound of $M$, then $\mathbf{unif}(M)$ is upper bounded by $\mathbf{unif}(N)$.
This is sound by the monotonicity of integration.
Here, note that we don't have to restrict ourselves to the uniform distribution.
It is easy to consider similar typing rules for other probability distributions and to prove soundness.

\begin{proposition}
	\textsc{R-Unif} is sound for the $\lambdaHFL$-model for weakest pre-expectations, expected costs, cost moments, and conditional weakest pre-expectations.
\end{proposition}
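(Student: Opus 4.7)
The plan is to extend the induction in the proof of Theorem~\ref{thm:soundness} with a new case for \textsc{R-Unif}. Fix an environment $\gamma \in \interpret{\dot{\Gamma}}_2$. Since $\mathbf{unif}(M)$ has simple type $\answertype$, the required conclusion unfolds to the single pointwise inequality
\[ \interpret{\mathbf{unif}(M)}(\gamma) \;\le_{\interpret{\answertype}}\; \interpret{\mathbf{unif}(N)}(\gamma), \]
which by the chosen semantics of $\mathbf{unif}$ is
\[ \int_0^1 \interpret{M}(\gamma)(x)\, \mathrm{d} x \;\le_{\interpret{\answertype}}\; \int_0^1 \interpret{N}(\gamma)(x)\, \mathrm{d} x. \]
First I would apply the induction hypothesis to the second premise: because $\dot{\Gamma} \vdash M : (x : \{ x : \mathbf{real} \mid 0 \le x \land x \le 1 \}) \to \{ v : \answertype \mid v \le N\ x \}$, for every $x \in [0, 1]$ we obtain the pointwise bound $\interpret{M}(\gamma)(x) \le_{\interpret{\answertype}} \interpret{N}(\gamma)(x)$. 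The first premise $\underlying{\dot{\Gamma}} \vdash N : \mathbf{real} \to \answertype$ simply ensures that the right-hand side is well defined; no refinement on $N$ is used.

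The remaining step is to transport the pointwise inequality through the integral, i.e.\ to appeal to monotonicity of $\interpret{\mathbf{unif}} : \exponential{\mathbb{R}}{\interpret{\answertype}} \to \interpret{\answertype}$ in each of the four $\lambdaHFL$-models listed. For weakest pre-expectations and expected costs, $\interpret{\answertype} = ([0, \infty], {\le})$ and this is the classical monotonicity of Lebesgue integration on non-negative functions. For cost moments, $\interpret{\answertype} = ([0, \infty]^n, {\le})$ equipped with the componentwise order; because $\interpret{\mathbf{unif}}$ is also defined componentwise, each coordinate reduces to the scalar case. For conditional weakest pre-expectations, $\interpret{\answertype} = [0, \infty] \times [0, 1]^{\op}$: the first component is handled as before, and for the second, if $f(x) \le_{[0,1]^{\op}} g(x)$ pointwise, then $f(x) \ge g(x)$ in the usual order, hence $\int_0^1 f(x)\, \mathrm{d} x \ge \int_0^1 g(x)\, \mathrm{d} x$ in the usual order, i.e.\ $\int_0^1 f(x)\, \mathrm{d} x \le_{[0,1]^{\op}} \int_0^1 g(x)\, \mathrm{d} x$. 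Combining coordinates gives the required inequality for the product order.

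The main obstacle is not the monotonicity manipulation itself but checking that integration is well-defined and monotone inside the $\omegaQBS$ model used for continuous distributions: in particular that $\interpret{M}(\gamma) \in \exponential{\interpret{\mathbf{real}}}{\interpret{\answertype}}$ is an $\omegaQBS$-morphism whose integral along the uniform distribution is interpretable, and that the resulting $\interpret{\mathbf{unif}}$ is a Scott-continuous map of $\omegaQBS$-objects that is monotone in the relevant (possibly mixed) partial order. Once this packaging is discharged as a property of the semantics of $\mathbf{unif}$ in each model, \textsc{R-Unif} follows immediately from the pointwise bound extracted from the induction hypothesis.
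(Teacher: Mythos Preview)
Your proposal is correct and follows essentially the same idea as the paper: extract a pointwise bound $\interpret{M}(\gamma)(x) \le_{\interpret{\answertype}} \interpret{N}(\gamma)(x)$ for $x \in [0,1]$ from the premise, then push it through the integral using monotonicity of $\interpret{\mathbf{unif}}$. Your case analysis over the four interpretations of $\answertype$ is in fact more explicit than the paper's appendix, which spells out only the $[0,\infty]$ case.

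The one structural difference worth noting is that you treat \textsc{R-Unif} as a fresh case in the induction for Theorem~\ref{thm:soundness}, whereas the paper instead invokes the already-proved soundness of the general rule \textsc{R-BasicOp} (Lemma~\ref{lem:R-BasicOp-sound}) and discharges only its semantic side condition, namely that $(\identity{} \times a(\mathbf{unif}))$ carries the interpretation of the argument's refinement type into that of the result type. This factoring means no new induction case is needed: \textsc{R-Unif} is derived as a specialisation of \textsc{R-BasicOp} once the monotonicity of $a(\mathbf{unif})$ is verified. The underlying computation is identical to yours---unfold the dependent-function refinement to a pointwise inequality on $[0,1]$, then integrate---so the difference is purely organisational, but it does explain why the paper's proof sketch can be so terse.
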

\begin{proof}
	By the monotonicity of $\interpret{\mathbf{unif}}$.
	See \referappendix{sec:detail-proof}{C} for details.
\end{proof}

\section{Type Checking Algorithm}\label{sec:type-check}
\newcommand{\CHCAdmInt}{$\mathbf{CHC}[\mathrm{adm}, \int]$}

As we have seen in Section~\ref{sec:instance}, many verification problems for probabilistic programs can be expressed as type-checking problems for our dependent refinement type system.
Given a term $M$ of $\lambdaHFL$, a refinement context $\dot{\Gamma}$, and a refinement type $\dot{\sigma}$ for the term, the \emph{type-checking problem} is the problem of deciding whether $\dot{\Gamma} \vdash M : \dot{\sigma}$ is well-typed.

In this section, we explain a reduction from type-checking problems to constraint solving, which is based on the standard type checking algorithm \cite{rondon2008,unno2009} for refinement type systems.
The main difference from the standard one is that we consider Constrained Horn Clauses (CHC) constraints extended with two new types of predicate variables: one is for admissible predicates, and the other is for integration operators.
We call this \CHCAdmInt.
We generate a set of constraints whose satisfiability implies the well-typedness of a given term $\dot{\Gamma} \vdash M : \dot{\sigma}$.
Note that specifications by refinement types are only required for top-level declarations because our type checker can synthesise inductive invariants automatically.
Implementing a constraint solver for the extended CHC will be explained later in Section~\ref{sec:implementation}.

\subsection{Reduction to CHC Constraints}

The workflow of type checking is given as follows.
Given a term of $\lambdaHFL$, we first apply the Hindley--Milner type inference algorithm \cite{damas1982} to obtain simple types for each sub-terms of $M$.
Then, we generate templates of refinement types by replacing each occurrence of $\sigma \in \BaseTypes \cup \{ 1, \answertype \}$ in simple types with the refinement type $\{ x : \sigma \mid P(\tilde{y}) \}$ where $P$ is a (fresh) predicate variable and $\tilde{y}$ is a list of variables visible from the current scope.
We generate \CHCAdmInt-constraints on predicate variables by applying typing rules of the dependent refinement type system.
Those \CHCAdmInt-constraints have three types of predicate variables: ordinary predicate variables, \emph{admissible predicate variables}, and \emph{integrable predicate variables}.
We use ordinary predicate variables for most of the typing rules, just like the standard type-checking algorithm \cite{rondon2008,unno2009}.
Admissible predicate variables and integrable predicate variables are used for \textsc{R-Fix} and \textsc{R-Unif}.
Once we obtain CHC constraints, we solve them using a CHC solver extended for \CHCAdmInt.
If a solution is found, then $\vdash M : \dot{\sigma}$ is well-typed.
This reduction to \CHCAdmInt-constraints generates a set of CHC constraints whose size grows linearly with respect to the size of a given program.
\footnote{Our current implementation of the constraint generator performs a preprocessing step that eliminates as many redundant predicate variables as possible to assist the backend CHC solver.
Although the original CHC constraints are of linear size, this preprocessing can cause an exponential blowup of CHC constraints in the worst case.
This issue is left for future work because the experiments (Section~\ref{sec:implementation}) showed that we can still solve many benchmarks.}

We explain more about the new types of predicate variables.
A predicate variable in CHC is denoted by $P(\tilde{x})$ where $\tilde{x}$ is a set of variables on which $P$ depends.
An \emph{admissible predicate variable} $P(v; \tilde{x})$ is defined as a predicate variable $P(v, \tilde{x})$ that must be instantiated by a predicate $\phi(y, \tilde{x})$ (i.e.\ a formula with free variables $y, \tilde{x}$) that is admissible at $y$.
Since \textsc{R-Fix} requires the admissibility of the predicate on the codomain of a fixed point $\fixpoint{f}{M} : \sigma \to \answertype$, we use an admissible predicate variable when generating constraints using \textsc{R-Fix}.
An \emph{integrable predicate variable} $P(v; y; \tilde{x})$ is a predicate variable $P(v, y, \tilde{x})$ that must be instantiated by a formula of the form $v \le N\ y$ (recall \textsc{R-Unif}) where free variables in $N$ must be in $\tilde{x}$.
For each integrable predicate variable $P(v; y; \tilde{x})$, we have an associated predicate variable $\mathbf{Integ}_{\mathbf{unif}}(P)(v; \tilde{x})$.
Whenever $P(v; y; \tilde{x})$ is instantiated to $v \le N\ y$, $\mathbf{Integ}_{\mathbf{unif}}(P)(v; \tilde{x})$ is instantiated to $v \le \mathbf{unif}(N)$ at the same time.
We use an integrable predicate variable when generating CHC constraints using \textsc{R-Unif}.

\begin{example}\label{ex:constraint-generation-simple}
	We explain the CHC constraint generation using a simple example.
	Consider type-checking the following term.\footnote{This term has a subterm of type $\mathbf{int} \to \mathbf{int}$, which is, strictly speaking, not allowed in our refinement type system.
	However, we are allowing such function types only in this example for illustrative purposes.}
	\begin{equation}
		(\lambda x. x + 1)\ 42 \quad:\quad \{y : \mathbf{int} \mid y \ge 0\}
		\label{eq:ex:constraint-generation}
	\end{equation}

	First, we apply the Hindley--Milner type inference, which yields the simple type for each subtem of \eqref{eq:ex:constraint-generation}.
	For example, $(\lambda x. x + 1)\ 42$ has type $\mathbf{int}$ and $\lambda x. x + 1$ has type $\mathbf{int} \to \mathbf{int}$.
	Then, we use a procedure $\mathbf{ConstGen}$ for constraint generation, which takes a refinement context $\dot{\Gamma}$, a term $M$, and a type annotation $\dot{\sigma}$; and returns CHC constraints.
	The procedure $\mathbf{ConstGen}$ recursively applies typing rules for our refinement type system.
	In this example, we invoke $\mathbf{ConstGen}$ with the following arguments.
	\[ \mathbf{ConstGen}\big({\cdot},\quad (\lambda x. x + 1)\ 42,\quad \{y : \mathbf{int} \mid y \ge 0\}\big) \qquad \text{where $\cdot$ is the empty (refinement) context.} \]
	To compute this, \textsc{R-App} should be applied first.
	Since $\mathbf{ConstGen}$ needs to guess a refinement type of $\lambda x. x + 1$, $\mathbf{ConstGen}$ generates fresh predicate variables and builds a refinement-type template $(x : \{ x : \mathbf{int} \mid P_1(x) \}) \to \{ y : \mathbf{int} \mid P_2(x, y) \}$.
	For the function application to be well-typed, the following constraint \eqref{eq:const-gen} should be satisfied where $\mathbf{SubType}(\dot{\Gamma}, \dot{\sigma}, \dot{\tau})$ is a procedure for generating constraints for the subtyping relation $\dot{\Gamma} \vdash \dot{\sigma} <: \dot{\tau}$.
	\begin{equation}
		\begin{aligned}
			&\mathbf{ConstGen}\big({\cdot},\ \lambda x. x + 1,\ (x : \{ x : \mathbf{int} \mid P_1(x) \}) \to \{ y : \mathbf{int} \mid P_2(x, y) \}\big) \\
			&\cup \mathbf{ConstGen}\big({\cdot},\ 42,\ \{ x : \mathbf{int} \mid P_1(x) \}\big) \mathrel{\cup} \mathbf{SubType}({\cdot},\ \{ y : \mathbf{int} \mid P_2(42, y) \},\ \{y : \mathbf{int} \mid y \ge 0\})
		\end{aligned}
		\label{eq:const-gen}
	\end{equation}
	In other words, $\mathbf{ConstGen}$ computes constraints using \textsc{R-App-ConstGen} below.
	\begin{mathpar}
		\inferrule[R-App-ConstGen]{
			\text{Let $\dot{\sigma} \to \dot{\tau}_1$ be a refinement-type template obtained from the (simple) type of $M$.} \\
			\dot{\Gamma} \vdash M : \dot{\sigma} \to \dot{\tau}_1 \\
			\dot{\Gamma} \vdash N : \dot{\sigma} \\
			\dot{\Gamma} \vdash \dot{\tau}_1[N/x] <: \dot{\tau}_2
		}{
			\dot{\Gamma} \vdash M\ N : \dot{\tau}_2
		}
	\end{mathpar}
	The rest of the constraint generation proceeds as follows.
	The first component of \eqref{eq:const-gen} is computed by applying \textsc{R-Abs} and then applying \textsc{R-BasicSimp} for ${+} : \mathbf{int} \times \mathbf{int} \to \mathbf{int}$, which yields $P_1(x) \implies P_2(x, x + 1)$.
	By \textsc{R-BasicConst}, the second component of \eqref{eq:const-gen} is $x = 42 \implies P_1(x)$.
	By the definition of subtyping relation, the third component of \eqref{eq:const-gen} is $P_2(42, y) \implies y \ge 0$.
	As a result, we get the following CHC constraints.
	\[ P_1(x) \implies P_2(x, x + 1) \qquad x = 42 \implies P_1(x) \qquad P_2(42, y) \implies y \ge 0 \]
	Since the CHC constraints are satisfiable (for example, let $P_1(x) = (x = 42)$ and $P_2(x, y) = (y \ge 0)$), we conclude that \eqref{eq:ex:constraint-generation} is well-typed.
	\qed
\end{example}

\begin{figure}
	\begin{mathpar}
		\inferrule[R-Fix-ConstGen]{
			\tau \to \answertype = \mathrm{SType}(\fixpoint{f}{M}) \\
			\text{Let $\dot{\tau}$ be a refinement-type template obtained from $\tau$.} \\
			\text{Let $P^{\mathrm{adm}}$ be a fresh \emph{admissible} predicate variable.} \\
			\dot{\Gamma}, f : (x : \dot{\tau}) \to \{ v : \answertype \mid P^{\mathrm{adm}}(v; \mathrm{vars}(\dot{\Gamma}), x) \} \vdash M : (x : \dot{\tau}) \to \{ v : \answertype \mid P^{\mathrm{adm}}(v; \mathrm{vars}(\dot{\Gamma}), x) \} \\
			\Gamma \vdash (x : \dot{\tau}) \to \{ v : \answertype \mid P^{\mathrm{adm}}(v; \mathrm{vars}(\dot{\Gamma}), x) \} <: (x : \dot{\sigma}) \to \{ v : \answertype \mid \phi \}
		}{
			\dot{\Gamma} \vdash \fixpoint{f}{M} : (x : \dot{\sigma}) \to \{ v : \answertype \mid \phi \}
		}
		\and
		\inferrule[R-Unif-ConstGen]{
			\text{Let $P^{\mathrm{int}}$ be a fresh integration predicate variable.} \\
			\dot{\Gamma} \vdash M : (x : \{ x : \mathbf{real} \mid 0 \le x \land x \le 1 \}) \to \{ v : \answertype \mid P^{\mathrm{int}}(v; x; \mathrm{vars}(\dot{\Gamma})) \} \\
			\dot{\Gamma} \vdash \{ v : \answertype \mid \mathbf{Integ}_{\mathbf{unif}}(P^{\mathrm{int}})(v; \mathrm{vars}(\dot{\Gamma})) \} <: \{ v : \answertype \mid \phi \}
		}{
			\dot{\Gamma} \vdash \mathbf{unif}(M) : \{ v : \answertype \mid \phi \}
		}
	\end{mathpar}
	\caption{Selected rules for constraint generation. We assume that the arguments of predicate variables are of type $\mathbf{int}$, $\mathbf{real}$, or $\answertype$, and that variables that are not typed by those types are implicitly removed from the arguments of predicate variables.}
	\label{fig:rule-constgen}
\end{figure}

\begin{example}\label{ex:constraint-generation-random-walk-unif}
	We explain how constraints are generated for \textsc{R-Fix} and \textsc{R-Unif}.
	Consider type-checking the following term, which is taken from the random walk example in \eqref{eq:random-walk-unif-cpsed}.
	\begin{equation}
		\fixpoint{\mathrm{rw}}{(\lambda\ x\ k. \ifexpr{x \ge 0}{
			\mathbf{unif}\ (\lambda y. 1 + \mathrm{rw}\ (x + 3 \cdot y - 2)\ k)
		}{k\ ()})}
		\label{eq:rw-unif-fix}
	\end{equation}
	The type annotation is given as follows, which is essentially the same as \eqref{eq:random-walk-unif-cpsed-type}.
	\begin{equation}
		\mathrm{rw} : (x : \{ x : \mathbf{real} \mid x \ge -2 \}) \to (k : (u : \mathbf{unit}) \to \{ r : \answertype \mid r = 0 \}) \to \{ r : \answertype \mid r \le |2 \cdot x + 4| \}
	\end{equation}
	Similarly to Example~\ref{ex:constraint-generation-simple}, we apply the Hindley--Milner type inference to infer simple types and then apply the procedure $\mathbf{ConstGen}$ for constraint generation to \eqref{eq:rw-unif-fix}.
	Let's take a closer look at the constraint generation.
	$\mathbf{ConstGen}$ first applies \textsc{R-Fix-ConstGen} in Fig.~\ref{fig:rule-constgen} to handle the fixed point.
	\textsc{R-Fix-ConstGen} introduces the refinement-type template \eqref{eq:rw-unif-refinement-template} for $\mathrm{rw}$ where $P^{\mathrm{adm}}_{\mathrm{rw}}$ is an admissible predicate variable.
	\begin{equation}
		\begin{aligned}
			(x : \{ x : \mathbf{real} \mid P_x(x) \}) &\to (k : (u : \{u : \mathbf{unit} \mid P_u(x)\}) \to \{ r : \answertype \mid P_k(x, r) \}) \\
			&\to \{ r : \answertype \mid P^{\mathrm{adm}}_{\mathrm{rw}}(r; x) \}
		\end{aligned}
		\label{eq:rw-unif-refinement-template}
	\end{equation}
	\textsc{R-Fix-ConstGen} also requires that \eqref{eq:rw-unif-refinement-template} must be a subtype of \eqref{eq:random-walk-unif-cpsed-type}.
	After handling the fixed point, $\mathbf{ConstGen}$ handles lambda abstraction and if-then-else, which is standard and straightforward.
	In the then clause, we have an integration operator, to which $\mathbf{ConstGen}$ applies \textsc{R-Unif-ConstGen} (Fig.~\ref{fig:rule-constgen}).
	\textsc{R-Unif-ConstGen} introduces a fresh integration predicate variable $P^{\mathrm{int}}$ for the codomain type of the integrand.
	After finishing the constraint generation and simplification, we get the following CHC constraints.
	\begin{gather}
		\left. \begin{aligned}
			x \ge -2 &\implies P_x(x) \\
			x \ge -2 \land P_{u}(x) \land r = 0 &\implies P_{k}(x, r) \\
			x \ge -2 \land P^{\mathrm{adm}}_{\mathrm{rw}}(r; x) &\implies r \le |2 \cdot x + 4|
		\end{aligned} \qquad\qquad\right] \quad\text{\eqref{eq:rw-unif-refinement-template} is a subtype of \eqref{eq:random-walk-unif-cpsed-type}} \\[1ex]
		\left. \begin{aligned}
			P_x(x) \land x \ge 0 \land \mathbf{Integ}_{\mathbf{unif}}(P^{\mathrm{int}})(r; x) &\implies P^{\mathrm{adm}}_{\mathrm{rw}}(r; x) \\
			P_x(x) \land x \ge 0 \land 0 \le y \land y \le 1 &\implies P_x(x + 3 \cdot y - 2) \\
			P_x(x) \land x \ge 0 \land 0 \le y \land y \le 1 \land P^{\mathrm{adm}}_{\mathrm{rw}}(r; x + 3 \cdot y - 2) &\implies P^{\mathrm{int}}(r + 1; y; x)
		\end{aligned} \quad\right] \ \text{then clause} \\[1ex]
		\left. \begin{aligned}
			P_x(x) \land x < 0 &\implies P_{u}(x) \\
			P_x(x) \land x < 0 \land P_{k}(x, r) &\implies P^{\mathrm{adm}}_{\mathrm{rw}}(r; x)
		\end{aligned} \qquad\qquad\right] \quad\text{else clause}
	\end{gather}
	These constraints are satisfiable: Let \eqref{eq:rw-unif-refinement-template} be equal to \eqref{eq:random-walk-unif-cpsed-type} and $P^{\mathrm{int}}(r; y; x) = (r \le 2 \cdot x + 6 \cdot y + 1)$.
	Note that under this assignment, we have $\mathbf{Integ}_{\mathbf{unif}}(P^{\mathrm{int}})(r; x) = (r \le \int_0^1 2 \cdot x + 6 \cdot y + 1 \,\mathrm{d}y) = (r \le 2 \cdot x + 4)$.
	\qed
\end{example}

\subsection{General Syntactic Conditions for Admissible Formulas}
We provide general syntactic conditions for admissible formulas in Fig.~\ref{fig:admissible-formula}.
These rules give us hints when we implement a constraint solver for CHC with admissible predicates.
We consider a judgement of the form $\Gamma \vdash \mathrm{adm}(v, \phi)$, which means $\phi$ is admissible at $v$, and provide derivation rules in Fig.~\ref{fig:admissible-formula}.
Those rules are sound, as stated below.

\begin{figure}
	\[ (\Gamma, v : \tau) \setminus v \coloneqq \Gamma \qquad (\Gamma, x : \tau) \setminus v \coloneqq (\Gamma \setminus v), x : \tau \]
	\begin{mathpar}
		\inferrule[Adm-Leq]{
			(v : \answertype) \in \Gamma \\	
			\Gamma \vdash M : \answertype
		}{
			\Gamma \vdash \mathrm{adm}(v, v \le_{\answertype} M)
		}
		\and
		\inferrule[Adm-Imp]{
			(\Gamma \setminus v) \vdash \phi \\
			\Gamma \vdash \mathrm{adm}(v, \psi)
		}{
			\Gamma \vdash \mathrm{adm}(v, \phi \implies \psi)
		}
		\and
		\inferrule[Adm-And]{
			\Gamma \vdash \mathrm{adm}(v, \phi) \\
			\Gamma \vdash \mathrm{adm}(v, \psi)
		}{
			\Gamma \vdash \mathrm{adm}(v, \phi \land \psi)
		}
		\and
		\inferrule[Adm-Or]{
			\Gamma \vdash \mathrm{adm}(v, \phi) \\
			\Gamma \vdash \mathrm{adm}(v, \psi)
		}{
			\Gamma \vdash \mathrm{adm}(v, \phi \lor \psi)
		}
	\end{mathpar}
	\caption{Syntactic rules to guarantee admissibility of formulas.}
	\label{fig:admissible-formula}
\end{figure}

\begin{theorem}
	If \hspace{0.5ex}$\Gamma \vdash \mathrm{adm}(v, \phi)$, then $\Gamma \vdash \phi$ is well-formed is admissible at $v$ for any $\lambdaHFL$-model.
	\qed
\end{theorem}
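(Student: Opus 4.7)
The plan is to proceed by induction on the derivation of $\Gamma \vdash \mathrm{adm}(v, \phi)$, establishing simultaneously that $\Gamma \vdash \phi$ is well-formed and admissible at $v$. Well-formedness in each case follows directly from the premises (for Adm-Leq one just needs $(v : \answertype) \in \Gamma$ and $\Gamma \vdash M : \answertype$; the remaining rules inherit well-formedness of subformulas either from the induction hypothesis or from the side condition in Adm-Imp), so I concentrate on admissibility. Recall that admissibility at $v$ requires that for every $\gamma \in \interpret{\Gamma}$, the fibre $S_\gamma \coloneqq \{ u \in \interpret{\answertype} \mid \gamma[v \mapsto u] \in \interpret{\phi} \}$ contains the bottom element and is closed under suprema of $\omega$-chains.

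For Adm-Leq, fix $\gamma$ and set $f(u) \coloneqq \interpret{M}(\gamma[v \mapsto u])$; since $\interpret{M}$ is Scott-continuous and the pointwise order on $\interpret{\Gamma}$ makes $u \mapsto \gamma[v \mapsto u]$ Scott-continuous, $f$ is a Scott-continuous endofunction on $\interpret{\answertype}$. The fibre $\{u \mid u \le f(u)\}$ contains $\bot$ and satisfies $\sup_n u_n \le \sup_n f(u_n) = f(\sup_n u_n)$ whenever $u_n \le f(u_n)$ for all $n$. For Adm-Imp, the premise $(\Gamma \setminus v) \vdash \phi$ ensures $\interpret{\phi}$ is independent of the $v$-component, so for fixed $\gamma$ the $v$-fibre of $\interpret{\phi}$ is either all of $\interpret{\answertype}$ or empty; in the first case the fibre for $\phi \implies \psi$ equals that of $\psi$, which is admissible by the induction hypothesis, and in the second case it is the whole space. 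Adm-And is handled by noting that the intersection of two admissible subsets is admissible. For Adm-Or, unions of admissible subsets are admissible in the present countable setting: both contain $\bot$, and for any $\omega$-chain in the union, a pigeonhole argument yields a cofinal subchain contained entirely in one of the two summands, whose supremum is the supremum of the whole chain and lies in that summand by the induction hypothesis.

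The main obstacle is the Adm-Or case, since unions of admissible subsets of a general $\omega$cpo need not be admissible; the argument works here only because $\omega$-chains are countable and because the union is finite, which is exactly what the pigeonhole argument requires. A secondary subtlety appears in Adm-Leq, where one must verify that $u \mapsto \interpret{M}(\gamma[v \mapsto u])$ is Scott-continuous — this is a direct consequence of the Scott-continuity of $\interpret{M}$ together with the componentwise definition of the order on $\interpret{\Gamma}$, but it is the essential semantic input that links the purely syntactic shape $v \le_{\answertype} M$ to the closure property defining admissibility.
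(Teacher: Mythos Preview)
Your proposal is correct and follows essentially the same approach as the paper: induction on the derivation, with \textsc{Adm-Leq} handled via the bottom element plus Scott-continuity of $u \mapsto \interpret{M}(\gamma[v \mapsto u])$, \textsc{Adm-Imp} via a case split on the $v$-independent antecedent, \textsc{Adm-And} via intersection, and \textsc{Adm-Or} via the cofinal-subsequence (pigeonhole) argument. The paper merely packages the \textsc{Adm-Leq} reasoning into two small lemmas (chain-closedness of $\interpret{{\le}_{\answertype}}$ and its pullback along $\interpret{M}$), but the content is identical to what you wrote.
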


Note that for $v, M : \answertype$, $\Gamma \vdash \mathrm{adm}(v, v < M)$ and $\Gamma \vdash \mathrm{adm}(v, M \le v)$ are not derivable because $v < M$ and $M \le v$ are not admissible in general (see Example~\ref{ex:admissible-in-ext-nonneg-real} for counterexamples).
This poses a difficulty in reasoning about lower bounds of the least fixed point in our dependent refinement type system.
In general, reasoning about lower bounds for probabilistic programs is more difficult than upper bounds.
There are a few papers \cite{beutner2021,feng2023,hark2020,mciver2005} on reasoning about lower bounds, but it is not straightforward to combine these methods with our general framework.
We would like to tackle this in future work.
Note that we can reason about lower bounds if an HFL term does not contain least fixed points.


\section{Implementation and Experiments}\label{sec:implementation}

Following Section~\ref{sec:type-check}, our type checker is implemented as an extension of
\textsc{RCaml},
in which
\textsc{PCSat}
is used as a CHC solver.
The CHC solver is based on a method called counterexample-guided inductive synthesis (CEGIS) with linear templates and extended to \CHCAdmInt-constraints.
First, we will explain how we extend the constraint solver to admissible predicate variables and integrable predicate variables.
Then, we will present and discuss experimental results.

\subsection{CEGIS-Based CHC Solving}
CEGIS \cite{solar-lezama2006} is a method for constraint solving and consists of two components: a synthesiser and a validator.
The synthesiser guesses a solution, and the validator checks whether a candidate solution is a genuine solution.
Given a set of constraints (e.g.\ CHC constraints), the synthesiser and the validator iteratively interact with each other and gradually improve candidate solutions by accumulating counterexamples found by the validator.

The synthesiser of our CHC solver guesses a candidate solution using (affine) templates.
The coefficients of templates are determined using counterexamples.
The validator of our CHC solver checks a candidate solution using an SMT solver.
If a candidate solution is not a genuine solution, the SMT solver gives new counterexamples, which are used in the next iteration of CEGIS.

\subsection{Supporting Admissible Predicate Variables}\label{subsec:admissible}
The basic strategy for supporting new types of predicate variables is to appropriately restrict the search space of solutions.
In our CHC solver, a search space for predicate variables is defined by a \emph{template}, which is a set of formulas that contain unknown parameters.
To support admissible predicate variables, we define \emph{admissible templates} such that any instantiation of unknown parameters gives a formula that is syntactically admissible.
Since the interpretation of $\answertype$ varies for each problem in Section~\ref{sec:instance}, different problems need different admissible templates.
We design and implement admissible templates for the problems considered in this paper using the syntactic rules for admissible predicates in Fig.~\ref{fig:admissible-formula} as a guide because those syntactic rules are sound for any $\lambdaHFL$-model (i.e., for any interpretation of $\answertype$).

\subsubsection{An Admissible Template for Weakest Pre-Expectations and Expected Costs}
Let $P_{\mathrm{adm}}(v; \tilde{x})$ be an admissible predicate variable where $v$ is a variable at which $P_{\mathrm{adm}}(v; \tilde{x})$ is expected to be admissible and $\tilde{x} = \{ x_1, \dots, x_m \}$ is a set of variables.
We assume that each variable is assigned a type and that the type of each variable belongs to $\{ \mathbf{int}, \mathbf{real}, \answertype \}$.
This ensures that templates can be handled by SMT-based constraint solvers.
In the implementation, this assumption can be satisfied by simply ignoring variables that are not typed as $\mathbf{int}$, $\mathbf{real}$, or $\answertype$.

For weakest pre-expectations and expected costs ($\interpret{\answertype} = [0, \infty]$), the search space for $P_{\mathrm{adm}}(v; \tilde{x})$ is defined by the following admissible template.
\begin{equation}
	d \cdot v \le e \qquad \text{where} \qquad e \quad\coloneqq\quad \ifexpr{\phi}{e_1}{e_2} \quad\mid\quad |c_0 + \sum_{i=1}^m c_i \cdot x_i|
	\label{eq:admissible-template-wp-ec}
\end{equation}
Here, $d \in \mathbb{Z}$ ($d \ge 0$) and $c_i \in \mathbb{Z}$ ($i = 0, \dots, m$) are unknown parameters; and $\phi$ is a conjunction of affine inequations over $\tilde{x}$.
Note that we are essentially using rational numbers as unknown parameters since $d$ serves as the denominator.
Furthermore, $d = 0$ corresponds to the case of $v \le \infty$.
We assume that we have type conversion operators like $\mathbf{int} \rightarrowtriangle \mathbf{real}$ and the absolute-value operator $|{-}| : \mathbf{real} \rightarrowtriangle \answertype$ as basic operators; and type conversions are implicitly used in $c_0 + \sum_{i=1}^m c_i \cdot x_i$.
Precisely speaking, $\phi$ is not a formula defined in Section~\ref{sec:refinement} but a term of $\lambdaHFL$ of type $1 + 1$.
However, we don't have to worry too much about it because such a term is easily definable in $\lambdaHFL$.

\begin{lemma}\label{lem:admissible-template-wp-ec}
	Any instantiation of the admissible template~\eqref{eq:admissible-template-wp-ec} is a formula admissible at $v$.
\end{lemma}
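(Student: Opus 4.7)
The plan is to verify the two admissibility conditions directly from the denotational semantics in $\omegaQBS$ with $\interpret{\answertype} = ([0, \infty], {\le})$, using the crucial observation that the right-hand side $e$ of the template does not mention $v$. Fix a context $\tilde{x}$ and a valuation $\gamma$ of $\tilde{x}$; we must show that the set $S \coloneqq \{ v \in [0, \infty] \mid d \cdot v \le \interpret{e}(\gamma) \}$ is admissible.

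First, I would argue by structural induction on the grammar of $e$ that $\interpret{e}(\gamma) \in [0, \infty]$ is well-defined and non-negative. The base case $|c_0 + \sum_i c_i \cdot x_i|$ interprets as an absolute value, hence lies in $[0, \infty)$, and is passed through the implicit coercion $\mathbf{real} \rightarrowtriangle \answertype$; the if-then-else case preserves non-negativity since both branches are non-negative by induction hypothesis. Then $0 \in S$ is immediate: $d \cdot 0 = 0 \le \interpret{e}(\gamma)$.

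Next, I would verify closure under suprema of $\omega$-chains. This reduces to showing that the map $\mu_d \colon [0, \infty] \to [0, \infty]$, $v \mapsto d \cdot v$, is Scott-continuous for every $d \ge 0$. For $d > 0$, this is the standard fact that scalar multiplication by a positive constant preserves suprema on $[0, \infty]$; for $d = 0$, one uses the convention $0 \cdot \infty = 0$ so that $\mu_0$ is constantly $0$ and trivially preserves suprema. Given an $\omega$-chain $v_0 \le v_1 \le \dots$ with each $d \cdot v_n \le \interpret{e}(\gamma)$, Scott-continuity yields $d \cdot \sup_n v_n = \sup_n (d \cdot v_n) \le \interpret{e}(\gamma)$, so $\sup_n v_n \in S$.

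The only subtle point, and what I would flag as the main obstacle, is handling the interaction between $d = 0$ and the value $v = \infty$: without the convention $0 \cdot \infty = 0$, Scott-continuity of $\mu_0$ would fail and the template could in principle instantiate to a non-admissible predicate. I would address this by an explicit case split on $d = 0$ versus $d > 0$: in the former case $S = [0, \infty]$ is trivially admissible regardless of $\interpret{e}(\gamma)$ (since $\interpret{e}(\gamma) \ge 0$), and in the latter case $S = \{ v \mid v \le \interpret{e}(\gamma) / d \}$ is a principal down-set of $[0, \infty]$, which is admissible by Example~\ref{ex:admissible-in-ext-nonneg-real}. This completes the verification without needing to invoke the syntactic rules of Fig.~\ref{fig:admissible-formula}, which do not directly cover the scaled form $d \cdot v \le e$.
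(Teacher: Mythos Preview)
Your proof is correct and takes a genuinely different route from the paper's. The paper argues by structural induction on the grammar of $e$, invoking the syntactic rules of Fig.~\ref{fig:admissible-formula}: the base case $d \cdot v \le |c_0 + \sum_i c_i x_i|$ is handled by \textsc{Adm-Leq}, and the step case $d \cdot v \le \ifexpr{\phi}{e_1}{e_2}$ is rewritten as the conjunction $(\phi \implies d \cdot v \le e_1) \land (\lnot\phi \implies d \cdot v \le e_2)$ and handled by \textsc{Adm-And} and \textsc{Adm-Imp}. You instead collapse the induction entirely by observing that $e$ is $v$-free, so for each fixed valuation the predicate reduces to $d \cdot v \le c$ for a constant $c \in [0,\infty]$, and then do a direct semantic case split on $d = 0$ versus $d > 0$. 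Your route is more elementary and, as you note, sidesteps the mild imprecision in the paper's appeal to \textsc{Adm-Leq} (which is stated for $v \le M$, not $d \cdot v \le M$); the paper implicitly reads $d \cdot v \le e$ as $v \le e/d$ or $v \le \infty$. The paper's route, on the other hand, stays inside the syntactic admissibility calculus, which is the point of the section: these rules are what guide the solver's template design, so tracking the proof through them has expository value even if the direct semantic argument is shorter.
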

\begin{proof}
	By induction on the definition of the admissible template.
	For the base case, $d \cdot v \le |c_0 + \sum_{i=1}^m c_i \cdot x_i|$ is admissible by \textsc{Adm-Leq} in Fig.~\ref{fig:admissible-formula}.
	For the step case, $d \cdot v \le \ifexpr{\phi}{e_1}{e_2}$ is admissible because the equivalent formula $(\phi \implies d \cdot v \le e_1) \land (\lnot \phi \implies  d \cdot v \le e_2)$ is admissible by \textsc{Adm-And} and \textsc{Adm-Imp}.
\end{proof}

\subsubsection{An Admissible Template for Cost Moment}
Consider the case of cost moment analyses ($\interpret{\answertype} = [0, \infty]^n$).
Similarly to the case of expected cost analyses, we define the following admissible template for $P_{\mathrm{adm}}(v; \tilde{x}) = P_{\mathrm{adm}}((v_1, \dots, v_n); \tilde{x})$.
\[ \bigwedge_{i = 1}^n d_i \cdot v_i \le e_i \qquad \text{where} \qquad e \quad\coloneqq\quad \ifexpr{\phi}{e_1}{e_2} \quad\mid\quad |c_0 + \sum_{i=1}^m c_i \cdot x_i| \]
Here, we decompose $v : \answertype$ to a tuple of non-negative extended real numbers $(v_1, \dots, v_n) : (\ExtNonnegRealType)^n$.
Similarly to Lemma~\ref{lem:admissible-template-wp-ec}, any instantiation of the admissible template above is an admissible formula at $v$.

\subsubsection{An Admissible Template for Conditional Weakest Pre-Expectation}
Consider the case of conditional weakest pre-expectation ($\interpret{\answertype} = [0, \infty] \times [0, 1]^{\op}$).
We define an admissible template for $P_{\mathrm{adm}}((v_1, v_2); \tilde{x})$ as follows.
\[ d_1 \cdot v_1 \le e_1 \land d_2 \cdot v_2 \ge \min \{ d_2, e_2 \} \qquad \text{where} \qquad e \quad\coloneqq\quad \ifexpr{\phi}{e_1}{e_2} \quad\mid\quad |c_0 + \sum_{i=1}^m c_i \cdot x_i| \]
Note that $d_2 \cdot v_2 \ge \min \{ d_2, e_2 \}$ gives a \emph{lower bound} of $v_2$ because $(v_1, v_2) \le_{\answertype} (u_1, u_2)$ is defined by $v_1 \le u_1$ and $v_2 \ge u_2$.
Note also that the right-hand side of $d_2 \cdot v_2 \ge \min \{ d_2, e_2 \}$ ensures the range of $\min \{ 1, e_2 / d_2 \}$ is subsumed by $[0, 1]$ if $d_2 > 0$.
In other words, the admissible template is designed so that we have $(e_1 / d_1, \min \{ 1, e_2 / d_2 \}) : \answertype$.

\subsection{Supporting Integrable Predicate Variables}\label{subsec:integration}
Our implementation uses the following templates for integrable predicate variables.
Recall that an integrable variable $P(v; y; \tilde{x})$ has to be instantiated as $v \le N\ y$.
We consider restricting the form of $N : \mathbf{real} \to \answertype$ to the affine expression $N = \lambda y. c \cdot y + N'$ where $N'$ is an affine expression over $\tilde{x}$.
That is, the template for $P(v; y; \tilde{x})$ is given by $v \le c \cdot y + N'$.
Here, note that $y$ does not occur in $N'$.
Since $\mathbf{unif}(N)$ can be computed as $1/2 \cdot c + N'$ by the linearity of integration, we can instantiate $\mathbf{Integ}_{\mathbf{unif}}(P)(v; \tilde{x})$ as $v \le 1/2 \cdot c + N'$.

\subsection{Results and Discussion}

Table~\ref{tab:experiment} shows the result of our experiments.
Most of the benchmarks are collected from past papers on verification of probabilistic programs \cite{olmedo2018,olmedo2016,avanzini2021}.
Specifications for benchmarks are annotated by hand as refinement types.
Note that annotations are required for only top-level declarations, and safe inductive invariants for the least fixed points of $\lambdaHFL$ are automatically synthesised by the constraint solver.
Note also that our analysis is modular rather than whole-program because our implementation analyses a benchmark program function by function.
Our benchmarks contain four verification problems for higher-order probabilistic programs discussed in this paper.
The results were obtained on 12th Gen Intel(R) Core(TM) i7-1270P   2.20 GHz with 32 GB of memory.

The aim of the evaluation is to demonstrate that by instantiating our general framework with a specific refinement type checker and a CHC solver, we can indeed build a verifier that is effective for
(a) 4 problem instances on cost moment analysis and conditional weakest pre-expectation of recursive probabilistic programs that cannot be handled by existing methods and
(b) 10 problem instances on weakest pre-expectation and expected cost analysis of (higher-order) recursive probabilistic programs that few existing methods can handle.
These problem instances shown in \referappendixnocite{sec:benchmark}{D} are rather small, but they were selected because they pose challenges for existing automated methods, and our tool has been able to verify most of them fully automatically.

\begin{table}
	\caption{Experimental results. Benchmarks are listed in \protect\referappendixnocite{sec:benchmark}{D}. The timeout is set to 300 seconds.}
	\label{tab:experiment}
	\small
	\begin{tabular}{clc}
		Problem & Benchmark & Time (sec) \\
		\hline
		\multirow{3}{*}{Weakest pre-expectation} & \texttt{lics16\_rec3} (Example~\ref{ex:rec3}) & timeout \\
		& \texttt{lics16\_rec3\_ghost} & 1.270 \\
		& \texttt{lics16\_coins} & 3.110 \\
		\hline
		\multirow{7}{*}{Expected cost analysis} & \texttt{random\_walk} & 2.761 \\
		& \texttt{random\_walk\_unif} (Example~\ref{ex:random-walk-unif}) & 7.508 \\
		& \texttt{coin\_flip} (Example~\ref{ex:coin-flip}) & 0.718 \\
		& \texttt{coin\_flip\_unif} & 0.884 \\
		& \texttt{icfp21\_walk} & 3.532 \\
		& \texttt{icfp21\_coupons} & timeout \\
		& \texttt{lics16\_fact} & 3.383 \\
		\hline
		\multirow{2}{*}{Cost moment analysis} & \texttt{coin\_flip\_ord2} (Example~\ref{ex:coin-flip-moment}) & 1.135 \\
		& \texttt{coin\_flip\_ord3} & 4.040 \\
		\hline
		\multirow{2}{*}{Conditional weakest pre-expectation} & \texttt{toplas18\_ex4.4} (Example~\ref{ex:three-coin}) & timeout \\
		& \texttt{two\_coin\_conditioning} & 1.079
	\end{tabular}
\end{table}

The experimental result (Table~\ref{tab:experiment}) shows that our implementation successfully solved most of the benchmarks even though we implemented it as a simple extension of the existing type checker for non-probabilistic programs.
While some benchmarks could be solved by the current implementation, it is worth noting that this is not because we consider the verification of probabilistic programs.
Rather, this is due to common issues with ordinary dependent refinement type systems for non-probabilistic programs.
Therefore, by incorporating advanced techniques for ordinary dependent refinement type systems, our implementation can potentially overcome these issues.

For example, our implementation could not solve \texttt{lics16\_rec3}.
When reasoning about $\recthree : \mathbf{unit} \to (\mathbf{unit} \to \answertype) \to \answertype$, the predicate on the result type $\answertype$ should be able to depend on the result of the second argument of $\recthree$, which has a function type $\mathbf{unit} \to \answertype$.
However, most implementations of dependent refinement type systems struggle to handle such situations effectively.
One possible way to overcome this problem is to add an extra ghost parameter and define $\recthree : \mathbf{unit} \to [a : \answertype] \to (\mathbf{unit} \to \{ v : \answertype \mid r \le a \}) \to \{ v : \answertype \mid r \le c \cdot a \}$ as follows.
\begin{align}
	&\letfix{\recthree}{x\ [a]\ k}{1/2 \cdot k\ () + 1/2 \cdot \recthree\ ()\ (\lambda y. \recthree\ ()\ [c^2 \cdot a]\ (\lambda y. \recthree\ ()\ [c \cdot a]\ k))}{\\&\qquad\qquad\qquad\qquad\qquad\recthree\ ()\ [a]\ (\lambda y. 1)} \label{eq:rec3-ghost}
\end{align}

Here, we indicate the extra parameter $a$ by square brackets $[{-}]$, and $c$ is a constant such that $(\sqrt{5} - 1) / 2 \le c \le 1$.
Our implementation can verify that \eqref{eq:rec3-ghost} is well-typed (\texttt{lics16\_rec3\_ghost} in Table~\ref{tab:experiment}).
This idea is proposed in \cite{unno2013}, in which they proved that the relative completeness of the refinement-type-based verification for \emph{non-probabilistic} higher-order programs can be achieved by inserting extra ghost parameters at appropriate positions.
We conjecture that a similar result holds for the probabilistic case, but we leave it for future work.

Our implementation also suffered from the limitations about affine templates for predicate variables (\texttt{toplas18\_ex4.4}).
A possible remedy for this issue is to use other constraint solvers that support more expressive templates, e.g., CEGIS for polynomial invariants \cite{sharma2013}, a synthesis method based on polynomial templates \cite{chatterjee2020}, and a method based on recurrence relations \cite{kincaid2017}.
\begin{remark}
	The restriction of integrable predicate variables to affine templates above arises from the following two requirements.
	(1) The integration of templates must be computable.
	(2) The class of templates must be solvable in constraint solvers.
	Affine templates satisfy both requirements:
	We can easily integrate affine templates by substituting the expected value, and affine constraints are easy to solve.
	If we consider piecewise affine templates, it would not be difficult to handle them in our constraint solver, but computing integral would be difficult.
	On the other hand, computing the integration of polynomial templates is easy because we only have to substitute $n$-th moment $\mathbb{E}[x^n]$ for $x^n$, but our current constraint solver has difficulty in solving polynomial constraints.
\end{remark}

There are a few existing tools that aim at the verification of higher-order probabilistic programs.
It is difficult to make a fair comparison because those tools have more or less different problem settings from ours.
However, we would like to note that none of them supports all the benchmarks listed in Table~\ref{tab:experiment}.
\cite{beutner2021} proposed a verification tool for almost sure termination (AST).
Note that the AST verification is a special case of the lower-bound verification of the weakest pre-expectation, and the positive-AST (AST within finite expected runtime) verification is the upper-bound verification of the expected cost.
The input format of their tool is rather restrictive compared to ours: input programs must be of the form $\fixpoint{f}{\lambda x : \mathbf{real}. M}$.
Their tool doesn't accept nested recursion or recursive functions that have more than one argument.
As a result, their tool accepts only four of the benchmarks in Table~\ref{tab:experiment} (\texttt{coin\_flip}, \texttt{lics16\_fact}, \texttt{random\_walk}, \texttt{random\_walk\_unif}) and was able to prove AST of only one of them (\texttt{coin\_flip} in 0.156 sec).
\cite{avanzini2023} proposed a verification tool for the upper bounds of the weakest pre-expectation and the expected cost of \emph{imperative} probabilistic programs with recursion.
Their tool accepts five of the benchmarks in Table~\ref{tab:experiment} (\texttt{coin\_flip}, \texttt{lics16\_coins}, \texttt{lics16\_fact}, \texttt{lics16\_rec3}, \texttt{random\_walk}) and was able to solve four of them (\texttt{coin\_flip} in 0.025 sec, \texttt{lics16\_coins} in 0.038 sec, \texttt{lics16\_fact} in 0.035 sec, \texttt{random\_walk} in 0.036 sec).


\section{Related Work}\label{sec:related-work}

\subsection{Weakest Pre-Expectations and Expected Costs}
The \emph{weakest pre-expectation transformer} \cite{mciver2005,olmedo2016} is a generalisation of the weakest precondition transformer \cite{dijkstra1975}.
This notion is used to verify properties such as termination probabilities and probabilistic invariants \cite{bao2022,batz2023a} for imperative probabilistic programming languages.
The \emph{expected runtime transformer} \cite{kaminski2018} is a similar notion proposed for verification of expected costs.
Concerning expected cost analyses, the notion of \emph{ranking supermartingales} \cite{chakarov2013} has also been studied and applied to automatic verification of almost sure termination.
It is known that ranking supermartingales give upper bounds of the expected cost and that this can be understood order-theoretically \cite{takisaka2018}.
\emph{Ranking supermartingales for higher moments} are proposed \cite{kura2019} as an extension of ranking supermartingales that gives upper bounds of the higher moments of runtime (i.e.\ the expected value of \emph{powers} of runtime).
By exploiting the relationship between ranking supermartingales and the expected runtime transformer, a cost moment transformer is obtained as a generalisation of the expected runtime transformer \cite{aguirre2022}.
The \emph{conditional weakest pre-expectation transformer} \cite{olmedo2018} is another extension of the weakest pre-expectation transformer for probabilistic programs with conditioning.
Studies listed above are targeted at imperative probabilistic programs, and higher-order probabilistic programs are out of scope.

\subsection{Verification of Higher-Order Probabilistic Programs}
\cite{avanzini2021} proposed a CPS transformation for the expected cost analysis for higher-order probabilistic programs.
Their CPS transformation translates a probabilistic program into a pure term of simply typed lambda calculus with fixed points whose denotation gives the expected cost.
They also provided a program logic EHOL to reason about CPS-transformed programs.
In contrast to our approach, they did not introduce an algorithm for automated verification, nor did they incorporate continuous distributions into their language.

There are several program logics proposed for verifying higher-order probabilistic programs \cite{sato2019a,aguirre2021a,aguirre2017}.
Some of them have been implemented on interactive theorem provers \cite{hirata2022}, but these approaches largely depend on human intervention.

Probabilistic extensions of higher-order model checking have been studied.
A probabilistic extension of higher-order recursion schemes is studied in \cite{kobayashi2020}.
They provide theoretical results on the decidability and hardness of model checking problems and experimental results.
A probabilistic extension of higher-order fixed point logic (PHFL) is proposed in \cite{mitani2020}.
It is worth mentioning that our $\lambdaHFL$ and their PHFL share most of the syntax of HFL.

As for automated verification, \cite{beutner2021} proposed algorithms to verify lower bounds of termination probability and almost sure termination for higher-order probabilistic programs with continuous distributions.
Although their language incorporates soft-conditioning, what they verify differs from conditional weakest pre-expectations considered in our work and \cite{olmedo2018} because their semantics essentially ignores soft-conditioning.
For example, ``score(0); diverge'' is equivalent to ``diverge'' in their semantics where ``diverge'' is a diverging program, whereas the conditional weakest pre-expectations for these two programs are different.
\cite{wang2020a} studied type-based amortised expected cost analysis for the expected cost of higher-order probabilistic programs.
However, their type system does not support continuous distributions.

\subsection{Verification of Higher-Order Non-Probabilistic Programs}
Dijkstra monads enable specifying and verifying programs via weakest precondition transformers.
In \cite{ahman2017}, a CPS transformation and refinement types are applied for automated verification of higher-order programs.
However, our approach and theirs are different.
We apply a CPS transformation to a program to be verified to obtain the weakest precondition transformer and then apply refinement types to reason about the weakest precondition transformer.
In contrast, \cite{ahman2017} applies a CPS transformation to a monad to define a Dijkstra monad and then uses the Dijkstra monad to verify programs.
Their work does consider refinement types and Dijkstra monads at the same time, but these are used as independent mechanisms.
In fact, \cite{maillard2019} studies Dijkstra monads in a setting without refinement types.
Another notable difference is that the method proposed in \cite{ahman2017} cannot be applied to probabilistic programs.
This is because their soundness is proved using deterministic program semantics.

\cite{katsura2020} proposed a refinement type system for a higher-order fixed-point logic $\nu\mathbf{HFL}$.
Their $\nu\mathbf{HFL}$ can be understood as a special case of our $\lambdaHFL$.
If we interpret $\answertype$ as $\{ \mathbf{true}, \mathbf{false} \}$ and choose a set $\BasicOps$ of basic operators appropriately, we get the quantifier-free fragment of $\nu\mathbf{HFL}$.
Moreover, we get the full $\nu\mathbf{HFL}$ by adding universal quantifiers to our $\lambdaHFL$, which is a straightforward extension.
We note that their refinement type system is developed for non-probabilistic programs and cannot be applied to the problems that we considered in this paper.

\section{Conclusions and Future Work}\label{sec:conclusion}

We proposed a dependent refinement type system for a generalised higher-order fixed point logic.
Combined with CPS-based encodings of properties of higher-order probabilistic programs, our approach provided an algorithm to verify several properties of probabilistic programs with continuous distributions and (hard) conditioning.
Our approach can seamlessly integrate with verification techniques developed for non-probabilistic programs.
We implemented our approach and demonstrated its ability.

In future work, we would like to improve our implementation by combining advanced techniques for refinement types and CHC solving.
For example, in Section~\ref{sec:instance}, we gave specifications of probabilistic programs using concrete upper bounds provided by hand.
To avoid specifying concrete upper bounds, it might be possible to use techniques like CHC optimisation \cite{gu2023} and refinement type optimization \cite{hashimoto2015}.
We are also interested in automated inference of ghost parameters \cite{unno2013} because we had to specify ghost parameters manually to solve Example~\ref{ex:rec3} in the current implementation.
On the theoretical side, our dependent refinement type system currently has the following limitations: we cannot reason about lower bounds of least fixed points (dually upper bounds of greatest fixed points).
Verifying lower bounds for probabilistic programs is studied in \cite{beutner2021,hark2020,feng2023,mciver2005}.
We would like to study a category-theoretic abstraction of these studies and combine it with our type system.
Making full use of the generality of our approach is another direction of future work.
For example, the combination of nondeterminism and probability is common in the verification of probabilistic programs.
Therefore, we would like to investigate an appropriate $\lambdaHFL$-model for this situation.
Expected cost analyses for quantum programs \cite{avanzini2022,liu2022} could be another interesting application of our approach.

\begin{acks}
	We are grateful to the anonymous reviewers for their insightful comments.
	This work was supported by JST ACT-X Grant Number JPMJAX2104; JSPS Overseas Research Fellowships; and JSPS KAKENHI
	Grant Numbers JP20H04162, JP20H05703, JP22H03564, and JP24H00699.
\end{acks}

\bibliographystyle{ACM-Reference-Format}
\bibliography{refinement-cps}

\ifthenelse{\boolean{longversion}}{
	\clearpage
	\appendix
	
\section{Details of HFL}\label{sec:detail-hfl}
Our $\lambda_{\mathbf{HFL}}$ is designed based on \cite{kura2023}.

Let $\BaseTypes$ be a set of base types and $\BasicOps$ be a set of typed signatures of basic operations.

\subsection{Typing Rules}
\begin{mathpar}
	\inferrule[S-Var]{
		(x : \sigma) \in \Gamma
	}{
		\Gamma \vdash x : \sigma
	}
	\and
	\inferrule[S-BasicOp]{
		\Gamma \vdash M : \mathrm{ar}(\mathbf{op})
	}{
		\Gamma \vdash \mathbf{op}\ M : \mathrm{car}(\mathbf{op})
	}
	\and
	\inferrule[S-Unit]{ }{
		\Gamma \vdash () : 1
	}
	\and
	\inferrule[S-Pair]{
		\Gamma \vdash M : \sigma \\
		\Gamma \vdash N : \tau
	}{
		\Gamma \vdash (M, N) : \sigma \times \tau
	}
	\and
	\inferrule[S-Fst]{
		\Gamma \vdash M : \sigma_1 \times \sigma_2
	}{
		\Gamma \vdash \pi_1\ M : \sigma_1
	}
	\and
	\inferrule[S-Snd]{
		\Gamma \vdash M : \sigma_1 \times \sigma_2
	}{
		\Gamma \vdash \pi_2\ M : \sigma_2
	}
	\and
	\inferrule[S-Case0]{
		\Gamma \vdash M : 0
	}{
		\Gamma \vdash \delta(M) : \tau
	}
	\and
	\inferrule[S-Inj]{
		\Gamma \vdash M : \sigma_i
	}{
		\Gamma \vdash \iota_i\ M : \sigma_1 + \sigma_2
	}
	\and
	\inferrule[S-Case2]{
		\Gamma \vdash M : \sigma_1 + \sigma_2 \\
		\Gamma, x_1 : \sigma_1 \vdash M_1 : \tau \\
		\Gamma, x_2 : \sigma_2 \vdash M_2 : \tau
	}{
		\Gamma \vdash \caseexpr{M}{x_1 : \sigma_1}{M_1}{x_2 : \sigma_2}{M_2} : \tau
	}
	\and
	\inferrule[S-Abs]{
		\Gamma, x : \sigma \vdash M : \answertype
	}{
		\Gamma \vdash \lambda x : \sigma. M : \sigma \to \answertype
	}
	\and
	\inferrule[S-App]{
		\Gamma \vdash M : \sigma \to \answertype \\
		\Gamma \vdash N : \sigma
	}{
		\Gamma \vdash M\ N : \answertype
	}
	\and
	\inferrule[S-Fix]{
		\Gamma, f : \sigma \to \answertype, \vdash M : \sigma \to \answertype
	}{
		\Gamma \vdash \fixpoint{f}{M} : \sigma \to \answertype
	}
\end{mathpar}

\subsection{Semantics}

Let $\category{C}$ be an $\omegaCPO$-enriched cartesian closed category.
The underlying ($\Set$-enriched) category of $\category{C}$ is denoted by $\category{C}_0$.

\begin{definition}
	An object $A \in \category{C}$ has a \emph{bottom element} if there exists $\bot^A : 1 \to A$ such that for any $f : 1 \to A$, we have $\bot^A \le f$ with respect to the partial order on $\category{C}(1, A)$.
\end{definition}

\begin{definition}
	We say $\category{C}$ \emph{admits a lifting strong monad} if
	\begin{itemize}
		\item the algebraic theory defined by one constant (nullary operation) $\bot$ and one inequation $\bot \le x$ has free algebras, that is, the following universal property holds: for any $X \in \category{C}$, there exists $X_{\bot} \in \category{C}$ and $\eta_X : X \to X_{\bot}$ such that if $f : X \to A$ is such that $A \in \category{C}$ has a bottom element, then there exists a unique $\overline{f} : X_{\bot} \to A$ such that $\overline{f} \comp \eta_X = f$, and
		\item the induced lifting monad $({-})_{\bot}$ is strong.
	\end{itemize}
\end{definition}
Note that an Eilenberg--Moore algebra of the lifting monad $({-})_{\bot}$ is just an object $A \in \category{C}$ with a bottom element.
Note also that a lifting monad defines a uniform $({-})_{\bot}$-fixed-point operator $({-})^{\dagger} : \category{C}_0(X_{\bot}, X_{\bot}) \to \category{C}_0(1, X_{\bot})$ for any $X \in \category{C}$.
\[ f^{\dagger} : 1 \to X_{\bot} \quad\coloneqq\quad \sup_n f^n \comp \bot^{X_{\bot}} \qquad \text{for any $f : X_{\bot} \to X_{\bot}$} \]
It is known that a uniform $T$-fixed-point operator uniquely extends to a parameterised uniform $T$-fixed-point operator $({-})^{\dagger} : \category{C}_0(X \times A, A) \to \category{C}_0(X, A)$ for any Eilenberg--Moore $T$-algebra $\emalgsymbol : T A \to A$.

\begin{definition}[$\lambdaHFL$-model]
	A \emph{$\lambdaHFL$-model} is a tuple $\mathcal{A} = (\category{C}, A, a, \answerobj)$ where
	\begin{itemize}
		\item $\category{C}$ is an $\omegaCPO$-enriched bicartesian closed category that admits a lifting monad.
		\item $A : \BaseTypes \to \category{C}_0$ is an interpretation of each base type.
		\item For each $\mathbf{op} : \mathrm{ar}(\mathbf{op}) \rightarrowtriangle \mathrm{car}(\mathbf{op})$, $a(\mathbf{op}) : \mathcal{A}\interpret{\mathrm{ar}(\mathbf{op})} \to \mathcal{A}\interpret{\mathrm{car}(\mathbf{op})}$ is an interpretation of $\mathbf{op}$.
		Here, $\mathcal{A}\interpret{-}$ is the interpretation of types defined in Definition~\ref{def:interpretation-type}.
		\item $\answerobj \in \category{C}$ has a bottom element $\bot^{\answerobj} : 1 \to \answerobj$.
	\end{itemize}
\end{definition}
Note that the definition above is simplified compared to \cite{kura2023}: we don't require a pseudo-lifting strong monad $T$ on $\category{C}$ and an EM $T$-algebra on $\answerobj$.
This is because we unified effect-free constants and modal operators as basic operators.
If we have a strong monad from $({-})_{\bot}$ to $T$, then an EM $T$-algebra on $\answerobj$ induces an EM $({-})_{\bot}$-algebra on $\answerobj$, and both the uniform $T$-fixed-point operator and the uniform $({-})_{\bot}$-fixed-point operator give the same interpretation of fixed points.

\begin{definition}[interpretation of types]\label{def:interpretation-type}
	For each type $\sigma$, its interpretation $\mathcal{A}\interpret{\sigma}$ is defined as follows.
	\begin{gather}
		\mathcal{A}\interpret{b} \coloneqq A b \qquad
		\mathcal{A}\interpret{\answertype} \coloneqq \answerobj \qquad
		\mathcal{A}\interpret{0} \coloneqq 0 \qquad
		\mathcal{A}\interpret{\sigma + \tau} \coloneqq \mathcal{A}\interpret{\sigma} + \mathcal{A}\interpret{\tau} \\
		\mathcal{A}\interpret{1} \coloneqq 1 \qquad
		\mathcal{A}\interpret{\sigma \times \tau} \coloneqq \mathcal{A}\interpret{\sigma} \times \mathcal{A}\interpret{\tau} \qquad
		\mathcal{A}\interpret{\sigma \to \answertype} \coloneqq \exponential{\mathcal{A}\interpret{\sigma}}{\mathcal{A}\interpret{\answertype}}
	\end{gather}
\end{definition}

\begin{definition}[interpretation of contexts]
	The interpretation $\mathcal{A}\interpret{\Gamma}$ of a context is defined as follows.
	\[ \mathcal{A}\interpret{\cdot} \coloneqq 1 \qquad \mathcal{A}\interpret{\Gamma, x : \sigma} \coloneqq \mathcal{A}\interpret{\Gamma} \times \mathcal{A}\interpret{\sigma} \]
\end{definition}

\begin{definition}[interpretation of terms]
	For any well-typed term $\Gamma \vdash M : \sigma$, its interpretation $\mathcal{A}\interpret{M} = \mathcal{A}\interpret{\Gamma \vdash M : \sigma} : \mathcal{A}\interpret{\Gamma} \to \mathcal{A}\interpret{\sigma}$ is defined as follows.
	\begin{gather}
		\mathcal{A} \interpret{\Gamma, x : \sigma \vdash y : \tau} \coloneqq \begin{cases}
			\mathcal{A} \interpret{\Gamma \vdash y : \tau} \comp \pi_1 & x \neq y \\
			\pi_2 & x = y
		\end{cases} \\
		\mathcal{A} \interpret{\mathbf{op}\ M} \coloneqq a(\mathbf{op}) \comp \mathcal{A} \interpret{M} \qquad
		\mathcal{A} \interpret{()} \coloneqq {!} \qquad
		\mathcal{A} \interpret{(M_1, M_2)} \coloneqq \tupling{\mathcal{A} \interpret{M_1}}{\mathcal{A} \interpret{M_2}} \\
		\mathcal{A} \interpret{\pi_i\ M} \coloneqq \pi_i \comp \mathcal{A} \interpret{M} \qquad
		\mathcal{A} \interpret{\delta(M)} \coloneqq {?} \comp \mathcal{A} \interpret{M} \qquad
		\mathcal{A} \interpret{\iota_i\ M} \coloneqq \iota_i \comp \mathcal{A} \interpret{M} \\
		\mathcal{A} \interpret{\caseexpr{M}{x_1}{M_1}{x_2}{M_2}} \coloneqq [\mathcal{A} \interpret{M_1}, \mathcal{A} \interpret{M_2}] \comp [\identity{} \times \iota_1, \identity{} \times \iota_2]^{-1} \comp \tupling{\identity{}}{\mathcal{A} \interpret{M}} \\
		\mathcal{A} \interpret{\lambda x. M} \coloneqq \Lambda (\mathcal{A} \interpret{M}) \qquad
		\mathcal{A} \interpret{M\ N} \coloneqq \mathbf{ev} \comp \tupling{\mathcal{A} \interpret{M}}{\mathcal{A} \interpret{N}} \qquad
		\mathcal{A} \interpret{\fixpoint{f}{M}} \coloneqq (\mathcal{A} \interpret{M})^{\dagger}
	\end{gather}
\end{definition}


\section{Details of Refinement Type System}\label{sec:detail-refinement}

\subsection{Formulas}

\subsubsection{Well-formed formulas}
We define \emph{well-formed formulas} $\Gamma \vdash \phi$ as follows.
\begin{mathpar}
	\inferrule{
		\mathbf{ap} : \sigma \\
		\Gamma \vdash M : \sigma
	}{
		\Gamma \vdash \mathbf{ap}(M)
	}
	\and
	\inferrule{ }{
		\Gamma \vdash \top
	}
	\and
	\inferrule{ }{
		\Gamma \vdash \bot
	}
	\and
	\inferrule{
		\Gamma \vdash \psi \\
		\Gamma \vdash \phi
	}{
		\Gamma \vdash \psi \implies \phi
	}
	\and
	\inferrule{
		\Gamma \vdash \psi \\
		\Gamma \vdash \phi
	}{
		\Gamma \vdash \psi \land \phi
	}
	\and
	\inferrule{
		\Gamma \vdash \psi \\
		\Gamma \vdash \phi
	}{
		\Gamma \vdash \psi \lor \phi
	}
\end{mathpar}

\subsubsection{Semantics}
We assume that readers are familiar with fibrations.
See~\cite{} for the introduction to fibrations.

Notations:
Let $p : \category{E} \to \category{B}$ be a fibration.
\begin{itemize}
	\item $\category{E}_I$ is the fibre category over $I \in \category{B}$.
	\item The reindexing functor along $u : I \to J$ is denoted by $u^{*} : \category{E}_J \to \category{E}_I$.
	\item A cartesian lifting over $u : I \to J$ is denoted by $\overline{u}(Y) : u^{*} Y \to Y$ where $Y \in \category{E}_J$.
\end{itemize}

We define a fibration $p : \category{P} \to \category{C}$ by the change-of-base construction.
\begin{center}
	\begin{tikzcd}
		\category{P} \ar[d, "p"] \ar[r] \ar[rd, phantom, very near start, "\lrcorner"] & \mathbf{Sub}(\Set) \ar[d, "\mathbf{sub}_{\category{C}}"] \\
		\category{C} \ar[r, "{\category{C}(1, {-})}"] & \Set
	\end{tikzcd}
\end{center}
An object in $\category{P}$ is a pair $(I, P)$ such that $I \in \category{C}$ and $P \subseteq \category{C}(1, I)$ and a morphism $f : (I, P) \dotTo (J, Q)$ is a morphism $f : I \to J$ such that for any $x \in P$, $f \comp x \in Q$.
We sometimes omit $I$ and write $P = (I, P)$ when $I$ is clear from the context.

\begin{lemma}
	$p : \category{P} \to \category{C}$ is a bifibration with a fibred bi-cc structure and simple products/coproducts.
\end{lemma}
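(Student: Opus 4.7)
My plan is to exploit the fact that $\category{P}$ is obtained by change-of-base from the subobject fibration $\mathbf{sub}_{\category{C}} : \mathbf{Sub}(\Set) \to \Set$, which already carries all the required structure. That structure transfers to $\category{P}$ because the base-change functor $\category{C}(1, -) : \category{C} \to \Set$ preserves finite products (as $1$ is terminal in $\category{C}$). Rather than invoking abstract transfer theorems, I will verify the structure concretely on $\category{P}$, since the description of reindexing and of the fibrewise operations is explicit and short.

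First I would describe the cartesian and opcartesian liftings. For $u : I \to J$ in $\category{C}$ and $(J, Q) \in \category{P}_J$, the cartesian lifting is $u : (I, u^{*}Q) \dotTo (J, Q)$ where $u^{*}Q = \{\, x : 1 \to I \mid u \comp x \in Q \,\}$. For $(I, P) \in \category{P}_I$, the opcartesian lifting is $u : (I, P) \dotTo (J, u_{!}P)$ where $u_{!}P = \{\, u \comp x \mid x \in P \,\}$. The universal properties are immediate from the definition of morphisms in $\category{P}$, yielding the bifibration structure.

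Next, each fibre $\category{P}_I$ is order-isomorphic to the Boolean algebra of subsets of $\category{C}(1, I)$, since morphisms over $\mathrm{id}_I$ collapse to inclusions. Hence each fibre has fibrewise products (intersection), coproducts (union), exponentials (Boolean implication), top (the whole hom-set), and bottom ($\emptyset$). Since $u^{*}$ acts by preimage along post-composition with $u$, it preserves all of these operations, giving a fibred bi-cc structure. For simple products and coproducts along a projection $\pi_I : I \times J \to I$, I would set
\begin{align}
\textstyle\prod_{\pi_I} Q &\coloneqq \{\, x : 1 \to I \mid \forall y : 1 \to J,\ \langle x, y \rangle \in Q \,\} \\
\textstyle\coprod_{\pi_I} Q &\coloneqq \{\, x : 1 \to I \mid \exists y : 1 \to J,\ \langle x, y \rangle \in Q \,\}
\end{align}
using the canonical bijection $\category{C}(1, I \times J) \cong \category{C}(1, I) \times \category{C}(1, J)$. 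The adjunctions $\coprod_{\pi_I} \dashv \pi_I^{*} \dashv \prod_{\pi_I}$ then follow by straightforward unfolding of definitions, exactly as in $\mathbf{Sub}(\Set)$.

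The main obstacle I anticipate is checking the Beck--Chevalley condition for the simple quantifiers along pullback squares of projections in $\category{C}$. This reduces to the standard Beck--Chevalley property of $\mathbf{Sub}(\Set)$ once we observe that $\category{C}(1, -)$ sends such a pullback square to a pullback of projections in $\Set$ (again because $1$ is terminal, so $\category{C}(1, -)$ preserves the relevant product diagrams naturally). No other step requires nontrivial machinery beyond the universal properties of products in $\category{C}$ and the elementary Boolean structure on powersets.
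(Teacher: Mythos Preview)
Your proposal is correct and rests on the same observation as the paper: $\category{P}$ is the change-of-base of the subobject fibration $\mathbf{Sub}(\Set) \to \Set$ along $\category{C}(1,{-})$, and this functor preserves finite products since $1$ is terminal. The paper's proof is a one-line citation of the general transfer results (Hermida's thesis, Theorem~4.3.1, and Jacobs, Lemma~1.9.14) stating that bifibration structure, fibred bi-cc structure, and simple (co)products are created by change-of-base along a finite-product-preserving functor. You instead unpack the resulting structure explicitly on $\category{P}$: preimage/image for (op)cartesian liftings, the Boolean algebra of subsets of $\category{C}(1,I)$ for the fibred bi-cc structure, and universal/existential quantification for the simple quantifiers, with Beck--Chevalley reduced to the $\Set$ case via product preservation. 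Both arguments are sound; the paper's is terser and leans on the literature, while yours is self-contained and makes the concrete description of the structures visible, which is in fact how they are used later in the appendix.
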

\begin{proof}
	The subobject fibration $\mathbf{Sub}(\Set) \to \Set$ is a bifibration with a fibred bi-cc structure and simple products/coproducts, and $U$ preserves finite products (see~\cite[4.3.1]{hermida1993} and \cite[Lemma~1.9.14]{jacobs2001}).
\end{proof}
Notation:
\begin{itemize}
	\item Fibred initial/terminal object functor are $\bot, \top : \category{C} \to \category{P}$.
	\item A fibred product, a fibred coproduct, and a fibred exponential for $X, Y \in \category{P}_I$ are denoted by $X \land Y$, $X \lor Y$, $X \implies Y$.
	\item For simple products and coproducts, we write $\forall, \exists : \category{P}_{X \times Y} \to \category{P}_{X}$.
\end{itemize}

\begin{definition}[interpretation of formulas]
	Let $\mathcal{P}$ be an interpretation of atomic predicates such that $\mathcal{P}(\mathbf{ap}) \in \category{P}_{\mathcal{A}\interpret{\mathrm{ar}(\mathbf{ap})}}$ (i.e.\ $\mathcal{P}(\mathbf{ap}) \subseteq \category{C}(1, \mathcal{A}\interpret{\mathrm{ar}(\mathbf{ap})})$) for any atomic predicate $\mathbf{ap} : \mathrm{ar}(\mathbf{ap}) \in \AtomicPreds$.
	For any well-formed formula $\Gamma \vdash \phi$, we define $\mathcal{A}^{\mathcal{P}}\interpret{\Gamma \vdash \phi}$ as follows.
	\begin{gather}
		\mathcal{A}^{\mathcal{P}}\interpret{\Gamma \vdash \mathbf{ap}(M)} \coloneqq (\mathcal{A}\interpret{M})^{*} \mathcal{P}(\mathbf{ap}) = \{ \gamma : 1 \to \mathcal{A}\interpret{\Gamma} \mid \mathcal{A}\interpret{M} \comp \gamma \in \mathcal{P}(\mathbf{ap}) \} \\
		\mathcal{A}^{\mathcal{P}}\interpret{\Gamma \vdash \top} \coloneqq \top \mathcal{A}\interpret{\Gamma} \qquad
		\mathcal{A}^{\mathcal{P}}\interpret{\Gamma \vdash \bot} \coloneqq \bot \mathcal{A}\interpret{\Gamma} \\
		\mathcal{A}^{\mathcal{P}}\interpret{\Gamma \vdash \psi \land \phi} \coloneqq \mathcal{A}\interpret{\Gamma \vdash \psi} \land \mathcal{A}\interpret{\Gamma \vdash \phi} \qquad
		\mathcal{A}^{\mathcal{P}}\interpret{\Gamma \vdash \psi \lor \phi} \coloneqq \mathcal{A}\interpret{\Gamma \vdash \psi} \lor \mathcal{A}\interpret{\Gamma \vdash \phi} \\
		\mathcal{A}^{\mathcal{P}}\interpret{\Gamma \vdash \psi \implies \phi} \coloneqq \mathcal{A}\interpret{\Gamma \vdash \psi} \implies \mathcal{A}\interpret{\Gamma \vdash \phi}
	\end{gather}
\end{definition}

\subsection{Well-Formed Contexts/Types}

We define well-formed contexts $\vdash \dot{\Gamma}$ and well-formed types $\dot{\Gamma} \vdash \dot{\sigma}$ inductively so that any formula in $\dot{\Gamma}$ and $\dot{\sigma}$ is well-formed.
\begin{mathpar}
	\inferrule{ }{
		\vdash \cdot
	}
	\and
	\inferrule{
		\dot{\Gamma} \vdash \dot{\sigma}
	}{
		\vdash \dot{\Gamma}, x : \dot{\sigma}
	}
\end{mathpar}

\begin{mathpar}
	\inferrule{
		\vdash \dot{\Gamma} \\
		\underlying{\dot{\Gamma}}, v : \sigma \vdash \phi \\
		\sigma \in \BaseTypes \cup \{ \answertype, 1 \}
	}{
		\dot{\Gamma} \vdash \{ v : \sigma \mid \phi \}
	}
	\and
	\inferrule{
		\dot{\Gamma}, x : \dot{\sigma} \vdash \{ v : \answertype \mid \phi \}
	}{
		\dot{\Gamma} \vdash (x : \dot{\sigma}) \to \{ v : \answertype \mid \phi \}
	}
	\and
	\inferrule{
		\dot{\Gamma}, x : \dot{\sigma} \vdash \dot{\tau}
	}{
		\dot{\Gamma} \vdash (x : \dot{\sigma}) \times \dot{\tau}
	}
	\and
	\inferrule{ }{
		\dot{\Gamma} \vdash 0
	}
	\and
	\inferrule{
		\dot{\Gamma} \vdash \dot{\sigma} \\
		\dot{\Gamma} \vdash \dot{\tau}
	}{
		\dot{\Gamma} \vdash \dot{\sigma} + \dot{\tau}
	}
\end{mathpar}

\subsection{Semantics of Well-Formed Contexts/Types}

Let $s_{\category{C}} : s(\category{C}) \to \category{C}$ be the simple fibration on $\category{C}$ (see~\cite{} for the definition).
Since $\category{C}$ is cartesian closed, $s_{\category{C}} : s(\category{C}) \to \category{C}$ is a CCompC~\cite[Theorem~10.5.5]{jacobs2001}.
Since $\category{C}$ is a bi-ccc, $s_{\category{C}} : s(\category{C}) \to \category{C}$ has strong fibred coproducts.

More concretely, $s_{\category{C}} : s(\category{C}) \to \category{C}$ has the following structures.
\begin{itemize}
	\item An object in $s(\category{C})$ is a pair $(I, X)$ where $I, X \in \category{C}$.
	\item A morphism in $s(\category{C})$ is a pair $(u, f) : (I, X) \to (J, Y)$ where $u : I \to J$ and $f : I \times X \to Y$.
	\item Cartesian liftings are given as follows.
	\begin{center}
		\begin{tikzcd}
			& u^{*}(J, X) = (I, X) \ar[r, "{(u, \pi_2)}"] \ar[d, swap, "{u^{*}(\identity{}, f) = (\identity{}, f \comp (u \times \identity{}))}"] & (J, X) \ar[d, "{(\identity{}, f)}"] \\
			s(\category{C}) \ar[d, "s_{\category{C}}"] & u^{*}(J, Y) = (I, Y) \ar[r, "{(u, \pi_2)}"] & (J, Y) \\
			\category{C} & I \ar[r, "u"] & J
		\end{tikzcd}
	\end{center}
	\item Fibred terminal object functor $1 : \category{C} \to s(\category{C})$.
	\[ 1 I = (I, 1) \]
	\item Comprehension functor $\{ {-} \} : s(\category{C}) \to \category{C}$.
	\[ \{ (I, X) \} = I \times X \qquad \category{C}(I, \{ (J, Y) \}) \cong s_{\category{C}}(1 I, (J, Y)) \]
	\item Product $\prod_{(I, X)} : (s_{\category{C}})_{I \times X} \to (s_{\category{C}})_{I}$.
	\[ \prod_{(I, X)} (I \times X, Y) = (I, \exponential{X}{Y}) \qquad \prod_{(I, X)} (\identity{I \times X}, f) = (\identity{}, \Lambda (f \comp \tupling{\pi_1 \times \identity{}}{\eval \comp (\pi_2 \times \identity{})})) \]
	\[ (s_{\category{C}})_{I \times X}(\pi_1^{*} (I, Z), (I \times X, Y)) \cong (s_{\category{C}})_{I}((I, Z), \prod_{(I, X)} (I \times X, Y)) \]
	\[ \eta = \Lambda(\pi_2 \comp \pi_1) : (I, Y) \to (I, \exponential{X}{Y}) \]
	\[ \epsilon = (\identity{}, \eval \comp \braiding \comp (\pi_2 \times \identity{})) : (I \times X, \exponential{X}{Y}) \dotTo (I \times X, Y) \]
	\item Coproduct
	\[ \coprod_{(I, X)} (I \times X, Y) = (I, X \times Y) \qquad \coprod_{(I, X)} (\identity{I \times X}, f) = (\identity{I}, ) \]
	\[ (s_{\category{C}})_{I \times X}((I \times X, Y), \pi_1^{*} (I, Z)) \cong (s_{\category{C}})_{I}(\coprod_{(I, X)} (I \times X, Y), (I, Z)) \]
	\[ \eta = (\identity{}, \pi_2 \times \identity{}) : (I \times X, Y) \to (I \times X, X \times Y) \qquad \epsilon = (\identity{}, \pi_2 \comp \pi_2) : (I, X \times Y) \to (I, Y) \]
	\[ \kappa = \associator : (I \times X) \times Y \to I \times (X \times Y) \qquad \mathbf{fst} = (\identity{}, \pi_1 \comp \pi_2) : (I, X \times Y) \to (I, X) \]
	\item Strong fibred initial objects $0 : \category{C} \to s(\category{C})$.
	\[ 0 I = (I, 0) \]
	\[ {?}_{(I, X)} = (\identity{}, {?} \comp \pi_2) : (I, 0) \to (I, X) \]
	\item Strong fibred binary coproducts
	\[ (I, X) + (I, Y) = (I, X + Y) \]
	\[ \iota_1 = (\identity{}, \iota_1 \comp \pi_2) : (I, X) \to (I, X + Y) \qquad \iota_2 = (\identity{}, \iota_2 \comp \pi_2) : (I, X) \to (I, X + Y) \]
	The functor
	\[ \tupling{(\identity{} \times \iota_1)^{*}}{(\identity{} \times \iota_2)^{*}} : s(\category{C})_{I \times (X + Y)} \to s(\category{C})_{I \times X} \times s(\category{C})_{I \times Y} \]
	is fully faithful.
\end{itemize}

\newcommand{\refinetotal}{\{ s(\category{C}) \mid \category{P} \}}
\newcommand{\refinefib}{\{ s_{\category{C}} \mid p \}}

By applying the construction in~\cite{kura2021}, we obtain a model of a dependent refinement type system.
\begin{lemma}
	We have a SCCompC $\refinefib : \refinetotal \to \category{P}$ with strong fibred coproducts.
	We also have the following morphism of SCCompCs.
	\begin{center}
		\begin{tikzcd}
			\refinetotal \ar[r, "u"] \ar[d, "\refinefib"] & s(\category{C}) \ar[d, "s_{\category{C}}"] \\
			\category{P} \ar[r, "p"] & \category{C}
		\end{tikzcd}
	\end{center}
\end{lemma}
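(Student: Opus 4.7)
The plan is to invoke the general construction of \cite{kura2021}, which takes a split comprehension category with unit (for types) and a fibred bi-cc bifibration with simple products and coproducts (for predicates) over the same base, and produces a new SCCompC (for refinement types) together with a strict morphism of SCCompCs back to the original type structure. Both hypotheses have already been established in this section: the simple fibration $s_{\category{C}} : s(\category{C}) \to \category{C}$ is a SCCompC with strong fibred coproducts because $\category{C}$ is a bi-CCC, and $p : \category{P} \to \category{C}$ is a bifibration with a fibred bi-cc structure and simple products and coproducts by the preceding lemma. So the bulk of the work is merely to unfold the construction and verify that its output has all the structure claimed in the statement.

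Concretely, I would describe an object of $\refinetotal$ as a triple $(I, (X, P), Q)$ where $P \in \category{P}_I$ (a predicate on the ``context'' $I$) and $Q \in \category{P}_{\{(I, P)\} \times X}$ is a predicate on the comprehension $\{(I, P)\} \times X \in \category{C}$ (a predicate on the ``value'' of type $X$ in context $(I, P)$). The functor $\refinefib$ forgets $X$, sending this object to $(I, P)$ paired with the appropriate refinement context extension. Cartesian liftings in $\refinefib$ are built by pairing cartesian liftings in $s_{\category{C}}$ with reindexings along the induced maps in $p$. The comprehension $\{(I, (X, P), Q)\}$ is the predicate on $I \times X$ obtained by pulling $Q$ back along the evident map. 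The projection $u : \refinetotal \to s(\category{C})$ then simply forgets $P$ and $Q$; by construction the diagram commutes strictly, and $u$ preserves cartesian liftings, comprehension, and fibred coproducts because each of these is defined component-wise.

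The key steps in order are: (i) spell out the definition of $\refinetotal$ and $\refinefib$ as above; (ii) verify that $\refinefib$ is a (split) fibration, exploiting that both $s_{\category{C}}$ and $p$ are split; (iii) construct the fibred terminal object, comprehension, and simple products layer by layer from those of $s_{\category{C}}$ and $p$, establishing that $\refinefib$ is a SCCompC; (iv) build the fibred initial object and binary coproducts, then check strongness, i.e., that the canonical functor $\tupling{(\identity{} \times \iota_1)^{*}}{(\identity{} \times \iota_2)^{*}}$ is fully faithful in the refined setting; (v) check that $u$ preserves all this structure strictly, so that the square is a morphism of SCCompCs.

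The main obstacle I anticipate is step (iv), the strongness of fibred coproducts: it requires that refinement predicates ``commute'' with case analysis on sums, which needs both the strong fibred coproducts of $s_{\category{C}}$ and the fibred coproducts and adequate Beck--Chevalley properties of $p$. All the necessary ingredients are available, but threading them together into a full-faithfulness proof for the reindexing pairing is the only place where genuine (as opposed to purely bookkeeping) checking happens; everything else reduces to transporting the corresponding fact through the change-of-base along $\category{C}(1, -)$ and through the simple-fibration construction, which is exactly the situation the framework of \cite{kura2021} is designed to handle.
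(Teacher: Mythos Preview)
Your overall strategy---invoke the construction of \cite{kura2021} and isolate the strongness of fibred coproducts as the only non-bookkeeping step---matches the paper's. The paper, however, is far more economical: it does not re-verify that $\refinefib$ is a split fibration, a SCCompC, or that $u$ is a morphism of SCCompCs (your steps (ii), (iii), (v)). All of that is taken directly from \cite{kura2021}. The entire written proof is a checklist of the \emph{side conditions} that \cite{kura2021} requires for the output to inherit strong fibred coproducts: (a) $s_{\category{C}} \to \category{C}$ is a CCompC with strong fibred coproducts; (b) for every $u : I \to I'$ and $X, Y$, the squares with sides $u \times \identity{}$ and $\identity{} \times \iota_i$ are pullbacks in $\category{C}$; (c) $p$ satisfies Frobenius (which follows from $p$ being a fibred bi-ccc); and (d) $p$ is a cofibration. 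Your ``adequate Beck--Chevalley properties of $p$'' is in the right neighbourhood but not sharp: the pullback condition (b) is about the comprehension of $s_{\category{C}}$, and the conditions on $p$ are specifically Frobenius and the existence of opcartesian lifts, neither of which you name.

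One small slip in your concrete description: $p : \category{P} \to \category{C}$ is not given a comprehension, so writing $Q \in \category{P}_{\{(I,P)\} \times X}$ does not typecheck. In the paper's (and \cite{kura2021}'s) formulation, $Q$ lives over $I \times X$ in $\category{C}$, subject to the compatibility $\pi_1^{*} P \ge Q$ (equivalently, $\tupling{i}{x} \in Q$ implies $i \in P$). This does not affect the correctness of your plan, but it would derail the explicit computations if you carried it through as written.
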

\begin{proof}
	The side condition for the existence of strong (binary) coproducts is satisfied as follows.
	\begin{itemize}
		\item $s(\category{C}) \to \category{C}$ is a CCompC with strong fibred coproducts.
		\item For any $u : I \to I'$ in $\category{C}$ and $(I', X), (I', Y) \in s(\category{C})_{I'}$, the following two squares are pullbacks.
		\begin{center}
			\begin{tikzcd}
				I \times X \ar[r, "\identity{} \times \iota_1"] \ar[d, "u \times \identity{}"] & I \times (X + Y) \ar[d, "u \times \identity{}"] & I \times Y \ar[l, "\identity{} \times \iota_2"] \ar[d, "u \times \identity{}"] \\
				I' \times X \ar[r, "\identity{} \times \iota_1"] & I' \times (X + Y) & I' \times Y \ar[l, "\identity{} \times \iota_2"]
			\end{tikzcd}
		\end{center}
		\item $p : \category{P} \to \category{C}$ is a fibred bi-ccc and thus satisfies Frobenius by~\cite[Remark~4.5.7]{hermida1993}.
		\item $p : \category{P} \to \category{C}$ is a cofibration.
	\end{itemize}
\end{proof}

More concretely, $\refinefib : \refinetotal \to \category{P}$ has the following structures.
\begin{itemize}
	\item An object $((I, X), P, Q) \in \{ s(\category{C}) \mid \category{P} \}$ consists of $(I, X) \in s(\category{C})$, $P \subseteq \category{C}(1, I)$, and $Q \subseteq \category{C}(1, I \times X)$ such that if $\tupling{i}{x} \in Q$, then $i \in P$.
	\item A morphism in $\{ s(\category{C}) \mid \category{P} \}$ (denoted by $(u, f) : ((I, X), P, Q) \dotTo ((J, Y), R, S)$) is a morphism $(u, f) : (I, X) \to (J, Y)$ in $s(\category{C})$ such that $u : P \dotTo R$ and $\tupling{u \comp \pi_1}{f} : Q \dotTo S$ in $\category{P}$, that is, for any $i \in P$, $u \comp i \in R$ and for any $\tupling{i}{x} \in Q$, $\tupling{u \comp i}{f \comp \tupling{i}{x}} \in R$.
	\item Reindexing functor:
	\begin{center}
		\begin{tikzcd}
			((I, X), P, \pi_{(I, X)}^{*} P \land (u \times \identity{})^{*} R) \ar[r, "{(u, \pi_2)}"] \ar[d, "{(\identity{}, f \comp (u \times \identity{}))}"] & ((J, X), Q, R) \ar[d, "{(\identity{}, f)}"] \\
			((I, Y), P, \pi_{(I, Y)}^{*} P \land (u \times \identity{})^{*} S) \ar[r, "{(u, \pi_2)}"] & ((J, Y), Q, S) \\
			(I, P) \ar[r, "u"] & (J, Q)
		\end{tikzcd}
	\end{center}
	\item Terminal object functor
	\[ \dot{1} (I, P) = ((I, 1), P, \pi_1^{*} P) \]
	where $\pi_1^{*} P = \{ \tupling{i}{\identity{}} : 1 \to I \times 1 \mid i \in P \}$
	\item Comprehension functor
	\[ \{ ((I, X), P, Q) \} = (I \times X, Q) \]
	\item Product
	\[ \dot{\prod}_{((I, X), P, Q)} (I \times X, Y, Q, R) = ((I, \exponential{X}{Y}), P, \text{(deferred to \eqref{eq:prod_pred})}) \]
	where the last component is
	\begin{align}
		&\{ \tupling{i}{f} : 1 \to I \times (\exponential{X}{Y}) \mid i \in P \land \\
		&\qquad \forall x : 1 \to X, \tupling{i}{x} \in Q \implies \tupling{\tupling{i}{x}}{\eval \comp \tupling{f}{x}} \in R \} \label{eq:prod_pred}
	\end{align}
	\item Coproduct
	\[ \dot{\coprod}_{((I, X), P, Q)} (I \times X, Y, Q, R) = ((I, X \times Y), P, (\associator^{-1})^{*} R) \]
	where $(\associator^{-1})^{*} R = \{ \tupling{i}{\tupling{x}{y}} : 1 \to I \times (X \times Y) \mid \tupling{\tupling{i}{x}}{y} \in R \}$
	\item Fibred binary coproduct
	\begin{align}
		&((I, X), P, Q) \dotPlus ((I, Y), P, R) \\
		&= ((I, X + Y), P, \{ (\identity{I} \times \fstcoproj) \comp f \mid f \in Q \} \cup \{ (\identity{I} \times \sndcoproj) \comp g \mid g \in R \})
	\end{align}
	\item Fibred initial object
	\[ \dot{0} (I, P) = ((I, 0), P, \emptyset) \]
\end{itemize}
Note that morphisms in $\refinetotal$ and $\category{P}$ are denoted by $(u, f) : ((I, X), P, Q) \dotTo ((J, Y), R, S)$ and $u : (I, P) \dotTo (J, R)$, which indicate that they are unique morphisms over $(u, f) : (I, X) \dotTo (J, Y)$ in $s(\category{C})$ and $u : I \dotTo J$ in $\category{C}$, respectively.

\begin{definition}[interpretation of well-formed contexts/types]
	A well-formed context $\vdash \dot{\Gamma}$ is interpreted as $\mathcal{A}^{\mathcal{P}}\interpret{\dot{\Gamma}} \in \category{P}_{\mathcal{A}\interpret{\underlying{\dot{\Gamma}}}}$.
	\begin{gather}
		\interpret{\cdot} \coloneqq \top 1 \in \category{P}_{1} \qquad
		\interpret{\dot{\Gamma}, x : \dot{\sigma}} \coloneqq \{ \interpret{\dot{\Gamma} \vdash \dot{\sigma}} \} \in \category{P}_{\mathcal{A}\interpret{\underlying{\dot{\Gamma}}} \times \mathcal{A}\interpret{\underlying{\dot{\sigma}}}}
	\end{gather}
	A well-formed type $\dot{\Gamma} \vdash \dot{\sigma}$ is interpreted as $\mathcal{A}^{\mathcal{P}}\interpret{\dot{\Gamma} \vdash \dot{\sigma}} \in \{ s(\category{C}) \mid \category{P} \}_{\mathcal{A}^{\mathcal{P}}\interpret{\dot{\Gamma}}}$ such that $u \mathcal{A}^{\mathcal{P}}\interpret{\dot{\Gamma} \vdash \dot{\sigma}} = (\mathcal{A}\interpret{\underlying{\dot{\Gamma}}}, \mathcal{A}\interpret{\underlying{\dot{\sigma}}})$.
	\begin{gather}
		\mathcal{A}^{\mathcal{P}}\interpret{\dot{\Gamma} \vdash \{ v : \sigma \mid \phi \}} \coloneqq ((\mathcal{A}\interpret{\underlying{\dot{\Gamma}}}, \mathcal{A}\interpret{\sigma}), \mathcal{A}^{\mathcal{P}}\interpret{\dot{\Gamma}}, \pi_1^{*} \mathcal{A}^{\mathcal{P}}\interpret{\dot{\Gamma}} \land \mathcal{A}^{\mathcal{P}}\interpret{\underlying{\dot{\Gamma}}, v : \sigma \vdash \phi}) \\
		\mathcal{A}^{\mathcal{P}}\interpret{\dot{\Gamma} \vdash (x : \dot{\sigma}) \to \{ v : \answertype \mid \phi \}} \coloneqq \dot{\prod}_{\mathcal{A}^{\mathcal{P}}\interpret{\dot{\Gamma} \vdash \dot{\sigma}}} \mathcal{A}^{\mathcal{P}}\interpret{\dot{\Gamma}, x : \dot{\sigma} \vdash \{ v : \answertype \mid \phi \}} \\
		\mathcal{A}^{\mathcal{P}}\interpret{\dot{\Gamma} \vdash (x : \dot{\sigma}) \times \dot{\tau}} \coloneqq \dot{\coprod}_{\mathcal{A}^{\mathcal{P}}\interpret{\dot{\Gamma} \vdash \dot{\sigma}}} \mathcal{A}^{\mathcal{P}}\interpret{\dot{\Gamma}, x : \dot{\sigma} \vdash \dot{\tau}} \\
		\mathcal{A}^{\mathcal{P}}\interpret{\dot{\Gamma} \vdash 0} \coloneqq \dot{0} \mathcal{A}^{\mathcal{P}}\interpret{\dot{\Gamma}} \qquad
		\mathcal{A}^{\mathcal{P}}\interpret{\dot{\Gamma} \vdash \dot{\sigma} + \dot{\tau}} \coloneqq \mathcal{A}^{\mathcal{P}}\interpret{\dot{\Gamma} \vdash \dot{\sigma}} \dotPlus \mathcal{A}^{\mathcal{P}}\interpret{\dot{\Gamma} \vdash \dot{\tau}}
	\end{gather}
\end{definition}

\subsection{Subtyping relation}

We define $\dot{\Gamma} \mid v : \sigma. \phi \vDash \psi$ by
\[ \pi_1^{*} \mathcal{A}^{\mathcal{P}}\interpret{\dot{\Gamma}} \land \mathcal{A}^{\mathcal{P}}\interpret{\underlying{\dot{\Gamma}}, v : \sigma \vdash \phi} \le \mathcal{A}^{\mathcal{P}}\interpret{\underlying{\dot{\Gamma}}, v : \sigma \vdash \psi}. \]
Intuitively, this means that if formulas in $\dot{\Gamma}$ and $\phi$ are true, then $\psi$ is also true.

We define a subtyping relation $\dot{\Gamma} \vdash \dot{\sigma} <: \dot{\tau}$.
\begin{mathpar}
	\inferrule[Sub-Refl]{
		\dot{\Gamma} \vdash \dot{\sigma}
	}{
		\dot{\Gamma} \vdash \dot{\sigma} <: \dot{\sigma}
	}
	\and
	\inferrule[Sub-Trans]{
		\dot{\Gamma} \vdash \dot{\sigma}_1 <: \dot{\sigma}_2 \\
		\dot{\Gamma} \vdash \dot{\sigma}_2 <: \dot{\sigma}_3
	}{
		\dot{\Gamma} \vdash \dot{\sigma}_1 <: \dot{\sigma}_3
	}
	\and
	\inferrule[Sub-Refine]{
		\dot{\Gamma} \vdash v : \sigma. \phi \vDash \psi \\
		\sigma \in \BaseTypes \cup \{ \answertype, 1 \}
	}{
		\dot{\Gamma} \vdash \{ v : \sigma \mid \phi \} <: \{ v : \sigma \mid \psi \}
	}
	\and
	\inferrule[Sub-DFun]{
		\dot{\Gamma} \vdash \dot{\sigma}_2 <: \dot{\sigma}_1 \\
		\dot{\Gamma}, x : \dot{\sigma}_1 \vdash \tau_1 \\
		\dot{\Gamma}, x : \dot{\sigma}_2 \vdash \dot{\tau}_1 <: \dot{\tau}_2
	}{
		\dot{\Gamma} \vdash (x : \dot{\sigma}_1) \to \dot{\tau}_1 <: (x : \dot{\sigma}_2) \to \dot{\tau}_2
	}
	\and
	\inferrule[Sub-DPair]{
		\dot{\Gamma} \vdash \dot{\sigma}_1 <: \dot{\sigma}_2 \\
		\dot{\Gamma}, x : \dot{\sigma}_2 \vdash \tau_2 \\
		\dot{\Gamma}, x : \dot{\sigma}_1 \vdash \dot{\tau}_1 <: \dot{\tau}_2
	}{
		\dot{\Gamma} \vdash (x : \dot{\sigma}_1) \times \dot{\tau}_1 <: (x : \dot{\sigma}_2) \times \dot{\tau}_2
	}
	\and
	\inferrule[Sub-Sum]{
		\dot{\Gamma} \vdash \dot{\sigma}_1 <: \dot{\sigma}_2 \\
		\dot{\Gamma} \vdash \dot{\tau}_1 <: \dot{\tau}_2
	}{
		\dot{\Gamma} \vdash \dot{\sigma}_1 + \dot{\tau}_1 <: \dot{\sigma}_2 + \dot{\tau}_2
	}
\end{mathpar}

\subsection{Typing rules}
A well-typed term $\dot{\Gamma} \vdash M : \dot{\sigma}$ is defined as follows.
\begin{mathpar}
	\inferrule[R-Sub]{
		\dot{\Gamma} \vdash M : \dot{\sigma} \\
		\dot{\Gamma} \vdash \dot{\sigma} <: \dot{\tau}
	}{
		\dot{\Gamma} \vdash M : \dot{\tau}
	}
	\and
	\inferrule[R-App]{
		\dot{\Gamma} \vdash M : (x : \dot{\sigma}) \to \{ v : \answertype \mid \phi \} \\
		\dot{\Gamma} \vdash N : \dot{\sigma}
	}{
		\dot{\Gamma} \vdash M\ N : \{ v : \answertype \mid \phi[N/x] \}
	}
	\and
	\inferrule[R-Abs]{
		\dot{\Gamma}, x : \dot{\sigma} \vdash M : \{ v : \answertype \mid \phi \}
	}{
		\dot{\Gamma} \vdash \lambda x : \dot{\sigma}. M : (x : \dot{\sigma}) \to \{ v : \answertype \mid \phi \}
	}
	\and
	\inferrule[R-Unit]{
		\vdash \dot{\Gamma}
	}{
		\dot{\Gamma} \vdash () : \{ v : 1 \mid \top \}
	}
	\and
	\inferrule[R-Pair]{
		\dot{\Gamma} \vdash M : \dot{\sigma} \\
		\dot{\Gamma} \vdash N : \dot{\tau}[M / x]
	}{
		\dot{\Gamma} \vdash (M, N) : (x : \dot{\sigma}) \times \dot{\tau}
	}
	\and
	\inferrule[R-Fst]{
		\dot{\Gamma} \vdash M : (x : \dot{\sigma}) \times \dot{\tau}
	}{
		\dot{\Gamma} \vdash \pi_1\ M : \dot{\sigma}
	}
	\and
	\inferrule[R-Snd]{
		\dot{\Gamma} \vdash M : (x : \dot{\sigma}) \times \dot{\tau}
	}{
		\dot{\Gamma} \vdash \pi_2\ M : \dot{\tau}[\pi_1\ M / x]
	}
	\and
	\inferrule[R-Case0]{
		\dot{\Gamma} \vdash M : 0 \\
		\dot{\Gamma} \vdash \dot{\tau}
	}{
		\dot{\Gamma} \vdash \delta(M) : \dot{\tau}
	}
	\and
	\inferrule[R-Inj]{
		\dot{\Gamma} \vdash M : \dot{\sigma}_i \\
		\dot{\Gamma} \vdash \dot{\sigma}_{3-i}
	}{
		\dot{\Gamma} \vdash \iota_i\ M : \dot{\sigma}_1 + \dot{\sigma}_2
	}
	\and
	\inferrule[R-Case2]{
		\dot{\Gamma}, z : \dot{\sigma}_1 + \dot{\sigma}_2 \vdash \dot{\tau} \\
		\dot{\Gamma} \vdash M : \dot{\sigma}_1 + \dot{\sigma}_2 \\
		\dot{\Gamma}, x_1 : \dot{\sigma}_1 \vdash N_1 : \dot{\tau}[\iota_1\ x_1/z] \\
		\dot{\Gamma}, x_2 : \dot{\sigma}_2 \vdash N_2 : \dot{\tau}[\iota_2\ x_2/z]
	}{
		\dot{\Gamma} \vdash \caseexpr{M}{x_1}{N_1}{x_2}{N_2} : \dot{\tau}[M/z]
	}
	\and
	\inferrule[R-VarRefine]{
		\vdash \dot{\Gamma} \\
		(x : \{ v : b \mid \phi \}) \in \dot{\Gamma}
	}{
		\dot{\Gamma} \vdash x : \{ v : b \mid v = x \}
	}
	\and
	\inferrule[R-Var]{
		\vdash \dot{\Gamma} \\
		(x : \dot{\sigma}) \in \dot{\Gamma}
	}{
		\dot{\Gamma} \vdash x : \dot{\sigma}
	}
	\and
	\inferrule[R-Fix]{
		\dot{\Gamma}, f : (x : \dot{\sigma}) \to \{ v : \answertype \mid \phi \} \vdash M : (x : \dot{\sigma}) \to \{ v : \answertype \mid \phi \} \\
		\underlying{\dot{\Gamma}}, x : \underlying{\dot{\sigma}}, v : \answertype \vdash \phi \\
		\text{$\phi$ is admissible at $v$}
	}{
		\dot{\Gamma} \vdash \fixpoint{f}{M} : (x : \dot{\sigma}) \to \{ v : \answertype \mid \phi \}
	}
	\and
	\inferrule[R-BasicOp]{
		\dot{\Gamma} \vdash M : \dot{\sigma} \\
		\dot{\Gamma} \vdash \dot{\tau} \\
		\mathrm{ar}(\mathbf{op}) = \underlying{\dot{\sigma}} \\
		\mathrm{car}(\mathbf{op}) = \underlying{\dot{\tau}} \\
		(\identity{} \times a(\mathbf{op})) : \mathcal{A}^{\mathcal{P}}\interpret{\dot{\Gamma}, v : \dot{\sigma}} \dotTo \mathcal{A}^{\mathcal{P}}\interpret{\dot{\Gamma}, v : \dot{\tau}}
	}{
		\dot{\Gamma} \vdash \mathbf{op}(M) : \dot{\tau}
	}
\end{mathpar}
\begin{mathpar}
	\inferrule[R-BasicConst]{
		\mathbf{op} : 1 \to \tau \\
		\tau \in \BaseTypes \cup \{ \answertype \}
	}{
		\dot{\Gamma} \vdash \mathbf{op} : \{ v : \tau \mid v =_{\tau} \mathbf{op} \}
	}
	\and
	\inferrule[R-BasicSimple]{
		\mathbf{op} : \sigma_1 \times \dots \times \sigma_n \to \tau \\
		\sigma_1, \dots, \sigma_n, \tau \in \BaseTypes \cup \{ \answertype \} \\
		\dot{\Gamma} \vdash M : \{ (v_1, \dots, v_n) : \sigma_1 \times \dots \times \sigma_n \mid \phi[\mathbf{op}(v_1, \dots, v_n) / v] \}
	}{
		\dot{\Gamma} \vdash \mathbf{op}(M) : \{ v : \tau \mid \phi \}
	}
	\and
	\inferrule[R-BasicBool]{
		\mathbf{op} : \sigma_1 \times \dots \times \sigma_n \to 1 + 1 \\
		\sigma_1, \dots, \sigma_n \in \BaseTypes \cup \{ \answertype \} \\
		v_1 : \sigma_1, \dots, v_n : \sigma_n \vdash \psi_t \\
		\mathcal{A}^{\mathcal{P}}\interpret{(v_1, \dots, v_n) : \sigma_1 \times \dots \times \sigma_n \vdash \psi_t} \subseteq \pi_2^{*} a(\mathbf{op})^{*} \{ \iota_1 \} \\
		v_1 : \sigma_1, \dots, v_n : \sigma_n \vdash \psi_f \\
		\mathcal{A}^{\mathcal{P}}\interpret{(v_1, \dots, v_n) : \sigma_1 \times \dots \times \sigma_n \vdash \psi_f} \subseteq \pi_2^{*} a(\mathbf{op})^{*} \{ \iota_2 \} \\
		\dot{\Gamma} \vdash M : \{ (v_1, \dots, v_n) : \sigma_1 \times \dots \times \sigma_n \mid \psi_t \land \phi_t[()/v] \lor \psi_f \land \phi_f[()/v] \}
	}{
		\dot{\Gamma} \vdash \mathbf{op}(M) : \{ v : 1 \mid \phi_t \} + \{ v : 1 \mid \phi_f \}
	}
\end{mathpar}

\section{Details of Proofs}\label{sec:detail-proof}

\subsection{Preparation}
\begin{definition}
	Let $\Gamma_1, \Gamma_2$ be contexts and $\sigma$ be a type.
	We define $\mathrm{proj}_{\Gamma_1; \sigma; \Gamma_2} : \mathcal{A}\interpret{\Gamma_1, x : \sigma, \Gamma_2} \to \mathcal{A}\interpret{\Gamma_1, \Gamma_2}$ as follows.
	\[ \mathrm{proj}_{\Gamma_1; \sigma; \cdot} \coloneqq \pi_1 \qquad \mathrm{proj}_{\Gamma_1; \sigma; \Gamma_2, y : \tau} \coloneqq \mathrm{proj}_{\Gamma_1; \sigma; \Gamma_2} \times \identity{} \]
\end{definition}

\begin{lemma}[weakening for terms]\label{lem:weakening-term}
	For any well-typed term $\Gamma_1, \Gamma_2 \vdash M : \tau$ and type $\sigma$,
	\[ \mathcal{A}\interpret{\Gamma_1, \Gamma_2 \vdash M : \tau} \comp \mathrm{proj}_{\Gamma_1; \sigma; \Gamma_2} = \mathcal{A}\interpret{\Gamma_1, x : \sigma, \Gamma_2 \vdash M : \tau} \]
\end{lemma}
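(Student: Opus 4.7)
The plan is a routine induction on the derivation of $\Gamma_1, \Gamma_2 \vdash M : \tau$. The statement is a standard weakening property, and categorically it amounts to checking that every clause in the definition of $\mathcal{A}\interpret{-}$ is natural with respect to the projection $\mathrm{proj}_{\Gamma_1; \sigma; \Gamma_2}$. Before starting the induction, I would record the obvious compatibility with context extension: by definition, $\mathrm{proj}_{\Gamma_1; \sigma; \Gamma_2, y : \rho} = \mathrm{proj}_{\Gamma_1; \sigma; \Gamma_2} \times \identity{}$, and consequently
\[ \pi_1 \comp \mathrm{proj}_{\Gamma_1; \sigma; \Gamma_2, y : \rho} = \mathrm{proj}_{\Gamma_1; \sigma; \Gamma_2} \comp \pi_1, \qquad \pi_2 \comp \mathrm{proj}_{\Gamma_1; \sigma; \Gamma_2, y : \rho} = \pi_2. \]
These identities are what makes the induction step go through under the binders.

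For the base cases \textsc{S-Var} and \textsc{S-BasicOp} (plus \textsc{S-Unit}), everything reduces to a direct computation: a variable lookup $\Gamma_1, \Gamma_2 \vdash y : \rho$ is interpreted as a composite of projections $\pi_2 \comp \pi_1 \comp \cdots \comp \pi_1$, and one simply observes that precomposing with $\mathrm{proj}_{\Gamma_1; \sigma; \Gamma_2}$ skips past the newly inserted $x : \sigma$ slot and yields the lookup for the same variable in $\Gamma_1, x : \sigma, \Gamma_2$. For \textsc{S-BasicOp} the operator interpretation $a(\mathbf{op})$ is independent of the context, so naturality reduces to the inductive hypothesis on the argument. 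The cases \textsc{S-Pair}, \textsc{S-Fst}, \textsc{S-Snd}, \textsc{S-Inj}, \textsc{S-Case0}, and \textsc{S-App} all reduce to the inductive hypothesis combined with the universal properties of products, coproducts, and exponentials together with the naturality of tupling, pairing, $[-,-]$, and $\mathbf{ev}$.

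The binder cases \textsc{S-Abs}, \textsc{S-Case2}, and \textsc{S-Fix} are where the identity $\mathrm{proj}_{\Gamma_1; \sigma; \Gamma_2, y : \rho} = \mathrm{proj}_{\Gamma_1; \sigma; \Gamma_2} \times \identity{}$ is invoked. For \textsc{S-Abs}, the interpretation of $\lambda x. M$ is $\Lambda(\mathcal{A}\interpret{M})$, and by the inductive hypothesis applied in the extended context $\Gamma_1, x : \sigma, \Gamma_2, x : \rho$ together with the naturality of $\Lambda$ in its domain argument, one obtains
\[ \Lambda(\mathcal{A}\interpret{M}_{\Gamma_1, \Gamma_2, x : \rho}) \comp \mathrm{proj}_{\Gamma_1; \sigma; \Gamma_2} = \Lambda(\mathcal{A}\interpret{M}_{\Gamma_1, x : \sigma, \Gamma_2, x : \rho}), \]
the composite on the left simplifying because $\Lambda(f) \comp u = \Lambda(f \comp (u \times \identity{}))$. \textsc{S-Case2} is analogous, using that $[-,-]$ is natural. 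For \textsc{S-Fix}, the interpretation is $(\mathcal{A}\interpret{M})^{\dagger}$ for the parameterized uniform fixed-point operator; the required equation follows from naturality of $({-})^{\dagger}$ in the parameter, which in turn is a consequence of parameterized uniformity (or equivalently, the defining dinaturality of $({-})^{\dagger}$ in the parameter object).

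The main obstacle, I expect, is bookkeeping rather than conceptual depth: every step is forced by the categorical definitions, but one has to be careful in the binder cases to keep track of how the inserted variable $x : \sigma$ interacts with the newly bound variable and to apply the inductive hypothesis in precisely the right extended context. The only genuinely non-trivial invocation is the naturality of $({-})^{\dagger}$ in the parameter, which is standard for parameterized uniform fixed-point operators and hence available from the assumptions on the $\lambdaHFL$-model.
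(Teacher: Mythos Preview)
Your proposal is correct and is exactly the standard argument one would expect: the paper itself states this lemma without proof, treating it as routine, and your induction on the typing derivation with the key identity $\mathrm{proj}_{\Gamma_1; \sigma; \Gamma_2, y : \rho} = \mathrm{proj}_{\Gamma_1; \sigma; \Gamma_2} \times \identity{}$ for the binder cases is precisely how that routine argument goes. Your handling of the \textsc{S-Fix} case via naturality of the parameterized fixed-point operator $({-})^{\dagger}$ in its parameter is the one nontrivial ingredient, and it is indeed available from the uniform fixed-point structure assumed of a $\lambdaHFL$-model.
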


\begin{lemma}[weakening for formulas]\label{lem:weakening-formula}
	For any well-formed $\Gamma_1, \Gamma_2 \vdash \phi$ and type $\sigma$,
	\[ \mathrm{proj}_{\Gamma_1; \sigma; \Gamma_2}^{*} \mathcal{A}^{\mathcal{P}}\interpret{\Gamma_1, \Gamma_2 \vdash \phi} = \mathcal{A}^{\mathcal{P}}\interpret{\Gamma_1, x : \sigma, \Gamma_2 \vdash \phi} \]
\end{lemma}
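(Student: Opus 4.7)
The plan is to proceed by induction on the derivation of the well-formedness judgement $\Gamma_1, \Gamma_2 \vdash \phi$, exploiting three ingredients: (i) functoriality of reindexing in the fibration $p : \category{P} \to \category{C}$, namely $(f \comp g)^* = g^* \comp f^*$; (ii) Lemma~\ref{lem:weakening-term} for the atomic case; and (iii) the fact that reindexing functors in a fibred bi-cc preserve $\top, \bot, \land, \lor, \implies$.

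For the base case $\phi = \mathbf{ap}(M)$, I would unfold the definition to obtain $\mathcal{A}^{\mathcal{P}}\interpret{\Gamma_1, \Gamma_2 \vdash \mathbf{ap}(M)} = (\mathcal{A}\interpret{M})^* \mathcal{P}(\mathbf{ap})$, use functoriality of reindexing to rewrite
\[ \mathrm{proj}_{\Gamma_1; \sigma; \Gamma_2}^* (\mathcal{A}\interpret{M})^* \mathcal{P}(\mathbf{ap}) = (\mathcal{A}\interpret{M} \comp \mathrm{proj}_{\Gamma_1; \sigma; \Gamma_2})^* \mathcal{P}(\mathbf{ap}), \]
and finally apply Lemma~\ref{lem:weakening-term} to identify $\mathcal{A}\interpret{M} \comp \mathrm{proj}_{\Gamma_1; \sigma; \Gamma_2}$ with $\mathcal{A}\interpret{\Gamma_1, x : \sigma, \Gamma_2 \vdash M}$, which yields precisely $\mathcal{A}^{\mathcal{P}}\interpret{\Gamma_1, x : \sigma, \Gamma_2 \vdash \mathbf{ap}(M)}$.

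For the propositional cases, the equation reduces to preservation of the fibred bi-cc structure by reindexing. Since $p : \category{P} \to \category{C}$ is obtained by change of base from the subobject fibration on $\Set$, every reindexing functor has both adjoints and preserves all fibred connectives, including Heyting implication. Combining this with the induction hypotheses on the immediate subformulas gives the desired equality in each case. I do not anticipate any significant obstacle: the atomic case hinges purely on term-level weakening, and the inductive cases are standard preservation properties of fibred logic. The only minor care needed is to keep track of the fibre in which each equality is asserted, which is immediate from the definition of $\mathrm{proj}_{\Gamma_1; \sigma; \Gamma_2}$ and the context interpretations.
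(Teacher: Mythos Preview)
Your proposal is correct and follows essentially the same approach as the paper: induction on the structure of $\phi$, handling the atomic case $\mathbf{ap}(M)$ via functoriality of reindexing together with Lemma~\ref{lem:weakening-term}, and the propositional connectives $\top, \bot, \land, \lor, \implies$ via preservation of the fibred bi-cc structure under reindexing. The paper's proof is no more detailed than what you have written.
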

\begin{proof}
	By induction.
	\begin{itemize}
		\item For basic predicates including $\le_{\answertype}$ and $=_b$,
		\begin{align}
			\mathrm{proj}_{\Gamma_1; \sigma; \Gamma_2}^{*} \mathcal{A}^{\mathcal{P}}\interpret{\Gamma_1, \Gamma_2 \vdash a(M)} &= \mathrm{proj}_{\Gamma_1; \sigma; \Gamma_2}^{*} (\mathcal{A}\interpret{\Gamma_1, \Gamma_2 \vdash M : \tau})^{*} \mathcal{P}(a) \\
			&= (\mathcal{A}\interpret{\Gamma_1, x : \sigma, \Gamma_2 \vdash M : \tau})^{*} \mathcal{P}(a) \\
			&= \mathcal{A}^{\mathcal{P}}\interpret{\Gamma_1, x : \sigma, \Gamma_2 \vdash a(M)}
		\end{align}
		by Lemma~\ref{lem:weakening-term}.
		\item For $\bot, \top, \lor, \land, \implies$, we use the fact that fibred bi-cc structures are preserved by reindexing.
	\end{itemize}
\end{proof}

\begin{lemma}[weakening for types]\label{lem:weakening-type}
	Let $\dot{\Gamma}_1 \vdash \dot{\sigma}$ be a well-formed type.
	\begin{itemize}
		\item For any well-formed context $\vdash \dot{\Gamma}_1, \dot{\Gamma}_2$, we have
		\[ \mathrm{proj}_{\underlying{\dot{\Gamma}_1}; \underlying{\dot{\sigma}}; \underlying{\dot{\Gamma}_2}} : \mathcal{A}^{\mathcal{P}}\interpret{\dot{\Gamma}_1, x : \dot{\sigma}, \dot{\Gamma}_2} \dotTo \mathcal{A}^{\mathcal{P}}\interpret{\dot{\Gamma}_1, \dot{\Gamma}_2}. \]
		\item For any well-formed type $\dot{\Gamma}_1, \dot{\Gamma}_2 \vdash \dot{\tau}$, we have
		\[ \mathrm{proj}_{\underlying{\dot{\Gamma}_1}; \underlying{\dot{\sigma}}; \underlying{\dot{\Gamma}_2}}^{*} \mathcal{A}^{\mathcal{P}}\interpret{\dot{\Gamma}_1, \dot{\Gamma}_2 \vdash \dot{\tau}} = \mathcal{A}^{\mathcal{P}}\interpret{\dot{\Gamma}_1, x : \dot{\sigma}, \dot{\Gamma}_2 \vdash \dot{\tau}}. \]
	\end{itemize}
\end{lemma}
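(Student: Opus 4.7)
The plan is to prove both parts simultaneously by mutual induction on the derivations of $\vdash \dot{\Gamma}_1, \dot{\Gamma}_2$ and $\dot{\Gamma}_1, \dot{\Gamma}_2 \vdash \dot{\tau}$. The two parts are tightly coupled because the inductive rule for well-formed contexts builds a new context by appending a variable whose type is given by a well-formed type judgement, and conversely the well-formedness of types depends on the well-formedness of the ambient context. Throughout, I will use the fact that $\mathrm{proj}_{\underlying{\dot{\Gamma}_1}; \underlying{\dot{\sigma}}; \underlying{\dot{\Gamma}_2}}$ factors as the product of $\mathrm{proj}_{\underlying{\dot{\Gamma}_1}; \underlying{\dot{\sigma}}; \cdot} = \pi_1$ with successively more identities, which aligns well with the inductive step on $\dot{\Gamma}_2$.

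For part (i), I proceed by induction on $\dot{\Gamma}_2$. When $\dot{\Gamma}_2 = \cdot$, the claim is that $\pi_1 : \mathcal{A}^{\mathcal{P}}\interpret{\dot{\Gamma}_1, x : \dot{\sigma}} \dotTo \mathcal{A}^{\mathcal{P}}\interpret{\dot{\Gamma}_1}$ in $\category{P}$, which follows directly from unfolding $\mathcal{A}^{\mathcal{P}}\interpret{\dot{\Gamma}_1, x : \dot{\sigma}} = \{ \mathcal{A}^{\mathcal{P}}\interpret{\dot{\Gamma}_1 \vdash \dot{\sigma}} \}$ together with the definition of the comprehension functor (which yields a morphism to $\pi_1^{*} \mathcal{A}^{\mathcal{P}}\interpret{\dot{\Gamma}_1}$, hence along $\pi_1$ into $\mathcal{A}^{\mathcal{P}}\interpret{\dot{\Gamma}_1}$). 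For the inductive step $\dot{\Gamma}_2, y : \dot{\tau}$, I invoke part (ii) of the induction hypothesis on $\dot{\tau}$ to equate the reindexings, and then the extended projection is exhibited as the comprehension of the appropriate morphism in $\refinetotal$.

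For part (ii), I do a case analysis on the last rule used in the derivation of $\dot{\Gamma}_1, \dot{\Gamma}_2 \vdash \dot{\tau}$. The base case $\{ v : \sigma \mid \phi \}$ reduces to Lemma~B.2 applied to $\underlying{\dot{\Gamma}_1}, \underlying{\dot{\Gamma}_2}, v : \sigma \vdash \phi$, combined with the fact that reindexing preserves fibred products $\land$ in $p : \category{P} \to \category{C}$. The empty-type and sum cases reduce to the fact that strong fibred initial objects and strong fibred binary coproducts in $\refinefib$ are stable under reindexing (which is a standard consequence of strongness). The interesting cases are the dependent function type $(x : \dot{\sigma}) \to \{ v : \answertype \mid \phi \}$ and the dependent pair type $(x : \dot{\sigma}) \times \dot{\tau}$; here I apply the induction hypothesis to $\dot{\sigma}$ and to the body (using that extending $\dot{\Gamma}_2$ by $x : \dot{\sigma}$ preserves the shape of the projection), and then use the Beck--Chevalley property of $\dot{\prod}$ and $\dot{\coprod}$ along cartesian morphisms in the simple fibration $s_{\category{C}}$, lifted to $\refinefib$ by the construction recalled before the lemma.

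The main obstacle will be bookkeeping for the dependent cases: in the interpretation of $(x : \dot{\sigma}) \to \{ v : \answertype \mid \phi \}$, the relevant reindexing happens along $\mathrm{proj}_{\underlying{\dot{\Gamma}_1}; \underlying{\dot{\sigma}}; \underlying{\dot{\Gamma}_2}}$, but the $\dot{\prod}$ takes place over $\mathcal{A}^{\mathcal{P}}\interpret{\dot{\Gamma}_1, x : \dot{\sigma}, \dot{\Gamma}_2 \vdash \dot{\sigma}}$, so I must verify that the cartesian lifting of $\mathrm{proj}$ in $\refinefib$ commutes with the $\prod$ and $\coprod$ constructions along cartesian morphisms in $s_{\category{C}}$. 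This is exactly the Beck--Chevalley condition, which holds here because the simple fibration $s_{\category{C}} : s(\category{C}) \to \category{C}$ is a CCompC constructed from a cartesian closed $\category{C}$ with coproducts, and the change-of-base $\refinefib$ inherits this structure via the morphism of SCCompCs displayed earlier. Once this verification is done, the remaining rules follow by straightforward substitution into the respective definitions.
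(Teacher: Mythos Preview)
Your proposal is correct and follows essentially the same approach as the paper's own proof: a simultaneous induction on $\vdash \dot{\Gamma}_1, \dot{\Gamma}_2$ and $\dot{\Gamma}_1, \dot{\Gamma}_2 \vdash \dot{\tau}$, where the refinement case reduces to weakening for formulas, the $0$ and $+$ cases use that reindexing preserves fibred coproducts, and the dependent $\prod$ and $\coprod$ cases are handled by the Beck--Chevalley condition. The paper's proof is slightly terser but the structure and the key ingredients are identical.
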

\begin{proof}
	By simultaneous induction on $\vdash \dot{\Gamma}_1, \dot{\Gamma}_2$ and $\dot{\Gamma}_1, \dot{\Gamma}_2 \vdash \dot{\tau}$.
	\begin{itemize}
		\item If $\dot{\Gamma}_2 = \cdot$, we have
		\[ \mathrm{proj}_{\underlying{\dot{\Gamma}_1}; \underlying{\dot{\sigma}}; \cdot} : \mathcal{A}^{\mathcal{P}}\interpret{\dot{\Gamma}_1, x : \dot{\sigma}} \dotTo \mathcal{A}^{\mathcal{P}}\interpret{\dot{\Gamma}_1} \]
		because $\mathrm{proj}_{\underlying{\dot{\Gamma}_1}; \underlying{\dot{\sigma}}; \cdot} = \pi_1 = \pi_{\mathcal{A}^{\mathcal{P}}\interpret{\dot{\Gamma}_1 \vdash \dot{\sigma}}}$.
		\item If $\dot{\Gamma}_2 = \dot{\Gamma}'_2, y : \dot{\sigma}'$, we have
		\begin{center}
			\begin{tikzcd}
				\mathcal{A}^{\mathcal{P}}\interpret{\dot{\Gamma}_1, x : \dot{\sigma}, \dot{\Gamma}'_2, y : \dot{\sigma}'} \ar[d, equal] \\
				\{ \mathrm{proj}_{\underlying{\dot{\Gamma}_1}; \underlying{\dot{\sigma}}; \underlying{\dot{\Gamma}_2}}^{*} \mathcal{A}^{\mathcal{P}}\interpret{\dot{\Gamma}_1, \dot{\Gamma}'_2 \vdash \dot{\sigma}'} \} \ar[d, "\{ \overline{\mathrm{proj}_{\underlying{\dot{\Gamma}_1}; \underlying{\dot{\sigma}}; \underlying{\dot{\Gamma}'_2}}}(\dots) \}"] \\
				\mathcal{A}^{\mathcal{P}}\interpret{\dot{\Gamma}_1, \dot{\Gamma}'_2, y : \dot{\sigma}'}
			\end{tikzcd}
		\end{center}
		\begin{align}
			\{ \overline{\mathrm{proj}_{\underlying{\dot{\Gamma}_1}; \underlying{\dot{\sigma}}; \underlying{\dot{\Gamma}_2}}}(\dots) \} &= \mathrm{proj}_{\underlying{\dot{\Gamma}_1}; \underlying{\dot{\sigma}}; \underlying{\dot{\Gamma}_2}} \times \identity{} \\
			&= \mathrm{proj}_{\underlying{\dot{\Gamma}_1}; \underlying{\dot{\sigma}}; \underlying{\dot{\Gamma}_2}, y : \underlying{\dot{\sigma}'}}
		\end{align}
		\item $\dot{\Gamma}_1, \dot{\Gamma}_2 \vdash \{ v : \tau \mid \phi \}$:
		By definition of the interpretation,
		\begin{align}
			&\mathrm{proj}_{\underlying{\dot{\Gamma}_1}; \underlying{\dot{\sigma}}; \underlying{\dot{\Gamma}_2}}^{*} \mathcal{A}^{\mathcal{P}}\interpret{\dot{\Gamma}_1, \dot{\Gamma}_2 \vdash \{ v : \tau \mid \phi \}} \\
			&= \mathrm{proj}_{\underlying{\dot{\Gamma}_1}; \underlying{\dot{\sigma}}; \underlying{\dot{\Gamma}_2}}^{*} ((\mathcal{A}\interpret{\underlying{\dot{\Gamma}_1, \dot{\Gamma}_2}}, \mathcal{A}\interpret{\tau}), \mathcal{A}^{\mathcal{P}}\interpret{\dot{\Gamma}_1, \dot{\Gamma}_2}, \pi^{*} \mathcal{A}^{\mathcal{P}}\interpret{\dot{\Gamma}_1, \dot{\Gamma}_2} \land \mathcal{A}^{\mathcal{P}}\interpret{\underlying{\dot{\Gamma}_1, \dot{\Gamma}_2}, v : \tau \vdash \phi}) \\
			&= ((\mathcal{A}\interpret{\underlying{\dot{\Gamma}_1, x : \dot{\sigma}, \dot{\Gamma}_2}}, \mathcal{A}\interpret{\tau}), \mathcal{A}^{\mathcal{P}}\interpret{\dot{\Gamma}_1, x : \dot{\sigma}, \dot{\Gamma}_2}, \\
			&\qquad \pi^{*} \mathcal{A}^{\mathcal{P}}\interpret{\dot{\Gamma}_1, x : \dot{\sigma}, \dot{\Gamma}_2} \land (\mathrm{proj}_{\underlying{\dot{\Gamma}_1}; \underlying{\dot{\sigma}}; \underlying{\dot{\Gamma}_2}} \times \identity{})^{*} \mathcal{A}^{\mathcal{P}}\interpret{\underlying{\dot{\Gamma}_1, \dot{\Gamma}_2}, v : \tau \vdash \phi})
		\end{align}
		and by Lemma~\ref{lem:weakening-formula},
		\begin{align}
			&(\mathrm{proj}_{\underlying{\dot{\Gamma}_1}; \underlying{\dot{\sigma}}; \underlying{\dot{\Gamma}_2}} \times \identity{})^{*} \mathcal{A}^{\mathcal{P}}\interpret{\underlying{\dot{\Gamma}_1, \dot{\Gamma}_2}, v : \tau \vdash \phi} \\
			&= \mathrm{proj}_{\underlying{\dot{\Gamma}_1}; \underlying{\dot{\sigma}}; \underlying{\dot{\Gamma}_2}, v : \tau}^{*} \mathcal{A}^{\mathcal{P}}\interpret{\underlying{\dot{\Gamma}_1, \dot{\Gamma}_2}, v : \tau \vdash \phi} \\
			&= \mathcal{A}^{\mathcal{P}}\interpret{\underlying{\dot{\Gamma}_1, x : \dot{\sigma}, \dot{\Gamma}_2}, v : \tau \vdash \phi}
		\end{align}
		\item $\dot{\Gamma}_1, \dot{\Gamma}_2 \vdash (x : \dot{\tau}_1) \to \{ v : \tau \mid \phi \}$ and $\dot{\Gamma}_1, \dot{\Gamma}_2 \vdash (x : \dot{\tau}_1) \times \dot{\tau}_2$: By the BC condition.
		\item $\dot{\Gamma}_1, \dot{\Gamma}_2 \vdash 0$ and $\dot{\Gamma}_1, \dot{\Gamma}_2 \vdash \dot{\tau}_1 + \dot{\tau}_2$: Because reindexing functors preserve fibred coproducts.
	\end{itemize}
\end{proof}

\begin{definition}
	Let $\Gamma_1, \Gamma_2$ be contexts and $\Gamma_1 \vdash M : \sigma$ be a well-typed term.
	We define $\mathrm{subst}_{M; \Gamma_2} : \mathcal{A}\interpret{\Gamma_1, \Gamma_2} \to \mathcal{A}\interpret{\Gamma_1, x : \sigma, \Gamma_2}$ as follows.
	\[ \mathrm{subst}_{M; \cdot} \coloneqq \tupling{\identity{}}{\mathcal{A}\interpret{M}} \qquad \mathrm{subst}_{M; \Gamma_2, y : \tau} \coloneqq \mathrm{subst}_{M; \Gamma_2} \times \identity{} \]
\end{definition}

\begin{lemma}
	For any well-typed terms $\Gamma_1, x : \sigma, \Gamma_2 \vdash N$ and $\Gamma_1 \vdash M : \sigma$, we have
	\[ \mathcal{A}\interpret{\Gamma_1, x : \sigma, \Gamma_2 \vdash N} \comp \mathrm{subst}_{M; \Gamma_2} = \mathcal{A}\interpret{\Gamma_1, \Gamma_2 \vdash N[M/x]} \]
\end{lemma}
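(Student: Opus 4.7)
\medskip

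\noindent\textbf{Proof plan.} The claim is a standard substitution lemma for the denotational semantics of $\lambdaHFL$, so the plan is to proceed by induction on the derivation of $\Gamma_1, x : \sigma, \Gamma_2 \vdash N$. The key observation driving the proof is a compatibility property of $\mathrm{subst}_{M; \Gamma_2}$ with context extension: by definition we have $\mathrm{subst}_{M; \Gamma_2, y : \tau} = \mathrm{subst}_{M; \Gamma_2} \times \identity{\mathcal{A}\interpret{\tau}}$, which matches the product structure of $\mathcal{A}\interpret{\Gamma_1, \Gamma_2, y : \tau} = \mathcal{A}\interpret{\Gamma_1, \Gamma_2} \times \mathcal{A}\interpret{\tau}$ and of $\mathcal{A}\interpret{\Gamma_1, x : \sigma, \Gamma_2, y : \tau}$ after applying the induction hypothesis.

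The plan is to dispatch the base cases first. For a variable $y$ with $y \neq x$, the interpretation picks out the corresponding projection, and precomposing with $\mathrm{subst}_{M; \Gamma_2}$ yields the same projection because $\mathrm{subst}_{M; \Gamma_2}$ is the identity on components other than $x$; since $y[M/x] = y$, both sides agree. For $N = x$ itself, we have $x[M/x] = M$, and $\mathcal{A}\interpret{\Gamma_1, x : \sigma, \Gamma_2 \vdash x} = \pi_x$ composes with $\mathrm{subst}_{M; \Gamma_2}$ to produce $\mathcal{A}\interpret{M} \comp \mathrm{proj}$, which by Lemma~\ref{lem:weakening-term} equals $\mathcal{A}\interpret{\Gamma_1, \Gamma_2 \vdash M}$.

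Next I would handle the non-binding compound cases (basic operators, pairing, projections, injections, nullary case analysis, and function application) uniformly: each interpretation is a composite of structural morphisms (tupling, $\pi_i$, $\iota_i$, $\eval$, $a(\mathbf{op})$) with the interpretations of immediate subterms, so precomposing with $\mathrm{subst}_{M; \Gamma_2}$ distributes through the composite and the induction hypothesis finishes the case since substitution commutes with each term constructor. For the binder cases ($\lambda y : \sigma'. N'$, $\fixpoint{f}{N'}$, and $\caseexpr{N'}{y_1}{N_1}{y_2}{N_2}$), I would apply the induction hypothesis to the extended context $\Gamma_1, x : \sigma, \Gamma_2, y : \sigma'$ (after suitable $\alpha$-renaming to keep $y$ fresh for $M$ and $x$), then use the identity $\mathrm{subst}_{M; \Gamma_2, y : \sigma'} = \mathrm{subst}_{M; \Gamma_2} \times \identity{}$ together with the naturality of $\Lambda$ for abstraction and naturality of the parameterised uniform fixed-point operator $({-})^{\dagger}$ for the fixed-point case, and the naturality of the copairing construction for binary case analysis.

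The main obstacle is the fixed-point case, where we must argue that $(\mathcal{A}\interpret{N'})^{\dagger}$ commutes with the reindexing $\mathrm{subst}_{M; \Gamma_2}$. This reduces to the parameter-naturality of $({-})^{\dagger}$ for the lifting monad, i.e., $((\mathcal{A}\interpret{N'}) \comp (\mathrm{subst}_{M; \Gamma_2} \times \identity{}))^{\dagger} = (\mathcal{A}\interpret{N'})^{\dagger} \comp \mathrm{subst}_{M; \Gamma_2}$, which follows from the uniformity/parameterisation axiom of the fixed-point operator stated in \referappendix{sec:detail-hfl}{A}. Modulo this categorical naturality, the remaining routine calculations proceed as in the standard substitution lemma for the simply typed lambda calculus.
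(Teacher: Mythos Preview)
Your proposal is correct and follows exactly the approach the paper takes: the paper's proof is the single line ``By induction on $N$,'' and you have simply spelled out the cases of that induction, including the one nontrivial point (parameter naturality of $({-})^{\dagger}$ for the fixed-point case).
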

\begin{proof}
	By induction on $N$.
\end{proof}

\begin{lemma}[substitution in formulas]\label{lem:subst-formula}
	For any well-formed $\Gamma_1, x : \sigma, \Gamma_2 \vdash \phi$ and $\Gamma_1 \vdash M : \sigma$,
	\[ \mathcal{A}^{\mathcal{P}}\interpret{\Gamma_1, \Gamma_2 \vdash \phi[M/x]} = \mathrm{subst}_{M; \Gamma_2}^{*} \mathcal{A}^{\mathcal{P}}\interpret{\Gamma_1, x : \sigma, \Gamma_2 \vdash \phi} \]
\end{lemma}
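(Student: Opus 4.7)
The plan is to prove this by structural induction on the derivation of $\Gamma_1, x : \sigma, \Gamma_2 \vdash \phi$, paralleling the proof of Lemma~\ref{lem:weakening-formula}. The base cases concern atomic predicates, and the inductive cases concern the logical connectives $\top, \bot, \land, \lor, \implies$. For each connective the work is essentially the same as in the weakening argument, since reindexing along $\mathrm{subst}_{M; \Gamma_2}^{*}$ is a fibred bi-cc functor and hence commutes with the interpretations of $\top, \bot, \land, \lor, \implies$ up to equality.

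For the atomic case $\phi \equiv \mathbf{ap}(N)$, I would unfold the interpretation on both sides:
\begin{align}
\mathcal{A}^{\mathcal{P}}\interpret{\Gamma_1, \Gamma_2 \vdash \mathbf{ap}(N)[M/x]}
&= \mathcal{A}^{\mathcal{P}}\interpret{\Gamma_1, \Gamma_2 \vdash \mathbf{ap}(N[M/x])} \\
&= (\mathcal{A}\interpret{\Gamma_1, \Gamma_2 \vdash N[M/x]})^{*}\mathcal{P}(\mathbf{ap}).
\end{align}
Then I would apply the term substitution lemma (the one immediately preceding the statement) to rewrite $\mathcal{A}\interpret{\Gamma_1, \Gamma_2 \vdash N[M/x]} = \mathcal{A}\interpret{\Gamma_1, x : \sigma, \Gamma_2 \vdash N} \comp \mathrm{subst}_{M; \Gamma_2}$, and use functoriality of reindexing $(f \comp g)^{*} = g^{*} \comp f^{*}$ to conclude
\[ (\mathcal{A}\interpret{\Gamma_1, \Gamma_2 \vdash N[M/x]})^{*}\mathcal{P}(\mathbf{ap}) = \mathrm{subst}_{M; \Gamma_2}^{*} (\mathcal{A}\interpret{\Gamma_1, x : \sigma, \Gamma_2 \vdash N})^{*}\mathcal{P}(\mathbf{ap}) = \mathrm{subst}_{M; \Gamma_2}^{*}\mathcal{A}^{\mathcal{P}}\interpret{\Gamma_1, x : \sigma, \Gamma_2 \vdash \mathbf{ap}(N)}. \]

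For the connective cases, since $p : \category{P} \to \category{C}$ is a fibred bi-ccc, the reindexing functor $\mathrm{subst}_{M; \Gamma_2}^{*}$ preserves $\top, \bot, \land, \lor, \implies$ on the nose. For example, in the $\implies$ case:
\begin{align}
\mathrm{subst}_{M; \Gamma_2}^{*}\mathcal{A}^{\mathcal{P}}\interpret{\Gamma_1, x : \sigma, \Gamma_2 \vdash \psi \implies \phi}
&= \mathrm{subst}_{M; \Gamma_2}^{*}(\mathcal{A}^{\mathcal{P}}\interpret{\psi} \implies \mathcal{A}^{\mathcal{P}}\interpret{\phi}) \\
&= \mathrm{subst}_{M; \Gamma_2}^{*}\mathcal{A}^{\mathcal{P}}\interpret{\psi} \implies \mathrm{subst}_{M; \Gamma_2}^{*}\mathcal{A}^{\mathcal{P}}\interpret{\phi},
\end{align}
and applying the induction hypothesis to each subformula yields the desired equality. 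The remaining connectives are handled identically.

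I do not expect a serious obstacle: the main work is bookkeeping to ensure the substitution morphism $\mathrm{subst}_{M; \Gamma_2}$ for the extended context aligns correctly with substitution inside subformulas, which is settled by the inductive definition $\mathrm{subst}_{M; \Gamma_2, y : \tau} = \mathrm{subst}_{M; \Gamma_2} \times \identity{}$. The only point that warrants care is confirming that $\phi[M/x]$ is indeed well-formed in the context $\Gamma_1, \Gamma_2$, which follows from the standard substitution lemma for the underlying simply typed language, and that no variable-capture issues arise (we implicitly assume fresh-variable conventions). The proof is therefore a direct transport of Lemma~\ref{lem:weakening-formula} from the weakening morphism $\mathrm{proj}_{\Gamma_1; \sigma; \Gamma_2}$ to the substitution morphism $\mathrm{subst}_{M; \Gamma_2}$.
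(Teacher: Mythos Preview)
Your proposal is correct and follows essentially the same approach as the paper: structural induction on $\phi$, handling the atomic case via the term substitution lemma and functoriality of reindexing, and the connective cases via preservation of the fibred bi-cc structure under reindexing. The paper's proof is in fact slightly terser (it just says ``Similarly to Lemma~\ref{lem:weakening-formula}'' and sketches the atomic case), so your write-up is, if anything, a bit more detailed.
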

\begin{proof}
	Similarly to Lemma~\ref{lem:weakening-formula}
	\begin{itemize}
		\item For basic predicates including $\le_{\answertype}$ and $=_b$,
		\begin{align}
			&\mathrm{subst}_{M; \Gamma_2}^{*}  \mathcal{A}^{\mathcal{P}}\interpret{\Gamma_1, x : \sigma, \Gamma_2 \vdash a(N)} \\
			&= \mathrm{subst}_{M; \Gamma_2}^{*} (\mathcal{A}\interpret{\Gamma_1. x : \sigma, \Gamma_2 \vdash N : \tau})^{*} \mathcal{P}(a) \\
			&= (\mathcal{A}\interpret{\Gamma_1, \Gamma_2 \vdash N[M/x] : \tau})^{*} \mathcal{P}(a) \\
			&= \mathcal{A}^{\mathcal{P}}\interpret{\Gamma_1, \Gamma_2 \vdash a(N)[M/x]}
		\end{align}
		by Lemma~\ref{lem:weakening-term}.
		\item For $\bot, \top, \lor, \land, \implies$, we use the fact that fibred bi-cc structures are preserved by reindexing.
	\end{itemize}
\end{proof}

\begin{lemma}[substitution in types]\label{lem:subst-type}
	For any well-formed type $\dot{\Gamma}_1 \vdash \dot{\sigma}$ and well-typed term $\underlying{\dot{\Gamma}_1} \vdash M : \underlying{\dot{\sigma}}$, if $\tupling{\identity{}}{\mathcal{A}\interpret{M}} : \mathcal{A}^{\mathcal{P}}\interpret{\dot{\Gamma}_1} \to \mathcal{A}^{\mathcal{P}}\interpret{\dot{\Gamma}_1, x : \dot{\sigma}}$ in $\category{P}$, then
	\begin{itemize}
		\item For any well-formed context $\vdash \dot{\Gamma}_1, \dot{\Gamma}_2$,
		\[ \mathrm{subst}_{M; \underlying{\dot{\Gamma}_2}} : \mathcal{A}^{\mathcal{P}}\interpret{\dot{\Gamma}_1, \dot{\Gamma}_2[M/x]} \dotTo \mathcal{A}^{\mathcal{P}}\interpret{\dot{\Gamma}_1, x : \dot{\sigma}, \dot{\Gamma}_2} \qquad \text{in $\category{P}$} \]
		\item For any well-formed type $\dot{\Gamma}_1, x : \dot{\sigma}, \dot{\Gamma}_2 \vdash \dot{\tau}$,
		\[ \mathrm{subst}_{M; \underlying{\dot{\Gamma}_2}}^{*} \mathcal{A}^{\mathcal{P}}\interpret{\dot{\Gamma}_1, x : \dot{\sigma}, \dot{\Gamma}_2 \vdash \dot{\tau}} = \mathcal{A}^{\mathcal{P}}\interpret{\dot{\Gamma}_1, \dot{\Gamma}_2[M/x] \vdash \dot{\tau}[M/x]} \]
	\end{itemize}
\end{lemma}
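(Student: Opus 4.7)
The plan is to mirror the proof of Lemma~\ref{lem:weakening-type} by performing a simultaneous induction on the derivations of $\vdash \dot{\Gamma}_1, \dot{\Gamma}_2$ and $\dot{\Gamma}_1, x : \dot{\sigma}, \dot{\Gamma}_2 \vdash \dot{\tau}$. The point is that $\mathrm{subst}_{M; \underlying{\dot{\Gamma}_2}}$ in $s(\category{C})$ is built from $\tupling{\identity{}}{\mathcal{A}\interpret{M}}$ together with iterated products with identities, so lifting it to $\category{P}$ reduces to checking, constructor by constructor, that each building block respects the predicate structure.

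For the context half, the base case $\dot{\Gamma}_2 = \cdot$ is exactly the assumption that $\tupling{\identity{}}{\mathcal{A}\interpret{M}} : \mathcal{A}^{\mathcal{P}}\interpret{\dot{\Gamma}_1} \dotTo \mathcal{A}^{\mathcal{P}}\interpret{\dot{\Gamma}_1, x : \dot{\sigma}}$. For the step case $\dot{\Gamma}_2 = \dot{\Gamma}'_2, y : \dot{\sigma}'$, I would first apply the type-half of the induction hypothesis to $\dot{\Gamma}_1, x : \dot{\sigma}, \dot{\Gamma}'_2 \vdash \dot{\sigma}'$, obtaining $\mathrm{subst}_{M; \underlying{\dot{\Gamma}'_2}}^{*} \mathcal{A}^{\mathcal{P}}\interpret{\dot{\Gamma}_1, x : \dot{\sigma}, \dot{\Gamma}'_2 \vdash \dot{\sigma}'} = \mathcal{A}^{\mathcal{P}}\interpret{\dot{\Gamma}_1, \dot{\Gamma}'_2[M/x] \vdash \dot{\sigma}'[M/x]}$. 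Applying the comprehension functor $\{-\}$ on both sides and using the cartesian lifting of $\mathrm{subst}_{M; \underlying{\dot{\Gamma}'_2}}$ in $\category{P}$ then yields $\mathrm{subst}_{M; \underlying{\dot{\Gamma}'_2}} \times \identity{} = \mathrm{subst}_{M; \underlying{\dot{\Gamma}'_2}, y : \underlying{\dot{\sigma}'[M/x]}}$ as a morphism in $\category{P}$ over the desired pair, exactly as in the analogous step of Lemma~\ref{lem:weakening-type}.

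For the type half, I would split into cases on the outermost type constructor. The refinement-type case $\{ v : \tau \mid \phi \}$ follows from Lemma~\ref{lem:subst-formula} applied to $\underlying{\dot{\Gamma}_1}, x : \underlying{\dot{\sigma}}, \underlying{\dot{\Gamma}_2}, v : \tau \vdash \phi$, combined with the context-half of the current induction on $\dot{\Gamma}_1, \dot{\Gamma}_2[M/x]$. The dependent function and dependent pair cases reduce to showing that reindexing along $\mathrm{subst}_{M; \underlying{\dot{\Gamma}_2}}$ commutes with $\dot{\prod}$ and $\dot{\coprod}$, which is the Beck--Chevalley condition applied to the pullback squares built by extending $\mathrm{subst}_{M; \underlying{\dot{\Gamma}_2}}$ by identities. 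The cases $0$ and $\dot{\tau}_1 + \dot{\tau}_2$ follow from the preservation of fibred (initial and binary) coproducts under reindexing in $\refinefib$.

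The main obstacle I anticipate is the dependency bookkeeping: when $\dot{\Gamma}_2$ mentions $x$, the substituted context $\dot{\Gamma}_2[M/x]$ lives over a different base than $\dot{\Gamma}_2$ itself, so one must carefully align the induction hypothesis so that $\mathrm{subst}_{M; \underlying{\dot{\Gamma}'_2}}$ is already known to be a morphism in $\category{P}$ before invoking the type-half inductive step on $\dot{\sigma}'$. Equally, the Beck--Chevalley verifications must be phrased over the pullback squares generated by $\mathrm{subst}_{M; \underlying{\dot{\Gamma}_2}} \times \identity{}$ rather than over generic projections, and checking that these squares really are pullbacks (so that the simple fibration's cartesian lifting commutes with $\dot{\prod}$ and $\dot{\coprod}$) is where the bulk of the diagrammatic effort is concentrated.
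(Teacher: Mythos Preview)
Your proposal is correct and follows essentially the same approach as the paper: a simultaneous induction on $\vdash \dot{\Gamma}_1, \dot{\Gamma}_2$ and $\dot{\Gamma}_1, x : \dot{\sigma}, \dot{\Gamma}_2 \vdash \dot{\tau}$, with the base case given by the hypothesis, the context extension case handled via comprehension and cartesian lifting, the refinement-type case via Lemma~\ref{lem:subst-formula}, the dependent function and pair cases via Beck--Chevalley, and the empty and sum cases via preservation of fibred coproducts under reindexing. The paper's proof does not dwell on the pullback verification you flag as the main obstacle, treating the BC condition as given by the SCCompC structure of $\refinefib$.
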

\begin{proof}
	By simultaneous induction on $\vdash \dot{\Gamma}_1, \dot{\Gamma}_2$ and $\dot{\Gamma}_1, x : \dot{\sigma}, \dot{\Gamma}_2 \vdash \dot{\tau}$.
	\begin{itemize}
		\item If $\dot{\Gamma}_2 = \cdot$, then we have $\mathrm{subst}_{M; \cdot} : \mathcal{A}^{\mathcal{P}}\interpret{\dot{\Gamma}_1} \dotTo \mathcal{A}^{\mathcal{P}}\interpret{\dot{\Gamma}_1, x : \dot{\sigma}}$ by assumption.
		\item If $\dot{\Gamma}_2 = \dot{\Gamma}'_2, y : \dot{\tau}$, then
		\begin{center}
			\begin{tikzcd}
				\mathcal{A}^{\mathcal{P}}\interpret{\dot{\Gamma}_1, \dot{\Gamma}'_2[M/x], y : \dot{\tau}[M/x]} \ar[d, equal] \\
				\{ \mathrm{subst}_{M; \underlying{\dot{\Gamma}'_2}}^{*} \mathcal{A}^{\mathcal{P}}\interpret{\dot{\Gamma}_1, x : \dot{\sigma}, \dot{\Gamma}'_2 \vdash \dot{\tau}} \} \ar[d, "\{ \overline{\mathrm{subst}_{M; \underlying{\dot{\Gamma}'_2}}}(\dots) \}"] \\
				\mathcal{A}^{\mathcal{P}}\interpret{\dot{\Gamma}_1, x : \dot{\sigma}, \dot{\Gamma}'_2, y : \dot{\tau}}
			\end{tikzcd}
		\end{center}
		\[ \{ \overline{\mathrm{subst}_{M; \underlying{\dot{\Gamma}'_2}}}(\dots) \} = \mathrm{subst}_{M; \underlying{\dot{\Gamma}'_2}} \times \identity{} = \mathrm{subst}_{M; \underlying{\dot{\Gamma}'_2, y : \dot{\tau}}} \]
		\item $\{ v : \tau \mid \phi \}$: By the definition of reindexing in $\refinefib : \refinetotal \to \category{P}$.
		\begin{align}
			&\mathrm{subst}_{M; \underlying{\dot{\Gamma}_2}}^{*} \mathcal{A}^{\mathcal{P}}\interpret{\dot{\Gamma}_1, x : \dot{\sigma}, \dot{\Gamma}_2 \vdash \{ v : \tau \mid \phi \}} \\
			&= \mathrm{subst}_{M; \underlying{\dot{\Gamma}_2}}^{*} ((\mathcal{A}\interpret{\underlying{\dot{\Gamma}_1, x : \dot{\sigma}, \dot{\Gamma}_2}}, \mathcal{A}\interpret{\tau}), \mathcal{A}^{\mathcal{P}}\interpret{\dot{\Gamma}_1, x : \dot{\sigma}, \dot{\Gamma}_2}, \\
			&\qquad \pi_1^{*} \mathcal{A}^{\mathcal{P}}\interpret{\dot{\Gamma}_1, x : \dot{\sigma}, \dot{\Gamma}_2} \land \mathcal{A}^{\mathcal{P}}\interpret{\underlying{\dot{\Gamma}_1}, x : \underlying{\dot{\sigma}}, \underlying{\dot{\Gamma}_2}, v : \tau \vdash \phi}) \\
			&= ((\mathcal{A}\interpret{\underlying{\dot{\Gamma}_1, \dot{\Gamma}_2[M/x]}}, \mathcal{A}\interpret{\tau}), \mathcal{A}^{\mathcal{P}}\interpret{\dot{\Gamma}_1, \dot{\Gamma}_2[M/x]}, \\
			&\qquad \pi_1^{*} \mathcal{A}^{\mathcal{P}}\interpret{\dot{\Gamma}_1, \dot{\Gamma}_2[M/x]} \land (\mathrm{subst}_{M; \underlying{\dot{\Gamma}_2}} \times \identity{})^{*}(\pi_1^{*} \mathcal{A}^{\mathcal{P}}\interpret{\dot{\Gamma}_1, x : \dot{\sigma}, \dot{\Gamma}_2} \land \mathcal{A}^{\mathcal{P}}\interpret{\underlying{\dot{\Gamma}_1}, x : \underlying{\dot{\sigma}}, \underlying{\dot{\Gamma}_2}, v : \tau \vdash \phi})) \\
			&\pi_1^{*} \mathcal{A}^{\mathcal{P}}\interpret{\dot{\Gamma}_1, \dot{\Gamma}_2[M/x]} \land (\mathrm{subst}_{M; \underlying{\dot{\Gamma}_2}} \times \identity{})^{*}(\pi_1^{*} \mathcal{A}^{\mathcal{P}}\interpret{\dot{\Gamma}_1, x : \dot{\sigma}, \dot{\Gamma}_2} \land \mathcal{A}^{\mathcal{P}}\interpret{\underlying{\dot{\Gamma}_1}, x : \underlying{\dot{\sigma}}, \underlying{\dot{\Gamma}_2}, v : \tau \vdash \phi}) \\
			&= \pi_1^{*} \mathcal{A}^{\mathcal{P}}\interpret{\dot{\Gamma}_1, \dot{\Gamma}_2[M/x]} \land \pi_1^{*} \mathrm{subst}_{M; \underlying{\dot{\Gamma}_2}}^{*} \mathcal{A}^{\mathcal{P}}\interpret{\dot{\Gamma}_1, x : \dot{\sigma}, \dot{\Gamma}_2} \land \mathrm{subst}_{M; \underlying{\dot{\Gamma}_2}, v : \tau}^{*} \mathcal{A}^{\mathcal{P}}\interpret{\underlying{\dot{\Gamma}_1}, x : \underlying{\dot{\sigma}}, \underlying{\dot{\Gamma}_2}, v : \tau \vdash \phi} \\
			&= \pi_1^{*} \mathcal{A}^{\mathcal{P}}\interpret{\dot{\Gamma}_1, \dot{\Gamma}_2[M/x]} \land \mathcal{A}^{\mathcal{P}}\interpret{\underlying{\dot{\Gamma}_1}, \underlying{\dot{\Gamma}_2}, v : \tau \vdash \phi[M/x]}
		\end{align}
		Here, we used Lemma~\ref{lem:subst-formula} and $\underlying{\dot{\Gamma}_2[M/x]} = \underlying{\dot{\Gamma}_2}$.
		\item $(y : \dot{\tau}) \to \{ v : \answertype \mid \phi \}$: Apply the BC condition.
		\begin{center}
			\begin{tikzcd}[column sep=6em]
				\mathcal{A}^{\mathcal{P}}\interpret{\dot{\Gamma}_1, \dot{\Gamma}_2[M/x], y : \dot{\tau}[M/x]} \ar[d, "\pi"] \ar[r, "\mathrm{subst}_{M; \underlying{\dot{\Gamma}_2}, y : \underlying{\dot{\tau}}}"] \ar[rd, phantom, very near start, "\lrcorner"] & \mathcal{A}^{\mathcal{P}}\interpret{\dot{\Gamma}_1, x : \dot{\sigma}, \dot{\Gamma}_2, y : \dot{\tau}} \ar[d, "\pi"] \\
				\mathcal{A}^{\mathcal{P}}\interpret{\dot{\Gamma}_1, \dot{\Gamma}_2[M/x]} \ar[r, "\mathrm{subst}_{M; \underlying{\dot{\Gamma}_2}}"] & \mathcal{A}^{\mathcal{P}}\interpret{\dot{\Gamma}_1, x : \dot{\sigma}, \dot{\Gamma}_2}
			\end{tikzcd}
		\end{center}
		\begin{align}
			&\mathrm{subst}_{M; \underlying{\dot{\Gamma}_2}}^{*} \mathcal{A}^{\mathcal{P}}\interpret{\dot{\Gamma}_1, x : \dot{\sigma}, \dot{\Gamma}_2 \vdash (y : \dot{\tau}) \to \{ v : \answertype \mid \phi \}} \\
			&= \mathrm{subst}_{M; \underlying{\dot{\Gamma}_2}}^{*} \prod_{\mathcal{A}^{\mathcal{P}}\interpret{\dot{\Gamma}_1, x : \dot{\sigma}, \dot{\Gamma}_2 \vdash \dot{\tau}}} \mathcal{A}^{\mathcal{P}}\interpret{\dot{\Gamma}_1, x : \dot{\sigma}, \dot{\Gamma}_2, y : \dot{\tau} \vdash \{ v : \answertype \mid \phi \}} \\
			&= \prod_{\mathcal{A}^{\mathcal{P}}\interpret{\dot{\Gamma}_1, \dot{\Gamma}_2[M/x] \vdash \dot{\tau}[M/x]}} \mathrm{subst}_{M; \underlying{\dot{\Gamma}_2}, y : \underlying{\dot{\tau}}}^{*} \mathcal{A}^{\mathcal{P}}\interpret{\dot{\Gamma}_1, x : \dot{\sigma}, \dot{\Gamma}_2, y : \dot{\tau} \vdash \{ v : \answertype \mid \phi \}} \\
			&= \prod_{\mathcal{A}^{\mathcal{P}}\interpret{\dot{\Gamma}_1, \dot{\Gamma}_2[M/x] \vdash \dot{\tau}[M/x]}} \mathcal{A}^{\mathcal{P}}\interpret{\dot{\Gamma}_1, \dot{\Gamma}_2[M/x], y : \dot{\tau}[M/x] \vdash \{ v : \answertype \mid \phi[M/x] \}} \\
			&= \mathcal{A}^{\mathcal{P}}\interpret{\dot{\Gamma}_1, \dot{\Gamma}_2[M/x] \vdash (y : \dot{\tau}[M/x]) \to \{ v : \answertype \mid \phi[M/x] \}}
		\end{align}
		\item $(y : \dot{\tau}_1) \times \dot{\tau}_2$: Apply the BC condition (almost the same as above).
		\item $0$: Fibred initial objects are preserved by reindexing functors.
		\begin{align}
			\mathrm{subst}_{M; \underlying{\dot{\Gamma}_2}}^{*} \mathcal{A}^{\mathcal{P}}\interpret{\dot{\Gamma}_1, x : \dot{\sigma}, \dot{\Gamma}_2 \vdash 0} &= \mathrm{subst}_{M; \underlying{\dot{\Gamma}_2}}^{*} 0 \mathcal{A}^{\mathcal{P}}\interpret{\dot{\Gamma}_1, x : \dot{\sigma}, \dot{\Gamma}_2} \\
			&= 0 \mathcal{A}^{\mathcal{P}}\interpret{\dot{\Gamma}_1, \dot{\Gamma}_2[M/x]} \\
			&= \mathcal{A}^{\mathcal{P}}\interpret{\dot{\Gamma}_1, \dot{\Gamma}_2[M/x] \vdash 0}
		\end{align}
		\item $\dot{\tau}_1 + \dot{\tau}_2$: Fibred binary coproducts are preserved by reindexing functors.
		\begin{align}
			&\mathrm{subst}_{M; \underlying{\dot{\Gamma}_2}}^{*} \mathcal{A}^{\mathcal{P}}\interpret{\dot{\Gamma}_1, x : \dot{\sigma}, \dot{\Gamma}_2 \vdash \dot{\tau}_1 + \dot{\tau}_2} \\
			&= \mathrm{subst}_{M; \underlying{\dot{\Gamma}_2}}^{*} (\mathcal{A}^{\mathcal{P}}\interpret{\dot{\Gamma}_1, x : \dot{\sigma}, \dot{\Gamma}_2 \vdash \dot{\tau}_1} + \mathcal{A}^{\mathcal{P}}\interpret{\dot{\Gamma}_1, x : \dot{\sigma}, \dot{\Gamma}_2 \vdash \dot{\tau}_2}) \\
			&= \mathrm{subst}_{M; \underlying{\dot{\Gamma}_2}}^{*} \mathcal{A}^{\mathcal{P}}\interpret{\dot{\Gamma}_1, x : \dot{\sigma}, \dot{\Gamma}_2 \vdash \dot{\tau}_1} + \mathrm{subst}_{M; \underlying{\dot{\Gamma}_2}}^{*} \mathcal{A}^{\mathcal{P}}\interpret{\dot{\Gamma}_1, x : \dot{\sigma}, \dot{\Gamma}_2 \vdash \dot{\tau}_2} \\
			&= \mathcal{A}^{\mathcal{P}}\interpret{\dot{\Gamma}_1, \dot{\Gamma}_2[M/x] \vdash \dot{\tau}_1[M/x]} + \mathcal{A}^{\mathcal{P}}\interpret{\dot{\Gamma}_1, \dot{\Gamma}_2[M/x] \vdash \dot{\tau}_2[M/x]} \\
			&= \mathcal{A}^{\mathcal{P}}\interpret{\dot{\Gamma}_1, \dot{\Gamma}_2[M/x] \vdash (\dot{\tau}_1 + \dot{\tau}_2)[M/x]}
		\end{align}
	\end{itemize}
\end{proof}

\subsection{Proofs for Subtyping Rules}
\begin{proposition}[soundness of subtyping]\label{prop:subtype-sound}
	For any $\dot{\Gamma} \vdash \dot{\sigma} <: \dot{\tau}$, we have
	\[ \mathcal{A}^{\mathcal{P}}\interpret{\dot{\Gamma} \vdash \dot{\sigma}} \le \mathcal{A}^{\mathcal{P}}\interpret{\dot{\Gamma} \vdash \dot{\tau}} \]
	where we define $((I, X), P, Q) \le ((I, X), P, Q')$ if $Q \subseteq Q'$, or equivalently, $\identity{(I, X)} : ((I, X), P, Q) \dotTo ((I, X), P, Q')$.
\end{proposition}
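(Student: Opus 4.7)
The plan is to prove Proposition~\ref{prop:subtype-sound} by induction on the derivation of $\dot{\Gamma} \vdash \dot{\sigma} <: \dot{\tau}$. Recall that by definition of the order in $\refinetotal$, showing $\mathcal{A}^{\mathcal{P}}\interpret{\dot{\Gamma} \vdash \dot{\sigma}} \le \mathcal{A}^{\mathcal{P}}\interpret{\dot{\Gamma} \vdash \dot{\tau}}$ amounts to verifying that both interpretations live over the same $(\mathcal{A}\interpret{\underlying{\dot{\Gamma}}}, \mathcal{A}\interpret{\underlying{\dot{\sigma}}})$ in $s(\category{C})$, share the same predicate-component $\mathcal{A}^{\mathcal{P}}\interpret{\dot{\Gamma}}$ on the base, and that the predicate on $\mathcal{A}\interpret{\underlying{\dot{\Gamma}}} \times \mathcal{A}\interpret{\underlying{\dot{\sigma}}}$ of the former is contained in that of the latter. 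The first two conditions are in each case immediate by syntactic inspection of the subtyping rules (all rules preserve the underlying simple type), so the work is concentrated in the third condition.

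The base cases Sub-Refl and Sub-Trans are handled by reflexivity and transitivity of $\subseteq$ in the fibre $\category{P}_{\mathcal{A}\interpret{\underlying{\dot{\Gamma}}} \times \mathcal{A}\interpret{\underlying{\dot{\sigma}}}}$. For Sub-Refine, I would unfold $\mathcal{A}^{\mathcal{P}}\interpret{\dot{\Gamma} \vdash \{v : \sigma \mid \phi\}}$ whose predicate component is $\pi_1^{*} \mathcal{A}^{\mathcal{P}}\interpret{\dot{\Gamma}} \land \mathcal{A}^{\mathcal{P}}\interpret{\underlying{\dot{\Gamma}}, v : \sigma \vdash \phi}$; the required inclusion then follows directly from the semantic entailment $\dot{\Gamma} \mid v : \sigma. \phi \vDash \psi$ and monotonicity of $({-}) \land \pi_1^{*} \mathcal{A}^{\mathcal{P}}\interpret{\dot{\Gamma}}$. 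For Sub-Sum, monotonicity of the strong fibred binary coproduct $\dotPlus$ in $\refinefib$ combines the two inductive hypotheses.

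The main obstacle is Sub-DFun and Sub-DPair, where one must reconcile two competing issues: (i) the variance flips for the domain in Sub-DFun, and (ii) the codomains $\dot{\tau}_1$ and $\dot{\tau}_2$ a priori live over different refinement contexts $\dot{\Gamma}, x : \dot{\sigma}_1$ versus $\dot{\Gamma}, x : \dot{\sigma}_2$. My plan for Sub-DFun is to first use the inductive hypothesis $\mathcal{A}^{\mathcal{P}}\interpret{\dot{\Gamma} \vdash \dot{\sigma}_2} \le \mathcal{A}^{\mathcal{P}}\interpret{\dot{\Gamma} \vdash \dot{\sigma}_1}$ to obtain an inclusion of comprehensions, i.e.\ an identity-carried morphism $\mathcal{A}^{\mathcal{P}}\interpret{\dot{\Gamma}, x : \dot{\sigma}_2} \dotTo \mathcal{A}^{\mathcal{P}}\interpret{\dot{\Gamma}, x : \dot{\sigma}_1}$ in $\category{P}$; by Lemma~\ref{lem:weakening-type} (with a trivial weakening) this lets me reindex $\mathcal{A}^{\mathcal{P}}\interpret{\dot{\Gamma}, x : \dot{\sigma}_1 \vdash \dot{\tau}_1}$ to an object over $\mathcal{A}^{\mathcal{P}}\interpret{\dot{\Gamma}, x : \dot{\sigma}_2}$ that coincides with $\mathcal{A}^{\mathcal{P}}\interpret{\dot{\Gamma}, x : \dot{\sigma}_2 \vdash \dot{\tau}_1}$. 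Combining with the second inductive hypothesis $\mathcal{A}^{\mathcal{P}}\interpret{\dot{\Gamma}, x : \dot{\sigma}_2 \vdash \dot{\tau}_1} \le \mathcal{A}^{\mathcal{P}}\interpret{\dot{\Gamma}, x : \dot{\sigma}_2 \vdash \dot{\tau}_2}$ and then applying monotonicity of the simple product $\dot{\prod}$ together with the counit/unit of the adjunction $\pi_1^{*} \dashv \dot{\prod}$, I obtain the desired inclusion of $\mathcal{A}^{\mathcal{P}}\interpret{\dot{\Gamma} \vdash (x : \dot{\sigma}_1) \to \dot{\tau}_1}$ into $\mathcal{A}^{\mathcal{P}}\interpret{\dot{\Gamma} \vdash (x : \dot{\sigma}_2) \to \dot{\tau}_2}$.

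For Sub-DPair the same rebasing strategy applies but in the covariant direction: I would rebase $\dot{\tau}_2$ from $\dot{\Gamma}, x : \dot{\sigma}_2$ down to $\dot{\Gamma}, x : \dot{\sigma}_1$ via the inclusion furnished by the first inductive hypothesis, then use monotonicity of the simple coproduct $\dot{\coprod}$. The crux of both dependent cases is the monotonicity-and-rebase lemma alluded to above; I expect this to be the only place where a substantive verification is required, and it should follow routinely from the fact that reindexing in $\refinefib$ is computed componentwise and preserves the fibred bi-cc structure, together with the standard fact that left and right adjoints to reindexing are monotone on subobjects.
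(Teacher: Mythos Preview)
Your proposal is correct and follows essentially the same route as the paper: induction on the derivation, with the dependent cases Sub-DFun and Sub-DPair handled by rebasing along the identity-carried inclusion of comprehensions and then invoking monotonicity of $\dot{\prod}$ (resp.\ $\dot{\coprod}$). The paper phrases the rebasing-plus-product step as an application of the Beck--Chevalley condition rather than unit/counit reasoning, and the identification $\{\identity{}\}^{*}\mathcal{A}^{\mathcal{P}}\interpret{\dot{\Gamma}, x : \dot{\sigma}_1 \vdash \dot{\tau}_1} = \mathcal{A}^{\mathcal{P}}\interpret{\dot{\Gamma}, x : \dot{\sigma}_2 \vdash \dot{\tau}_1}$ is taken directly from the definition of reindexing in $\refinefib$ rather than from Lemma~\ref{lem:weakening-type}, but these are cosmetic differences.
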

\begin{proof}
	By induction on derivation of $\dot{\Gamma} \vdash \dot{\sigma} <: \dot{\tau}$.
	\begin{itemize}
		\item Sub-Refl, Sub-Trans: By reflexivity and transitivity of $\subseteq$.
		\item Sub-Refine: Obvious from the definition of the interpretation.
		\item Sub-Sum: Easy.
		\item Sub-DFun:
		By IH, we have
		\begin{gather}
			\mathcal{A}^{\mathcal{P}}\interpret{\dot{\Gamma} \vdash \dot{\sigma}_2} \le \mathcal{A}^{\mathcal{P}}\interpret{\dot{\Gamma} \vdash \dot{\sigma}_1} \\
			\mathcal{A}^{\mathcal{P}}\interpret{\dot{\Gamma}, x : \dot{\sigma}_2 \vdash \dot{\tau}_1} \le \mathcal{A}^{\mathcal{P}}\interpret{\dot{\Gamma}, x : \dot{\sigma}_2 \vdash \dot{\tau}_2}
		\end{gather}
		We apply the BC condition.
		\begin{center}
			\begin{tikzcd}
				\refinetotal_{\{ \mathcal{A}^{\mathcal{P}}\interpret{\dot{\Gamma} \vdash \dot{\sigma}_1} \}} \ar[r, "\prod"] \ar[d, "\{ \identity{} \}^{*}"] & \refinetotal_{\mathcal{A}^{\mathcal{P}}\interpret{\dot{\Gamma}}} \ar[d, equal] \\
				\refinetotal_{\{ \mathcal{A}^{\mathcal{P}}\interpret{\dot{\Gamma} \vdash \dot{\sigma}_2} \}} \ar[r, "\prod"] & \refinetotal_{\mathcal{A}^{\mathcal{P}}\interpret{\dot{\Gamma}}}
			\end{tikzcd}
		\end{center}
		\begin{align}
			\mathcal{A}^{\mathcal{P}}\interpret{\dot{\Gamma} \vdash (x : \dot{\sigma}_1) \to \dot{\tau}_1} &= \prod_{\mathcal{A}^{\mathcal{P}}\interpret{\dot{\Gamma} \vdash \dot{\sigma}_1}} \mathcal{A}^{\mathcal{P}}\interpret{\dot{\Gamma}, x : \dot{\sigma}_1 \vdash \dot{\tau}_1} \\
			&= \prod_{\mathcal{A}^{\mathcal{P}}\interpret{\dot{\Gamma} \vdash \dot{\sigma}_2}} \{ \identity{} \}^{*} \mathcal{A}^{\mathcal{P}}\interpret{\dot{\Gamma}, x : \dot{\sigma}_1 \vdash \dot{\tau}_1} \\
			&= \prod_{\mathcal{A}^{\mathcal{P}}\interpret{\dot{\Gamma} \vdash \dot{\sigma}_2}} \mathcal{A}^{\mathcal{P}}\interpret{\dot{\Gamma}, x : \dot{\sigma}_2 \vdash \dot{\tau}_1} \\
			&\le \prod_{\mathcal{A}^{\mathcal{P}}\interpret{\dot{\Gamma} \vdash \dot{\sigma}_2}} \mathcal{A}^{\mathcal{P}}\interpret{\dot{\Gamma}, x : \dot{\sigma}_2 \vdash \dot{\tau}_2} \\
			&= \mathcal{A}^{\mathcal{P}}\interpret{\dot{\Gamma} \vdash (x : \dot{\sigma}_2) \to \dot{\tau}_2}
		\end{align}
		\item Sub-DPair:
		By IH, we have
		\begin{gather}
			\mathcal{A}^{\mathcal{P}}\interpret{\dot{\Gamma} \vdash \dot{\sigma}_1} \le \mathcal{A}^{\mathcal{P}}\interpret{\dot{\Gamma} \vdash \dot{\sigma}_2} \\
			\mathcal{A}^{\mathcal{P}}\interpret{\dot{\Gamma}, x : \dot{\sigma}_1 \vdash \dot{\tau}_1} \le \mathcal{A}^{\mathcal{P}}\interpret{\dot{\Gamma}, x : \dot{\sigma}_1 \vdash \dot{\tau}_2}
		\end{gather}
		We apply the BC condition.
		\begin{align}
			\mathcal{A}^{\mathcal{P}}\interpret{\dot{\Gamma} \vdash (x : \dot{\sigma}_1) \times \dot{\tau}_1} &= \coprod_{\mathcal{A}^{\mathcal{P}}\interpret{\dot{\Gamma} \vdash \dot{\sigma}_1}} \mathcal{A}^{\mathcal{P}}\interpret{\dot{\Gamma}, x : \dot{\sigma}_1 \vdash \dot{\tau}_1} \\
			&\le \coprod_{\mathcal{A}^{\mathcal{P}}\interpret{\dot{\Gamma} \vdash \dot{\sigma}_1}} \mathcal{A}^{\mathcal{P}}\interpret{\dot{\Gamma}, x : \dot{\sigma}_1 \vdash \dot{\tau}_2} \\
			&= \coprod_{\mathcal{A}^{\mathcal{P}}\interpret{\dot{\Gamma} \vdash \dot{\sigma}_1}} \{ \identity{} \}^{*} \mathcal{A}^{\mathcal{P}}\interpret{\dot{\Gamma}, x : \dot{\sigma}_2 \vdash \dot{\tau}_2} \\
			&= \coprod_{\mathcal{A}^{\mathcal{P}}\interpret{\dot{\Gamma} \vdash \dot{\sigma}_2}} \mathcal{A}^{\mathcal{P}}\interpret{\dot{\Gamma}, x : \dot{\sigma}_2 \vdash \dot{\tau}_2} \\
			&= \mathcal{A}^{\mathcal{P}}\interpret{\dot{\Gamma} \vdash (x : \dot{\sigma}_2) \times \dot{\tau}_2}
		\end{align}
	\end{itemize}
\end{proof}

\subsection{Proofs for Typing Rules}
\begin{theorem}[soundness]\label{thm:soundness-detailed}
	For any well-typed term $\dot{\Gamma} \vdash M : \dot{\sigma}$,
	\[ (\identity{}, \mathcal{A}\interpret{M} \comp \pi_1) : 1 \mathcal{A}^{\mathcal{P}}\interpret{\dot{\Gamma}} \dotTo \mathcal{A}^{\mathcal{P}}\interpret{\dot{\Gamma} \vdash \dot{\sigma}} \]
	Note that this is equivalent to the following.
	\[ \tupling{\identity{}}{\mathcal{A}\interpret{M}} : \mathcal{A}^{\mathcal{P}}\interpret{\dot{\Gamma}} \dotTo \{ \mathcal{A}^{\mathcal{P}}\interpret{\dot{\Gamma} \vdash \dot{\sigma}} \} \]
\end{theorem}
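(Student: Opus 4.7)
The plan is to proceed by induction on the derivation of $\dot{\Gamma} \vdash M : \dot{\sigma}$, unfolding in each case the interpretation of the conclusion inside $\refinefib : \refinetotal \to \category{P}$ and checking that the morphism $\tupling{\identity{}}{\mathcal{A}\interpret{M}}$ factors through $\mathcal{A}^{\mathcal{P}}\interpret{\dot{\Gamma} \vdash \dot{\sigma}}$. The inductive hypotheses already deliver such lifts for every subterm, so each case reduces to a calculation using the universal properties of comprehension, simple products, simple coproducts, and strong fibred coproducts established in \referappendix{sec:detail-refinement}{B}.

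For the structural rules \textsc{R-Unit}, \textsc{R-Pair}, \textsc{R-Fst}, \textsc{R-Snd}, \textsc{R-Inj}, \textsc{R-Case0}, \textsc{R-Case2}, \textsc{R-Var}, \textsc{R-VarRefine}, \textsc{R-Abs}, and \textsc{R-App}, I would simply invoke the universal property of the relevant construction in $\refinefib$ and trace the composed morphism. Whenever the conclusion of a rule involves a substitution into the refinement type (as in \textsc{R-App}, \textsc{R-Pair}, \textsc{R-Snd}, and \textsc{R-Case2}), Lemma~\ref{lem:subst-type} identifies the reindexed interpretation $\mathrm{subst}_{N; \cdot}^{*} \mathcal{A}^{\mathcal{P}}\interpret{\dot{\Gamma}, x : \dot{\sigma} \vdash \dot{\tau}}$ with $\mathcal{A}^{\mathcal{P}}\interpret{\dot{\Gamma} \vdash \dot{\tau}[N/x]}$, and Lemma~\ref{lem:weakening-type} handles the weakening implicit in extending contexts with fresh variables. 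The rule \textsc{R-Sub} reduces immediately to Proposition~\ref{prop:subtype-sound}, and \textsc{R-BasicOp} is a direct rewriting of its semantic premise.

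The main obstacle is \textsc{R-Fix}, which I would isolate as Lemma~\ref{lem:r-fix-sound}. Writing $\dot{\tau} = (x : \dot{\sigma}) \to \{v : \answertype \mid \phi\}$ and $F(\gamma)(h) = \mathcal{A}\interpret{M}(\gamma[f \mapsto h])$, the denotational semantics yields $\mathcal{A}\interpret{\fixpoint{f}{M}}(\gamma) = \sup_n F(\gamma)^n(\bot)$. The key step is to show that the syntactic admissibility of $\phi$ at $v$ lifts to admissibility of the fibre
\[ \{\, h \in \exponential{\mathcal{A}\interpret{\underlying{\dot{\sigma}}}}{\mathcal{A}\interpret{\answertype}} \mid (\gamma, h) \in \mathcal{A}^{\mathcal{P}}\interpret{\dot{\Gamma} \vdash \dot{\tau}}_4 \,\} \]
over each $\gamma \in \mathcal{A}^{\mathcal{P}}\interpret{\dot{\Gamma}}_2$, exploiting the pointwise order on the exponential together with the explicit formula~\eqref{eq:prod_pred} for the simple product in $\refinetotal$. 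Once admissibility at the function-space level is in hand, the induction hypothesis shows that $F(\gamma)$ preserves this fibre, so $\bot$, every iterate $F(\gamma)^n(\bot)$, and their supremum all lie in it, which is exactly the lift required by the theorem statement.
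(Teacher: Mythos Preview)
Your proposal is correct and follows essentially the same approach as the paper: induction on the typing derivation, with each case discharged via the universal properties of the comprehension-category structure on $\refinefib$, using the substitution and weakening lemmas exactly where you indicate, Proposition~\ref{prop:subtype-sound} for \textsc{R-Sub}, and the admissibility-lifting argument for \textsc{R-Fix}. The paper organises the \textsc{R-Fix} case into three auxiliary lemmas (that $\mathcal{A}^{\mathcal{P}}\interpret{\dot{\Gamma} \vdash \{v : \answertype \mid \phi\}}$ is admissible, that admissibility is preserved by $\dot{\prod}$, and that admissibility yields the fixed-point lift), but this is precisely the decomposition you sketch.
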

\begin{proof}
	By induction on derivation of $\dot{\Gamma} \vdash M : \dot{\sigma}$.
	The basic idea of the proof is to interpret terms in $\refinefib : \refinetotal \to \category{P}$ as terms of a dependent type system.
	Proofs for individual cases are given below.
	\begin{itemize}
		\item \textsc{R-Sub}: Lemma~\ref{lem:R-Sub-sound}
		\item \textsc{R-App}: Lemma~\ref{lem:R-App-sound}
		\item \textsc{R-Abs}: Lemma~\ref{lem:R-Abs-sound}
		\item \textsc{R-Unit}: Lemma~\ref{lem:R-Unit-sound}
		\item \textsc{R-Pair}: Lemma~\ref{lem:R-Pair-sound}
		\item \textsc{R-Fst}: Lemma~\ref{lem:R-Fst-sound}
		\item \textsc{R-Snd}: Lemma~\ref{lem:R-Snd-sound}
		\item \textsc{R-Case0}: Lemma~\ref{lem:R-Case0-sound}
		\item \textsc{R-Inj}: Lemma~\ref{lem:R-Inj-sound}
		\item \textsc{R-Case2}: Lemma~\ref{lem:R-Case2-sound}
		\item \textsc{R-Var}: Lemma~\ref{lem:R-Var-sound}
		\item \textsc{R-VarRefine}: Lemma~\ref{lem:R-VarRefine-sound}
		\item \textsc{R-Fix}: Lemma~\ref{lem:R-Fix-sound}
		\item \textsc{R-BasicOp}: Lemma~\ref{lem:R-BasicOp-sound}
	\end{itemize}
\end{proof}
\begin{lemma}\label{lem:R-Sub-sound}
	\textsc{R-Sub} is sound.
\end{lemma}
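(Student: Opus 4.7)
The plan is to derive the soundness of \textsc{R-Sub} directly by composing two morphisms already available from the induction hypothesis and from the soundness of the subtyping relation (Proposition~\ref{prop:subtype-sound}).

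First, by the induction hypothesis applied to the first premise $\dot{\Gamma} \vdash M : \dot{\sigma}$, we obtain a morphism
\[ (\identity{}, \mathcal{A}\interpret{M} \comp \pi_1) : 1 \mathcal{A}^{\mathcal{P}}\interpret{\dot{\Gamma}} \dotTo \mathcal{A}^{\mathcal{P}}\interpret{\dot{\Gamma} \vdash \dot{\sigma}} \]
in $\refinetotal$. Second, by Proposition~\ref{prop:subtype-sound} applied to the second premise $\dot{\Gamma} \vdash \dot{\sigma} <: \dot{\tau}$, we have $\mathcal{A}^{\mathcal{P}}\interpret{\dot{\Gamma} \vdash \dot{\sigma}} \le \mathcal{A}^{\mathcal{P}}\interpret{\dot{\Gamma} \vdash \dot{\tau}}$. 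By the definition of ${\le}$ given in Proposition~\ref{prop:subtype-sound}, this means both sides have the same underlying object in $s(\category{C})$ (and the same base predicate), so the identity morphism in $s(\category{C})$ lifts to a morphism
\[ \identity{} : \mathcal{A}^{\mathcal{P}}\interpret{\dot{\Gamma} \vdash \dot{\sigma}} \dotTo \mathcal{A}^{\mathcal{P}}\interpret{\dot{\Gamma} \vdash \dot{\tau}} \]
in $\refinetotal$ over $\identity{\mathcal{A}^{\mathcal{P}}\interpret{\dot{\Gamma}}}$.

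Composing these two morphisms in $\refinetotal$ yields
\[ (\identity{}, \mathcal{A}\interpret{M} \comp \pi_1) : 1 \mathcal{A}^{\mathcal{P}}\interpret{\dot{\Gamma}} \dotTo \mathcal{A}^{\mathcal{P}}\interpret{\dot{\Gamma} \vdash \dot{\tau}}, \]
which is precisely the required soundness statement for the conclusion $\dot{\Gamma} \vdash M : \dot{\tau}$.

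I expect no real obstacle here: all the work is already packaged inside Proposition~\ref{prop:subtype-sound} and the induction hypothesis, and the only thing to check is that composition in $\refinetotal$ is well-defined on the relevant morphisms (which follows immediately since one of the two factors lies over the identity base morphism in $\category{P}$, so the composite stays over the same base morphism $\identity{}$ in $\category{P}$ and projects to $\mathcal{A}\interpret{M} \comp \pi_1$ in $s(\category{C})$). This case is essentially a bookkeeping step confirming that the categorical framework from \cite{kura2021} absorbs subtyping cleanly into term soundness.
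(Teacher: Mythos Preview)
Your proof is correct and takes essentially the same approach as the paper: invoke the induction hypothesis for the first premise, invoke Proposition~\ref{prop:subtype-sound} for the second, and compose the resulting morphisms. The paper's proof is just a terser version of what you wrote.
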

\begin{proof}
	By IH and Proposition~\ref{prop:subtype-sound}, we have
	\begin{gather}
		(\identity{}, \mathcal{A}\interpret{M} \comp \pi_1) : 1 \mathcal{A}^{\mathcal{P}}\interpret{\dot{\Gamma}} \dotTo \mathcal{A}^{\mathcal{P}}\interpret{\dot{\Gamma} \vdash \dot{\sigma}} \\
		\mathcal{A}^{\mathcal{P}}\interpret{\dot{\Gamma} \vdash \dot{\sigma}} \le \mathcal{A}^{\mathcal{P}}\interpret{\dot{\Gamma} \vdash \dot{\tau}}.
	\end{gather}
	Therefore, we have
	\[ (\identity{}, \mathcal{A}\interpret{M} \comp \pi_1) : 1 \mathcal{A}^{\mathcal{P}}\interpret{\dot{\Gamma}} \dotTo \mathcal{A}^{\mathcal{P}}\interpret{\dot{\Gamma} \vdash \dot{\tau}}. \]
\end{proof}

\begin{lemma}\label{lem:R-App-sound}
	\textsc{R-App} is sound.
\end{lemma}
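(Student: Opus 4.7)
The plan is to carry out the standard categorical interpretation of function application in a dependent refinement type setting, relying on (i) the induction hypothesis applied to $M$ and $N$, (ii) the substitution lemma for types (Lemma~\ref{lem:subst-type}), and (iii) the adjunction $\{{-}\}^{*} \dashv \prod_{(-)}$ that defines the interpretation of the dependent function type.

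First, I would apply the induction hypothesis to the two premises, obtaining
\[
	(\identity{}, \mathcal{A}\interpret{M} \comp \pi_1) : 1 \mathcal{A}^{\mathcal{P}}\interpret{\dot{\Gamma}} \dotTo \mathcal{A}^{\mathcal{P}}\interpret{\dot{\Gamma} \vdash (x : \dot{\sigma}) \to \{ v : \answertype \mid \phi \}}
\]
and $\tupling{\identity{}}{\mathcal{A}\interpret{N}} : \mathcal{A}^{\mathcal{P}}\interpret{\dot{\Gamma}} \dotTo \{ \mathcal{A}^{\mathcal{P}}\interpret{\dot{\Gamma} \vdash \dot{\sigma}} \} = \mathcal{A}^{\mathcal{P}}\interpret{\dot{\Gamma}, x : \dot{\sigma}}$. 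By the definition of the interpretation of the function type, the codomain of the first morphism equals $\prod_{\mathcal{A}^{\mathcal{P}}\interpret{\dot{\Gamma} \vdash \dot{\sigma}}} \mathcal{A}^{\mathcal{P}}\interpret{\dot{\Gamma}, x : \dot{\sigma} \vdash \{ v : \answertype \mid \phi \}}$, so transposing along the adjunction and postcomposing with the counit $\epsilon$ yields
\[
	\epsilon \comp \pi^{*}(\identity{}, \mathcal{A}\interpret{M} \comp \pi_1) : 1 \mathcal{A}^{\mathcal{P}}\interpret{\dot{\Gamma}, x : \dot{\sigma}} \dotTo \mathcal{A}^{\mathcal{P}}\interpret{\dot{\Gamma}, x : \dot{\sigma} \vdash \{ v : \answertype \mid \phi \}}.
\]

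Next, I would reindex this morphism along the section $\tupling{\identity{}}{\mathcal{A}\interpret{N}}$ supplied by the second IH. Because $\tupling{\identity{}}{\mathcal{A}\interpret{N}} = \mathrm{subst}_{N; \cdot}$ lifts from $\category{C}$ to $\category{P}$ (by the IH on $N$), Lemma~\ref{lem:subst-type} tells us that the reindexing of $\mathcal{A}^{\mathcal{P}}\interpret{\dot{\Gamma}, x : \dot{\sigma} \vdash \{ v : \answertype \mid \phi \}}$ along $\mathrm{subst}_{N; \cdot}$ is exactly $\mathcal{A}^{\mathcal{P}}\interpret{\dot{\Gamma} \vdash \{ v : \answertype \mid \phi[N/x] \}}$. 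Since reindexing preserves fibred terminal objects, this gives a morphism $1 \mathcal{A}^{\mathcal{P}}\interpret{\dot{\Gamma}} \dotTo \mathcal{A}^{\mathcal{P}}\interpret{\dot{\Gamma} \vdash \{ v : \answertype \mid \phi[N/x] \}}$, which is the desired shape.

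Finally, I would verify that the underlying morphism in $s(\category{C})$ is indeed $(\identity{}, \mathcal{A}\interpret{M\ N} \comp \pi_1)$. Using the explicit formula for the counit, $\epsilon = (\identity{}, \eval \comp \braiding \comp (\pi_2 \times \identity{}))$, a direct unfolding of the composition shows that the second component simplifies to $\eval \comp \tupling{\mathcal{A}\interpret{M}}{\mathcal{A}\interpret{N}} \comp \pi_1 = \mathcal{A}\interpret{M\ N} \comp \pi_1$, as required. The main obstacle is the bookkeeping in this last calculation: one must carefully track the cartesian and opcartesian liftings that realise the transpose and the reindexing in $\refinefib$, and check that they assemble at the level of $s(\category{C})$ to the standard interpretation of application. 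Once this is done, the lemma follows.
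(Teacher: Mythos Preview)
Your proposal is correct and follows essentially the same approach as the paper: both use the induction hypotheses on $M$ and $N$, the counit of the adjunction $\pi^{*} \dashv \prod$, Lemma~\ref{lem:subst-type} to identify the reindexed type with the substituted one, and the explicit formula $\epsilon = (\identity{}, \eval \comp \braiding \comp (\pi_2 \times \identity{}))$ to verify that the underlying morphism is $(\identity{}, \mathcal{A}\interpret{M\ N} \comp \pi_1)$. The only cosmetic difference is that the paper applies the reindexed counit $\tupling{\identity{}}{\mathcal{A}\interpret{N}}^{*}\epsilon$ directly (using $\tupling{\identity{}}{\mathcal{A}\interpret{N}}^{*}\pi^{*} = \identity{}$), whereas you first form $\epsilon \comp \pi^{*}(\ldots)$ over the extended context and then reindex; these two orderings are equivalent since reindexing is functorial.
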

\begin{proof}
	By IH, we have
	\begin{gather}
		(\identity{}, \mathcal{A}\interpret{M} \comp \pi_1) : 1 \mathcal{A}^{\mathcal{P}}\interpret{\dot{\Gamma}} \dotTo \mathcal{A}^{\mathcal{P}}\interpret{\dot{\Gamma} \vdash (x : \dot{\sigma}) \to \{ v : \answertype \mid \phi \}} \\
		(\identity{}, \mathcal{A}\interpret{N} \comp \pi_1) : 1 \mathcal{A}^{\mathcal{P}}\interpret{\dot{\Gamma}} \dotTo \mathcal{A}^{\mathcal{P}}\interpret{\dot{\Gamma} \vdash \dot{\sigma}}
	\end{gather}
	We consider the following vertical morphism in $\refinetotal$.
	\begin{center}
		\begin{tikzcd}
			1 \mathcal{A}^{\mathcal{P}}\interpret{\dot{\Gamma}} \ar[d, "{(\identity{}, \mathcal{A}\interpret{M} \comp \pi_1)}"] \\
			\prod_{\mathcal{A}^{\mathcal{P}}\interpret{\dot{\Gamma} \vdash \dot{\sigma}}} \mathcal{A}^{\mathcal{P}}\interpret{\dot{\Gamma}, x : \dot{\sigma} \vdash \{ v : \answertype \mid \phi \}} \ar[d, equal] \\
			\tupling{\identity{}}{\mathcal{A}\interpret{N}}^{*} \pi_{\mathcal{A}^{\mathcal{P}}\interpret{\dot{\Gamma} \vdash \dot{\sigma}}}^{*} \prod_{\mathcal{A}^{\mathcal{P}}\interpret{\dot{\Gamma} \vdash \dot{\sigma}}} \mathcal{A}^{\mathcal{P}}\interpret{\dot{\Gamma}, x : \dot{\sigma} \vdash \{ v : \answertype \mid \phi \}} \ar[d, "{\tupling{\identity{}}{\mathcal{A}\interpret{N}}^{*} \epsilon^{\pi^{*} \dashv \prod}}"] \\
			\tupling{\identity{}}{\mathcal{A}\interpret{N}}^{*} \mathcal{A}^{\mathcal{P}}\interpret{\dot{\Gamma}, x : \dot{\sigma} \vdash \{ v : \answertype \mid \phi \}} \ar[d, equal, "\text{by Lemma~\ref{lem:subst-type}}"] \\
			\mathcal{A}^{\mathcal{P}}\interpret{\dot{\Gamma} \vdash \{ v : \answertype \mid \phi[N/x] \}}
		\end{tikzcd}
	\end{center}
	The composite is equal to $(\identity{}, \mathcal{A}\interpret{M\ N} \comp \pi_1) : 1 \mathcal{A}^{\mathcal{P}}\interpret{\dot{\Gamma}} \to \mathcal{A}^{\mathcal{P}}\interpret{\dot{\Gamma} \vdash \{ v : \answertype \mid \phi[N/x] \}}$.
	\begin{align}
		&\tupling{\identity{}}{\mathcal{A}\interpret{N}}^{*} (\identity{}, \eval \comp \braiding \comp (\pi_2 \times \identity{})) \comp (\identity{}, \mathcal{A}\interpret{M} \comp \pi_1) \\
		&= (\identity{}, \eval \comp \braiding \comp (\pi_2 \times \identity{}) \comp (\tupling{\identity{}}{\mathcal{A}\interpret{N}} \times \identity{}) \comp \tupling{\pi_1}{\mathcal{A}\interpret{M} \comp \pi_1}) \\
		&= (\identity{}, \eval \comp \tupling{\mathcal{A}\interpret{M}}{\mathcal{A}\interpret{N}} \comp \pi_1) \\
		&= (\identity{}, \mathcal{A}\interpret{M\ N} \comp \pi_1)
	\end{align}
\end{proof}

\begin{lemma}\label{lem:R-Abs-sound}
	\textsc{R-Abs} is sound.
\end{lemma}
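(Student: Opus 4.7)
The plan is to reduce the statement to an application of the adjunction defining the dependent product in $\refinefib : \refinetotal \to \category{P}$, using the induction hypothesis. Concretely, by IH applied to the premise $\dot{\Gamma}, x : \dot{\sigma} \vdash M : \{ v : \answertype \mid \phi \}$, I obtain
\[ (\identity{}, \mathcal{A}\interpret{M} \comp \pi_1) : 1 \mathcal{A}^{\mathcal{P}}\interpret{\dot{\Gamma}, x : \dot{\sigma}} \dotTo \mathcal{A}^{\mathcal{P}}\interpret{\dot{\Gamma}, x : \dot{\sigma} \vdash \{ v : \answertype \mid \phi \}}. \]
Note that $\mathcal{A}^{\mathcal{P}}\interpret{\dot{\Gamma}, x : \dot{\sigma}} = \{ \mathcal{A}^{\mathcal{P}}\interpret{\dot{\Gamma} \vdash \dot{\sigma}} \}$ by definition of the interpretation of contexts, so the morphism above lives over $\{ \mathcal{A}^{\mathcal{P}}\interpret{\dot{\Gamma} \vdash \dot{\sigma}} \}$.

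Next, I would use the fact that the terminal-object functor $1$ is a right adjoint and hence commutes with reindexing, giving $1 \{ \mathcal{A}^{\mathcal{P}}\interpret{\dot{\Gamma} \vdash \dot{\sigma}} \} = \pi_{\mathcal{A}^{\mathcal{P}}\interpret{\dot{\Gamma} \vdash \dot{\sigma}}}^{*} (1 \mathcal{A}^{\mathcal{P}}\interpret{\dot{\Gamma}})$. Transposing the morphism from IH along the adjunction $\pi_{\mathcal{A}^{\mathcal{P}}\interpret{\dot{\Gamma} \vdash \dot{\sigma}}}^{*} \dashv \prod_{\mathcal{A}^{\mathcal{P}}\interpret{\dot{\Gamma} \vdash \dot{\sigma}}}$ yields a vertical morphism
\[ 1 \mathcal{A}^{\mathcal{P}}\interpret{\dot{\Gamma}} \dotTo \prod_{\mathcal{A}^{\mathcal{P}}\interpret{\dot{\Gamma} \vdash \dot{\sigma}}} \mathcal{A}^{\mathcal{P}}\interpret{\dot{\Gamma}, x : \dot{\sigma} \vdash \{ v : \answertype \mid \phi \}} \]
in $\refinetotal$, and the codomain is exactly $\mathcal{A}^{\mathcal{P}}\interpret{\dot{\Gamma} \vdash (x : \dot{\sigma}) \to \{ v : \answertype \mid \phi \}}$ by the definition of the interpretation of function types.

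The remaining task is to check that this transpose coincides with $(\identity{}, \mathcal{A}\interpret{\lambda x. M} \comp \pi_1)$. For the underlying data in $s(\category{C})$, this is a direct unfolding: the unit and counit of the simple-fibration adjunction are given by $\eta = \Lambda(\pi_2 \comp \pi_1)$ and $\epsilon = (\identity{}, \eval \comp \braiding \comp (\pi_2 \times \identity{}))$ respectively, so the transpose of $(\identity{}, \mathcal{A}\interpret{M} \comp \pi_1)$ produces a morphism whose second component is $\Lambda(\mathcal{A}\interpret{M}) \comp \pi_1 = \mathcal{A}\interpret{\lambda x. M} \comp \pi_1$. For the predicate part (i.e.\ that the transpose really lifts to $\refinetotal$), I need to verify that for every $\gamma \in \mathcal{A}^{\mathcal{P}}\interpret{\dot{\Gamma}}$, the function $\mathcal{A}\interpret{\lambda x. M}(\gamma)$ satisfies the predicate in $\prod_{\mathcal{A}^{\mathcal{P}}\interpret{\dot{\Gamma} \vdash \dot{\sigma}}} \mathcal{A}^{\mathcal{P}}\interpret{\dot{\Gamma}, x : \dot{\sigma} \vdash \{ v : \answertype \mid \phi \}}$, which unfolds (by the explicit formula~\eqref{eq:prod_pred}) to: for all $x$ with $\tupling{\gamma}{x} \in \mathcal{A}^{\mathcal{P}}\interpret{\dot{\Gamma}, x : \dot{\sigma}}$, the value $\mathcal{A}\interpret{M}(\gamma, x)$ lies in the fibre predicate; this is precisely the content of IH.

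I expect no substantive obstacle, because the construction of dependent products in $\refinefib$ was designed exactly to mirror $\Lambda$ on underlying morphisms while transporting predicates pointwise. The only care required is bookkeeping: confirming that the base part of the transpose (which stays the identity on $\mathcal{A}\interpret{\underlying{\dot{\Gamma}}}$) is compatible with the predicate-preserving condition in $\category{P}$, which is automatic once IH is invoked. Thus the proof is a short chase through the adjunction, with no new semantic ingredient beyond the ones already used in the treatment of \textsc{R-App}.
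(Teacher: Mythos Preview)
Your proposal is correct and follows essentially the same route as the paper: both transpose the IH morphism across the adjunction $\pi_{\mathcal{A}^{\mathcal{P}}\interpret{\dot{\Gamma} \vdash \dot{\sigma}}}^{*} \dashv \prod_{\mathcal{A}^{\mathcal{P}}\interpret{\dot{\Gamma} \vdash \dot{\sigma}}}$ (the paper spells this out as $\prod(\text{IH}) \comp \eta$) and then compute that the underlying $s(\category{C})$-morphism is $(\identity{}, \Lambda(\mathcal{A}\interpret{M}) \comp \pi_1)$. Your separate verification of the predicate part via~\eqref{eq:prod_pred} is redundant, since the adjunction already lives in $\refinefib : \refinetotal \to \category{P}$ and hence the transpose is automatically a morphism there; the paper omits that step for this reason.
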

\begin{proof}
	By IH, we have
	\begin{gather}
		(\identity{}, \mathcal{A}\interpret{M} \comp \pi_1) : 1 \mathcal{A}^{\mathcal{P}}\interpret{\dot{\Gamma}, x : \dot{\sigma}} \dotTo \mathcal{A}^{\mathcal{P}}\interpret{\dot{\Gamma}, x : \dot{\sigma} \vdash \{ v : \answertype \mid \phi \}}
	\end{gather}
	We consider the following vertical morphism in $\refinetotal$.
	\begin{center}
		\begin{tikzcd}
			1 \mathcal{A}^{\mathcal{P}}\interpret{\dot{\Gamma}} \ar[d, "\eta^{\pi^{*} \dashv \prod}"] \\
			\prod_{\mathcal{A}^{\mathcal{P}}\interpret{\dot{\Gamma} \vdash \dot{\sigma}}} \pi_{\mathcal{A}^{\mathcal{P}}\interpret{\dot{\Gamma} \vdash \dot{\sigma}}}^{*} 1 \mathcal{A}^{\mathcal{P}}\interpret{\dot{\Gamma}} \ar[d, equal] \\
			\prod_{\mathcal{A}^{\mathcal{P}}\interpret{\dot{\Gamma} \vdash \dot{\sigma}}} 1 \mathcal{A}^{\mathcal{P}}\interpret{\dot{\Gamma}, x : \dot{\sigma}} \ar[d, "{\prod (\identity{}, \mathcal{A}\interpret{M} \comp \pi_1)}"] \\
			\prod_{\mathcal{A}^{\mathcal{P}}\interpret{\dot{\Gamma} \vdash \dot{\sigma}}} \mathcal{A}^{\mathcal{P}}\interpret{\dot{\Gamma}, x : \dot{\sigma} \vdash \{ v : \answertype \mid \phi \}} \ar[d, equal] \\
			\mathcal{A}^{\mathcal{P}}\interpret{\dot{\Gamma} \vdash (x : \dot{\sigma}) \to \{ v : \answertype \mid \phi \}}
		\end{tikzcd}
	\end{center}
	The composite is equal to $(\identity{}, \mathcal{A}\interpret{\lambda x. M} \comp \pi_1)$.
	\begin{align}
		&\prod (\identity{}, \mathcal{A}\interpret{M} \comp \pi_1) \comp \eta^{\pi^{*} \dashv \prod} \\
		&= (\identity{}, \Lambda (\mathcal{A}\interpret{M} \comp \pi_1 \comp \tupling{\pi_1 \times \identity{}}{\eval \comp (\pi_2 \times \identity{})})) \comp \eta^{\pi^{*} \dashv \prod} \\
		&= (\identity{}, \Lambda (\mathcal{A}\interpret{M}) \comp \pi_1) \comp (\identity{}, \dots) \\
		&= (\identity{}, \Lambda (\mathcal{A}\interpret{M}) \comp \pi_1) \\
		&= (\identity{}, \mathcal{A}\interpret{\lambda x. M} \comp \pi_1)
	\end{align}
\end{proof}

\begin{lemma}\label{lem:R-Unit-sound}
	\textsc{R-Unit} is sound.
\end{lemma}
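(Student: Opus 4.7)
The plan is a pure definition-unfolding, in the same style as Lemma~\ref{lem:R-Unit-sound}'s sibling cases. First I would observe that $\mathcal{A}\interpret{()} = {!} : \mathcal{A}\interpret{\underlying{\dot{\Gamma}}} \to 1$ is the unique map to the terminal object, so the candidate morphism in $s(\category{C})$ is $(\identity{}, {!} \comp \pi_1) : (\mathcal{A}\interpret{\underlying{\dot{\Gamma}}}, 1) \to (\mathcal{A}\interpret{\underlying{\dot{\Gamma}}}, 1)$, which is well-typed because $1 \times 1 \cong 1$ via ${!}$. It remains to show that this horizontal morphism lifts to a vertical arrow in $\refinetotal$ between $1 \mathcal{A}^{\mathcal{P}}\interpret{\dot{\Gamma}}$ and $\mathcal{A}^{\mathcal{P}}\interpret{\dot{\Gamma} \vdash \{v : 1 \mid \top\}}$, i.e.\ that the predicate component of the former is contained in the reindexing of the predicate component of the latter along $(\identity{}, {!} \comp \pi_1)$.

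Next I would unfold both predicate components. By the definition of $\dot{1}$, the source has predicate $\pi_1^{*}\mathcal{A}^{\mathcal{P}}\interpret{\dot{\Gamma}}$. By the definition of the interpretation of refinement types, the target has predicate $\pi_1^{*}\mathcal{A}^{\mathcal{P}}\interpret{\dot{\Gamma}} \land \mathcal{A}^{\mathcal{P}}\interpret{\underlying{\dot{\Gamma}}, v : 1 \vdash \top}$. Since $\mathcal{A}^{\mathcal{P}}\interpret{\underlying{\dot{\Gamma}}, v : 1 \vdash \top} = \top \mathcal{A}\interpret{\underlying{\dot{\Gamma}}, v : 1}$ is the fibre terminal object, the conjunction collapses to $\pi_1^{*}\mathcal{A}^{\mathcal{P}}\interpret{\dot{\Gamma}}$. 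The required lifting condition therefore reduces to the trivial statement that for every $\gamma \in \mathcal{A}^{\mathcal{P}}\interpret{\dot{\Gamma}}$ we have $\pi_1 \comp \tupling{\gamma}{{!} \comp \gamma} = \gamma \in \mathcal{A}^{\mathcal{P}}\interpret{\dot{\Gamma}}$. No admissibility or semantic side condition is at stake, so I do not anticipate any real obstacle; the entire proof is one line once the two interpretations are expanded.
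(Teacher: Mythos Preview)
Your proposal is correct and matches the paper's argument: both reduce to the observation that $\mathcal{A}^{\mathcal{P}}\interpret{\dot{\Gamma} \vdash \{ v : 1 \mid \top \}} = 1\,\mathcal{A}^{\mathcal{P}}\interpret{\dot{\Gamma}}$, after which the required morphism is trivially present. The paper phrases the last step as ``$(\identity{}, {!} \comp \pi_1) = {!}$'' using the universal property of the fibred terminal object, while you verify the predicate inclusion directly; these are the same content.
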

\begin{proof}
	Note that we have
	\[ \mathcal{A}^{\mathcal{P}}\interpret{\dot{\Gamma} \vdash \{ v : 1 \mid \top \}} = 1 \mathcal{A}^{\mathcal{P}}\interpret{\dot{\Gamma}} \]
	Thus,
	\[ (\identity{}, {!} \comp \pi_1) = {!} : 1 \mathcal{A}^{\mathcal{P}}\interpret{\dot{\Gamma}} \to 1 \mathcal{A}^{\mathcal{P}}\interpret{\dot{\Gamma}} \]
\end{proof}

\begin{lemma}\label{lem:R-Pair-sound}
	\textsc{R-Pair} is sound.
\end{lemma}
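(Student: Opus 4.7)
The plan is to unfold the concrete description of the strong coproduct in $\refinefib$ and directly verify membership in its predicate component using the two induction hypotheses together with the substitution lemma. By definition, $\mathcal{A}^{\mathcal{P}}\interpret{\dot{\Gamma} \vdash (x : \dot{\sigma}) \times \dot{\tau}}$ is the fibred coproduct $\dot{\coprod}_{\mathcal{A}^{\mathcal{P}}\interpret{\dot{\Gamma} \vdash \dot{\sigma}}} \mathcal{A}^{\mathcal{P}}\interpret{\dot{\Gamma}, x : \dot{\sigma} \vdash \dot{\tau}}$, whose underlying $s(\category{C})$-object is $(\mathcal{A}\interpret{\underlying{\dot{\Gamma}}}, \mathcal{A}\interpret{\underlying{\dot{\sigma}}} \times \mathcal{A}\interpret{\underlying{\dot{\tau}}})$ and whose predicate component is $(\associator^{-1})^{*} R$, where $R$ denotes the predicate component of $\mathcal{A}^{\mathcal{P}}\interpret{\dot{\Gamma}, x : \dot{\sigma} \vdash \dot{\tau}}$.

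First I would rewrite the target morphism $(\identity{}, \mathcal{A}\interpret{(M, N)} \comp \pi_1)$ using $\mathcal{A}\interpret{(M, N)} = \tupling{\mathcal{A}\interpret{M}}{\mathcal{A}\interpret{N}}$; checking that it lifts to $\refinetotal$ reduces, after the associator chase, to showing that for every $p \in \mathcal{A}^{\mathcal{P}}\interpret{\dot{\Gamma}}$ the triple $\tupling{\tupling{p}{\mathcal{A}\interpret{M} \comp p}}{\mathcal{A}\interpret{N} \comp p}$ belongs to $R$.

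Next I would invoke the IH on $M$, which in its equivalent comprehension form gives $\tupling{\identity{}}{\mathcal{A}\interpret{M}} = \mathrm{subst}_{M;\cdot} : \mathcal{A}^{\mathcal{P}}\interpret{\dot{\Gamma}} \dotTo \mathcal{A}^{\mathcal{P}}\interpret{\dot{\Gamma}, x : \dot{\sigma}}$ in $\category{P}$. This verifies the side condition of Lemma~\ref{lem:subst-type}, so $\mathcal{A}^{\mathcal{P}}\interpret{\dot{\Gamma} \vdash \dot{\tau}[M/x]}$ coincides with $\mathrm{subst}_{M;\cdot}^{*} \mathcal{A}^{\mathcal{P}}\interpret{\dot{\Gamma}, x : \dot{\sigma} \vdash \dot{\tau}}$, whose predicate component is $\pi_1^{*} \mathcal{A}^{\mathcal{P}}\interpret{\dot{\Gamma}} \land (\mathrm{subst}_{M;\cdot} \times \identity{})^{*} R$. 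The IH on $N$ then asserts that $\tupling{p}{\mathcal{A}\interpret{N} \comp p}$ belongs to this reindexed predicate, which unfolds to exactly the required $\tupling{\tupling{p}{\mathcal{A}\interpret{M} \comp p}}{\mathcal{A}\interpret{N} \comp p} \in R$.

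The main obstacle, as in the earlier cases of Theorem~\ref{thm:soundness-detailed}, will be notational rather than conceptual: tracking which reindexing and which associator acts on which predicate, and confirming that the composite of the $\mathrm{subst}_{M;\cdot}$-reindexing with the coproduct-level $\associator^{-1}$ yields the identity on the relevant product in $\category{C}$. Once this bookkeeping is set up, the rest of the argument is a direct chase along the explicit formulas for $\dot{\coprod}$ and the reindexing in $\refinefib$ given earlier in the appendix.
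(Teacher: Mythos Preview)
Your approach is correct but takes a more concrete route than the paper. The paper constructs the required morphism in $\refinetotal$ abstractly, as the composite
\[
1\,\mathcal{A}^{\mathcal{P}}\interpret{\dot{\Gamma}}
\xrightarrow{(\identity{},\,\mathcal{A}\interpret{N}\comp\pi_1)}
\mathcal{A}^{\mathcal{P}}\interpret{\dot{\Gamma}\vdash\dot{\tau}[M/x]}
=
\tupling{\identity{}}{\mathcal{A}\interpret{M}}^{*}\mathcal{A}^{\mathcal{P}}\interpret{\dot{\Gamma},x:\dot{\sigma}\vdash\dot{\tau}}
\xrightarrow{\tupling{\identity{}}{\mathcal{A}\interpret{M}}^{*}\eta^{\coprod\dashv\pi^{*}}}
\mathcal{A}^{\mathcal{P}}\interpret{\dot{\Gamma}\vdash(x:\dot{\sigma})\times\dot{\tau}},
\]
using the unit of the $\coprod\dashv\pi^{*}$ adjunction and the identity $\pi\comp\tupling{\identity{}}{\mathcal{A}\interpret{M}}=\identity{}$, and then separately verifies that this composite projects to $(\identity{},\,\mathcal{A}\interpret{(M,N)}\comp\pi_1)$ in $s(\category{C})$. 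You instead start from the target morphism in $s(\category{C})$ and check directly, via the explicit formula $(\associator^{-1})^{*}R$ for the predicate component of $\dot{\coprod}$, that it lifts to $\refinetotal$. Both arguments use the two induction hypotheses and Lemma~\ref{lem:subst-type} in the same way; your pointwise chase is more elementary and avoids the adjunction machinery, while the paper's version stays closer to the categorical structure and would transfer verbatim to a setting where the concrete predicate description of $\dot{\coprod}$ is not available.
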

\begin{proof}
	By IH, we have
	\begin{gather}
		(\identity{}, \mathcal{A}\interpret{M} \comp \pi_1) : 1 \mathcal{A}^{\mathcal{P}}\interpret{\dot{\Gamma}} \dotTo \mathcal{A}^{\mathcal{P}}\interpret{\dot{\Gamma} \vdash \dot{\sigma}} \\
		(\identity{}, \mathcal{A}\interpret{N} \comp \pi_1) : 1 \mathcal{A}^{\mathcal{P}}\interpret{\dot{\Gamma}} \dotTo \mathcal{A}^{\mathcal{P}}\interpret{\dot{\Gamma} \vdash \dot{\tau}[M/x]}
	\end{gather}
	We consider the following vertical morphism in $\refinetotal$.
	\begin{center}
		\begin{tikzcd}
			1 \mathcal{A}^{\mathcal{P}}\interpret{\dot{\Gamma}} \ar[d, "{(\identity{}, \mathcal{A}\interpret{N} \comp \pi_1)}"] \\
			\mathcal{A}^{\mathcal{P}}\interpret{\dot{\Gamma} \vdash \dot{\tau}[M/x]} \ar[d, equal, "\text{by Lemma~\ref{lem:subst-type}}"] \\
			\tupling{\identity{}}{\mathcal{A}\interpret{M}}^{*} \mathcal{A}^{\mathcal{P}}\interpret{\dot{\Gamma}, x : \dot{\sigma} \vdash \dot{\tau}} \ar[d, "{\tupling{\identity{}}{\mathcal{A}\interpret{M}}^{*} \eta^{\coprod \dashv \pi^{*}}}"] \\
			\tupling{\identity{}}{\mathcal{A}\interpret{M}}^{*} \pi_{\mathcal{A}^{\mathcal{P}}\interpret{\dot{\Gamma} \vdash \dot{\sigma}}}^{*} \coprod_{\mathcal{A}^{\mathcal{P}}\interpret{\dot{\Gamma} \vdash \dot{\sigma}}} \mathcal{A}^{\mathcal{P}}\interpret{\dot{\Gamma}, x : \dot{\sigma} \vdash \dot{\tau}} \ar[d, equal] \\
			\coprod_{\mathcal{A}^{\mathcal{P}}\interpret{\dot{\Gamma} \vdash \dot{\sigma}}} \mathcal{A}^{\mathcal{P}}\interpret{\dot{\Gamma}, x : \dot{\sigma} \vdash \dot{\tau}} \ar[d, equal] \\
			\mathcal{A}^{\mathcal{P}}\interpret{\dot{\Gamma} \vdash (x : \dot{\sigma}) \times \dot{\tau}}
		\end{tikzcd}
	\end{center}
	The composite is equal to $(\identity{}, \mathcal{A}\interpret{(M, N)} \comp \pi_1)$.
	\begin{align}
		&\tupling{\identity{}}{\mathcal{A}\interpret{M}}^{*} \eta^{\coprod \dashv \pi^{*}} \comp (\identity{}, \mathcal{A}\interpret{N} \comp \pi_1) \\
		&= (\identity{}, (\pi_2 \times \identity{}) \comp (\tupling{\identity{}}{\mathcal{A}\interpret{M}} \times \identity{}) \comp \tupling{\pi_1}{\mathcal{A}\interpret{N} \comp \pi_1}) \\
		&= (\identity{}, \tupling{\mathcal{A}\interpret{M}}{\mathcal{A}\interpret{N}} \comp \pi_1) \\
		&= (\identity{}, \mathcal{A}\interpret{(M, N)} \comp \pi_1)
	\end{align}
\end{proof}

\begin{lemma}\label{lem:R-Fst-sound}
	\textsc{R-Fst} is sound.
\end{lemma}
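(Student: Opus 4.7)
The plan is to follow the template of the \textsc{R-Pair} case, substituting the projection $\mathbf{fst}$ from the strong fibred coproduct structure for the unit $\eta^{\coprod \dashv \pi^{*}}$. First, the induction hypothesis applied to $\dot{\Gamma} \vdash M : (x : \dot{\sigma}) \times \dot{\tau}$ will supply the vertical morphism
\[
(\identity{}, \mathcal{A}\interpret{M} \comp \pi_1) : 1 \mathcal{A}^{\mathcal{P}}\interpret{\dot{\Gamma}} \dotTo \mathcal{A}^{\mathcal{P}}\interpret{\dot{\Gamma} \vdash (x : \dot{\sigma}) \times \dot{\tau}},
\]
whose target unfolds, by definition of the interpretation of dependent pair types, to $\coprod_{\mathcal{A}^{\mathcal{P}}\interpret{\dot{\Gamma} \vdash \dot{\sigma}}} \mathcal{A}^{\mathcal{P}}\interpret{\dot{\Gamma}, x : \dot{\sigma} \vdash \dot{\tau}}$.

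Next I will post-compose with the vertical morphism
\[
\mathbf{fst} : \coprod_{\mathcal{A}^{\mathcal{P}}\interpret{\dot{\Gamma} \vdash \dot{\sigma}}} \mathcal{A}^{\mathcal{P}}\interpret{\dot{\Gamma}, x : \dot{\sigma} \vdash \dot{\tau}} \dotTo \mathcal{A}^{\mathcal{P}}\interpret{\dot{\Gamma} \vdash \dot{\sigma}}
\]
provided by the strong fibred coproduct structure of $\refinefib$, whose underlying component in $s(\category{C})$ is $(\identity{}, \pi_1 \comp \pi_2) : (I, X \times Y) \to (I, X)$ for $I = \mathcal{A}\interpret{\underlying{\dot{\Gamma}}}$, $X = \mathcal{A}\interpret{\underlying{\dot{\sigma}}}$, and $Y = \mathcal{A}\interpret{\underlying{\dot{\tau}}}$. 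The composite is then computed using the composition law in the simple fibration:
\begin{align}
&\mathbf{fst} \comp (\identity{}, \mathcal{A}\interpret{M} \comp \pi_1) \\
&\qquad = (\identity{}, (\pi_1 \comp \pi_2) \comp \tupling{\pi_1}{\mathcal{A}\interpret{M} \comp \pi_1}) \\
&\qquad = (\identity{}, \pi_1 \comp \mathcal{A}\interpret{M} \comp \pi_1) \\
&\qquad = (\identity{}, \mathcal{A}\interpret{\pi_1\ M} \comp \pi_1),
\end{align}
which is exactly the morphism required by Theorem~\ref{thm:soundness-detailed}.

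I do not expect any substantial obstacle: the whole argument is a direct unfolding of the strong fibred coproduct structure inherited from $s_{\category{C}}$ by the construction of \cite{kura2021}. The only conceptual point that requires attention is verifying that $\mathbf{fst}$ is a genuine morphism in $\refinetotal$, i.e.\ that the predicate component $(\associator^{-1})^{*} R$ appearing in $\dot{\coprod}_{((I, X), P, Q)} (I \times X, Y, Q, R)$ is sent into $Q$ by $(\identity{I}, \pi_1 \comp \pi_2)$. This is immediate from the defining condition $\tupling{\tupling{i}{x}}{y} \in R \Rightarrow \tupling{i}{x} \in Q$ on objects of $\refinetotal$, so no separate calculation is needed.
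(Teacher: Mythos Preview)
Your proposal is correct and follows essentially the same approach as the paper: apply the induction hypothesis, unfold the interpretation of the dependent pair type as a coproduct, post-compose with the structural morphism $\mathbf{fst} = (\identity{}, \pi_1 \comp \pi_2)$, and compute the composite in $s(\category{C})$ to obtain $(\identity{}, \mathcal{A}\interpret{\pi_1\ M} \comp \pi_1)$. Your extra remark about why $\mathbf{fst}$ lifts to $\refinetotal$ is a reasonable sanity check that the paper leaves implicit in the general construction.
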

\begin{proof}
	By IH, we have
	\begin{gather}
		(\identity{}, \mathcal{A}\interpret{M} \comp \pi_1) : 1 \mathcal{A}^{\mathcal{P}}\interpret{\dot{\Gamma}} \dotTo \mathcal{A}^{\mathcal{P}}\interpret{\dot{\Gamma} \vdash (x : \dot{\sigma}) \times \dot{\tau}}
	\end{gather}
	We consider the following composite.
	\begin{center}
		\begin{tikzcd}
			1 \mathcal{A}^{\mathcal{P}}\interpret{\dot{\Gamma}} \ar[d, "{(\identity{}, \mathcal{A}\interpret{M} \comp \pi_1)}"] \\
			\mathcal{A}^{\mathcal{P}}\interpret{\dot{\Gamma} \vdash (x : \dot{\sigma}) \times \dot{\tau}} \ar[d, equal] \\
			\coprod \mathcal{A}^{\mathcal{P}}\interpret{\dot{\Gamma}, x : \dot{\sigma} \vdash \dot{\tau}} \ar[d, "\mathbf{fst}"] \\
			\mathcal{A}^{\mathcal{P}}\interpret{\dot{\Gamma} \vdash \dot{\sigma}}
		\end{tikzcd}
	\end{center}
	This is equal to $(\identity{}, \mathcal{A}\interpret{\pi_1\ M} \comp \pi_1)$.
	\begin{align}
		\mathbf{fst} \comp (\identity{}, \mathcal{A}\interpret{M} \comp \pi_1) &= (\identity{}, \pi_1 \comp \pi_2 \comp \tupling{\pi_1}{\mathcal{A}\interpret{M} \comp \pi_1}) \\
		&= (\identity{}, \pi_1 \comp \mathcal{A}\interpret{M} \comp \pi_1) \\
		&= (\identity{}, \mathcal{A}\interpret{\pi_1\ M} \comp \pi_1)
	\end{align}
\end{proof}

\begin{lemma}\label{lem:R-Snd-sound}
	\textsc{R-Snd} is sound.
\end{lemma}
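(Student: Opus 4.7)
The plan is to mirror Lemma~\ref{lem:R-Fst-sound} while incorporating the extra reindexing needed for the dependent second projection. The starting point is the IH, which delivers $(\identity{}, \mathcal{A}\interpret{M} \comp \pi_1) : 1 \mathcal{A}^{\mathcal{P}}\interpret{\dot{\Gamma}} \dotTo \coprod_{\mathcal{A}^{\mathcal{P}}\interpret{\dot{\Gamma} \vdash \dot{\sigma}}} Y$, where $Y \coloneqq \mathcal{A}^{\mathcal{P}}\interpret{\dot{\Gamma}, x : \dot{\sigma} \vdash \dot{\tau}}$. I will then invoke the already established Lemma~\ref{lem:R-Fst-sound} to obtain the morphism $\tupling{\identity{}}{\mathcal{A}\interpret{\pi_1\ M}} : \mathcal{A}^{\mathcal{P}}\interpret{\dot{\Gamma}} \dotTo \mathcal{A}^{\mathcal{P}}\interpret{\dot{\Gamma}, x : \dot{\sigma}}$ in $\category{P}$, which is exactly the hypothesis required by Lemma~\ref{lem:subst-type}. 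Applying that lemma identifies $\mathcal{A}^{\mathcal{P}}\interpret{\dot{\Gamma} \vdash \dot{\tau}[\pi_1\ M/x]}$ with $\tupling{\identity{}}{\mathcal{A}\interpret{\pi_1\ M}}^{*} Y$, so the task reduces to producing a vertical morphism $(\identity{}, \mathcal{A}\interpret{\pi_2\ M} \comp \pi_1) : 1 \mathcal{A}^{\mathcal{P}}\interpret{\dot{\Gamma}} \dotTo \tupling{\identity{}}{\mathcal{A}\interpret{\pi_1\ M}}^{*} Y$.

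The construction will proceed by passing through the comprehension adjunction. The IH morphism transposes to $\tupling{\identity{}}{\mathcal{A}\interpret{M}} : \mathcal{A}^{\mathcal{P}}\interpret{\dot{\Gamma}} \dotTo \{\coprod Y\}$ in $\category{P}$. By the construction of $\coprod$ in $\refinetotal$, the predicate on $\{\coprod Y\}$ is the reindexing along $\associator^{-1}$ of the predicate on $\{Y\}$, so $\associator^{-1}$ lifts to a morphism $\{\coprod Y\} \dotTo \{Y\}$ in $\category{P}$. Post-composing and unfolding $\mathcal{A}\interpret{M}$ as $\tupling{\mathcal{A}\interpret{\pi_1\ M}}{\mathcal{A}\interpret{\pi_2\ M}}$ yields the morphism $\tupling{\tupling{\identity{}}{\mathcal{A}\interpret{\pi_1\ M}}}{\mathcal{A}\interpret{\pi_2\ M}} : \mathcal{A}^{\mathcal{P}}\interpret{\dot{\Gamma}} \dotTo \{Y\}$ in $\category{P}$. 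Transposing back through comprehension gives a morphism $1 \mathcal{A}^{\mathcal{P}}\interpret{\dot{\Gamma}} \dotTo Y$ in $\refinetotal$ lying over $\tupling{\identity{}}{\mathcal{A}\interpret{\pi_1\ M}}$, and the universal property of the cartesian lifting along that base morphism promotes it to the required vertical morphism, whose underlying map works out to $(\identity{}, \mathcal{A}\interpret{\pi_2\ M} \comp \pi_1)$.

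The main obstacle I anticipate is the associator bookkeeping: I need to verify that chasing through comprehension, $\associator^{-1}$, and back again indeed produces an underlying $s(\category{C})$-morphism equal to $(\identity{}, \mathcal{A}\interpret{\pi_2\ M} \comp \pi_1)$ rather than some cousin differing by reassociation, and that the resulting vertical morphism agrees definitionally with the cartesian lift of this underlying map. Once that identification is in place, every remaining step is a formal consequence of the SCCompC structure of $\refinefib$ and will mirror the computation style already used in Lemma~\ref{lem:R-Fst-sound} and Lemma~\ref{lem:R-Pair-sound}.
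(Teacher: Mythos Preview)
Your proposal is correct and is essentially the paper's argument: both rely on the IH, the iso $\kappa^{-1} = \associator^{-1}$ between $\{\coprod Y\}$ and $\{Y\}$, and Lemma~\ref{lem:subst-type} for the final identification. The paper stays in $\refinetotal$ and assembles the vertical morphism from reindexed copies of $\eta^{\coprod \dashv \pi^{*}}$ and $\pi^{*}\mathbf{fst}$ (obtaining the hypothesis of Lemma~\ref{lem:subst-type} as the composite $\pi \comp \kappa^{-1} \comp \tupling{\identity{}}{\mathcal{A}\interpret{M}}$ in $\category{P}$), whereas you transpose through comprehension and then factor via the cartesian lift, citing Lemma~\ref{lem:R-Fst-sound} for that hypothesis instead --- these are equivalent presentations of the same SCCompC computation, and your anticipated associator check is exactly the equation $\associator^{-1} \comp \tupling{\identity{}}{\mathcal{A}\interpret{M}} = \tupling{\tupling{\identity{}}{\mathcal{A}\interpret{\pi_1\ M}}}{\mathcal{A}\interpret{\pi_2\ M}}$ the paper also uses.
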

\begin{proof}
	By IH, we have
	\begin{gather}
		(\identity{}, \mathcal{A}\interpret{M} \comp \pi_1) : 1 \mathcal{A}^{\mathcal{P}}\interpret{\dot{\Gamma}} \dotTo \mathcal{A}^{\mathcal{P}}\interpret{\dot{\Gamma} \vdash (x : \dot{\sigma}) \times \dot{\tau}}
	\end{gather}
	We consider the following composite.
	\begin{center}
		\begin{tikzcd}
			1 \mathcal{A}^{\mathcal{P}}\interpret{\dot{\Gamma}} \ar[d, equal] \\
			\tupling{\identity{}}{\mathcal{A}\interpret{M}}^{*} (\kappa^{-1})^{*} 1 \mathcal{A}^{\mathcal{P}}\interpret{\dot{\Gamma}, x : \dot{\sigma}, y : \dot{\tau}} \ar[d, "{\tupling{\identity{}}{\mathcal{A}\interpret{M}}^{*} (\kappa^{-1})^{*} \eta^{\coprod \dashv \pi^{*}}}"] \\
			\tupling{\identity{}}{\mathcal{A}\interpret{M}}^{*} (\kappa^{-1})^{*} \pi_{\mathcal{A}^{\mathcal{P}}\interpret{\dot{\Gamma}, x : \dot{\sigma} \vdash \dot{\tau}}}^{*} \coprod_{\mathcal{A}^{\mathcal{P}}\interpret{\dot{\Gamma}, x : \dot{\sigma} \vdash \dot{\tau}}} 1 \mathcal{A}^{\mathcal{P}}\interpret{\dot{\Gamma}, x : \dot{\sigma}, y : \dot{\tau}} \ar[d, "{\tupling{\identity{}}{\mathcal{A}\interpret{M}}^{*} (\kappa^{-1})^{*} \pi^{*} \mathbf{fst}}"] \\
			\tupling{\identity{}}{\mathcal{A}\interpret{M}}^{*} (\kappa^{-1})^{*} \pi_{\mathcal{A}^{\mathcal{P}}\interpret{\dot{\Gamma}, x : \dot{\sigma} \vdash \dot{\tau}}}^{*} \mathcal{A}^{\mathcal{P}}\interpret{\dot{\Gamma}, x : \dot{\sigma} \vdash \dot{\tau}} \ar[d, equal] \\
			\tupling{\identity{}}{\mathcal{A}\interpret{\pi_1\ M}}^{*} \mathcal{A}^{\mathcal{P}}\interpret{\dot{\Gamma}, x : \dot{\sigma} \vdash \dot{\tau}} \ar[d, equal, "\text{by Lemma~\ref{lem:subst-type}}"] \\
			\mathcal{A}^{\mathcal{P}}\interpret{\dot{\Gamma} \vdash \dot{\tau}[\pi_1\ M/x]}
		\end{tikzcd}
	\end{center}
	Here, we used the following equation.
	\begin{align}
		\pi_{\mathcal{A}^{\mathcal{P}}\interpret{\dot{\Gamma}, x : \dot{\sigma} \vdash \dot{\tau}}} \comp \kappa^{-1} \comp \tupling{\identity{}}{\mathcal{A}\interpret{M}} &= \pi_1 \comp \associator^{-1} \comp \tupling{\identity{}}{\mathcal{A}\interpret{M}} \\
		&= \tupling{\identity{}}{\pi_1 \comp \mathcal{A}\interpret{M}} \\
		&= \tupling{\identity{}}{\mathcal{A}\interpret{\pi_1\ M}}
	\end{align}
	The composite above is equal to $(\identity{}, \mathcal{A}\interpret{\pi_2\ M} \comp \pi_1)$.
	\begin{align}
		&\tupling{\identity{}}{\mathcal{A}\interpret{M}}^{*} (\kappa^{-1})^{*} \pi^{*} \mathbf{fst} \comp \tupling{\identity{}}{\mathcal{A}\interpret{M}}^{*} (\kappa^{-1})^{*} \eta^{\coprod \dashv \pi^{*}} \\
		&= (\identity{}, \pi_1 \comp \pi_2 \comp (\tupling{\identity{}}{\pi_1 \comp \mathcal{A}\interpret{M}} \times \identity{})) \comp (\identity{}, (\pi_2 \times \identity{}) \comp ((\associator^{-1} \comp \tupling{\identity{}}{\mathcal{A}\interpret{M}}) \times \identity{})) \\
		&= (\identity{}, \pi_1 \comp \pi_2) \comp (\identity{}, (\pi_2 \comp \mathcal{A}\interpret{M}) \times \identity{}) \\
		&= (\identity{}, \pi_1 \comp \pi_2 \comp \tupling{\pi_1}{(\pi_2 \comp \mathcal{A}\interpret{M}) \times \identity{}}) \\
		&= (\identity{}, \pi_2 \comp \mathcal{A}\interpret{M} \comp \pi_1) \\
		&= (\identity{}, \mathcal{A}\interpret{\pi_2\ M} \comp \pi_1)
	\end{align}
\end{proof}

\begin{lemma}\label{lem:R-Case0-sound}
	\textsc{R-Case0} is sound.
\end{lemma}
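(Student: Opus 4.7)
The plan is to mimic the structure of the already-established cases (\textsc{R-Abs}, \textsc{R-Pair}, \textsc{R-Fst}, \textsc{R-Snd}) by exhibiting the required vertical morphism in $\refinetotal$ as a two-step composite. First I would invoke the induction hypothesis on the premise $\dot{\Gamma} \vdash M : 0$ to obtain
\[
(\identity{}, \mathcal{A}\interpret{M} \comp \pi_1) \;:\; 1\,\mathcal{A}^{\mathcal{P}}\interpret{\dot{\Gamma}} \;\dotTo\; \mathcal{A}^{\mathcal{P}}\interpret{\dot{\Gamma} \vdash 0}.
\]
The crucial observation is that, by definition of the interpretation of types, $\mathcal{A}^{\mathcal{P}}\interpret{\dot{\Gamma} \vdash 0} = \dot{0}\,\mathcal{A}^{\mathcal{P}}\interpret{\dot{\Gamma}}$ is the fibred initial object of $\refinefib : \refinetotal \to \category{P}$ in the fibre over $\mathcal{A}^{\mathcal{P}}\interpret{\dot{\Gamma}}$.

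Next, the well-formedness premise $\dot{\Gamma} \vdash \dot{\tau}$ ensures that $\mathcal{A}^{\mathcal{P}}\interpret{\dot{\Gamma} \vdash \dot{\tau}}$ is an object of the same fibre, so the strong fibred initial structure supplies a unique vertical morphism $?_{\mathcal{A}^{\mathcal{P}}\interpret{\dot{\Gamma} \vdash \dot{\tau}}} : \dot{0}\,\mathcal{A}^{\mathcal{P}}\interpret{\dot{\Gamma}} \dotTo \mathcal{A}^{\mathcal{P}}\interpret{\dot{\Gamma} \vdash \dot{\tau}}$. Post-composing with it gives a candidate
\[
1\,\mathcal{A}^{\mathcal{P}}\interpret{\dot{\Gamma}} \xrightarrow{(\identity{},\, \mathcal{A}\interpret{M} \comp \pi_1)} \dot{0}\,\mathcal{A}^{\mathcal{P}}\interpret{\dot{\Gamma}} \xrightarrow{?_{\mathcal{A}^{\mathcal{P}}\interpret{\dot{\Gamma} \vdash \dot{\tau}}}} \mathcal{A}^{\mathcal{P}}\interpret{\dot{\Gamma} \vdash \dot{\tau}}
\]
living above the identity in $\category{P}$, as required.

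It remains to verify that this composite equals $(\identity{},\, \mathcal{A}\interpret{\delta(M)} \comp \pi_1)$. Using the explicit form $?_{((I,X),P,Q)} = (\identity{}, {?} \comp \pi_2)$ of the fibred initial morphism in the simple fibration $s_{\category{C}}$, and the clause $\mathcal{A}\interpret{\delta(M)} = {?} \comp \mathcal{A}\interpret{M}$ of the denotational semantics, a direct calculation gives
\[
(\identity{}, {?} \comp \pi_2) \comp (\identity{}, \mathcal{A}\interpret{M} \comp \pi_1) = (\identity{}, {?} \comp \pi_2 \comp \tupling{\pi_1}{\mathcal{A}\interpret{M} \comp \pi_1}) = (\identity{}, \mathcal{A}\interpret{\delta(M)} \comp \pi_1),
\]
which is exactly the morphism the soundness statement demands.

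I do not foresee any real obstacle: unlike \textsc{R-Fix} (which needs admissibility) or \textsc{R-App} (which needs the substitution lemma), this case rests only on the universal property of fibred initial objects together with a short computation in $s(\category{C})$. The only sanity check worth making is that the categorical structure on $\refinefib$ indeed provides such a $?$-morphism into \emph{any} object, which is precisely what it means for $\dot{0}$ to be a (strong) fibred initial object functor in the construction of \cite{kura2021}.
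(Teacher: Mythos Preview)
Your proposal is correct and follows essentially the same approach as the paper: invoke the induction hypothesis, post-compose with the fibred initial morphism ${?} : \dot{0}\,\mathcal{A}^{\mathcal{P}}\interpret{\dot{\Gamma}} \dotTo \mathcal{A}^{\mathcal{P}}\interpret{\dot{\Gamma} \vdash \dot{\tau}}$, and verify by the same three-line calculation in $s(\category{C})$ that the composite equals $(\identity{}, \mathcal{A}\interpret{\delta(M)} \comp \pi_1)$.
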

\begin{proof}
	By IH, we have
	\begin{gather}
		(\identity{}, \mathcal{A}\interpret{M} \comp \pi_1) : 1 \mathcal{A}^{\mathcal{P}}\interpret{\dot{\Gamma}} \dotTo 0 \mathcal{A}^{\mathcal{P}}\interpret{\dot{\Gamma}}
	\end{gather}
	Consider the following.
	\begin{center}
		\begin{tikzcd}
			1 \mathcal{A}^{\mathcal{P}}\interpret{\dot{\Gamma}} \ar[d, "{(\identity{}, \mathcal{A}\interpret{M} \comp \pi_1)}"] \\
			0 \mathcal{A}^{\mathcal{P}}\interpret{\dot{\Gamma}} \ar[d, "?"] \\
			\mathcal{A}^{\mathcal{P}}\interpret{\dot{\Gamma} \vdash \dot{\tau}}
		\end{tikzcd}
	\end{center}
	This is equal to $(\identity{}, \mathcal{A}\interpret{\delta(M)} \comp \pi_1)$.
	\begin{align}
		{?} \comp (\identity{}, \mathcal{A}\interpret{M} \comp \pi_1) &= (\identity{}, {?} \comp \pi_2 \comp \tupling{\pi_1}{\mathcal{A}\interpret{M} \comp \pi_1}) \\
		&= (\identity{}, {?} \comp \mathcal{A}\interpret{M} \comp \pi_1) \\
		&= (\identity{}, \mathcal{A}\interpret{\delta(M)} \comp \pi_1)
	\end{align}
\end{proof}

\begin{lemma}\label{lem:R-Inj-sound}
	\textsc{R-Inj} is sound.
\end{lemma}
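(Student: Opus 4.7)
The plan is to mirror the proof strategy used for the preceding cases (notably \textsc{R-Pair} and \textsc{R-Fst}), exploiting the fibred coproduct structure of $\refinefib : \refinetotal \to \category{P}$ that was assembled in the preliminaries. The key observation is that by definition $\mathcal{A}^{\mathcal{P}}\interpret{\dot{\Gamma} \vdash \dot{\sigma}_1 + \dot{\sigma}_2} = \mathcal{A}^{\mathcal{P}}\interpret{\dot{\Gamma} \vdash \dot{\sigma}_1} \dotPlus \mathcal{A}^{\mathcal{P}}\interpret{\dot{\Gamma} \vdash \dot{\sigma}_2}$, and this fibred coproduct comes equipped with coproduct injections whose concrete description in $s(\category{C})$ is $\iota_i = (\identity{}, \iota_i \comp \pi_2)$. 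The well-formedness side condition $\dot{\Gamma} \vdash \dot{\sigma}_{3-i}$ is needed only to guarantee that the target type $\dot{\sigma}_1 + \dot{\sigma}_2$ and therefore $\mathcal{A}^{\mathcal{P}}\interpret{\dot{\Gamma} \vdash \dot{\sigma}_1 + \dot{\sigma}_2}$ are defined; it plays no further semantic role.

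First I would apply the induction hypothesis to the premise $\dot{\Gamma} \vdash M : \dot{\sigma}_i$ to obtain the vertical morphism
\[ (\identity{}, \mathcal{A}\interpret{M} \comp \pi_1) : 1 \mathcal{A}^{\mathcal{P}}\interpret{\dot{\Gamma}} \dotTo \mathcal{A}^{\mathcal{P}}\interpret{\dot{\Gamma} \vdash \dot{\sigma}_i}. \]
Then I would post-compose with the $i$-th fibred coproduct injection
\[ \iota_i : \mathcal{A}^{\mathcal{P}}\interpret{\dot{\Gamma} \vdash \dot{\sigma}_i} \dotTo \mathcal{A}^{\mathcal{P}}\interpret{\dot{\Gamma} \vdash \dot{\sigma}_1} \dotPlus \mathcal{A}^{\mathcal{P}}\interpret{\dot{\Gamma} \vdash \dot{\sigma}_2} = \mathcal{A}^{\mathcal{P}}\interpret{\dot{\Gamma} \vdash \dot{\sigma}_1 + \dot{\sigma}_2}, \]
which lives in $\refinetotal$ because injections in $\refinefib$ are inherited from those of $s(\category{C})$.

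Finally, the remaining task is the routine computation that the resulting composite coincides with $(\identity{}, \mathcal{A}\interpret{\iota_i\ M} \comp \pi_1)$. Unfolding the description of $\iota_i$ in $s(\category{C})$ and of composition in $s(\category{C})$, one has
\begin{align}
(\identity{}, \iota_i \comp \pi_2) \comp (\identity{}, \mathcal{A}\interpret{M} \comp \pi_1)
&= (\identity{}, \iota_i \comp \pi_2 \comp \tupling{\pi_1}{\mathcal{A}\interpret{M} \comp \pi_1}) \\
&= (\identity{}, \iota_i \comp \mathcal{A}\interpret{M} \comp \pi_1) \\
&= (\identity{}, \mathcal{A}\interpret{\iota_i\ M} \comp \pi_1),
\end{align}
using the denotational clause $\mathcal{A}\interpret{\iota_i\ M} = \iota_i \comp \mathcal{A}\interpret{M}$. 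I do not anticipate any genuine obstacle: there is no dependent substitution to manage (so Lemma~\ref{lem:subst-type} is not needed here), no adjunction unit/counit to invoke, and no admissibility condition to verify. The only subtlety is making sure the injection is regarded as a vertical morphism in $\refinefib$ over the identity on $\mathcal{A}^{\mathcal{P}}\interpret{\dot{\Gamma}}$, which is immediate from the explicit description of fibred binary coproducts in $\refinefib$ given in the preliminaries.
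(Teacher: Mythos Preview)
Your proposal is correct and essentially identical to the paper's own proof: apply the induction hypothesis to obtain $(\identity{}, \mathcal{A}\interpret{M} \comp \pi_1)$, post-compose with the fibred coproduct injection $\iota_i = (\identity{}, \iota_i \comp \pi_2)$, and then carry out exactly the three-line computation you wrote to identify the composite with $(\identity{}, \mathcal{A}\interpret{\iota_i\ M} \comp \pi_1)$.
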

\begin{proof}
	By IH, we have
	\begin{gather}
		(\identity{}, \mathcal{A}\interpret{M} \comp \pi_1) : 1 \mathcal{A}^{\mathcal{P}}\interpret{\dot{\Gamma}} \dotTo \mathcal{A}^{\mathcal{P}}\interpret{\dot{\Gamma} \vdash \dot{\sigma}_i}
	\end{gather}
	Consider the following.
	\begin{center}
		\begin{tikzcd}
			1 \mathcal{A}^{\mathcal{P}}\interpret{\dot{\Gamma}} \ar[d, "{(\identity{}, \mathcal{A}\interpret{M} \comp \pi_1)}"] \\
			\mathcal{A}^{\mathcal{P}}\interpret{\dot{\Gamma} \vdash \dot{\sigma}_i} \ar[d, "\iota_i"] \\
			\mathcal{A}^{\mathcal{P}}\interpret{\dot{\Gamma} \vdash \dot{\sigma}_1 + \dot{\sigma}_2}
		\end{tikzcd}
	\end{center}
	This is equal to $(\identity{}, \mathcal{A}\interpret{\iota_i\ M} \comp \pi_1)$.
	\begin{align}
		\iota_i \comp (\identity{}, \mathcal{A}\interpret{M} \comp \pi_1) &= (\identity{}, \iota_i \comp \pi_2 \comp \tupling{\pi_1}{\mathcal{A}\interpret{M} \comp \pi_1}) \\
		&= (\identity{}, \iota_i \comp \mathcal{A}\interpret{M} \comp \pi_1) \\
		&= (\identity{}, \mathcal{A}\interpret{\iota_i\ M} \comp \pi_1)
	\end{align}
\end{proof}

\begin{lemma}\label{lem:R-Case2-sound}
	\textsc{R-Case2} is sound.
\end{lemma}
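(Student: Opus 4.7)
The plan is to mirror the pattern of Lemma~\ref{lem:R-Inj-sound} and the elimination rule for coproducts in a category with strong coproducts, gluing the two branches $N_1$ and $N_2$ via the strong fibred binary coproduct of $\refinefib$ and then substituting $M$ for $z$.

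First I invoke the induction hypothesis on the three premises, obtaining
\[ (\identity{}, \mathcal{A}\interpret{M} \comp \pi_1) : 1 \mathcal{A}^{\mathcal{P}}\interpret{\dot{\Gamma}} \dotTo \mathcal{A}^{\mathcal{P}}\interpret{\dot{\Gamma} \vdash \dot{\sigma}_1 + \dot{\sigma}_2} \]
and, for $i = 1, 2$, vertical morphisms $(\identity{}, \mathcal{A}\interpret{N_i} \comp \pi_1) : 1 \mathcal{A}^{\mathcal{P}}\interpret{\dot{\Gamma}, x_i : \dot{\sigma}_i} \dotTo \mathcal{A}^{\mathcal{P}}\interpret{\dot{\Gamma}, x_i : \dot{\sigma}_i \vdash \dot{\tau}[\iota_i\ x_i / z]}$. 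Since $\mathcal{A}\interpret{\iota_i\ x_i} = \iota_i \comp \pi_2$, we have $\tupling{\identity{}}{\mathcal{A}\interpret{\iota_i\ x_i}} = \identity{} \times \iota_i$, so Lemma~\ref{lem:subst-type} identifies each of these codomains with $(\identity{} \times \iota_i)^{*} \mathcal{A}^{\mathcal{P}}\interpret{\dot{\Gamma}, z : \dot{\sigma}_1 + \dot{\sigma}_2 \vdash \dot{\tau}}$.

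Next comes the gluing step. Because $\refinefib$ has strong fibred binary coproducts, the functor $\tupling{(\identity{} \times \iota_1)^{*}}{(\identity{} \times \iota_2)^{*}}$ is fully faithful on $\refinetotal$, so fullness produces a unique vertical morphism
\[ h : 1 \mathcal{A}^{\mathcal{P}}\interpret{\dot{\Gamma}, z : \dot{\sigma}_1 + \dot{\sigma}_2} \dotTo \mathcal{A}^{\mathcal{P}}\interpret{\dot{\Gamma}, z : \dot{\sigma}_1 + \dot{\sigma}_2 \vdash \dot{\tau}} \]
whose reindexings along $\identity{} \times \iota_1$ and $\identity{} \times \iota_2$ recover the two morphisms obtained above. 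The analogous property at the base level in $s(\category{C})$ forces the underlying morphism of $h$ to be $(\identity{}, [\mathcal{A}\interpret{N_1}, \mathcal{A}\interpret{N_2}] \comp [\identity{} \times \iota_1, \identity{} \times \iota_2]^{-1} \comp \pi_1)$. Reindexing $h$ along $\tupling{\identity{}}{\mathcal{A}\interpret{M}}$, a second application of Lemma~\ref{lem:subst-type} identifies the target with $\mathcal{A}^{\mathcal{P}}\interpret{\dot{\Gamma} \vdash \dot{\tau}[M/z]}$, and pre-composing with $(\identity{}, \mathcal{A}\interpret{M} \comp \pi_1)$ yields the required vertical morphism.

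The main obstacle will be the equational verification, at the $s(\category{C})$ level, that this composite is precisely $(\identity{}, \mathcal{A}\interpret{\caseexpr{M}{x_1}{N_1}{x_2}{N_2}} \comp \pi_1)$. This reduces to unfolding $\mathcal{A}\interpret{\caseexpr{M}{x_1}{N_1}{x_2}{N_2}} = [\mathcal{A}\interpret{N_1}, \mathcal{A}\interpret{N_2}] \comp [\identity{} \times \iota_1, \identity{} \times \iota_2]^{-1} \comp \tupling{\identity{}}{\mathcal{A}\interpret{M}}$ and chasing reindexing in $s(\category{C})$, in the spirit of the terminal calculations in Lemmas~\ref{lem:R-Fst-sound} and~\ref{lem:R-Snd-sound}; I do not anticipate any difficulty beyond that bookkeeping.
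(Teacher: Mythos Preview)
Your proposal is correct and follows essentially the same route as the paper: invoke the induction hypothesis, rewrite the branch codomains as $(\identity{} \times \iota_i)^{*}\mathcal{A}^{\mathcal{P}}\interpret{\dot{\Gamma}, z : \dot{\sigma}_1 + \dot{\sigma}_2 \vdash \dot{\tau}}$, glue via the full faithfulness coming from strong fibred coproducts, then reindex along $\tupling{\identity{}}{\mathcal{A}\interpret{M}}$ and finish with the equational bookkeeping.

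One small technical slip: the equation $\tupling{\identity{}}{\mathcal{A}\interpret{\iota_i\ x_i}} = \identity{} \times \iota_i$ is not literally true (the two morphisms do not even have the same codomain). Lemma~\ref{lem:subst-type} alone only gives you $\tupling{\identity{}}{\iota_i \comp \pi_2}^{*}\mathcal{A}^{\mathcal{P}}\interpret{\dot{\Gamma}, x_i : \dot{\sigma}_i, z : \dot{\sigma}_1 + \dot{\sigma}_2 \vdash \dot{\tau}}$; to reach $(\identity{} \times \iota_i)^{*}\mathcal{A}^{\mathcal{P}}\interpret{\dot{\Gamma}, z : \dot{\sigma}_1 + \dot{\sigma}_2 \vdash \dot{\tau}}$ you also need Lemma~\ref{lem:weakening-type} to remove the spurious $x_i$ from the context, and then observe that $\mathrm{proj}_{\underlying{\dot{\Gamma}};\underlying{\dot{\sigma}_i};z:\underlying{\dot{\sigma}_1}+\underlying{\dot{\sigma}_2}} \comp \tupling{\identity{}}{\iota_i \comp \pi_2} = \identity{} \times \iota_i$. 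The paper makes this two-step factorisation explicit.
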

\begin{proof}
	By IH, we have
	\begin{gather}
		(\identity{}, \mathcal{A}\interpret{M} \comp \pi_1) : 1 \mathcal{A}^{\mathcal{P}}\interpret{\dot{\Gamma}} \dotTo \mathcal{A}^{\mathcal{P}}\interpret{\dot{\Gamma} \vdash \dot{\sigma}_1 + \dot{\sigma}_2} \\
		(\identity{}, \mathcal{A}\interpret{N_1} \comp \pi_1) : 1 \mathcal{A}^{\mathcal{P}}\interpret{\dot{\Gamma}, x_1 : \dot{\sigma}_1} \dotTo \mathcal{A}^{\mathcal{P}}\interpret{\dot{\Gamma}, x_1 : \dot{\sigma}_1 \vdash \dot{\tau}[\iota_1\ x_1/z]} \\
		(\identity{}, \mathcal{A}\interpret{N_2} \comp \pi_1) : 1 \mathcal{A}^{\mathcal{P}}\interpret{\dot{\Gamma}, x_2 : \dot{\sigma}_2} \dotTo \mathcal{A}^{\mathcal{P}}\interpret{\dot{\Gamma}, x_2 : \dot{\sigma}_2 \vdash \dot{\tau}[\iota_2\ x_2/z]}
	\end{gather}
	By Lemma~\ref{lem:weakening-type},\ref{lem:subst-type}, we have the following for $i = 1, 2$.
	\begin{align}
		&\mathcal{A}^{\mathcal{P}}\interpret{\dot{\Gamma}, x_i : \dot{\sigma}_i \vdash \dot{\tau}[\iota_i\ x_i/z]} \\
		&= \tupling{\identity{}}{\mathcal{A}\interpret{\underlying{\dot{\Gamma}}, x_i : \underlying{\dot{\sigma}_i} \vdash \iota_i\ x_i : \underlying{\dot{\sigma}_1} + \underlying{\dot{\sigma}_2}}}^{*} \mathcal{A}^{\mathcal{P}}\interpret{\dot{\Gamma}, x_i : \dot{\sigma}_i, z : \dot{\sigma}_1 + \dot{\sigma}_2 \vdash \dot{\tau}} \\
		&= \tupling{\identity{}}{\iota_i \comp \pi_2}^{*} \mathrm{proj}_{\underlying{\dot{\Gamma}}; \underlying{\dot{\sigma}_i}; z : \underlying{\dot{\sigma}_1} + \underlying{\dot{\sigma}_2}} \mathcal{A}^{\mathcal{P}}\interpret{\dot{\Gamma}, z : \dot{\sigma}_1 + \dot{\sigma}_2 \vdash \dot{\tau}} \\
		&= (\identity{} \times \iota_i)^{*} \mathcal{A}^{\mathcal{P}}\interpret{\dot{\Gamma}, z : \dot{\sigma}_1 + \dot{\sigma}_2 \vdash \dot{\tau}}
	\end{align}
	\[ 1 \mathcal{A}^{\mathcal{P}}\interpret{\dot{\Gamma}, x_1 : \dot{\sigma}_1} = \{ \iota_1 \}^{*} 1 \mathcal{A}^{\mathcal{P}}\interpret{\dot{\Gamma}, z : \dot{\sigma}_1 + \dot{\sigma}_2} = (\identity{} \times \iota_1)^{*} 1 \mathcal{A}^{\mathcal{P}}\interpret{\dot{\Gamma}, z : \dot{\sigma}_1 + \dot{\sigma}_2} \]
	Since we have strong fibred coproducts,
	\begin{align}
		&(\identity{}, [\mathcal{A}\interpret{N_1} \comp \pi_1, \mathcal{A}\interpret{N_2} \comp \pi_1] \comp [(\identity{} \times \iota_1) \times \identity{}, (\identity{} \times \iota_2) \times \identity{}]^{-1}) \\
		& : 1 \mathcal{A}^{\mathcal{P}}\interpret{\dot{\Gamma}, z : \dot{\sigma}_1 + \dot{\sigma}_2} \dotTo \mathcal{A}^{\mathcal{P}}\interpret{\dot{\Gamma}, z : \dot{\sigma}_1 + \dot{\sigma}_2 \vdash \dot{\tau}}
	\end{align}
	Now consider the following.
	\begin{center}
		\begin{tikzcd}
			1 \mathcal{A}^{\mathcal{P}}\interpret{\dot{\Gamma}} \ar[d, equal] \\
			\tupling{\identity{}}{\mathcal{A}\interpret{M}}^{*} \pi_{\mathcal{A}^{\mathcal{P}}\interpret{\dot{\Gamma} \vdash \dot{\sigma}_1 + \dot{\sigma}_2}}^{*} 1 \mathcal{A}^{\mathcal{P}}\interpret{\dot{\Gamma}} \ar[d, equal] \\
			\tupling{\identity{}}{\mathcal{A}\interpret{M}}^{*} 1 \mathcal{A}^{\mathcal{P}}\interpret{\dot{\Gamma}, z : \dot{\sigma}_1 + \dot{\sigma}_2} \ar[d, "{\tupling{\identity{}}{\mathcal{A}\interpret{M}}^{*} (\identity{}, \dots)}"] \\
			\tupling{\identity{}}{\mathcal{A}\interpret{M}}^{*} \mathcal{A}^{\mathcal{P}}\interpret{\dot{\Gamma}, z : \dot{\sigma}_1 + \dot{\sigma}_2 \vdash \dot{\tau}} \ar[d, equal, "\text{by Lemma~\ref{lem:subst-type}}"] \\
			\mathcal{A}^{\mathcal{P}}\interpret{\dot{\Gamma} \vdash \dot{\tau}[M/z]}
		\end{tikzcd}
	\end{center}
	This is equal to $(\identity{}, \mathcal{A}\interpret{\delta(M, x_1. N_1, x_2. N_2)} \comp \pi_1)$.
	\begin{align}
		&\tupling{\identity{}}{\mathcal{A}\interpret{M}}^{*} (\identity{}, [\mathcal{A}\interpret{N_1} \comp \pi_1, \mathcal{A}\interpret{N_2} \comp \pi_1] \comp [(\identity{} \times \iota_1) \times \identity{}, (\identity{} \times \iota_2) \times \identity{}]^{-1}) \\
		&= (\identity{}, [\mathcal{A}\interpret{N_1} \comp \pi_1, \mathcal{A}\interpret{N_2} \comp \pi_1] \comp [(\identity{} \times \iota_1) \times \identity{}, (\identity{} \times \iota_2) \times \identity{}]^{-1} \comp (\tupling{\identity{}}{\mathcal{A}\interpret{M}} \times \identity{})) \\
		&= (\identity{}, [\mathcal{A}\interpret{N_1}, \mathcal{A}\interpret{N_2}] \comp [\identity{} \times \iota_1, \identity{} \times \iota_2]^{-1} \comp \pi_1 \comp (\tupling{\identity{}}{\mathcal{A}\interpret{M}} \times \identity{})) \\
		&= (\identity{}, [\mathcal{A}\interpret{N_1}, \mathcal{A}\interpret{N_2}] \comp [\identity{} \times \iota_1, \identity{} \times \iota_2]^{-1} \comp \tupling{\identity{}}{\mathcal{A}\interpret{M}} \comp \pi_1) \\
		&= (\identity{}, \mathcal{A}\interpret{\delta(M, x_1. N_1, x_2. N_2)} \comp \pi_1)
	\end{align}
\end{proof}

\begin{lemma}\label{lem:R-Var-sound}
	\textsc{R-Var} is sound.
\end{lemma}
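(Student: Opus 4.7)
The plan is to proceed by induction on the length of the tail $\dot{\Gamma}_2$ in the decomposition $\dot{\Gamma} = \dot{\Gamma}_1, x : \dot{\sigma}, \dot{\Gamma}_2$ afforded by the side condition $(x : \dot{\sigma}) \in \dot{\Gamma}$. The underlying morphism $\mathcal{A}\interpret{x}$ is by construction the projection that first forgets $\underlying{\dot{\Gamma}_2}$ and then selects the second component of $(\underlying{\dot{\Gamma}_1}, x : \underlying{\dot{\sigma}})$, so combining Lemma~\ref{lem:weakening-term} and Lemma~\ref{lem:weakening-type} I can pull back along the forgetful projection associated with $\dot{\Gamma}_2$ and reduce to the case $\dot{\Gamma}_2 = \cdot$.

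For this base case, writing $\mathcal{A}^{\mathcal{P}}\interpret{\dot{\Gamma}_1 \vdash \dot{\sigma}} = ((I, X), P, Q)$, the definition of the context interpretation gives $\mathcal{A}^{\mathcal{P}}\interpret{\dot{\Gamma}_1, x : \dot{\sigma}} = (I \times X, Q)$, and Lemma~\ref{lem:weakening-type} together with the explicit description of reindexing in $\refinetotal$ yields
\[ \mathcal{A}^{\mathcal{P}}\interpret{\dot{\Gamma}_1, x : \dot{\sigma} \vdash \dot{\sigma}} \;=\; \pi_1^{*} ((I, X), P, Q) \;=\; ((I \times X, X), Q, R), \]
where $R = \{ \tupling{\tupling{i}{x}}{x'} : 1 \to (I \times X) \times X \mid \tupling{i}{x} \in Q \,\land\, \tupling{i}{x'} \in Q \}$. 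The candidate morphism is $(\identity{I \times X}, \pi_2 \comp \pi_1) : 1 (I \times X, Q) \dotTo ((I \times X, X), Q, R)$, whose second component (sending $\tupling{\tupling{i}{x}}{\star}$ to $x$) coincides with $\mathcal{A}\interpret{\dot{\Gamma}_1, x : \dot{\sigma} \vdash x}$. What remains is to verify that it lies in $\refinetotal$: for $\tupling{\tupling{i}{x}}{\star}$ in the predicate of $1 (I \times X, Q)$ (i.e.\ $\tupling{i}{x} \in Q$), the image is the diagonal pair $\tupling{\tupling{i}{x}}{x}$, and both conjuncts of $R$ collapse to $\tupling{i}{x} \in Q$.

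For the inductive step $\dot{\Gamma}_2 = \dot{\Gamma}'_2, y : \dot{\tau}$, I would apply the induction hypothesis to the shorter context and then reindex along the forgetful morphism $\pi_{\mathcal{A}^{\mathcal{P}}\interpret{\dot{\Gamma}_1, x : \dot{\sigma}, \dot{\Gamma}'_2 \vdash \dot{\tau}}}$, rewriting the codomain via Lemma~\ref{lem:weakening-type} and using the factorisation of $\mathcal{A}\interpret{x}$ through this projection (Lemma~\ref{lem:weakening-term}).

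The main obstacle is the bookkeeping of the exact form of $R$ after unfolding reindexing in $\refinetotal$, since the predicate is the conjunction $\pi_{(I \times X, X)}^{*} Q \land (\pi_1 \times \identity{})^{*} Q$ and one has to see that the variable is interpreted by the diagonal witness $\tupling{\tupling{i}{x}}{x}$ which automatically satisfies both copies of $\tupling{i}{x} \in Q$. Once this key observation is in hand, the inductive step and all other calculations reduce to mechanical applications of the weakening lemmas.
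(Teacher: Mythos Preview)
Your proposal is correct and follows the same induction-on-the-context skeleton as the paper, using Lemma~\ref{lem:weakening-type} in both the base and step cases. The only stylistic difference is in the base case: the paper constructs the required vertical morphism abstractly as the composite $\pi^{*}\mathbf{fst} \comp \eta^{\coprod \dashv \pi^{*}}$ and then computes that its underlying $s(\category{C})$-morphism is $(\identity{}, \pi_2 \comp \pi_1)$, whereas you unfold the reindexing in $\refinetotal$ to compute the predicate $R$ explicitly and check the diagonal witness directly. Your route is a bit more elementary and perhaps clearer about \emph{why} the morphism lands in the right predicate; the paper's route has the advantage that the morphism is produced by the SCCompC structure itself, so no ad hoc predicate-chasing is needed. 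Both arguments are equivalent in content.
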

\begin{proof}
	By induction on the context.
	\begin{itemize}
		\item If $\dot{\Gamma}, x : \dot{\sigma} \vdash x : \dot{\sigma}$, then we prove
		\[ (\identity{}, \pi_2 \comp \pi_1) : 1 \mathcal{A}^{\mathcal{P}}\interpret{\dot{\Gamma}, x : \dot{\sigma}} \dotTo \mathcal{A}^{\mathcal{P}}\interpret{\dot{\Gamma}, x : \dot{\sigma} \vdash \dot{\sigma}}. \]
		Consider the following.
		\begin{center}
			\begin{tikzcd}
				1 \mathcal{A}^{\mathcal{P}}\interpret{\dot{\Gamma}, x : \dot{\sigma}} \ar[d, "\eta^{\coprod \dashv \pi^{*}}"] \\
				\pi^{*}_{\mathcal{A}^{\mathcal{P}}\interpret{\dot{\Gamma} \vdash \dot{\sigma}}} \coprod_{\mathcal{A}^{\mathcal{P}}\interpret{\dot{\Gamma} \vdash \dot{\sigma}}} 1 \mathcal{A}^{\mathcal{P}}\interpret{\dot{\Gamma}, x : \dot{\sigma}} \ar[d, "\pi^{*} \mathbf{fst}"] \\
				\pi_{\mathcal{A}^{\mathcal{P}}\interpret{\dot{\Gamma} \vdash \dot{\sigma}}}^{*} \mathcal{A}^{\mathcal{P}}\interpret{\dot{\Gamma} \vdash \dot{\sigma}} \ar[d, equal, "\text{by Lemma~\ref{lem:weakening-type}}"] \\
				\mathcal{A}^{\mathcal{P}}\interpret{\dot{\Gamma}, x : \dot{\sigma} \vdash \dot{\sigma}}
			\end{tikzcd}
		\end{center}
		This is equal to $(\identity{}, \pi_2 \comp \pi_1)$.
		\begin{align}
			\pi^{*} \mathbf{fst} \comp \eta^{\coprod \dashv \pi^{*}} &= (\identity{}, \pi_1 \comp \pi_2 \comp (\pi_1 \times \identity{})) \comp (\identity{}, \pi_2 \times \identity{}) \\
			&= (\identity{}, \pi_1 \comp \pi_2 \comp (\pi_1 \times \identity{}) \comp \tupling{\pi_1}{\pi_2 \times \identity{}}) \\
			&= (\identity{}, \pi_2 \comp \pi_1)
		\end{align}
		\item If $\dot{\Gamma}, y : \dot{\tau} \vdash x : \dot{\sigma}$ and $(x : \dot{\sigma}) \in \dot{\Gamma}$, then we prove
		\[ (\identity{}, \mathcal{A}^{\mathcal{P}}\interpret{\underlying{\dot{\Gamma}}, y : \underlying{\dot{\tau}} \vdash x : \underlying{\dot{\sigma}}} \comp \pi_1) : 1 \mathcal{A}^{\mathcal{P}}\interpret{\dot{\Gamma}, y : \dot{\tau}} \dotTo \mathcal{A}^{\mathcal{P}}\interpret{\dot{\Gamma}, y : \dot{\tau} \vdash \dot{\sigma}} \]
		where IH is given as follows.
		\[ (\identity{}, \interpret{\underlying{\dot{\Gamma}} \vdash x : \underlying{\dot{\sigma}}} \comp \pi_1) : 1 \mathcal{A}^{\mathcal{P}}\interpret{\dot{\Gamma}} \dotTo \mathcal{A}^{\mathcal{P}}\interpret{\dot{\Gamma} \vdash \dot{\sigma}} \]
		Consider the following.
		\begin{center}
			\begin{tikzcd}
				1 \mathcal{A}^{\mathcal{P}}\interpret{\dot{\Gamma}, y : \dot{\tau}} \ar[d, equal] \\
				\pi_{\mathcal{A}^{\mathcal{P}}\interpret{\dot{\Gamma} \vdash \dot{\tau}}}^{*} 1 \mathcal{A}^{\mathcal{P}}\interpret{\dot{\Gamma}} \ar[d, "{\pi^{*} (\identity{}, \interpret{\underlying{\dot{\Gamma}} \vdash x : \underlying{\dot{\sigma}}} \comp \pi_1)}"] \\
				\pi_{\mathcal{A}^{\mathcal{P}}\interpret{\dot{\Gamma} \vdash \dot{\tau}}}^{*} \mathcal{A}^{\mathcal{P}}\interpret{\dot{\Gamma} \vdash \dot{\sigma}} \ar[d, equal, "\text{by Lemma~\ref{lem:weakening-type}}"] \\
				\mathcal{A}^{\mathcal{P}}\interpret{\dot{\Gamma}, y : \dot{\tau} \vdash \dot{\sigma}}
			\end{tikzcd}
		\end{center}
		This is equal to $(\identity{}, \mathcal{A}^{\mathcal{P}}\interpret{\underlying{\dot{\Gamma}}, y : \underlying{\dot{\tau}} \vdash x : \underlying{\dot{\sigma}}} \comp \pi_1)$.
		\begin{align}
			&\pi^{*} (\identity{}, \interpret{\underlying{\dot{\Gamma}} \vdash x : \underlying{\dot{\sigma}}} \comp \pi_1) \\
			&= (\identity{}, \interpret{\underlying{\dot{\Gamma}} \vdash x : \underlying{\dot{\sigma}}} \comp \pi_1 \comp (\pi_1 \times \identity{})) \\
			&= (\identity{}, \interpret{\underlying{\dot{\Gamma}} \vdash x : \underlying{\dot{\sigma}}} \comp \pi_1 \comp \pi_1) \\
			&= (\identity{}, \interpret{\underlying{\dot{\Gamma}, y : \underlying{\dot{\tau}}} \vdash x : \underlying{\dot{\sigma}}} \comp \pi_1)
		\end{align}
	\end{itemize}
\end{proof}

\begin{lemma}\label{lem:R-VarRefine-sound}
	\textsc{R-VarRefine} is sound.
\end{lemma}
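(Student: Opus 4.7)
My plan is to verify directly that
\[ (\identity{}, \mathcal{A}\interpret{x} \comp \pi_1) : 1 \mathcal{A}^{\mathcal{P}}\interpret{\dot{\Gamma}} \dotTo \mathcal{A}^{\mathcal{P}}\interpret{\dot{\Gamma} \vdash \{ v : b \mid v = x \}} \]
is a morphism in $\refinetotal$. Unfolding the definition of the codomain yields
\[ \mathcal{A}^{\mathcal{P}}\interpret{\dot{\Gamma} \vdash \{ v : b \mid v = x \}} = ((\mathcal{A}\interpret{\underlying{\dot{\Gamma}}}, \mathcal{A}\interpret{b}),\ \mathcal{A}^{\mathcal{P}}\interpret{\dot{\Gamma}},\ \pi_1^{*} \mathcal{A}^{\mathcal{P}}\interpret{\dot{\Gamma}} \land \mathcal{A}^{\mathcal{P}}\interpret{\underlying{\dot{\Gamma}}, v : b \vdash v = x}), \]
so what must be shown is that for every $\gamma \in \mathcal{A}^{\mathcal{P}}\interpret{\dot{\Gamma}}$, the composite $\tupling{\gamma}{\mathcal{A}\interpret{x} \comp \gamma} : 1 \to \mathcal{A}\interpret{\underlying{\dot{\Gamma}}} \times \mathcal{A}\interpret{b}$ lies in the refinement subset on the right.

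The first conjunct $\pi_1^{*} \mathcal{A}^{\mathcal{P}}\interpret{\dot{\Gamma}}$ is immediate: projecting $\tupling{\gamma}{\mathcal{A}\interpret{x} \comp \gamma}$ along $\pi_1$ recovers $\gamma$, which is in $\mathcal{A}^{\mathcal{P}}\interpret{\dot{\Gamma}}$ by assumption. This is essentially the same content already handled in Lemma~\ref{lem:R-Var-sound} for the unrefined typing of $x$.

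For the second conjunct, membership in $\mathcal{A}^{\mathcal{P}}\interpret{\underlying{\dot{\Gamma}}, v : b \vdash v = x}$ amounts, by the definition of the interpretation of atomic predicates, to $\mathcal{A}\interpret{\underlying{\dot{\Gamma}}, v : b \vdash (v, x) : b \times b} \comp \tupling{\gamma}{\mathcal{A}\interpret{x} \comp \gamma}$ landing in the diagonal $\mathcal{P}({=}_b) \subseteq \mathcal{A}\interpret{b \times b}$. The first coordinate evaluates $v$ and gives $\mathcal{A}\interpret{x} \comp \gamma$, while the second coordinate, by Lemma~\ref{lem:weakening-term}, reduces to $\mathcal{A}\interpret{\underlying{\dot{\Gamma}} \vdash x : b} \comp \gamma = \mathcal{A}\interpret{x} \comp \gamma$. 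The two coordinates coincide, so the diagonal condition is satisfied and the refinement predicate holds at $\tupling{\gamma}{\mathcal{A}\interpret{x} \comp \gamma}$.

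The whole argument is routine once the definitions are unpacked; the only bookkeeping subtlety is the invocation of the weakening lemma to re-express $x$ evaluated in the extended context $\underlying{\dot{\Gamma}}, v : b$ in terms of its evaluation in $\underlying{\dot{\Gamma}}$. I do not anticipate any real obstacle beyond this straightforward diagram chase.
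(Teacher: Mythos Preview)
Your proposal is correct. The key computation is the same as the paper's: evaluating $(v,x)$ at the point $\tupling{\gamma}{\mathcal{A}\interpret{x}\comp\gamma}$ lands on the diagonal of $\mathcal{A}\interpret{b}\times\mathcal{A}\interpret{b}$, hence in $\mathcal{P}({=}_b)$.

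The paper organizes the argument slightly differently. It splits on the position of $x$ in $\dot{\Gamma}$: a base case where $x$ is the last variable (handled by a reindexing calculation showing $\tupling{\pi_1}{\pi_2\comp\pi_1}^{*}\tupling{\pi_2}{\pi_2\comp\pi_1}^{*}\mathcal{P}({=}) = \top$, which is exactly the diagonal observation expressed fibrationally), and an inductive case where $x$ is deeper in the context, which is reduced verbatim to the \textsc{R-Var} argument via weakening. Your route bypasses this case split by working uniformly with the global interpretation $\mathcal{A}\interpret{\underlying{\dot{\Gamma}}\vdash x:b}$ and checking the condition pointwise on global elements $\gamma:1\to\mathcal{A}\interpret{\underlying{\dot{\Gamma}}}$; this is legitimate because $\category{P}$ is by construction the change-of-base along $\category{C}(1,{-})$, so membership in a predicate is exactly a statement about global elements. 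Your approach is a bit more direct; the paper's approach keeps the argument parallel to the proof of \textsc{R-Var} and makes the fibrational structure more visible. Either way the content is the same.
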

\begin{proof}
	\begin{itemize}
		\item If $\dot{\Gamma}, x : \{ v : b \mid \phi \} \vdash x : \{ v : b \mid v = x \}$, then we prove
		\[ (\identity{}, \pi_2 \comp \pi_1) : 1 \mathcal{A}^{\mathcal{P}}\interpret{\dot{\Gamma}, x : \{ v : b \mid \phi \}} \dotTo \mathcal{A}^{\mathcal{P}}\interpret{\dot{\Gamma}, x : \{ v : b \mid \phi \} \vdash \{ v : b \mid v = x \}}. \]
		Since
		\begin{align}
			&\mathcal{A}^{\mathcal{P}}\interpret{\dot{\Gamma}, x : \{ v : b \mid \phi \} \vdash \{ v : b \mid v = x \}} \\
			&= ((\dots, \dots), \mathcal{A}^{\mathcal{P}}\interpret{\dot{\Gamma}, x : \{ v : b \mid \phi \}}, \\
			&\qquad \pi^{*} \mathcal{A}^{\mathcal{P}}\interpret{\dot{\Gamma}, x : \{ v : b \mid \phi \}} \land \mathcal{A}^{\mathcal{P}}\interpret{\underlying{\dot{\Gamma}}, x : b, v : b \vdash v = x }) \\
			&= ((\dots, \dots), \mathcal{A}^{\mathcal{P}}\interpret{\dot{\Gamma}, x : \{ v : b \mid \phi \}}, \\
			&\qquad \pi^{*} \mathcal{A}^{\mathcal{P}}\interpret{\dot{\Gamma}, x : \{ v : b \mid \phi \}} \land \tupling{\pi_2}{\pi_2 \comp \pi_1}^{*} \mathcal{P}({=}))
		\end{align}
		it suffices to prove
		\[ \pi^{*} \mathcal{A}^{\mathcal{P}}\interpret{\dot{\Gamma}, x : \{ v : b \mid \phi \}} \le \tupling{\pi_1}{\pi_2 \comp \pi_1}^{*} (\pi^{*} \mathcal{A}^{\mathcal{P}}\interpret{\dot{\Gamma}, x : \{ v : b \mid \phi \}} \land \tupling{\pi_2}{\pi_2 \comp \pi_1}^{*} \mathcal{P}({=})) \]
		This follows from the following equations.
		\[ \tupling{\pi_1}{\pi_2 \comp \pi_1}^{*} \pi^{*} \mathcal{A}^{\mathcal{P}}\interpret{\dot{\Gamma}, x : \{ v : b \mid \phi \}} = \pi^{*} \mathcal{A}^{\mathcal{P}}\interpret{\dot{\Gamma}, x : \{ v : b \mid \phi \}} \]
		\begin{align}
			\tupling{\pi_1}{\pi_2 \comp \pi_1}^{*} \tupling{\pi_2}{\pi_2 \comp \pi_1}^{*} \mathcal{P}({=}) &= \pi_1^{*} \pi_2^{*} \tupling{\identity{}}{\identity{}}^{*} \mathcal{P}({=}) \\
			&= \pi_1^{*} \pi_2^{*} \top (A b) \\
			&= \top \mathcal{A}\interpret{\underlying{\dot{\Gamma}}, x : b, v : b}
		\end{align}
		\item If $\dot{\Gamma}, y : \dot{\tau} \vdash x : \{ v : b \mid v = x \}$, then the proof is the same as R-Var.
	\end{itemize}
\end{proof}

\begin{definition}
	Let $\Gamma = x_1 : \sigma_1, \dots, x_n : \sigma_n$ be a context and $\gamma \in 1 \to \mathcal{A}\interpret{\Gamma}$.
	For each $f : 1 \to \mathcal{A}\interpret{\sigma_i}$, we define $\gamma[x_i \mapsto f] : 1 \to \mathcal{A}\interpret{\Gamma}$ by
	\[ \gamma[x_i \mapsto f] \coloneqq \langle \dots \tupling{\identity{}}{\gamma_1}, \dots, f, \dots, \gamma_n \rangle \]
	where $\gamma = \langle \dots \tupling{\identity{}}{\gamma_1}, \dots, \gamma_i, \dots, \gamma_n \rangle$.
\end{definition}

\begin{definition}
	Let $\Gamma = x_1 : \sigma_1, \dots, x_n : \sigma_n$ be a context.
	We say $Q \in \category{P}_{\mathcal{A}\interpret{\Gamma}}$ is \emph{admissible at variable $x_i$} if
	\begin{itemize}
		\item $\category{C}(1, \mathcal{A}\interpret{\sigma_i})$ has a bottom element $\bot_{\mathcal{A}\interpret{\sigma_i}}$
		\item for any $\gamma : 1 \to \mathcal{A}\interpret{\Gamma}$, we have $\gamma[x_i \mapsto \bot_{\mathcal{A}\interpret{\sigma_i}}] \in Q$
		\item for any $\gamma : 1 \to \mathcal{A}\interpret{\Gamma}$ and any $\omega$-chain $f_{0} \le f_1 \le \dots$ in $\category{C}(1, \mathcal{A}\interpret{\sigma_i})$, if $\gamma[x_i \mapsto f_k] \in Q$ holds for each $k$, then $\gamma[x_i \mapsto \sup_k f_k] \in Q$ (i.e.\ Q is chain-closed at $x_i$)
	\end{itemize}
\end{definition}

\begin{definition}\label{def:admissible-refinement}
	$((I, X), P, Q) \in \{ s(\category{C}) \mid \category{P} \}$ is \emph{admissible} if
	\begin{itemize}
		\item $X$ is an EM $T$-algebra (which implies that $\category{C}(1, X)$ has a bottom element $\bot_X : 1 \to X$)
		\item for each $i \in P$, $\tupling{i}{\bot_X} \in Q$
		\item for any $i \in P$ and $\omega$-chain $f_0 \le f_1 \le \dots$ such that $\tupling{i}{f_n} \in Q$ for any $n$, we have $\tupling{i}{\sup_n f_n} \in Q$
	\end{itemize}
\end{definition}

\begin{lemma}\label{lem:refinement-admissible}
	If $\vdash \dot{\Gamma}$ and $\mathcal{A}^{\mathcal{P}}\interpret{\underlying{\dot{\Gamma}}, v : \answertype \vdash \phi}$ is admissible at $v$, then $\mathcal{A}^{\mathcal{P}}\interpret{\dot{\Gamma} \vdash \{ v : \answertype \mid \phi \}}$ is admissible.
\end{lemma}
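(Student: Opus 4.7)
The plan is to verify each of the three clauses of Definition~\ref{def:admissible-refinement} directly from the hypothesis. Unfolding the interpretation, we have
\[ \mathcal{A}^{\mathcal{P}}\interpret{\dot{\Gamma} \vdash \{ v : \answertype \mid \phi \}} = ((\mathcal{A}\interpret{\underlying{\dot{\Gamma}}}, \answerobj), \mathcal{A}^{\mathcal{P}}\interpret{\dot{\Gamma}}, Q), \]
where $Q = \pi_1^{*} \mathcal{A}^{\mathcal{P}}\interpret{\dot{\Gamma}} \land \mathcal{A}^{\mathcal{P}}\interpret{\underlying{\dot{\Gamma}}, v : \answertype \vdash \phi}$. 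Concretely, a pair $\tupling{\gamma}{f} : 1 \to \mathcal{A}\interpret{\underlying{\dot{\Gamma}}} \times \answerobj$ belongs to $Q$ exactly when $\gamma \in \mathcal{A}^{\mathcal{P}}\interpret{\dot{\Gamma}}$ and $\tupling{\gamma}{f} \in \mathcal{A}^{\mathcal{P}}\interpret{\underlying{\dot{\Gamma}}, v : \answertype \vdash \phi}$.

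For the first clause of the definition, I note that the $\lambdaHFL$-model guarantees $\answerobj$ has a bottom element $\bot^{\answerobj} : 1 \to \answerobj$, and since an EM algebra of the lifting monad $({-})_{\bot}$ is precisely an object with a bottom element, $\answerobj$ is an EM $({-})_{\bot}$-algebra, as required. This also gives us the bottom of $\category{C}(1,\answerobj)$ that the admissibility-at-$v$ hypothesis refers to.

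For the second clause, given any $\gamma \in \mathcal{A}^{\mathcal{P}}\interpret{\dot{\Gamma}}$, I need $\tupling{\gamma}{\bot^{\answerobj}} \in Q$. The conjunct $\pi_1^{*}\mathcal{A}^{\mathcal{P}}\interpret{\dot{\Gamma}}$ is immediate from $\gamma \in \mathcal{A}^{\mathcal{P}}\interpret{\dot{\Gamma}}$, and the conjunct $\mathcal{A}^{\mathcal{P}}\interpret{\underlying{\dot{\Gamma}}, v : \answertype \vdash \phi}$ follows from the assumption that this predicate is admissible at $v$, applied to the environment $\tupling{\gamma}{\bot^{\answerobj}} = \tupling{\gamma}{?}[v \mapsto \bot^{\answerobj}]$. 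For the third clause, given $\gamma \in \mathcal{A}^{\mathcal{P}}\interpret{\dot{\Gamma}}$ and an $\omega$-chain $f_0 \le f_1 \le \dots$ in $\category{C}(1, \answerobj)$ with $\tupling{\gamma}{f_n} \in Q$ for all $n$, I must show $\tupling{\gamma}{\sup_n f_n} \in Q$. Again the first conjunct is trivial since $\gamma$ is unchanged, and the second conjunct follows from the chain-closedness at $v$ of $\mathcal{A}^{\mathcal{P}}\interpret{\underlying{\dot{\Gamma}}, v : \answertype \vdash \phi}$.

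The proof is essentially a bookkeeping exercise, and no step is genuinely hard; the only point that requires care is matching the notational conventions, namely observing that for a context of the shape $\underlying{\dot{\Gamma}}, v : \answertype$, an environment $1 \to \mathcal{A}\interpret{\underlying{\dot{\Gamma}}} \times \answerobj$ necessarily factors as $\tupling{\gamma}{f}$, so the definition of admissibility at $v$ in terms of $\gamma[v \mapsto f]$ coincides with the pairing form used in Definition~\ref{def:admissible-refinement}. Once this identification is noted, the three conditions follow directly from the hypothesis.
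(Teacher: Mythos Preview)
Your proof is correct and follows the same approach as the paper's: both simply unfold the interpretation and observe that the three clauses of Definition~\ref{def:admissible-refinement} follow directly from the admissibility-at-$v$ hypothesis. The paper's proof is extremely terse (one line, ``by definition''), while yours spells out each clause explicitly; in particular you correctly track the $\pi_1^{*}\mathcal{A}^{\mathcal{P}}\interpret{\dot{\Gamma}}$ conjunct that the paper's displayed equation elides, and you note that it is automatically preserved since the $\gamma$-component is unchanged in clauses two and three.
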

\begin{proof}
	By definition,
	\[ \mathcal{A}^{\mathcal{P}}\interpret{ \dot{\Gamma} \vdash \{ v : \answertype \mid \phi \}} = ((\mathcal{A}\interpret{\underlying{\dot{\Gamma}}}, \answerobj), \mathcal{A}^{\mathcal{P}}\interpret{\dot{\Gamma}}, \mathcal{A}^{\mathcal{P}}\interpret{\underlying{\dot{\Gamma}}, v : \answertype \vdash \phi}) \]
	is admissible.
\end{proof}

\begin{lemma}\label{lem:product-admissible}
	Let $((I, X), P, Q) \in \{ s(\category{C}) \mid \category{P} \}$.
	If $((I \times X, Y), Q, R) \in \{ s(\category{C}) \mid \category{P} \}$ is admissible, then $\dot{\prod}_{((I, X), P, Q)} ((I \times X, Y), Q, R)$ is admissible.
\end{lemma}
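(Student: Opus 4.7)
\medskip

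\noindent\textbf{Proof proposal.} The plan is to unfold the definition of the product object and directly verify the three clauses of Definition~\ref{def:admissible-refinement}. Recall that $\dot{\prod}_{((I, X), P, Q)} ((I \times X, Y), Q, R) = ((I, \exponential{X}{Y}), P, S)$ where
\[
S = \{\tupling{i}{f} \mid i \in P \land \forall x : 1 \to X,\ \tupling{i}{x} \in Q \implies \tupling{\tupling{i}{x}}{\eval \comp \tupling{f}{x}} \in R\}.
\]
So I have to show that (i) $\exponential{X}{Y}$ carries an EM $T$-algebra structure, (ii) $\tupling{i}{\bot_{\exponential{X}{Y}}} \in S$ for every $i \in P$, and (iii) $S$ is closed under suprema of $\omega$-chains in the second coordinate.

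First I would establish that $\exponential{X}{Y}$ inherits an EM $T$-algebra structure pointwise from $Y$, using that $\category{C}$ is $\omegaCPO$-enriched cartesian closed and $({-})_{\bot}$ is a strong lifting monad. Concretely, the bottom element is $\bot_{\exponential{X}{Y}} = \Lambda(\bot_Y \comp {!})$ where ${!} : 1 \times X \to 1$. A short calculation then shows $\eval \comp \tupling{\bot_{\exponential{X}{Y}}}{x} = \bot_Y$ for any $x : 1 \to X$. Combined with clause (ii) of the admissibility of $((I \times X, Y), Q, R)$ at each point $\tupling{i}{x} \in Q$, this gives $\tupling{\tupling{i}{x}}{\bot_Y} \in R$, which is precisely what is needed for $\tupling{i}{\bot_{\exponential{X}{Y}}} \in S$.

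For the chain-closure clause, suppose $f_0 \le f_1 \le \dots$ is an $\omega$-chain in $\category{C}(1, \exponential{X}{Y})$ with $\tupling{i}{f_n} \in S$ for every $n$, and fix $x : 1 \to X$ with $\tupling{i}{x} \in Q$. Using $\omegaCPO$-enrichment (so that composition and tupling are Scott-continuous in each argument), I would derive
\[
\eval \comp \tupling{\textstyle\sup_n f_n}{x} \;=\; \sup_n\bigl(\eval \comp \tupling{f_n}{x}\bigr).
\]
Each $\eval \comp \tupling{f_n}{x}$ yields $\tupling{\tupling{i}{x}}{\eval \comp \tupling{f_n}{x}} \in R$ by hypothesis, so by the chain-closure clause of admissibility of $((I \times X, Y), Q, R)$ at $\tupling{i}{x}$, the supremum also lies in $R$. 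This shows $\tupling{i}{\sup_n f_n} \in S$, finishing clause (iii).

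The main obstacle I anticipate is not the abstract argument above but bookkeeping: namely, the pointwise transfer of the EM $T$-algebra structure from $Y$ to $\exponential{X}{Y}$ and the justification that $\bot$ in $\category{C}(1, \exponential{X}{Y})$ is indeed $\Lambda(\bot_Y \comp {!})$ (so that suprema in $\category{C}(1, \exponential{X}{Y})$ correspond to pointwise suprema in $\category{C}(1, Y)$). Once this is in place, the fact that $\eval$ is a morphism in $\category{C}$ and hence preserves suprema after tupling with a fixed argument is immediate from $\omegaCPO$-enrichment, and the rest is unfolding definitions.
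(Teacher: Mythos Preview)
Your proposal is correct and follows essentially the same approach as the paper: both unfold the product and verify the three admissibility clauses directly, with the key computation being $\bot_{\exponential{X}{Y}} = \Lambda(\bot_Y \comp \pi_1)$ (your $\Lambda(\bot_Y \comp {!})$ is the same morphism) so that $\eval \comp \tupling{\bot_{\exponential{X}{Y}}}{x} = \bot_Y$. The paper in fact spells out the derivation of this formula for $\bot_{\exponential{X}{Y}}$ from the strong-monad and EM-algebra data and then dismisses the chain-closure clause as routine, whereas you sketch the chain-closure argument explicitly and flag the $\bot$ computation as the main bookkeeping obstacle---so you have correctly anticipated where the work lies.
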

\begin{proof}
	Note that we have $\bot_{\exponential{X}{Y}} = \Lambda(\bot_{Y} \comp \pi_1)$.
	\begin{align}
		\bot_{\exponential{X}{Y}} &= \Lambda(\nu_Y \comp T \eval \comp \strength'^T) \comp T {?}_{\exponential{X}{Y}} \comp \bot_{T 0} \\
		&= \Lambda(\nu_Y \comp T \eval \comp T ({?}_{\exponential{X}{Y}} \times \identity{}) \comp \strength'^T \comp (\bot_{T 0} \times \identity{})) \\
		&= \Lambda(\nu_Y \comp T \eval \comp T (\Lambda({?}_{Y} \comp \pi_1) \times \identity{}) \comp \strength'^T \comp (\bot_{T 0} \times \identity{})) \\
		&= \Lambda(\nu_Y \comp T {?}_{Y} \comp T \pi_1 \comp \strength'^T \comp (\bot_{T 0} \times \identity{})) \\
		&= \Lambda(\nu_Y \comp T {?}_{Y} \comp \pi_1 \comp (\bot_{T 0} \times \identity{})) \\
		&= \Lambda(\nu_Y \comp T {?}_{Y} \comp \bot_{T 0} \comp \pi_1) \\
		&= \Lambda(\bot_{Y} \comp \pi_1)
	\end{align}
	So, \eqref{eq:prod_pred} satisfies the second condition of Def.~\ref{def:admissible-refinement} because for each $x : 1 \to X$, if $\tupling{i}{x} \in Q$, then
	\[ \tupling{\tupling{i}{x}}{\eval \comp \tupling{\bot_{\exponential{X}{Y}}}{x}} = \tupling{\tupling{i}{x}}{\bot_{Y} \comp \pi_1 \comp \tupling{\identity{}}{x}} = \tupling{\tupling{i}{x}}{\bot_{Y}} \in R. \]
	It is routine to check that \eqref{eq:prod_pred} satisfies the third condition of Def.~\ref{def:admissible-refinement}.
\end{proof}

\begin{lemma}\label{lem:admissible-fix-point}
	Let $A$ be an EM $T$-algebra for a pseudo-lifting strong monad $T$ with structure map $\nu : T A \to A$.
	If $((I, A), P, Q) \in \{ s(\category{C}) \mid \category{P} \}$ is admissible and $(\identity{}, f) : ((I, A), P, Q) \dotTo ((I, A), P, Q)$, then $(\identity{}, f^{\dagger} \comp \pi_1) : 1 (I, P) \dotTo ((I, A), P, Q)$
\end{lemma}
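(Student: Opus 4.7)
The plan is to exploit the concrete description of the parameterised $T$-fixed-point operator as the supremum of an $\omega$-chain of finite approximants, and then apply the two closure conditions of admissibility of $((I, A), P, Q)$ pointwise at each $i \in P$.

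First, I would unfold the construction of $f^{\dagger}$ so that it has a workable elementary form. Because $A$ carries an EM $T$-algebra structure $\nu : T A \to A$ and $T$ is pseudo-lifting, $\category{C}(1, A)$ has a bottom $\bot_A : 1 \to A$, and the parameterised uniform $T$-fixed-point operator $({-})^{\dagger} : \category{C}_0(I \times A, A) \to \category{C}_0(I, A)$ coincides with the evident Kleene iteration: setting $f_0 \coloneqq \bot_A \comp {!}_I : I \to A$ and $f_{n+1} \coloneqq f \comp \tupling{\identity{}}{f_n}$, the family $\{ f_n \}_n$ is an $\omega$-chain in $\category{C}(I, A)$ and $f^{\dagger} = \sup_n f_n$. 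This follows by transporting the uniform $({-})_{\bot}$-fixed-point operator along the EM algebra structure on $A$ and using that $\nu$ preserves both $\bot$ and $\omega$-suprema (as a morphism of EM algebras between $\omegaCPO$-enriched objects).

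Next is a straightforward fixpoint induction. I would show that for every $n$ and every $i \in P$, $\tupling{i}{f_n \comp i} \in Q$. For $n = 0$, $f_0 \comp i = \bot_A$, and $\tupling{i}{\bot_A} \in Q$ by the second condition of Definition~\ref{def:admissible-refinement}. For the inductive step, assume $\tupling{i}{f_n \comp i} \in Q$. Since $(\identity{}, f) : ((I, A), P, Q) \dotTo ((I, A), P, Q)$, applying this morphism to $\tupling{i}{f_n \comp i}$ yields $\tupling{i}{f \comp \tupling{i}{f_n \comp i}} = \tupling{i}{f_{n+1} \comp i} \in Q$, closing the induction.

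Finally, fix $i \in P$. Then $\{ f_n \comp i \}_n$ is an $\omega$-chain in $\category{C}(1, A)$ with $\tupling{i}{f_n \comp i} \in Q$ for each $n$, so by the third admissibility condition we obtain $\tupling{i}{\sup_n (f_n \comp i)} \in Q$. By $\omegaCPO$-enrichment of $\category{C}$, post-composition by $i : 1 \to I$ is Scott-continuous, so $\sup_n (f_n \comp i) = (\sup_n f_n) \comp i = f^{\dagger} \comp i$, giving $\tupling{i}{f^{\dagger} \comp i} \in Q$ for every $i \in P$. Translating back to $\refinetotal$, this is exactly $(\identity{}, f^{\dagger} \comp \pi_1) : 1 (I, P) \dotTo ((I, A), P, Q)$. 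The main obstacle I anticipate is purely bookkeeping: verifying rigorously that the abstract parameterised $T$-fixed-point operator agrees with the Kleene iteration above, i.e. chasing the EM structure on $A$ and the $\omegaCPO$-enrichment through the definition of $({-})^{\dagger}$; once that identification is granted the inductive argument with admissibility is immediate.
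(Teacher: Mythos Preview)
Your proposal is correct and follows essentially the same approach as the paper: reduce to showing $\tupling{i}{f^{\dagger} \comp i} \in Q$ for each $i \in P$, identify $f^{\dagger} \comp i$ with the Kleene chain $(f \comp \tupling{i \comp {!}}{\identity{}})^n \comp \bot_A$, and then use the two admissibility clauses for the base case and the $\omega$-supremum, together with the hypothesis $\tupling{\pi_1}{f} : Q \dotTo Q$ for the inductive step. The only substantive difference is that the paper actually carries out the ``bookkeeping'' you flag as the main obstacle: it starts from the abstract definition of $f^{\dagger}$ via the curried map $\nu_X : T(\exponential{I}{A}) \to \exponential{I}{A}$ and proves by induction that the $n$-th approximant, evaluated at $i$, equals $(f \comp \tupling{i \comp {!}}{\identity{}})^n \comp \bot_A$, whereas you take this identification as given.
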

\begin{proof}
	It suffices to show $\tupling{\identity{}}{f^{\dagger}} : P \dotTo Q$ (that is, $\tupling{i}{f^{\dagger} \comp i} \in Q$ for any $i \in P$) under the assumption that we have $\tupling{\pi_1}{f} : Q \dotTo Q$.

	Recall that $f^{\dagger}$ is defined by
	\[ f^{\dagger} \coloneqq \Lambda^{-1} (\nu_X) \comp \langle \sup_n \lceil f \rceil^{n} \comp \bot_{T (\exponential{I}{A})} \comp {!}, \identity{} \rangle \]
	where
	\begin{align}
		\lceil f \rceil &\coloneqq \eta^T \comp \Lambda (f \comp \langle \pi_2, \Lambda^{-1} (\nu_X) \rangle) : T (\exponential{I}{A}) \to T (\exponential{I}{A}) \\
		\nu_X &\coloneqq \Lambda (\nu \comp T \mathbf{ev} \comp \strength'^T) : T (\exponential{I}{A}) \to \exponential{I}{A}
	\end{align}

	We show
	\[ \Lambda^{-1} (\nu_X) \comp \langle \lceil f \rceil^{n} \comp \bot_{T (\exponential{I}{A})} \comp {!}, \identity{} \rangle \comp i = (f \comp \tupling{i \comp {!}}{\identity{}})^n \comp \bot_A \]
	by induction on $n$.

	\begin{itemize}
		\item Base case:
			Note that we have $\bot_{T (\exponential{I}{A})} = T {?}_{\exponential{I}{A}} \comp \bot_0$ and ${?}_{\exponential{I}{A}} = \Lambda({?}_{A} \comp \pi_1)$.
			\begin{align}
				\Lambda^{-1} (\nu_X) \comp \langle \bot_{T (\exponential{I}{A})} \comp {!}, \identity{} \rangle \comp i &= \nu \comp T \mathbf{ev} \comp \strength'^T \comp \langle T {?}_{\exponential{I}{A}} \comp \bot_0 \comp {!}, \identity{} \rangle \comp i \\
				&= \nu \comp T \mathbf{ev} \comp \strength'^T \comp \langle T \Lambda({?}_{A} \comp \pi_1) \comp \bot_0, i \rangle \\
				&= \nu \comp T ({?}_{A} \comp \pi_1) \comp \strength'^T \comp \langle \bot_0, i \rangle \\
				&= \nu \comp T {?}_{A} \comp T \pi_2 \comp \strength^T \comp \langle i, \bot_0 \rangle \\
				&= \nu \comp T {?}_{A} \comp \pi_2 \comp \langle i, \bot_0 \rangle \\
				&= \nu \comp T {?}_{A} \comp \bot_0 \\
				&= \bot_A
			\end{align}
		\item Step case:
			\begin{align}
				&\Lambda^{-1} (\nu_X) \comp \langle \lceil f \rceil^{n+1} \comp \bot_{T (\exponential{I}{A})} \comp {!}, \identity{} \rangle \comp i \\
				&= \nu \comp T \mathbf{ev} \comp \strength'^T \comp (\lceil f \rceil \times \identity{}) \comp \langle \lceil f \rceil^{n} \comp \bot_{T (\exponential{I}{A})}, i \rangle \\
				&= \nu \comp T \mathbf{ev} \comp \eta^T \comp (\Lambda (f \comp \langle \pi_2, \Lambda^{-1} (\nu_X) \rangle) \times \identity{}) \comp \langle \lceil f \rceil^{n} \comp \bot_{T (\exponential{I}{A})}, i \rangle \\
				&= f \comp \langle \pi_2, \Lambda^{-1} (\nu_X) \rangle \comp \langle \lceil f \rceil^{n} \comp \bot_{T (\exponential{I}{A})}, i \rangle \\
				&= f \comp \langle i, \Lambda^{-1} (\nu_X) \comp \langle \lceil f \rceil^{n} \comp \bot_{T (\exponential{I}{A})}, i \rangle \rangle \\
				&= f \comp \langle i \comp {!}, \identity{} \rangle \comp \Lambda^{-1} (\nu_X) \comp \langle \lceil f \rceil^{n} \comp \bot_{T (\exponential{I}{A})}, i \rangle
			\end{align}
	\end{itemize}
	
	Now we prove $\tupling{i}{f^{\dagger} \comp i} = \tupling{i}{\sup_n (f \comp \tupling{i \comp {!}}{\identity{}})^n \comp \bot_A} \in Q$ using the admissibility of $((I, A), P, Q)$.
	It suffices to prove $\tupling{i}{(f \comp \tupling{i \comp {!}}{\identity{}})^n \comp \bot_A} \in Q$ for each $n$.
	\begin{itemize}
		\item We have $\tupling{i}{\bot_A} \in Q$ by the definition of the admissibility of $((I, A), P, Q)$.
		\item Let $g = f \comp \tupling{i \comp {!}}{\identity{}}$.
			If $\tupling{i}{g^n \comp \bot_A} \in Q$, then
		\begin{align}
			\tupling{i}{g^{n + 1} \comp \bot_A} &= \tupling{i}{(f \comp \tupling{i \comp {!}}{\identity{}}) \comp g^{n} \comp \bot_A} \\
			&= \tupling{i}{f \comp \tupling{i}{g^{n} \comp \bot_A}} \\
			&= \tupling{\pi_1}{f} \comp \tupling{i}{g^{n} \comp \bot_A}
		\end{align}
		and since we have $\tupling{\pi_1}{f} : Q \dotTo Q$, it follows that $\tupling{i}{g^{n+1} \comp \bot_A} \in Q$ holds.
	\end{itemize}
\end{proof}

\begin{lemma}\label{lem:R-Fix-sound}
	\textsc{R-Fix} is sound.
\end{lemma}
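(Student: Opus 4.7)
The plan is to reduce the lemma to an application of Lemma~\ref{lem:admissible-fix-point} after verifying that the codomain type yields an admissible refinement-object. Abbreviate $\dot{\tau} \coloneqq (x : \dot{\sigma}) \to \{ v : \answertype \mid \phi \}$, and write $\mathcal{A}^{\mathcal{P}}\interpret{\dot{\Gamma} \vdash \dot{\tau}} = ((I, A), P, Q)$, where $I = \mathcal{A}\interpret{\underlying{\dot{\Gamma}}}$, $A = \exponential{\mathcal{A}\interpret{\underlying{\dot{\sigma}}}}{\answerobj}$, and $P = \mathcal{A}^{\mathcal{P}}\interpret{\dot{\Gamma}}$. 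Note that $A$ inherits the structure of an EM $T$-algebra pointwise from $\answerobj$, so Lemma~\ref{lem:admissible-fix-point} is in principle applicable.

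First, I would invoke the induction hypothesis on the premise $\dot{\Gamma}, f : \dot{\tau} \vdash M : \dot{\tau}$ to obtain the vertical morphism $(\identity{}, \mathcal{A}\interpret{M} \comp \pi_1) : 1 \mathcal{A}^{\mathcal{P}}\interpret{\dot{\Gamma}, f : \dot{\tau}} \dotTo \mathcal{A}^{\mathcal{P}}\interpret{\dot{\Gamma}, f : \dot{\tau} \vdash \dot{\tau}}$. By the definition of context extension, the domain equals $1(I \times A, Q)$, and by Lemma~\ref{lem:weakening-type} the codomain equals $\pi_{(I, A)}^{*} ((I, A), P, Q)$. Unpacking this vertical morphism at the level of predicates exactly yields the hypothesis $(\identity{}, \mathcal{A}\interpret{M}) : ((I, A), P, Q) \dotTo ((I, A), P, Q)$ required by Lemma~\ref{lem:admissible-fix-point}, i.e.\ that $\tupling{\pi_1}{\mathcal{A}\interpret{M}}$ maps $Q$ into $Q$.

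Next, I would establish that $((I, A), P, Q)$ is admissible in the sense of Definition~\ref{def:admissible-refinement}. From the premise that $\phi$ is admissible at $v$, Lemma~\ref{lem:refinement-admissible} gives admissibility of $\mathcal{A}^{\mathcal{P}}\interpret{\dot{\Gamma}, x : \dot{\sigma} \vdash \{ v : \answertype \mid \phi \}}$. Applying Lemma~\ref{lem:product-admissible} with the base object $\mathcal{A}^{\mathcal{P}}\interpret{\dot{\Gamma} \vdash \dot{\sigma}}$ then transfers admissibility across the dependent product, yielding admissibility of $\dot{\prod}_{\mathcal{A}^{\mathcal{P}}\interpret{\dot{\Gamma} \vdash \dot{\sigma}}} \mathcal{A}^{\mathcal{P}}\interpret{\dot{\Gamma}, x : \dot{\sigma} \vdash \{ v : \answertype \mid \phi \}}$, which by the interpretation of the function type is precisely $((I, A), P, Q)$.

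With both hypotheses verified, Lemma~\ref{lem:admissible-fix-point} delivers $(\identity{}, \mathcal{A}\interpret{M}^{\dagger} \comp \pi_1) : 1 (I, P) \dotTo ((I, A), P, Q)$, and since $\mathcal{A}\interpret{\fixpoint{f}{M}} = \mathcal{A}\interpret{M}^{\dagger}$ by the definition of the interpretation, this is exactly the desired conclusion. The main obstacle is the bookkeeping in the first step: one has to align the two descriptions of the IH's target — as a refinement-object over $\{\mathcal{A}^{\mathcal{P}}\interpret{\dot{\Gamma} \vdash \dot{\tau}}\}$ versus as a reindexing $\pi_1^{*}$ of the same object — and read off from the resulting vertical morphism the inclusion $\tupling{\pi_1}{\mathcal{A}\interpret{M}} : Q \dotTo Q$ in the exact form demanded by Lemma~\ref{lem:admissible-fix-point}. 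Once this identification is made explicit, the remaining steps are direct applications of the auxiliary lemmas already established.
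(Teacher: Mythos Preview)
Your proposal is correct and follows essentially the same route as the paper: establish admissibility of $\mathcal{A}^{\mathcal{P}}\interpret{\dot{\Gamma} \vdash (x:\dot{\sigma})\to\{v:\answertype\mid\phi\}}$ via Lemma~\ref{lem:refinement-admissible} and Lemma~\ref{lem:product-admissible}, massage the induction hypothesis into a vertical endomorphism on this object, and conclude with Lemma~\ref{lem:admissible-fix-point}. The only cosmetic difference is that where you invoke Lemma~\ref{lem:weakening-type} and then unpack the predicates by hand, the paper phrases the same step as the abstract SCCompC isomorphism $\category{E}_{\{X\}}(1\{X\},\pi^{*}Y)\cong\category{E}_{I}(X,Y)$; both produce the morphism $(\identity{},\mathcal{A}\interpret{M}):X\dotTo X$ required by Lemma~\ref{lem:admissible-fix-point}.
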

\begin{proof}
	By Lemma~\ref{lem:refinement-admissible}, $\mathcal{A}^{\mathcal{P}}\interpret{\dot{\Gamma}, x : \dot{\sigma} \vdash \{ v : \answertype \mid \phi \}}$ is admissible.
	By Lemma~\ref{lem:product-admissible}, $\mathcal{A}^{\mathcal{P}}\interpret{\dot{\Gamma} \vdash (x : \dot{\sigma}) \to \{ v : \answertype \mid \phi \}}$ is admissible.
	Let $X \coloneqq \mathcal{A}^{\mathcal{P}}\interpret{\dot{\Gamma} \vdash (x : \dot{\sigma}) \to \{v : \answertype \mid \phi \}}$.
	By IH, we have 
	\begin{gather}
		(\identity{}, \mathcal{A}\interpret{M} \comp \pi_1) : 1 \{ X \} \dotTo \pi_X^{*} X
	\end{gather}
	Since there is an isomorphism $\category{E}_{\{ X \}}(1 \{ X \}, \pi^{*} Y) \cong \category{E}_{I}(X, Y)$ for any SCCompC $p : \category{E} \to \category{B}$ and $X, Y \in \category{E}_I$, we have
	\begin{gather}
		(\identity{}, \mathcal{A}\interpret{M}) : X \dotTo X
	\end{gather}
	By Lemma~\ref{lem:admissible-fix-point}, we have
	\[ (\identity{}, \mathcal{A}\interpret{\fixpoint{f}{M}} \comp \pi_1) = (\identity{}, (\mathcal{A}\interpret{M})^{\dagger} \comp \pi_1) : 1 \interpret{\dot{\Gamma}} \dotTo \interpret{\dot{\Gamma} \vdash (x : \dot{\sigma}) \to \{v : \answertype \mid \phi \}} \]
\end{proof}

\begin{lemma}\label{lem:R-BasicOp-sound}
	\textsc{R-BasicOp} is sound.
\end{lemma}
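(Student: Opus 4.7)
The plan is to turn \textsc{R-BasicOp} into a one-line composition, parallel to the style used for \textsc{R-App}, \textsc{R-Pair}, etc. Unfolding the statement of Theorem~\ref{thm:soundness-detailed} via the isomorphism $\category{E}_{\mathcal{A}^{\mathcal{P}}\interpret{\dot{\Gamma}}}(1\mathcal{A}^{\mathcal{P}}\interpret{\dot{\Gamma}}, \pi^{*}X) \cong (\text{vertical maps } \mathcal{A}^{\mathcal{P}}\interpret{\dot{\Gamma}} \dotTo \{X\})$ used elsewhere in the proofs, it is equivalent to exhibit a vertical morphism $\tupling{\identity{}}{\mathcal{A}\interpret{\mathbf{op}(M)}} : \mathcal{A}^{\mathcal{P}}\interpret{\dot{\Gamma}} \dotTo \{\mathcal{A}^{\mathcal{P}}\interpret{\dot{\Gamma} \vdash \dot{\tau}}\} = \mathcal{A}^{\mathcal{P}}\interpret{\dot{\Gamma}, v : \dot{\tau}}$ in $\category{P}$.

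First, I would apply the induction hypothesis to the premise $\dot{\Gamma} \vdash M : \dot{\sigma}$ to obtain the vertical map $\tupling{\identity{}}{\mathcal{A}\interpret{M}} : \mathcal{A}^{\mathcal{P}}\interpret{\dot{\Gamma}} \dotTo \mathcal{A}^{\mathcal{P}}\interpret{\dot{\Gamma}, v : \dot{\sigma}}$. Next, the final premise of the rule supplies directly the vertical map $\identity{} \times a(\mathbf{op}) : \mathcal{A}^{\mathcal{P}}\interpret{\dot{\Gamma}, v : \dot{\sigma}} \dotTo \mathcal{A}^{\mathcal{P}}\interpret{\dot{\Gamma}, v : \dot{\tau}}$, and since $\mathrm{ar}(\mathbf{op}) = \underlying{\dot{\sigma}}$ and $\mathrm{car}(\mathbf{op}) = \underlying{\dot{\tau}}$, the underlying morphism in $s(\category{C})$ has the correct type.

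The result is then obtained by composing the two vertical maps in $\refinetotal$ and computing
\[ (\identity{} \times a(\mathbf{op})) \comp \tupling{\identity{}}{\mathcal{A}\interpret{M}} = \tupling{\identity{}}{a(\mathbf{op}) \comp \mathcal{A}\interpret{M}} = \tupling{\identity{}}{\mathcal{A}\interpret{\mathbf{op}(M)}}, \]
where the last equality is by the clause $\mathcal{A}\interpret{\mathbf{op}\ M} = a(\mathbf{op}) \comp \mathcal{A}\interpret{M}$ in the definition of the denotational semantics. Translating back along the bijection above gives the desired $(\identity{}, \mathcal{A}\interpret{\mathbf{op}(M)} \comp \pi_1) : 1\mathcal{A}^{\mathcal{P}}\interpret{\dot{\Gamma}} \dotTo \mathcal{A}^{\mathcal{P}}\interpret{\dot{\Gamma} \vdash \dot{\tau}}$.

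There is no real obstacle here: the rule has been designed so that its semantic premise is precisely the data needed to extend the interpretation of $M$ by $a(\mathbf{op})$ while preserving the predicate. The only points requiring mild care are bookkeeping in $\refinetotal$: namely, verifying that the composite displayed above is indeed over $\identity{}$ in $\category{C}$ (so that the composition takes place inside a single fibre), and that one correctly uses the equality $\{\mathcal{A}^{\mathcal{P}}\interpret{\dot{\Gamma} \vdash \dot{\tau}}\} = \mathcal{A}^{\mathcal{P}}\interpret{\dot{\Gamma}, v : \dot{\tau}}$ from the definition of the comprehension functor to make sense of the premise's typing. Both are immediate from the definitions given in the construction of $\refinefib : \refinetotal \to \category{P}$.
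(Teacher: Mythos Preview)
Your proposal is correct and follows essentially the same approach as the paper: compose the induction hypothesis for $M$ with the morphism supplied by the semantic premise, then observe that the result is $\mathcal{A}\interpret{\mathbf{op}(M)}$. The paper phrases the composition inside $\refinetotal$ as $(\identity{}, \mathcal{A}\interpret{M} \comp \pi_1)$ followed by $(\identity{}, a(\mathbf{op}) \comp \pi_2)$, invoking the fully faithful functor $\refinetotal \to \category{P}^{\to}$ to obtain the second map from the premise; you instead pass through the comprehension and compose $\tupling{\identity{}}{\mathcal{A}\interpret{M}}$ with $\identity{} \times a(\mathbf{op})$ directly in $\category{P}$. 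These are the same argument under the adjunction $1 \dashv \{{-}\}$.

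One small remark: your closing comment about ``verifying that the composite displayed above is indeed over $\identity{}$ in $\category{C}$ (so that the composition takes place inside a single fibre)'' is muddled. Your composition is in $\category{P}$, not in a fibre of $p : \category{P} \to \category{C}$; the composite $\tupling{\identity{}}{\mathcal{A}\interpret{\mathbf{op}(M)}}$ lies over $\tupling{\identity{}}{\mathcal{A}\interpret{\mathbf{op}(M)}}$ in $\category{C}$, not over $\identity{}$. No fibre-restriction is needed to compose in $\category{P}$. What \emph{is} true is that the corresponding morphism in $\refinetotal$ is vertical for $\refinefib$ (i.e.\ over $\identity{}_{\mathcal{A}^{\mathcal{P}}\interpret{\dot{\Gamma}}}$ in $\category{P}$), which is exactly what the theorem statement asks for; this follows automatically from the adjunction and needs no separate check.
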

\begin{proof}
	We have
	\[ (\identity{}, \mathcal{A}\interpret{\mathbf{op}(M)} \comp \pi_1) : 1 \mathcal{A}^{\mathcal{P}}\interpret{\dot{\Gamma}} \dotTo \mathcal{A}^{\mathcal{P}}\interpret{\dot{\Gamma} \vdash \dot{\tau}} \]
	as the following composite.
	\begin{center}
		\begin{tikzcd}[column sep=huge]
			1 \mathcal{A}^{\mathcal{P}}\interpret{\dot{\Gamma}} \ar[r, "{(\identity{}, \mathcal{A}\interpret{M} \comp \pi_1)}"] &
			\mathcal{A}^{\mathcal{P}}\interpret{\dot{\Gamma} 
			\vdash \dot{\sigma}} \ar[r, "{(\identity{}, a(\mathbf{op}) \comp \pi_2)}"] &
			\mathcal{A}^{\mathcal{P}}\interpret{\dot{\Gamma} \vdash \dot{\tau}}
		\end{tikzcd}
	\end{center}
	Here, the second morphism exists because we have a fully faithful functor $\refinetotal \to \category{P}^{\to}$.
	\begin{equation}
		\mathcal{A}^{\mathcal{P}}\interpret{\dot{\Gamma} \vdash \dot{\sigma}} \xrightarrow{(\identity{}, a(\mathbf{op}) \comp \pi_2)} \mathcal{A}^{\mathcal{P}}\interpret{\dot{\Gamma} \vdash \dot{\tau}}
		\qquad\mapsto\qquad
		\begin{tikzcd}
			\mathcal{A}^{\mathcal{P}}\interpret{\dot{\Gamma}, v : \dot{\sigma}} \ar[d, "\pi_{\mathcal{A}^{\mathcal{P}}\interpret{\dot{\Gamma} \vdash \dot{\sigma}}}"] \ar[r, "\identity{} \times a(\mathbf{op})"] & \mathcal{A}^{\mathcal{P}}\interpret{\dot{\Gamma}, v : \dot{\tau}} \ar[d, "\pi_{\mathcal{A}^{\mathcal{P}}\interpret{\dot{\Gamma} \vdash \dot{\tau}}}"] \\
			\mathcal{A}^{\mathcal{P}}\interpret{\dot{\Gamma}} \ar[r, "\identity{}"] & \mathcal{A}^{\mathcal{P}}\interpret{\dot{\Gamma}}
		\end{tikzcd}
	\end{equation}
\end{proof}

\subsection{Proofs for Specialised Rules for Basic Operators}

\begin{corollary}
	\textsc{R-BasicConst} is sound.
\end{corollary}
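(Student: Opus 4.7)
The plan is to derive \textsc{R-BasicConst} as a special case of the already-proved \textsc{R-BasicOp} (Lemma~\ref{lem:R-BasicOp-sound}), since $\mathbf{op}$ is syntactic sugar for $\mathbf{op}\ ()$. First I would apply \textsc{R-Unit} to obtain $\dot{\Gamma} \vdash () : \{ v : 1 \mid \top \}$, and then I would instantiate \textsc{R-BasicOp} with $\dot{\sigma} = \{ v : 1 \mid \top \}$ and $\dot{\tau} = \{ v : \tau \mid v =_{\tau} \mathbf{op} \}$. The underlying-type premises are immediate since $\mathrm{ar}(\mathbf{op}) = 1 = \underlying{\dot{\sigma}}$ and $\mathrm{car}(\mathbf{op}) = \tau = \underlying{\dot{\tau}}$.

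The only nontrivial obligation is the semantic premise
\[
(\identity{} \times a(\mathbf{op})) : \mathcal{A}^{\mathcal{P}}\interpret{\dot{\Gamma}, v : \{ v : 1 \mid \top \}} \dotTo \mathcal{A}^{\mathcal{P}}\interpret{\dot{\Gamma}, v : \{ v : \tau \mid v =_{\tau} \mathbf{op} \}}.
\]
Unfolding the definition of the interpretation of refinement contexts, this amounts to showing that for every vertical morphism $\tupling{\gamma}{v} : 1 \to \mathcal{A}\interpret{\underlying{\dot{\Gamma}}} \times 1$ lying in $\mathcal{A}^{\mathcal{P}}\interpret{\dot{\Gamma}, v : \{ v : 1 \mid \top \}}$, the composite $\tupling{\gamma}{a(\mathbf{op}) \comp v}$ lies in $\mathcal{A}^{\mathcal{P}}\interpret{\dot{\Gamma}, v : \{ v : \tau \mid v =_{\tau} \mathbf{op} \}}$. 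Membership in the source context already gives $\gamma \in \mathcal{A}^{\mathcal{P}}\interpret{\dot{\Gamma}}$, so by the definition of the semantics of $\{ v : \tau \mid v =_{\tau} \mathbf{op} \}$ together with $\mathcal{P}(=_{\tau})$ being the diagonal, the remaining obligation reduces to the equation $a(\mathbf{op}) \comp v = \mathcal{A}\interpret{\mathbf{op}\ ()}(\gamma)$.

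This equation is essentially immediate: the right-hand side expands to $a(\mathbf{op}) \comp \mathcal{A}\interpret{()}(\gamma) = a(\mathbf{op}) \comp {!}$, while $v : 1 \to 1$ is forced to equal $\identity{1} = {!}$ by terminality of $1$. Hence the semantic premise of \textsc{R-BasicOp} is satisfied, and applying Lemma~\ref{lem:R-BasicOp-sound} yields the desired soundness conclusion $(\identity{}, \mathcal{A}\interpret{\mathbf{op}} \comp \pi_1) : 1 \mathcal{A}^{\mathcal{P}}\interpret{\dot{\Gamma}} \dotTo \mathcal{A}^{\mathcal{P}}\interpret{\dot{\Gamma} \vdash \{ v : \tau \mid v =_{\tau} \mathbf{op} \}}$. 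I do not anticipate any real obstacle here: the only delicate point is tracing the definition of the diagonal interpretation of $=_{\tau}$ to confirm that the condition $a(\mathbf{op}) \comp v = \mathcal{A}\interpret{\mathbf{op}}(\gamma)$ is exactly what one obtains when reindexing $\mathcal{P}({=}_\tau)$ along the appropriate pairing.
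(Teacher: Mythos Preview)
Your proposal is correct and follows essentially the same approach as the paper: both reduce the claim to the semantic premise of \textsc{R-BasicOp} via Lemma~\ref{lem:R-BasicOp-sound}, and verify that premise by observing that after applying $a(\mathbf{op})$ to an element of the terminal object the result is always $a(\mathbf{op}) \comp {!}$, so the diagonal predicate $\mathcal{P}({=}_\tau)$ is trivially satisfied. The only stylistic difference is that the paper phrases the verification as a categorical reindexing calculation $(\identity{} \times a(\mathbf{op}))^{*} \tupling{\pi_2}{a(\mathbf{op}) \comp {!}}^{*} \mathcal{P}({=}) = \tupling{a(\mathbf{op}) \comp {!}}{a(\mathbf{op}) \comp {!}}^{*} \mathcal{P}({=}) = \top$, while you reason element-wise about global points; these are the same computation.
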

\begin{proof}
	By Lemma~\ref{lem:R-BasicOp-sound}, it suffices to prove
	\[ (\identity{} \times a(\mathbf{op})) : \mathcal{A}^{\mathcal{P}}\interpret{\dot{\Gamma}, v : 1} \dotTo \mathcal{A}^{\mathcal{P}}\interpret{\dot{\Gamma}, v : \{ v : \tau \mid v =_{\tau} \mathbf{op}() \}} \]
	This follows from the equation below.
	\begin{align}
		&(\identity{} \times a(\mathbf{op}))^{*} \mathcal{A}^{\mathcal{P}}\interpret{\underlying{\dot{\Gamma}}, v : \tau \vdash v =_{\tau} \mathbf{op}() } \\
		&= (\identity{} \times a(\mathbf{op}))^{*} \tupling{\pi_2}{a(\mathbf{op}) \comp {!}}^{*} \mathcal{P}({=}) \\
		&= \tupling{a(\mathbf{op}) \comp {!}}{a(\mathbf{op}) \comp {!}}^{*} \mathcal{P}({=}) \\
		&= \top \mathcal{A}\interpret{\underlying{\dot{\Gamma}}, v : 1}
	\end{align}
\end{proof}

\begin{lemma}\label{lem:tuple-context-formula}
	Let $\Gamma, x_1 : \sigma_1, x_2 : \sigma_2 \vdash \phi$ be a formula.
	\[ \associator^{*} \mathcal{A}^{\mathcal{P}}\interpret{\Gamma, (x_1, x_2) : \sigma_1 \times \sigma_2 \vdash \phi} = \mathcal{A}\interpret{\Gamma, x_1 : \sigma_1, x_2 : \sigma_2 \vdash \phi} \]
\end{lemma}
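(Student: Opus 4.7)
The plan is to prove this by induction on the structure of the well-formed formula $\Gamma, x_1 : \sigma_1, x_2 : \sigma_2 \vdash \phi$, mirroring the case analyses used in Lemma~\ref{lem:weakening-formula} and Lemma~\ref{lem:subst-formula}. The statement amounts to asserting that the interpretation of a formula is invariant (up to the associator isomorphism) under the syntactic sugar that collapses two adjacent context variables into a pair-binding. This is a structural compatibility lemma of the same flavour as weakening and substitution.

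First I would fix notation: on the left side, $\Gamma, (x_1, x_2) : \sigma_1 \times \sigma_2$ is the context $\Gamma, z : \sigma_1 \times \sigma_2$ together with the sugar convention $x_i \coloneqq \pi_i\, z$. Thus $\mathcal{A}\interpret{\Gamma, (x_1, x_2) : \sigma_1 \times \sigma_2} = \mathcal{A}\interpret{\Gamma} \times (\mathcal{A}\interpret{\sigma_1} \times \mathcal{A}\interpret{\sigma_2})$, while $\mathcal{A}\interpret{\Gamma, x_1 : \sigma_1, x_2 : \sigma_2} = (\mathcal{A}\interpret{\Gamma} \times \mathcal{A}\interpret{\sigma_1}) \times \mathcal{A}\interpret{\sigma_2}$, and these are related by the associator $\associator$. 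I would begin by proving the corresponding statement at the level of terms: for any well-typed $\Gamma, x_1 : \sigma_1, x_2 : \sigma_2 \vdash M : \tau$, the equation $\mathcal{A}\interpret{\Gamma, (x_1, x_2) : \sigma_1 \times \sigma_2 \vdash M : \tau} \circ \associator = \mathcal{A}\interpret{\Gamma, x_1 : \sigma_1, x_2 : \sigma_2 \vdash M : \tau}$ holds. This reduces, via the sugar unfolding $x_i \mapsto \pi_i\, z$, to a routine induction that uses the universal property of products and the definitions of projections.

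With the term-level statement in hand, I would then do the induction on $\phi$. For the atomic case $\mathbf{ap}(M)$, unfold the definition $\mathcal{A}^{\mathcal{P}}\interpret{\mathbf{ap}(M)} = (\mathcal{A}\interpret{M})^{*} \mathcal{P}(\mathbf{ap})$, then use functoriality of reindexing together with the term-level compatibility to rewrite $\associator^{*} (\mathcal{A}\interpret{M})^{*} \mathcal{P}(\mathbf{ap}) = (\mathcal{A}\interpret{M} \circ \associator)^{*} \mathcal{P}(\mathbf{ap})$ and then identify this with $\mathcal{A}^{\mathcal{P}}\interpret{\Gamma, x_1 : \sigma_1, x_2 : \sigma_2 \vdash \mathbf{ap}(M)}$. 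For $\top$ and $\bot$, use the fact that reindexing functors preserve fibred terminal and initial objects. For $\land, \lor, \implies$, invoke preservation of fibred bi-cc structure under reindexing, exactly as in the proof of Lemma~\ref{lem:weakening-formula}.

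The only nontrivial step is the term-level compatibility lemma, and even there the work is confined to the variable cases: if the variable being looked up is $x_1$ or $x_2$ (i.e.\ $\pi_i\, z$ in the sugared context), we must check that the resulting composite equals the appropriate projection chain, which is immediate from $\pi_i \circ \associator$ identities. All other syntactic constructors propagate the equation through their interpretations mechanically. I do not anticipate any substantive obstacle beyond carefully managing the associator/projection bookkeeping, so the proof is essentially bureaucratic and parallels the weakening/substitution arguments already developed in Lemmas~\ref{lem:weakening-formula} and~\ref{lem:subst-formula}.
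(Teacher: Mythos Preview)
Your approach is correct but takes a different route from the paper. You propose a fresh structural induction on $\phi$, supported by a term-level compatibility lemma that you would prove by a separate induction on $M$. The paper instead avoids any new induction entirely: it unfolds the sugar $(x_1,x_2):\sigma_1\times\sigma_2$ into a single variable $x:\sigma_1\times\sigma_2$ with $x_i \coloneqq \pi_i\,x$, then applies the already-established substitution lemma for formulas (Lemma~\ref{lem:subst-formula}) twice and the weakening lemma (Lemma~\ref{lem:weakening-formula}) once, reducing the claim to a single equation between composite reindexing morphisms and the associator, which is checked by elementary manipulation of projections. The paper's route is more economical precisely because Lemmas~\ref{lem:subst-formula} and~\ref{lem:weakening-formula} already package the inductions you would be redoing; your route is more self-contained and makes the structural parallel with those lemmas explicit, but duplicates their work. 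Either is acceptable for a lemma of this bookkeeping nature.
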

\begin{proof}
	By substitution and weakening.
	\begin{align}
		&\associator^{*} \mathcal{A}^{\mathcal{P}}\interpret{\Gamma, (x_1, x_2) : \sigma_1 \times \sigma_2 \vdash \phi} \\
		&= \associator^{*} \mathcal{A}^{\mathcal{P}}\interpret{\Gamma, x : \sigma_1 \times \sigma_2 \vdash \phi[\pi_1\ x/x_1, \pi_2\ x/x_2]} \\
		&= \associator^{*} \tupling{\identity{}}{\pi_1 \comp \pi_2}^{*} \tupling{\identity{}}{\pi_2 \comp \pi_2 \comp \pi_1}^{*} \mathcal{A}^{\mathcal{P}}\interpret{\Gamma, (x_1, x_2) : \sigma_1 \times \sigma_2, x_1 : \sigma_1, x_2 : \sigma_2 \vdash \phi} \\
		&= \associator^{*} \tupling{\identity{}}{\pi_1 \comp \pi_2}^{*} \tupling{\identity{}}{\pi_2 \comp \pi_2 \comp \pi_1}^{*} ((\pi_1 \times \identity{}) \times \identity{})^{*} \mathcal{A}^{\mathcal{P}}\interpret{\Gamma, x_1 : \sigma_1, x_2 : \sigma_2 \vdash \phi} \\
		&= \mathcal{A}^{\mathcal{P}}\interpret{\Gamma, x_1 : \sigma_1, x_2 : \sigma_2 \vdash \phi}
	\end{align}
\end{proof}

\begin{lemma}\label{lem:refine-tuple-interpret}
	Let $\sigma_1, \dots, \sigma_n \in B \cup \{ 1, \answertype \}$.
	\begin{align}
		&\mathcal{A}^{\mathcal{P}}\interpret{\dot{\Gamma} \vdash \{ (v_1, \dots, v_n) : \sigma_1 \times \dots \times \sigma_n \mid \phi \}} \\
		&= ((\mathcal{A}\interpret{\underlying{\dot{\Gamma}}},\quad \mathcal{A}\interpret{\sigma_1} \times ( \dots \times \mathcal{A}\interpret{\sigma_n})),\quad \mathcal{A}^{\mathcal{P}}\interpret{\dot{\Gamma}}, \\
		&\qquad \pi^{*} \mathcal{A}^{\mathcal{P}}\interpret{\dot{\Gamma}} \land \mathcal{A}^{\mathcal{P}}\interpret{\underlying{\dot{\Gamma}}, (v_1, (\dots, v_n)) : \sigma_1 \times (\dots \times \sigma_n) \vdash \phi})
	\end{align}
\end{lemma}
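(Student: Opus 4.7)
The plan is to proceed by induction on $n$. The base case $n = 1$ is immediate: $\{v_1 : \sigma_1 \mid \phi\}$ is a refinement of a base/unit/answer type, so the equation reduces directly to the defining clause $\mathcal{A}^{\mathcal{P}}\interpret{\dot{\Gamma} \vdash \{v : \sigma \mid \phi\}} = ((\mathcal{A}\interpret{\underlying{\dot{\Gamma}}}, \mathcal{A}\interpret{\sigma}), \mathcal{A}^{\mathcal{P}}\interpret{\dot{\Gamma}}, \pi_1^{*} \mathcal{A}^{\mathcal{P}}\interpret{\dot{\Gamma}} \land \mathcal{A}^{\mathcal{P}}\interpret{\underlying{\dot{\Gamma}}, v_1 : \sigma_1 \vdash \phi})$.

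For the inductive step I would unfold the syntactic sugar introduced just after the definition of refinement types, namely $\{(v_1, \dots, v_n) : \sigma_1 \times \dots \times \sigma_n \mid \phi\} \;=\; (v_1 : \sigma_1) \times \{(v_2, \dots, v_n) : \sigma_2 \times \dots \times \sigma_n \mid \phi\}$, so that the interpretation is obtained via the dependent coproduct clause $\mathcal{A}^{\mathcal{P}}\interpret{\dot{\Gamma} \vdash (v_1 : \sigma_1) \times \dot{\tau}} = \dot{\coprod}_{\mathcal{A}^{\mathcal{P}}\interpret{\dot{\Gamma} \vdash \sigma_1}} \mathcal{A}^{\mathcal{P}}\interpret{\dot{\Gamma}, v_1 : \sigma_1 \vdash \dot{\tau}}$. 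Applying the induction hypothesis to the right-hand factor yields an object whose fourth component is $\pi^{*} \mathcal{A}^{\mathcal{P}}\interpret{\dot{\Gamma}, v_1 : \sigma_1} \land \mathcal{A}^{\mathcal{P}}\interpret{\underlying{\dot{\Gamma}}, v_1 : \sigma_1, (v_2, (\dots, v_n)) : \sigma_2 \times (\dots \times \sigma_n) \vdash \phi}$, and then the description of $\dot{\coprod}$ reindexes this along $\associator^{-1}$.

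The main calculation is then to simplify $(\associator^{-1})^{*}$ of the resulting conjunction. Using that reindexing preserves fibred meets, I split into two pieces. For the $\phi$-piece I apply Lemma~\ref{lem:tuple-context-formula}, which gives $\mathcal{A}^{\mathcal{P}}\interpret{\underlying{\dot{\Gamma}}, v_1 : \sigma_1, (v_2, (\dots, v_n)) : \sigma_2 \times (\dots \times \sigma_n) \vdash \phi} = \associator^{*} \mathcal{A}^{\mathcal{P}}\interpret{\underlying{\dot{\Gamma}}, (v_1, (v_2, (\dots, v_n))) : \sigma_1 \times (\sigma_2 \times (\dots \times \sigma_n)) \vdash \phi}$, so that $(\associator^{-1})^{*} \associator^{*}$ cancels. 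For the context-piece, I use that $\mathcal{A}^{\mathcal{P}}\interpret{\dot{\Gamma}, v_1 : \sigma_1}$ equals $\pi_1^{*}\mathcal{A}^{\mathcal{P}}\interpret{\dot{\Gamma}}$ (because $\sigma_1$ abbreviates $\{v_1 : \sigma_1 \mid \top\}$), and that $\pi \circ \associator^{-1}$ coincides with the projection to $\mathcal{A}\interpret{\underlying{\dot{\Gamma}}}$, so the whole first conjunct collapses to $\pi^{*}\mathcal{A}^{\mathcal{P}}\interpret{\dot{\Gamma}}$, matching the target form.

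The step that will require the most care is this last bookkeeping: the various $\pi$'s, $\associator$'s and $\pi_1$'s are all distinct morphisms living in slightly different product arrangements, and one has to keep track of domain/codomain types precisely in order to invoke functoriality of reindexing and pullback-stability of $\land$. Once that is done, no new ideas are needed; the rest is routine composition of the inductive hypothesis with the clauses defining $\dot{\coprod}$ and the interpretation of refinement contexts.
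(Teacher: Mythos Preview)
Your proposal is correct and follows essentially the same argument as the paper: induction on $n$, unfolding the sugar via $\dot{\coprod}$, applying the induction hypothesis, then simplifying $(\associator^{-1})^{*}$ of the two conjuncts using Lemma~\ref{lem:tuple-context-formula} for the $\phi$-piece and the observation $\mathcal{A}^{\mathcal{P}}\interpret{\dot{\Gamma}, v_1 : \sigma_1} = \pi^{*}\mathcal{A}^{\mathcal{P}}\interpret{\dot{\Gamma}}$ for the context-piece. The bookkeeping you flag as delicate is exactly the content of the paper's computation, and no further ideas are needed.
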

\begin{proof}
	By induction on $n$.
	The base case is trivial.
	If $n > 1$,
	\begin{align}
		&\mathcal{A}^{\mathcal{P}}\interpret{\dot{\Gamma} \vdash \{ (v_1, \dots, v_n) : \sigma_1 \times \dots \times \sigma_n \mid \phi \}} \\
		&= \coprod_{\mathcal{A}^{\mathcal{P}}\interpret{\dot{\Gamma} \vdash \{ v_1 : \sigma_1 \mid \top \}}} \mathcal{A}^{\mathcal{P}}\interpret{\dot{\Gamma}, v_1 : \sigma_1 \vdash \{ (v_2, \dots, v_n) : \sigma_2 \times \dots \times \sigma_n \mid \phi \}} \\
		&= \coprod_{\mathcal{A}^{\mathcal{P}}\interpret{\dot{\Gamma} \vdash \{ v_1 : \sigma_1 \mid \top \}}} ((\mathcal{A}\interpret{\underlying{\dot{\Gamma}}, v_1 : \sigma_1}, \mathcal{A}\interpret{\sigma_2} \times ( \dots \times \mathcal{A}\interpret{\sigma_n})), \mathcal{A}^{\mathcal{P}}\interpret{\dot{\Gamma}, v_1 : \sigma_1}, \\
		&\qquad\qquad \pi^{*} \mathcal{A}^{\mathcal{P}}\interpret{\dot{\Gamma}, v_1 : \sigma_1} \land \mathcal{A}^{\mathcal{P}}\interpret{\underlying{\dot{\Gamma}}, v_1 : \sigma_1, (v_2, (\dots, v_n)) : \sigma_2 \times (\dots \times \sigma_n) \vdash \phi}) \\
		&= ((\mathcal{A}\interpret{\underlying{\dot{\Gamma}}}, \mathcal{A}\interpret{b_1} \times ( \dots \times \mathcal{A}\interpret{b_n})), \mathcal{A}^{\mathcal{P}}\interpret{\dot{\Gamma}}, \\
		&\qquad (\associator^{-1})^{*}(\pi^{*} \mathcal{A}^{\mathcal{P}}\interpret{\dot{\Gamma}, v_1 : b_1} \land \mathcal{A}^{\mathcal{P}}\interpret{\underlying{\dot{\Gamma}}, v_1 : b_1, (v_2, (\dots, v_n)) : b_2 \times (\dots \times b_n) \vdash \phi}))
	\end{align}
	We have
	\[ (\associator^{-1})^{*}\pi^{*} \mathcal{A}^{\mathcal{P}}\interpret{\dot{\Gamma}, v_1 : b_1} = (\associator^{-1})^{*}\pi^{*} \pi^{*} \mathcal{A}^{\mathcal{P}}\interpret{\dot{\Gamma}} = \pi^{*} \mathcal{A}^{\mathcal{P}}\interpret{\dot{\Gamma}} \]
	We also have
	\begin{align}
		&(\associator^{-1})^{*} \mathcal{A}^{\mathcal{P}}\interpret{\underlying{\dot{\Gamma}}, v_1 : b_1, (v_2, (\dots, v_n)) : b_2 \times (\dots \times b_n) \vdash \phi} \\
		&= (\associator^{-1})^{*} \mathcal{A}^{\mathcal{P}}\interpret{\underlying{\dot{\Gamma}}, v_1 : b_1, v' : b_2 \times (\dots \times b_n) \vdash \phi[\dots]} \\
		&= (\associator^{-1})^{*} \associator^{*} \mathcal{A}^{\mathcal{P}}\interpret{\underlying{\dot{\Gamma}}, v : b_1 \times (\dots \times b_n) \vdash \phi[\dots][\pi_1\ v/v_1, \pi_2\ v/v']} \\
		&= \mathcal{A}^{\mathcal{P}}\interpret{\underlying{\dot{\Gamma}}, (v_1, (\dots, v_n)) : b_1 \times (\dots \times b_n) \vdash \phi}
	\end{align}
	by Lemma~\ref{lem:tuple-context-formula}.
\end{proof}

\begin{corollary}
	\textsc{R-BasicSimp} is sound.
\end{corollary}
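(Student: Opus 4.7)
The plan is to reduce this to Lemma~\ref{lem:R-BasicOp-sound} (soundness of \textsc{R-BasicOp}) by verifying the semantic side condition. That is, taking $\dot{\sigma} = \{ (v_1, \dots, v_n) : \sigma_1 \times \dots \times \sigma_n \mid \phi[\mathbf{op}(v_1, \dots, v_n)/v] \}$ and $\dot{\tau} = \{ v : \tau \mid \phi \}$, it suffices to exhibit the vertical morphism
\[
(\identity{} \times a(\mathbf{op})) : \mathcal{A}^{\mathcal{P}}\interpret{\dot{\Gamma}, v : \dot{\sigma}} \dotTo \mathcal{A}^{\mathcal{P}}\interpret{\dot{\Gamma}, v : \dot{\tau}}
\]
in $\category{P}$. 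Once this is in place, combining with the IH for $M$ and \textsc{R-BasicOp} yields the result.

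The key computation is to unfold both sides using Lemma~\ref{lem:refine-tuple-interpret}. On the source, applying Lemma~\ref{lem:refine-tuple-interpret} to $\dot{\sigma}$ shows that the refining predicate is, up to the conjunction with $\pi^{*}\mathcal{A}^{\mathcal{P}}\interpret{\dot{\Gamma}}$, equal to $\mathcal{A}^{\mathcal{P}}\interpret{\underlying{\dot{\Gamma}}, (v_1, (\dots, v_n)) : \sigma_1 \times (\dots \times \sigma_n) \vdash \phi[\mathbf{op}(v_1, \dots, v_n)/v]}$. On the target, the refining predicate is just $\mathcal{A}^{\mathcal{P}}\interpret{\underlying{\dot{\Gamma}}, v : \tau \vdash \phi}$, again conjoined with $\pi^{*}\mathcal{A}^{\mathcal{P}}\interpret{\dot{\Gamma}}$. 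The context components match trivially under $\identity{} \times a(\mathbf{op})$ because the first factor is projected away; so the real content is comparing the two refining predicates under the reindexing $(\identity{} \times a(\mathbf{op}))^{*}$.

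For this comparison I would apply the substitution lemma (Lemma~\ref{lem:subst-formula}) to rewrite
\[
\mathcal{A}^{\mathcal{P}}\interpret{\underlying{\dot{\Gamma}}, v_1 : \sigma_1, \dots, v_n : \sigma_n \vdash \phi[\mathbf{op}(v_1, \dots, v_n)/v]} = \mathrm{subst}_{\mathbf{op}(v_1, \dots, v_n); \cdot}^{*} \mathcal{A}^{\mathcal{P}}\interpret{\underlying{\dot{\Gamma}}, v_1 : \sigma_1, \dots, v_n : \sigma_n, v : \tau \vdash \phi},
\]
and then observe that $\mathcal{A}\interpret{\mathbf{op}(v_1, \dots, v_n)} = a(\mathbf{op}) \comp \pi_2$ in the appropriate context, so $\mathrm{subst}_{\mathbf{op}(v_1, \dots, v_n); \cdot}$ factors through $(\identity{} \times a(\mathbf{op}))$. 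Combining with a weakening step (Lemma~\ref{lem:weakening-formula}) to turn the right-hand side into $(\identity{} \times a(\mathbf{op}))^{*} \mathcal{A}^{\mathcal{P}}\interpret{\underlying{\dot{\Gamma}}, v : \tau \vdash \phi}$, the two refining predicates are shown to coincide (not merely to be contained), which yields the desired vertical morphism.

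The main obstacle I expect is bookkeeping the tuple reassociation: the refinement type $\{ (v_1, \dots, v_n) : \sigma_1 \times \dots \times \sigma_n \mid \dots \}$ is defined by iterated $\Sigma$-types, whereas $\mathbf{op}(v_1, \dots, v_n)$ treats $(v_1, \dots, v_n)$ as a single flat tuple. Translating between these two views requires the associator isomorphism, which is precisely the bookkeeping already handled in Lemma~\ref{lem:refine-tuple-interpret} and Lemma~\ref{lem:tuple-context-formula}. Once those are invoked, the remainder of the proof is a routine diagram chase in $\category{P}$.
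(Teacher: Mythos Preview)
Your proposal is correct and follows essentially the same route as the paper: reduce to \textsc{R-BasicOp} via Lemma~\ref{lem:R-BasicOp-sound}, unfold both sides using Lemma~\ref{lem:refine-tuple-interpret}, and then identify the refining predicates by applying the substitution and weakening lemmas (Lemmas~\ref{lem:subst-formula} and~\ref{lem:weakening-formula}) to rewrite $\phi[\mathbf{op}(u)/v]$ as $(\identity{} \times a(\mathbf{op}))^{*}$ applied to the interpretation of $\phi$. Your remark about the associator bookkeeping being absorbed by Lemma~\ref{lem:refine-tuple-interpret} is also exactly how the paper handles it.
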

\begin{proof}
	By Lemma~\ref{lem:R-BasicOp-sound}, it suffices to prove
	\[ (\identity{} \times a(\mathbf{op})) : \mathcal{A}^{\mathcal{P}}\interpret{\dot{\Gamma}, v : \{ (v_1, \dots, v_n) : \sigma_1 \times (\dots \times \sigma_n) \mid \phi[\mathbf{op}(v_1, \dots, v_n)/v] \}} \dotTo \mathcal{A}^{\mathcal{P}}\interpret{\dot{\Gamma}, v : \{ v : \tau \mid \phi \}} \]
	By Lemma~\ref{lem:refine-tuple-interpret},
	\begin{align}
		&\mathcal{A}^{\mathcal{P}}\interpret{\dot{\Gamma}, v : \{ (v_1, \dots, v_n) : \answertype^n \mid \phi[\mathbf{op}(v_1, \dots, v_n)/v] \}} \\
		&= \pi^{*} \mathcal{A}^{\mathcal{P}}\interpret{\dot{\Gamma}} \land \mathcal{A}^{\mathcal{P}}\interpret{\underlying{\dot{\Gamma}}, (v_1, (\dots, v_n)) : \sigma_1 \times (\dots \times \sigma_n) \vdash \phi[\mathbf{op}(v_1, \dots, v_n)/v]} \\
		&= \pi^{*} \mathcal{A}^{\mathcal{P}}\interpret{\dot{\Gamma}} \land \mathcal{A}^{\mathcal{P}}\interpret{\underlying{\dot{\Gamma}}, u : \sigma_1 \times (\dots \times \sigma_n) \vdash \phi[\mathbf{op}(u)/v]} \\
		&\mathcal{A}^{\mathcal{P}}\interpret{\dot{\Gamma}, v : \{ v : \answertype \mid \phi \}} \\
		&= \pi^{*} \mathcal{A}^{\mathcal{P}}\interpret{\dot{\Gamma}} \land \mathcal{A}^{\mathcal{P}}\interpret{\underlying{\dot{\Gamma}}, v : \tau \vdash \phi}
	\end{align}
	By Lemma~\ref{lem:weakening-formula},\ref{lem:subst-formula},
	\begin{align}
		&\mathcal{A}^{\mathcal{P}}\interpret{\underlying{\dot{\Gamma}}, u : \sigma_1 \times (\dots \times \sigma_n) \vdash \phi[\mathbf{op}(u)/v]} \\
		&= \tupling{\identity{}}{a(\mathbf{op}) \comp \pi_2}^{*} (\pi_1 \times \identity{})^{*} \mathcal{A}^{\mathcal{P}}\interpret{\underlying{\dot{\Gamma}}, v : \tau \vdash \phi} \\
		&= (\identity{} \times a(\mathbf{op}))^{*} \mathcal{A}^{\mathcal{P}}\interpret{\underlying{\dot{\Gamma}}, v : \tau \vdash \phi}
	\end{align}
\end{proof}

\begin{corollary}\label{cor:predicate-sound}
	\text{R-BasicBool} is sound
\end{corollary}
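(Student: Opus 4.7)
The plan is to reduce the proof to the already-established Lemma~\ref{lem:R-BasicOp-sound} for \textsc{R-BasicOp}. That is, I would show the semantic premise of \textsc{R-BasicOp} holds: namely, that the horizontal map $\identity{} \times a(\mathbf{op})$ lifts to a vertical morphism
\[ \mathcal{A}^{\mathcal{P}}\interpret{\dot{\Gamma}, v : \dot{\sigma}_M} \dotTo \mathcal{A}^{\mathcal{P}}\interpret{\dot{\Gamma}, v : \{v : 1 \mid \phi_t\} + \{v : 1 \mid \phi_f\}} \]
in $\refinetotal$, where $\dot{\sigma}_M$ denotes the refined tuple type appearing in the hypothesis on $M$. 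Once this is in place, \textsc{R-BasicBool} becomes a direct instance of \textsc{R-BasicOp} combined with the typing of $M$.

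The main preparatory step is to unpack both sides explicitly. On the domain side, Lemma~\ref{lem:refine-tuple-interpret} rewrites the second component of $\mathcal{A}^{\mathcal{P}}\interpret{\dot{\Gamma}, v : \dot{\sigma}_M}$ as
\[ \pi^{*} \mathcal{A}^{\mathcal{P}}\interpret{\dot{\Gamma}} \land (\mathcal{A}^{\mathcal{P}}\interpret{\psi_t} \land \mathcal{A}^{\mathcal{P}}\interpret{\phi_t[()/v]} \lor \mathcal{A}^{\mathcal{P}}\interpret{\psi_f} \land \mathcal{A}^{\mathcal{P}}\interpret{\phi_f[()/v]}) , \]
pulled back along the tupling that identifies $\underline{v}$ with $(v_1,\ldots,v_n)$. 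On the codomain side, the fibred-binary-coproduct clause $\dotPlus$ in $\refinetotal$ gives
\[ \pi^{*} \mathcal{A}^{\mathcal{P}}\interpret{\dot{\Gamma}} \land \big(\{(\identity{} \times \iota_1)\comp f \mid f \in \mathcal{A}^{\mathcal{P}}\interpret{\phi_t}\} \cup \{(\identity{} \times \iota_2)\comp g \mid g \in \mathcal{A}^{\mathcal{P}}\interpret{\phi_f}\}\big). \]

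The heart of the argument is then a simple case split. Given any point $\tupling{\gamma}{(v_1,\ldots,v_n)}$ in the domain, the disjunctive hypothesis places it in either the $\psi_t$-branch or the $\psi_f$-branch. The semantic side conditions $\mathcal{A}^{\mathcal{P}}\interpret{\psi_t} \subseteq \pi_2^{*} a(\mathbf{op})^{*}\{\iota_1\}$ and the dual one for $\psi_f$ then force $a(\mathbf{op})(v_1,\ldots,v_n)$ to equal $\iota_1\comp {!}$ in the first case and $\iota_2 \comp {!}$ in the second. Combining with the corresponding $\phi_t[()/v]$ or $\phi_f[()/v]$ clause, the image lands in the first or second summand of the displayed coproduct, as required.

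The main obstacle is purely bookkeeping: matching the concrete tuple encoding produced by Lemma~\ref{lem:refine-tuple-interpret} (including the $\associator$-shuffles implicit in decomposing $v : \sigma_1 \times \cdots \times \sigma_n$ into $v_1,\ldots,v_n$), and identifying $\mathcal{A}^{\mathcal{P}}\interpret{\phi_t[()/v]}$ with the pullback of $\mathcal{A}^{\mathcal{P}}\interpret{\phi_t}$ along the obvious map to $1$ using Lemma~\ref{lem:subst-formula}. Once the two sides are normalised into the form above, the case split is immediate, so I do not expect any genuine difficulty beyond the unfolding.
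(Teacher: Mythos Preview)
Your proposal is correct and follows essentially the same route as the paper: reduce to \textsc{R-BasicOp} via Lemma~\ref{lem:R-BasicOp-sound}, unfold the domain using Lemma~\ref{lem:refine-tuple-interpret} and the codomain using the $\dotPlus$ clause, then use the semantic side conditions on $\psi_t,\psi_f$ to push each disjunct into the appropriate summand. The paper phrases the final step algebraically---computing $(\identity{}\times a(\mathbf{op}))^{*}$ of the codomain and exhibiting an inequality chain---whereas you describe it as a pointwise case split, but these are the same argument.
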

\begin{proof}
	By Lemma~\ref{lem:R-BasicOp-sound}, it suffices to prove
	\begin{align}
		(\identity{} \times a(\mathbf{op})) &: \mathcal{A}^{\mathcal{P}}\interpret{\dot{\Gamma}, v : \{ (v_1, \dots, v_n) : \sigma_1 \times \dots \times \sigma_n \mid \psi_t \land \phi_t[()/v] \lor \psi_f \land \phi_f[()/v] \}} \\
		&\qquad\dotTo \mathcal{A}^{\mathcal{P}}\interpret{\dot{\Gamma}, v : \{ v : 1 \mid \phi_t \} + \{ v : 1 \mid \phi_f \}}
	\end{align}
	By definition,
	\begin{align}
		&\mathcal{A}^{\mathcal{P}}\interpret{\dot{\Gamma}, v : \{ v : 1 \mid \phi_t \}} \\
		&= \pi^{*} \mathcal{A}^{\mathcal{P}}\interpret{\dot{\Gamma}} \land \mathcal{A}^{\mathcal{P}}\interpret{\underlying{\dot{\Gamma}},  v : 1 \vdash \phi_t } \\
		&= \pi^{*} \mathcal{A}^{\mathcal{P}}\interpret{\dot{\Gamma}} \land \pi^{*} \tupling{\identity{}}{{!}}^{*} \mathcal{A}^{\mathcal{P}}\interpret{\underlying{\dot{\Gamma}},  v : 1 \vdash \phi_t } \\
		&= \pi^{*} (\mathcal{A}^{\mathcal{P}}\interpret{\dot{\Gamma}} \land \mathcal{A}^{\mathcal{P}}\interpret{\underlying{\dot{\Gamma}} \vdash \phi_t[()/v] })
	\end{align}
	\begin{align}
		&\mathcal{A}^{\mathcal{P}}\interpret{\dot{\Gamma}, v : \{ v : 1 \mid \phi_t \} + \{ v : 1 \mid \phi_f \}} \\
		&= \{ \tupling{\gamma}{\iota_1} \mid \gamma \in \mathcal{A}^{\mathcal{P}}\interpret{\dot{\Gamma}} \cap \mathcal{A}^{\mathcal{P}}\interpret{\underlying{\dot{\Gamma}} \vdash \phi_t[()/v] } \}  \cup \{ \tupling{\gamma}{\iota_2} \mid \gamma \in \mathcal{A}^{\mathcal{P}}\interpret{\dot{\Gamma}} \cap \mathcal{A}^{\mathcal{P}}\interpret{\underlying{\dot{\Gamma}} \vdash \phi_f[()/v] } \} \\
		&= (\mathcal{A}^{\mathcal{P}}\interpret{\dot{\Gamma}} \land \mathcal{A}^{\mathcal{P}}\interpret{\underlying{\dot{\Gamma}} \vdash \phi_t[()/v] }) \dotTimes \{ \iota_1 \} \lor (\mathcal{A}^{\mathcal{P}}\interpret{\dot{\Gamma}} \land \mathcal{A}^{\mathcal{P}}\interpret{\underlying{\dot{\Gamma}} \vdash \phi_f[()/v] }) \dotTimes \{ \iota_2 \}
	\end{align}
	\begin{align}
		&(\identity{} \times a(\mathbf{op}))^{*} \mathcal{A}^{\mathcal{P}}\interpret{\dot{\Gamma}, v : \{ v : 1 \mid \phi_t \} + \{ v : 1 \mid \phi_f \}} \\
		&= (\mathcal{A}^{\mathcal{P}}\interpret{\dot{\Gamma}} \land \mathcal{A}^{\mathcal{P}}\interpret{\underlying{\dot{\Gamma}} \vdash \phi_t[()/v] }) \dotTimes (a(\mathbf{op}))^{*} \{ \iota_1 \} \\
		&\qquad \lor (\mathcal{A}^{\mathcal{P}}\interpret{\dot{\Gamma}} \land \mathcal{A}^{\mathcal{P}}\interpret{\underlying{\dot{\Gamma}} \vdash \phi_f[()/v] }) \dotTimes (a(\mathbf{op}))^{*}\{ \iota_2 \} \\
		&= \pi_1^{*} (\mathcal{A}^{\mathcal{P}}\interpret{\dot{\Gamma}} \land \mathcal{A}^{\mathcal{P}}\interpret{\underlying{\dot{\Gamma}} \vdash \phi_t[()/v] }) \land ({!} \times \identity{})^{*} \pi_2^{*} (a(\mathbf{op}))^{*} \{ \iota_1 \} \\
		&\qquad \lor \pi_1^{*} (\mathcal{A}^{\mathcal{P}}\interpret{\dot{\Gamma}} \land \mathcal{A}^{\mathcal{P}}\interpret{\underlying{\dot{\Gamma}} \vdash \phi_f[()/v] }) \land ({!} \times \identity{})^{*} \pi_2^{*} (a(\mathbf{op}))^{*}\{ \iota_2 \} \\
		&\ge \pi_1^{*} \mathcal{A}^{\mathcal{P}}\interpret{\dot{\Gamma}} \land (\pi_1^{*} \mathcal{A}^{\mathcal{P}}\interpret{\underlying{\dot{\Gamma}} \vdash \phi_t[()/v]} \land ({!} \times \identity{})^{*} \mathcal{A}^{\mathcal{P}}\interpret{(v_1, \dots, v_n) : \sigma_1 \times \dots \times \sigma_n \vdash \psi_t} \\
		&\qquad \lor \pi_1^{*} \mathcal{A}^{\mathcal{P}}\interpret{\underlying{\dot{\Gamma}} \vdash \phi_f[()/v]} \land ({!} \times \identity{})^{*} \mathcal{A}^{\mathcal{P}}\interpret{(v_1, \dots, v_n) : \sigma_1 \times \dots \times \sigma_n \vdash \psi_f}) \\
		&= \pi_1^{*} \mathcal{A}^{\mathcal{P}}\interpret{\dot{\Gamma}} \land \mathcal{A}^{\mathcal{P}}\interpret{\underlying{\dot{\Gamma}}, (v_1, \dots, v_n) : \sigma_1 \times \dots \times \sigma_n \vdash \phi_t[()/v] \land \psi_t \lor \phi_f[()/v] \land \psi_f} \\
		&= \mathcal{A}^{\mathcal{P}}\interpret{\dot{\Gamma}, v : \{ (v_1, \dots, v_n) : \sigma_1 \times \dots \times \sigma_n \mid \psi_t \land \phi_t[()/v] \lor \psi_f \land \phi_f[()/v] \}}
	\end{align}
\end{proof}

\begin{lemma}
	\textsc{R-Unif} is sound in the $\lambdaHFL$-model for expected costs and weakest pre-expectations.
\end{lemma}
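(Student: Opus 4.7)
The plan is to reduce \textsc{R-Unif} to the general rule \textsc{R-BasicOp} for the basic operator $\mathbf{unif} : (\mathbf{real} \to \answertype) \rightarrowtriangle \answertype$, whose soundness is Lemma~\ref{lem:R-BasicOp-sound}. The only nontrivial content is therefore to verify the semantic side condition of \textsc{R-BasicOp}, namely that for any $\gamma \in \interpret{\dot{\Gamma}, v : (x : \{x : \mathbf{real} \mid 0 \le x \land x \le 1\}) \to \{v : \answertype \mid v \le N\,x\}}_2$, one has $\gamma[v \mapsto \interpret{\mathbf{unif}}(\gamma(v))] \in \interpret{\dot{\Gamma}, v : \{v : \answertype \mid v \le \mathbf{unif}(N)\}}_2$.

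First I would unfold the interpretation of the refinement type of the premise: membership in $\interpret{\dot{\Gamma}, v : (x : \{x : \mathbf{real} \mid 0 \le x \land x \le 1\}) \to \{v : \answertype \mid v \le N\,x\}}_2$ says exactly that for every $x \in [0,1]$ we have $\gamma(v)(x) \le \interpret{N}(\gamma)(x)$ in $\interpret{\answertype} = ([0,\infty], \le)$. Here I use the reading of the semantics of dependent function refinements spelled out in the definition of $\dot{\prod}$ and in the interpretation of refinement types in Appendix~\ref{sec:detail-refinement}.

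Next I would invoke monotonicity of the Lebesgue integral (equivalently, of the expected value under the uniform distribution on $[0,1]$): if $f, g : [0,1] \to [0,\infty]$ are measurable (in the $\omegaQBS$ sense, which holds automatically because $\gamma(v)$ and $\interpret{N}(\gamma)$ are morphisms in $\omegaQBS$) and $f(x) \le g(x)$ pointwise on $[0,1]$, then
\[ \int_0^1 f(x)\, \mathrm{d}x \;\le\; \int_0^1 g(x)\, \mathrm{d}x. \]
Specialising to $f = \gamma(v)$ and $g = \interpret{N}(\gamma)$ yields $\interpret{\mathbf{unif}}(\gamma(v)) \le \interpret{\mathbf{unif}}(\interpret{N}(\gamma)) = \interpret{\mathbf{unif}(N)}(\gamma)$, which is exactly the required membership. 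Composing with the judgement obtained from the IH on $M$ and applying \textsc{R-BasicOp} then delivers the conclusion.

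I do not expect any serious obstacle: the proof is just monotonicity of integration plus unfolding of definitions. The only care point is that the argument of $\mathbf{unif}$ has \emph{function type}, so we cannot appeal to the specialised rules in Fig.~\ref{fig:basic-operator-typing}; we must go through the general rule, and the monotonicity must be phrased in $\omegaQBS$ rather than in raw measure theory. This is straightforward because the underlying map of any $\omegaQBS$-morphism $[0,1] \to [0,\infty]$ is measurable, so the classical monotonicity of Lebesgue integration directly applies.
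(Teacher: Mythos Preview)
Your proposal is correct and follows essentially the same approach as the paper's own proof: both reduce \textsc{R-Unif} to \textsc{R-BasicOp} via Lemma~\ref{lem:R-BasicOp-sound}, unfold the interpretation of the dependent function refinement type on the argument, and then invoke monotonicity of $a(\mathbf{unif})$ (i.e.\ of the integral $\int_0^1 ({-})\,\mathrm{d}x$) to obtain the required semantic side condition. Your additional remark about measurability in $\omegaQBS$ is a harmless elaboration the paper leaves implicit.
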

\begin{proof}
	By Lemma~\ref{lem:R-BasicOp-sound}, it suffices to prove
	\begin{align}
		&(\identity{} \times a(\mathbf{unif})) \\
		&: \mathcal{A}^{\xi, \mathcal{P}}\interpret{\dot{\Gamma}, v : (x : \{ x : \mathbf{real} \mid 0 \le x \land x \le 1 \}) \to \{ v : \answertype \mid v \le N\ x \}} \\
		&\qquad\dotTo \mathcal{A}^{\xi, \mathcal{P}}\interpret{\dot{\Gamma}, v : \{ v : \answertype \mid v \le \mathbf{unif}(N) \}}
	\end{align}
	By definition,
	\begin{align}
		&\mathcal{A}^{\xi, \mathcal{P}}\interpret{\dot{\Gamma}, v : (x : \{ x : \mathbf{real} \mid 0 \le x \land x \le 1 \}) \to \{ v : \answertype \mid v \le N\ x \}} \\
		&= \{ \prod_{\mathcal{A}^{\xi, \mathcal{P}}\interpret{\dot{\Gamma} \vdash \{ x : \mathbf{real} \mid 0 \le x \land x \le 1 \}}} \mathcal{A}^{\xi, \mathcal{P}}\interpret{\dot{\Gamma}, x : \{ x : \mathbf{real} \mid 0 \le x \land x \le 1 \} \vdash \{ v : \answertype \mid v \le N\ x \}} \} \\
		&= \{ \tupling{\gamma}{f} : 1 \to \mathcal{A}^{\xi}\interpret{\underlying{\dot{\Gamma}}} \times (\exponential{\mathbb{R}}{\Omega}) \mid \gamma \in \mathcal{A}^{\xi, \mathcal{P}}\interpret{\dot{\Gamma}} \land \\
		&\qquad\forall x : 1 \to \mathbb{R}, 0 \le x \le 1 \implies \eval \comp \tupling{f}{x} \le \eval \comp \tupling{\mathcal{A}^{\xi}\interpret{N} \comp \gamma}{x} \}
	\end{align}
	Recall that $a(\mathbf{unif}) : \exponential{\mathbb{R}}{\answerobj} \to \answerobj$ is defined by $a(\mathbf{unif})(f) = \int_0^1 f(x)\, \mathrm{d} x$.
	By monotonicity of $a(\mathbf{unif})$, if
	\[ \tupling{\gamma}{f} \in \mathcal{A}^{\xi, \mathcal{P}}\interpret{\dot{\Gamma}, v : (x : \{ x : \mathbf{real} \mid 0 \le x \land x \le 1 \}) \to \{ v : \answertype \mid v \le N\ x \}} \]
	then
	\[ (\identity{} \times a(\mathbf{unif})) \comp \tupling{\gamma}{f} \in \{ \tupling{\gamma}{y} \mid \gamma \in \mathcal{A}^{\xi, \mathcal{P}}\interpret{\dot{\Gamma}} \land y \le a(\mathbf{unif}) \comp \mathcal{A}^{\xi}\interpret{N} \comp \gamma \} \]
\end{proof}

\subsection{Proofs for Rules for Admissibility}

\begin{lemma}\label{lem:atomic-chain-closed}
	If $\interpret{a}$ is chain-closed, then $\interpret{a(M)}$ is chain-closed at any variable.
\end{lemma}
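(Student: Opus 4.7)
The plan is to reduce the statement to the Scott-continuity of term interpretations and the chain-closedness hypothesis on $\interpret{a}$. First, I would unfold the definitions. Recall that $\interpret{\Gamma \vdash a(M)} = \{ \gamma \in \interpret{\Gamma} \mid \interpret{M}(\gamma) \in \interpret{a} \}$ and that for a variable $(x : \sigma) \in \Gamma$, a formula is chain-closed at $x$ when for every $\gamma \in \interpret{\Gamma}$ and every $\omega$-chain $v_0 \le v_1 \le \dots$ in $\interpret{\sigma}$ with $\gamma[x \mapsto v_n] \in \interpret{\Gamma \vdash a(M)}$ for all $n$, we also have $\gamma[x \mapsto \sup_n v_n] \in \interpret{\Gamma \vdash a(M)}$.

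Next, fix an arbitrary variable $(x : \sigma) \in \Gamma$, a base environment $\gamma$, and an $\omega$-chain $\{ v_n \}_n$ as above. The key step is to observe that the map
\[ h : \interpret{\sigma} \to \interpret{\mathrm{ar}(a)}, \qquad h(v) \coloneqq \interpret{\Gamma \vdash M : \mathrm{ar}(a)}(\gamma[x \mapsto v]) \]
is Scott-continuous. This follows from the Scott-continuity of $\interpret{M}$ (a general property of term interpretations in any $\lambdaHFL$-model) combined with the Scott-continuity of the substitution map $v \mapsto \gamma[x \mapsto v]$, which is just a product pairing of identities and constants. Hence $h(\sup_n v_n) = \sup_n h(v_n)$.

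By assumption, $h(v_n) = \interpret{M}(\gamma[x \mapsto v_n]) \in \interpret{a}$ for every $n$. Since $\interpret{a}$ is chain-closed, $\sup_n h(v_n) \in \interpret{a}$, and by the previous paragraph this equals $h(\sup_n v_n) = \interpret{M}(\gamma[x \mapsto \sup_n v_n])$. Therefore $\gamma[x \mapsto \sup_n v_n] \in \interpret{\Gamma \vdash a(M)}$, which establishes chain-closedness at $x$. Since $x$ was arbitrary, the conclusion follows.

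The main obstacle I anticipate is purely a bookkeeping one, namely making precise the Scott-continuity of $h$ in the general categorical setting of $\lambdaHFL$-models rather than just in $\omegaCPO$. Once one unpacks that $\interpret{\Gamma \vdash M}$ is a morphism in the $\omegaCPO$-enriched base category $\category{C}$ and that the pairing used in forming $\gamma[x \mapsto -]$ is itself $\omegaCPO$-enriched, preservation of $\omega$-suprema is automatic. No other subtlety arises: chain-closedness (unlike full admissibility) does not require the bottom-element condition, so we do not have to verify anything analogous to $\bot \in \interpret{a(M)}$.
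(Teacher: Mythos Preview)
Your proposal is correct and is precisely the argument the paper has in mind; the paper's own proof consists of the single word ``Obvious.'' You have simply unfolded that obviousness: Scott-continuity of $\interpret{M}$ (hence of $v \mapsto \interpret{M}(\gamma[x \mapsto v])$) turns the $\omega$-chain $\{v_n\}$ into an $\omega$-chain in $\interpret{\mathrm{ar}(a)}$ whose supremum is $\interpret{M}(\gamma[x \mapsto \sup_n v_n])$, and chain-closedness of $\interpret{a}$ finishes the job.
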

\begin{proof}
	Obvious.
\end{proof}

\begin{lemma}\label{lem:le-chain-closed}
	$\interpret{\le_{\answertype}}$ is chain-closed.
\end{lemma}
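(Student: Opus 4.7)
The plan is to unfold the definition of chain-closedness for the set $\interpret{\le_{\answertype}} \subseteq \interpret{\answertype \times \answertype} = \interpret{\answertype} \times \interpret{\answertype}$ and reduce the claim to a standard fact about suprema in $\omega$cpos. Recall that $\interpret{\le_{\answertype}} = \{ (x, y) \mid x \le_{\interpret{\answertype}} y \}$ and that the partial order on the product $\omega$cpo $\interpret{\answertype} \times \interpret{\answertype}$ is componentwise, so suprema of $\omega$-chains are computed componentwise as well.

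Concretely, I would take an arbitrary $\omega$-chain $\{(x_n, y_n)\}_{n \in \mathbb{N}}$ in $\interpret{\answertype} \times \interpret{\answertype}$ with $(x_n, y_n) \in \interpret{\le_{\answertype}}$ for every $n$, meaning $x_n \le y_n$ for all $n$. I then need to show $\sup_n (x_n, y_n) = (\sup_n x_n, \sup_n y_n) \in \interpret{\le_{\answertype}}$, i.e.\ $\sup_n x_n \le \sup_n y_n$. This follows from the transitivity $x_n \le y_n \le \sup_m y_m$, which makes $\sup_m y_m$ an upper bound of $\{x_n\}_n$, so $\sup_n x_n \le \sup_m y_m$ by the universal property of the supremum.

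Since the argument uses only monotonicity of suprema in an $\omega$cpo and the componentwise characterisation of products in $\omegaCPO$ already spelled out in Section~\ref{sec:hfl}, there is no real obstacle. The only subtlety is to be explicit about what ``chain-closed'' means for a binary relation: here it should be read as chain-closedness of $\interpret{\le_{\answertype}}$ regarded as a subset of the product $\omega$cpo, which is exactly the form required by Lemma~\ref{lem:atomic-chain-closed} in order to feed into the derivation rule \textsc{Adm-Leq}. Once that reading is fixed, the proof is a three-line computation.
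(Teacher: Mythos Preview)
Your proposal is correct and is exactly the routine argument the paper has in mind: the paper's own proof is the single word ``Easy.'' Your unfolding via componentwise suprema and the upper-bound argument $x_n \le y_n \le \sup_m y_m$ is the standard justification, and your remark about reading chain-closedness as a property of the subset of the product $\omega$cpo is the right interpretation for its use in Lemma~\ref{lem:atomic-chain-closed} and \textsc{Adm-Leq}.
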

\begin{proof}
	Easy.
\end{proof}

\begin{lemma}\label{lem:and-or-admissible}
	If $Q_1, Q_2 \in \category{P}_{\interpret{\Gamma}}$ are admissible at $x_i$, then so are $Q_1 \cap Q_2$ and $Q_1 \cup Q_2$.
\end{lemma}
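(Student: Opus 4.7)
My plan is to verify the two admissibility conditions, namely (i) containment of the bottom fibre and (ii) closure under suprema of $\omega$-chains, separately for $Q_1 \cap Q_2$ and $Q_1 \cup Q_2$, relying only on the hypothesis that each $Q_j$ enjoys these conditions at $x_i$. Throughout, I would fix an arbitrary $\gamma \in \category{C}(1, \mathcal{A}\interpret{\Gamma})$ and reason pointwise on the fibres $\{ v \mid \gamma[x_i \mapsto v] \in Q_j \}$.

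For the intersection, both conditions are immediate. Since each $Q_j$ contains $\gamma[x_i \mapsto \bot]$, so does $Q_1 \cap Q_2$. For an $\omega$-chain $f_0 \le f_1 \le \dots$ with every $\gamma[x_i \mapsto f_k]$ in $Q_1 \cap Q_2$, the chain lies in each $Q_j$ separately, hence by the admissibility of each $Q_j$ the supremum lies in each $Q_j$, and therefore in $Q_1 \cap Q_2$. The bottom clause is similarly trivial for the union since $\gamma[x_i \mapsto \bot] \in Q_1 \subseteq Q_1 \cup Q_2$.

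The real content is the closure-under-$\sup$ clause for $Q_1 \cup Q_2$, which I expect to be the main obstacle because at each index $k$ we only know $\gamma[x_i \mapsto f_k]$ lies in one of $Q_1, Q_2$, and this assignment may oscillate along the chain. My plan is a standard pigeonhole argument: given an $\omega$-chain $f_0 \le f_1 \le \dots$ with each $\gamma[x_i \mapsto f_k] \in Q_1 \cup Q_2$, at least one of the index sets $\{ k \mid \gamma[x_i \mapsto f_k] \in Q_1 \}$ and $\{ k \mid \gamma[x_i \mapsto f_k] \in Q_2 \}$ must be infinite. Pick $j \in \{1,2\}$ for which it is, and enumerate the infinite subset as $k_0 < k_1 < \cdots$. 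The restricted subsequence $f_{k_0} \le f_{k_1} \le \dots$ is again an $\omega$-chain, cofinal in the original, so $\sup_m f_{k_m} = \sup_k f_k$. By the admissibility of $Q_j$, $\gamma[x_i \mapsto \sup_m f_{k_m}] \in Q_j$, hence $\gamma[x_i \mapsto \sup_k f_k] \in Q_1 \cup Q_2$.

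Putting these four routine verifications together discharges the statement. The proof uses no structure beyond the definition of admissibility and a cofinality argument for $\omega$-chains, so it is independent of the particular $\lambdaHFL$-model.
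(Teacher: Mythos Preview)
Your proposal is correct and matches the paper's own proof essentially line for line: the paper also dismisses the intersection and the bottom clause for the union as trivial, and handles chain-closedness of $Q_1 \cup Q_2$ by the same pigeonhole/cofinal-subsequence argument you give.
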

\begin{proof}
	The only non-trivial part of the proof is the chain-closed-ness of $Q_1 \cup Q_2$.
	Assume $\gamma[x_i \mapsto f_k] \in Q_1 \cup Q_2$ holds for each $k$ where $f_{0} \le f_1 \le \dots$ is an $\omega$-chain in $\category{C}(1, \interpret{\sigma_i})$ and $\gamma : 1 \to \interpret{\Gamma}$.
	Then either $\{ k \mid \gamma[x_i \mapsto f_k] \in Q_1 \}$ or $\{ k \mid \gamma[x_i \mapsto f_k] \in Q_2 \}$ must be an infinite set.
	That is, we have an infinite sequence of indices $k_0 \le k_1 \le \dots$ such that either $\gamma[x_i \mapsto \sup_l f_{k_l}] \in Q_1$ or $\gamma[x_i \mapsto \sup_l f_{k_l}] \in Q_2$ holds.
	Since we have $\gamma[x_i \mapsto \sup_l f_{k_l}] = \gamma[x_i \mapsto \sup_k f_{k}]$, we have $\gamma[x_i \mapsto \sup_k f_{k}] \in Q_1 \cup Q_2$.
\end{proof}

\begin{lemma}\label{lem:implies-admissible}
	If $Q \in \category{P}_{\interpret{\Gamma}}$ is admissible at $x_i$ and $R \in \category{P}_{\interpret{\Gamma \setminus x_i}}$, then
	\begin{equation}
		\{ \gamma : 1 \to \interpret{\Gamma} \mid \gamma \in \mathrm{proj}_{\Gamma; x_i}^{*} R \implies \gamma \in Q \}
		\label{eq:implies}
	\end{equation}
	is admissible at $x_i$.
	Here $\mathrm{proj}_{\Gamma; v} : \interpret{\Gamma} \to \interpret{\Gamma \setminus v}$ is the canonical projection.
	\[ \mathrm{proj}_{\Gamma, v : \sigma; v} = \pi_1 \qquad \mathrm{proj}_{\Gamma, x : \sigma; v} = \mathrm{proj}_{\Gamma; v} \times \identity{} \]
\end{lemma}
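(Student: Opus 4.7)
The plan is to check the two defining conditions of admissibility at $x_i$ directly, using the fact that $\mathrm{proj}_{\Gamma; x_i}^{*} R$ is independent of the $x_i$-component of its argument. Write $S$ for the set in \eqref{eq:implies} and $P \coloneqq \mathrm{proj}_{\Gamma; x_i}^{*} R$.

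The first condition (that $\gamma[x_i \mapsto \bot_{\interpret{\sigma_i}}] \in S$ for every $\gamma$) is immediate: by admissibility of $Q$ at $x_i$, we have $\gamma[x_i \mapsto \bot_{\interpret{\sigma_i}}] \in Q$ unconditionally, so the implication defining membership in $S$ holds trivially.

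For chain-closedness, fix $\gamma : 1 \to \interpret{\Gamma}$ and an $\omega$-chain $f_0 \le f_1 \le \dots$ in $\category{C}(1, \interpret{\sigma_i})$ such that $\gamma[x_i \mapsto f_k] \in S$ for every $k$. The key observation is that, because $P$ is pulled back along the projection that forgets the $x_i$ component, the statement $\gamma[x_i \mapsto g] \in P$ does not depend on $g$; either it holds for every $g$ or for none. In the vacuous case, $\gamma[x_i \mapsto \sup_k f_k] \in S$ holds trivially. Otherwise, for each $k$ the hypothesis gives $\gamma[x_i \mapsto f_k] \in Q$, and then admissibility of $Q$ at $x_i$ yields $\gamma[x_i \mapsto \sup_k f_k] \in Q$, so again $\gamma[x_i \mapsto \sup_k f_k] \in S$.

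There is no substantive obstacle here; the only subtlety is making the independence-of-$x_i$ property of $P$ precise, which amounts to unfolding the definition of $\mathrm{proj}_{\Gamma; x_i}$ (it is a product of $\pi_1$ and identities, so $\mathrm{proj}_{\Gamma; x_i} \comp \gamma[x_i \mapsto g]$ is the same morphism for every $g$). Once this is spelled out, the two cases above complete the proof.
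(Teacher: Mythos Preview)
Your proposal is correct and follows essentially the same approach as the paper: both arguments hinge on the observation that membership in $\mathrm{proj}_{\Gamma; x_i}^{*} R$ is independent of the $x_i$-component, and then split on whether the antecedent holds to reduce chain-closedness to that of $Q$. You are slightly more explicit than the paper in spelling out the bottom-element condition and the independence property, but the underlying argument is the same.
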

\begin{proof}
	Note that for any $\gamma : 1 \to \interpret{\Gamma}$ and $f : 1 \to \interpret{\sigma_i}$, if $\gamma \in \mathrm{proj}^{*} R$, then $\gamma[x_i \mapsto f] \in \mathrm{proj}^{*} R$.
	For and $\gamma : 1 \to \interpret{\Gamma}$ and any $\omega$-chain $f_0 \le f_1 \le \dots$ in $\category{C}(1, \interpret{\sigma_i})$, if $\gamma[x_i \mapsto f_j]$ is in \eqref{eq:implies} for any $j$, then there are two cases.
	\begin{itemize}
		\item If $\gamma[x_i \mapsto f_j] \notin \mathrm{proj}^{*} R$ for some $j$, then $\gamma[x_i \mapsto \sup_j f_j] \notin \mathrm{proj}^{*} R$.
		\item If $\gamma[x_i \mapsto f_j] \in \mathrm{proj}^{*} R$ for any $j$, then $\gamma[x_i \mapsto f_j] \in Q$ for any $j$.
		Therefore, $\gamma[x_i \mapsto \sup_j f_j] \in Q$.
	\end{itemize}
	In any case, $\gamma[x_i \mapsto \sup_j f_j]$ is in \eqref{eq:implies}.
\end{proof}

\begin{lemma}\label{lem:adm-sound}
	If $\Gamma \vdash \mathrm{adm}(v, \phi)$, then $\mathcal{A}^{\mathcal{P}}\interpret{\Gamma \vdash \phi}$ is admissible at $v$.
\end{lemma}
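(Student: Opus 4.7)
The plan is to proceed by induction on the derivation of $\Gamma \vdash \mathrm{adm}(v, \phi)$, handling each of the four rules \textsc{Adm-Leq}, \textsc{Adm-Imp}, \textsc{Adm-And}, \textsc{Adm-Or} in turn. For each case, I need to check both clauses of admissibility at $v$: (i) the bottom element of $\interpret{\answertype}$ satisfies the predicate for every valuation of the other variables, and (ii) chain-closed-ness at $v$.

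For \textsc{Adm-Leq}, where $\phi$ is $v \le_{\answertype} M$: chain-closed-ness follows immediately from Lemma~\ref{lem:le-chain-closed} combined with Lemma~\ref{lem:atomic-chain-closed} applied to the atomic predicate $\le_{\answertype}$ on the term $(v, M)$ (so chain-closed at every variable, in particular at $v$). The bottom-element clause is the only thing those two lemmas do not immediately give; for this I will invoke the assumption that $\bot_{\interpret{\answertype}}$ is the least element of $\category{C}(1, \interpret{\answertype})$, so that $\bot_{\interpret{\answertype}} \le_{\interpret{\answertype}} \gamma(M)$ holds for any $\gamma$, yielding $\gamma[v \mapsto \bot] \in \mathcal{A}^{\mathcal{P}}\interpret{\Gamma \vdash v \le_{\answertype} M}$.

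For \textsc{Adm-And} and \textsc{Adm-Or} the inductive step is immediate from Lemma~\ref{lem:and-or-admissible}: the induction hypothesis gives admissibility at $v$ of the interpretations of the two conjuncts/disjuncts, and the lemma closes both under $\cap$ and $\cup$, which are exactly the semantic operations used in the interpretation of $\land$ and $\lor$. For \textsc{Adm-Imp}, the hypothesis $(\Gamma \setminus v) \vdash \phi$ ensures that $\mathcal{A}^{\mathcal{P}}\interpret{\Gamma \vdash \phi}$ is of the form $\mathrm{proj}_{\Gamma; v}^{*} R$ for some $R \in \category{P}_{\mathcal{A}\interpret{\Gamma \setminus v}}$ (by a routine weakening argument analogous to Lemma~\ref{lem:weakening-formula}); combined with the IH that $\mathcal{A}^{\mathcal{P}}\interpret{\Gamma \vdash \psi}$ is admissible at $v$, Lemma~\ref{lem:implies-admissible} gives exactly what is needed.

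The only mildly delicate step is \textsc{Adm-Imp}, where I must justify that the side condition $(\Gamma \setminus v) \vdash \phi$ really does translate into the reindexing-along-$\mathrm{proj}_{\Gamma; v}$ form required by Lemma~\ref{lem:implies-admissible}; this is a small weakening-style calculation but is the one place where the correspondence between syntactic and semantic "independence from $v$" must be made explicit. All the other cases are essentially bookkeeping over the inductive hypothesis and the cited lemmas.
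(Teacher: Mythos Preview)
Your proposal is correct and follows exactly the same approach as the paper: induction on the derivation, with \textsc{Adm-Leq} handled via Lemmas~\ref{lem:atomic-chain-closed} and~\ref{lem:le-chain-closed} plus the bottom-element property of $\interpret{\answertype}$, \textsc{Adm-And}/\textsc{Adm-Or} via Lemma~\ref{lem:and-or-admissible}, and \textsc{Adm-Imp} via Lemma~\ref{lem:implies-admissible}. You are actually more explicit than the paper about the weakening step needed in the \textsc{Adm-Imp} case to place $\mathcal{A}^{\mathcal{P}}\interpret{\Gamma \vdash \phi}$ in the form $\mathrm{proj}_{\Gamma; v}^{*} R$; the paper simply cites Lemma~\ref{lem:implies-admissible} without comment.
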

\begin{proof}
	By induction on derivation of $\Gamma \vdash \mathrm{adm}(v, \phi)$.
	\begin{itemize}
		\item \textsc{Adm-Leq}: $\interpret{\answertype}$ has a bottom element because it is an EM $T$-algebra.
			For each $\gamma : 1 \to \interpret{\Gamma}$, we have $\gamma[v \mapsto \bot_{\interpret{\answertype}}] \in \interpret{v \le_{\answertype} M}$ because $\bot_{\interpret{\answertype}} \le \interpret{M} \comp \gamma[v \mapsto \bot_{\interpret{\answertype}}]$.
			$\interpret{v \le_{\answertype} M}$ is chain-closed at $v$ by Lemma~\ref{lem:atomic-chain-closed},\ref{lem:le-chain-closed}.
		\item \textsc{Adm-And}, \textsc{Adm-Or}: by Lemma~\ref{lem:and-or-admissible}.
		\item \textsc{Adm-Imp}: by Lemma~\ref{lem:implies-admissible}
	\end{itemize}
\end{proof}

\section{Benchmark Programs}\label{sec:benchmark}
\subsection{Weakest Pre-Expectation}

\lstinputlisting[title=lics16\_rec3]{benchmarks/lics16_rec3.ml}
\lstinputlisting[title=lics16\_rec3\_ghost]{benchmarks/lics16_rec3_ghost.ml}
\lstinputlisting[title=lics16\_coins]{benchmarks/lics16_coins.ml}

\subsection{Expected Cost Analysis}

\lstinputlisting[title=random\_walk]{benchmarks/random_walk.ml}
\lstinputlisting[title=random\_walk\_unif]{benchmarks/random_walk_unif.ml}
\lstinputlisting[title=coin\_flip]{benchmarks/coin_flip.ml}
\lstinputlisting[title=coin\_flip\_unif]{benchmarks/coin_flip_unif.ml}
\lstinputlisting[title=icfp21\_walk]{benchmarks/icfp21_walk.ml}
\lstinputlisting[title=icfp21\_coupons]{benchmarks/icfp21_coupons.ml}
\lstinputlisting[title=lics16\_fact]{benchmarks/lics16_fact.ml}

\subsection{Cost Moment Analysis}

\lstinputlisting[title=coin\_flip\_ord2]{benchmarks/coin_flip_ord2.ml}
\lstinputlisting[title=coin\_flip\_ord3]{benchmarks/coin_flip_ord3.ml}

\subsection{Conditional Weakest Pre-Expectation}

\lstinputlisting[title=toplas18\_ex4.4]{benchmarks/toplas18_ex4.4.ml}
\lstinputlisting[title=two\_coin\_conditioning]{benchmarks/two_coin_conditioning.ml}

}{}

\end{document}
\endinput